\newtheorem{theorem}{Theorem}[section]
\newtheorem{proposition}[theorem]{Proposition}
\newtheorem{lemma}[theorem]{Lemma}
\newtheorem{follow}[theorem]{Corollary}
\newtheorem{assumption}[theorem]{Assumption}
\newtheorem{defn}[theorem]{Definition}
\newtheorem{remark}[theorem]{Remark}
\newtheorem{remarks}[theorem]{Remarks}
\newcommand{\bel}{\begin{equation} \label}
\newcommand{\ee}{\end{equation}}
\newcommand{\beq}{\begin{eqnarray*}}
\newcommand{\eeq}{\end{eqnarray*}}
\newcommand{\beal}{\begin{eqnarray}\label}
\newcommand{\eeal}{\end{eqnarray}}
\newsavebox{\toy}
\savebox{\toy}{\framebox[0.65em]{\rule{0cm}{1ex}}}
\newcommand{\QED}{\usebox{\toy}}
\newcommand{\C}{{\mathbb C}}
\newcommand{\R}{{\mathbb R}}
\newcommand{\N}{{\mathbb N}}
\newcommand{\E}{{\mathbb E}}
\newcommand{\Z}{{\mathbb Z}}
\newcommand{\Pp}{{\mathbb P}}
\newcommand{\bea}{\begin{eqnarray}}
\newcommand{\eea}{\end{eqnarray}}
\newcommand{\beas}{\begin{eqnarray*}}
\newcommand{\eeas}{\end{eqnarray*}}
\newcommand{\Schr}{Schr\"odinger }
\newcommand{\Holder}{H\"older }
\newcommand{\labelnummer}{\mbox{\normalfont (\roman{numcount})}}%
\let\curlabelspeicher\@currentlabel%
    \let\saveitem\item%
    \def\item{\saveitem%
      \def\@currentlabel{{\upshape\curlabelspeicher}$\,$\labelnummer}}%
    \let\savelabel\label%
    \def\label##1{\savelabel{##1}%
      \@bsphack%
       \ifmmode\else%
          \protected@write\@auxout{}%
          {\string\newlabel{##1item}{{\labelnummer}{\thepage}}}%
        \fi%
      \@esphack%
    }%
\renewcommand{\appendix}{\def\thesection{\textsc{Appendix}}}
\begin{document}

\title[Eigenvalue statistics for random polymer models]{Eigenvalue statistics for random polymer models: localization and delocalization}

\author[P.\ D.\ Hislop]{Peter D.\ Hislop}
\address{Department of Mathematics,
    University of Kentucky,
    Lexington, Kentucky  40506-0027, USA}
\email{peter.hislop@uky.edu}

\author[F.\ Nakano]{Fumihiko Nakano}
\address{Mathematical Institute, Tohohu University, Sendai 980-8578, Japan}
\email{fumihiko.nakano.e4@tohoku.ac.jp}


\begin{abstract}
We study the local eigenvalue statistics (LES) associated with one-dimensional lattice models of random polymers. We consider models constructed from two polymers. Each polymer is a finite interval of lattice points with a finite potential. These polymers are distributed along $\Z$ according to a Bernoulli distribution. The deterministic spectrum for these models is dense pure point, and is known to contain finitely-many critical energies. In this paper, we prove that the LES centered at these critical energies is described by a uniform clock process, and that the LES for the unfolded eigenvalues, centered at any other energy in the deterministic spectrum, is a Poisson point process. 
 These results add to our understanding of these models that exhibit dynamical localization in any energy interval avoiding the critical energies \cite{DSS,dBg2}, and nontrivial transport for wave packets with initial states supported at an integer point \cite{jss}. We show that the projection of these initial states onto spectral subspaces associated with any energy interval that contains all of the critical energies exhibit nontrival transport, refining the connection between nontrivial transport and the critical energies.  Finally, we also prove that the transition in the unfolded LES is sharp at the critical energies.
\end{abstract}

\maketitle \thispagestyle{empty}

\tableofcontents



\section{Introduction and statement of the problem}\label{sec:introduction}
\setcounter{equation}{0}

Random \Schr operators (RSO) in dimension $d \geq 3$ are expected to exhibit a phase transition with respect to the energy for fixed, nonzero disorder. At low energy, or at band-edge energies, the models are expected to exhibit localization, where at high energy, or near the center of the band, there should be delocalized behavior. This general conjecture remains unproven but several ergodic models have been proven to exhibit both localization and delocalization. In one-dimension, the random dimer model on the lattice and, more generally, random polymer models, have been proven to have both properties \cite{dBg2, jss}. 

This article presents the proof of a sharp localization-delocalization transition for the unfolded local eigenvalue statistics for the random polymer model. Although these models have dense pure point spectrum almost surely, with exponentially decaying eigenvectors, they also have a finite number of critical energies. These are energies at which the Lyapunov exponent vanishes. We prove that the unfolded LES centered at a critical energy is a uniform clock process, and that the unfolded LES centered at any other energy in the deterministic spectrum $\Sigma$ is a Poisson point process.


\subsection{The random dimer model}

The one-dimensional random dimer model \cite{dBg2} is a discrete random \Schr operator on $\ell^2(\Z)$ of the form $H_\omega = L + V_\omega$, where $L$ is the nearest-neighbor, finite-difference Laplacian. The random potential $V_\omega$ is defined by a sequence of random variables $V_\omega(n)$, for $n \in \Z$, with 
\bel{eq:dimer1}
V_\omega(2n) = V_\omega(2n+1), ~~~ n \in \Z .
\ee
A dimer consists of two neighboring sites, say $2n$ and $2n+1$, and a single Bernoulli random variable, say $V_\omega (2n)$, taking values in $\{ -V, V\}$, $V \in (0,1]$.
The random variables $\{ V_\omega (2n) \}$ are independent and identically distributed (iid). The Anderson-Bernoulli random dimer model (AB RDM) is one for which the probability measure associated with $V_\omega(0)$, and consequently all $V_\omega (2n)$, is a Bernoulli measure $d\mu(s) = (p \delta(s- V) + (1-p) \delta (s+V)) ~ds$, with $p \in (0,1)$, and $V \in (0,1]$ .  
Another way to describe the model is that it is formed from two dimers, one having value $V$ and the other having value $-V$. They are arranged randomly along $\Z$, the occurrence of one or the other controlled by a Bernoulli random variable.  


The AB RDM, in contrast to the AB random \Schr operator (RSO), is characterized by the existence of a pair of critical energies at which the Lyapunov exponent vanishes. For the AB RDM described above, the two critical energies occur at $E_c = \pm V$. The standard AB RSO does not have critical energies. The single polymer (a single site)  transfer matrices for two different sites never commute almost surely, whereas for the RDM, the two different polymer transfer matrices (products of the single-site transfer matrix over one polymer)  commute at the critical energies. 

An important consequence of this commutation is that Lyapunov exponent $L(E)$ vanishes at the critical energies. This lead to the conjecture that there are extended states associated with critical energies. In their important study, De Bi\`evre and Germinet \cite{dBg2} prove that there is dynamical localization on every energy interval not containing a critical energy $E_c$. After this work, Jitomirskaya, Schultz-Baldes, and Stolz \cite{jss} proved that these models (and the more general random polymer models (RPM)) also exhibit dynamical delocalization in the form of nontrivial transport. The form of dynamical localization proved in \cite{jss} is as follows. For any $\alpha > 0$, there is a finite constant $C_\alpha > 0$ so that:
\bel{deloc1}
M_q(T) := \frac{1}{T} \int_0^\infty e^{- \frac{t}{T}} \langle \delta_0,  e^{i H_\omega t}|X|^q e^{- i H_\omega t} \delta_0 \rangle ~dt \geq C_\alpha T^{q - \frac{1}{2} - \alpha}, a.s.,
\ee
where $\delta_0$ is the function equal to one at $0 \in \Z$, and zero everywhere else. 

Such anomalous transport associated with critical energies had been predicted by physicists (see \cite{dBg2} for references). We note that this estimate may be localized in energy intervals containing all the finitely-many critical energies since, in the complementary set, dynamical localization holds (see Appendix \ref{app:deloc1}).   
Hence, the AB RDM exhibits both almost sure pure point spectrum throughout its spectrum almost surely, and dynamical delocalization. 


In this paper, we study the unfolded local eigenvalue statistics for the more general AB RPM, as defined in \cite{jss},  and prove that there is an energy-dependent transition from Poisson statistics to uniform clock statistics at the critical energies. 
We also prove that the rescaled eigenvalue spacing exhibits strong clock behavior \cite{LS} in this case. Consequently, the transition between these two phases as a function of energy is sharp. These models are the first nontrivial applications of unfolded eigenvalue statistics whose use is necessitated due to the fact that the integrated density of states is not Lipschitz continuous.


\subsection{Random polymer models} \label{subsec:rpm1}

In this section, we describe the random polymer model (RPM). 
Following \cite[section 2]{jss}, a polymer of length $L$ with fixed potential $v$ is a 
finite real sequence $\{ v(j) ,v(j+1), \ldots, v(j+L-1) \}$. We refer to this as polymer $L$.
A deterministic polymer model on $\Z$ is constructed from 
two polymers, with possibly different lengths $L_\pm$ and different potentials $v_\pm$, laid sequentially along $\Z$. 
By randomizing the choice of the polymer in this construction, we obtain the random polymer model (RPM). The randomization is given by Bernoulli random variables: polymer $L_+$ occurs with probability $p_+ := p$ and polymer $L_-$ occurs with probability $p_- := (1-p)$, with $p \in (0,1)$. We refer to this as the Anderson Bernoulli random polymer model (AB RPM). We also include a random hopping coefficient $t_\omega(n)$, as in \cite{jss}. These models generalize the RDM described above. 


In more detail, we associate with each polymer of length $L_{\pm} \in {\bf N}$, the associated $L_\pm$-tuples of hopping terms and potentials. We designate these by
\bea
\hat{t}_{\pm}
&:=&
 \left(
\hat{t}_{\pm} (0), \hat{t}_{\pm} (1), \cdots, \hat{t}_{\pm}(L_{\pm}-1)
\right), 
\quad
\hat{t}_{\pm} (\ell) > 0, 
\;
\ell = 0, 1, \cdots, L_{\pm}-1
\\
\hat{v}_{\pm}
&:=&
\left(
\hat{v}_{\pm} (0), \hat{v}_{\pm} (1), \cdots, \hat{v}_{\pm}(L_{\pm}-1)
\right)
\eea
We consider laying these polymers head-to-tail along the lattice $\Z$ in a random order to create a random juxtaposition of the two polymers. Each configuration of polymers is labeled by a two-sided sequence $\omega := (\omega_\ell)_{\ell \in \Z}$ of $\pm$-signs: $\omega_\ell = \pm$. . 
We define sequences $t_\omega$ and $v_\omega$ by
\bea
t_{\omega}
 &:=&
( t_{\omega_\ell} )_{\ell \in {\bf Z}}
=
( \cdots, \hat{t}_{\omega_0}, \hat{t}_{\omega_1},  \cdots)
\\
v_{\omega}
 &:=&
( v_{\omega_\ell} )_{\ell \in {\bf Z}}
=
( \cdots, \hat{v}_{\omega_0}, \hat{v}_{\omega_1},  \cdots) .
\eea
Given these definitions, the discrete \Schr operator for a RPM is defined by 
\bea\label{eq:abrpm1}
( H_{\omega} \psi )(n)
&:=&
- t_\omega (n+1) \psi(n+1) - t_\omega (n) \psi(n-1)
+
v_{\omega}(n) \psi(n) ,
\eea
acting on $\ell^2 (\Z)$. 
As mentioned above, the random variables $\omega_\ell$ are iid and distributed according to a Bernoulli distribution with the probability given by 
${\Pp} (\omega_{\ell} = \pm ) = p_{\pm}$, 
where $p_{\pm}$ satisfy
$p_{\pm} \in (0, 1)$,  and $p_+ + p_- = 1$.

The complete construction of the probability space  $(\Pp, \Omega)$ requires a randomization of the values at the origin, $\hat{v}_{\omega_0}(0)$ and $\hat{t}_{\omega_0}(0)$. The details are given in \cite[section 4.1]{jss} and are not essential for the proofs on unfolded LES, 
but from which 
standard results known for ergodic \Schr operators are also valid for 
$H_{\omega}$. 
In particular, 
it follows that there is a closed set 
$\Sigma (\subset \R)$ 
such that 
$\sigma (H_{\omega}) = \Sigma$, 
a.s. 
%


\subsection{Transfer matrices for RPM}

Transfer matrices allow an alternative to the \Schr equation 
$H \psi = E \psi$ in one-dimension. The \Schr equation on $\ell^2(\Z)$ has the form. 
\bea\label{eq:schr1}
t(n+1) \psi(n+1)
&=&
(v(n) - E) \psi (n) - t(n) \psi(n-1) \\
&=&
\frac {1}{ t(n) }
\left\{ (v(n) - E) t(n) \psi(n) - t(n)^2 \psi(n-1)
\right\} .
\eea
This implies an equivalent formulation to \eqref{eq:schr1}:
\bea\label{eq:transfer1}
\left( \begin{array}{c}
t(n+1) \psi(n+1) \\ \psi(n)
\end{array}
\right)
&=&
\frac {1}{t(n)} \left(
\begin{array}{cc}
v(n) - E & - t(n)^2 \\
1 & 0
\end{array} 
\right)
\left(
\begin{array}{c}
t(n) \psi(n) \\ \psi(n-1)
\end{array}
\right)
\\
&=&
T_{v(n)- E, t(n)}
\left(
\begin{array}{c}
t(n) \psi(n) \\ \psi(n-1)
\end{array} 
\right) ,
\eea
where the transfer matrix $T_{v, t}$ is defined by
\bea\label{eq:transfer2}
T_{v, t}
&:=&
\frac 1t
\left(
\begin{array}{cc}
v & -t^2 \\
1 & 0
\end{array}
\right).
\eea

Crucial for our discussion are the two polymer transfer matrices associated with each one of the polymers $L_\pm$. We denote these single polymer transfer matrices by $T^E_\pm$. They are given by:
\bea\label{eq:transfer3}
T^E_{\pm}
&:=&
T_{ \hat{v}_{\pm} (L_{\pm}-1)-E,\, \hat{t}_{\pm}(L_{\pm}-1) }
T_{ \hat{v}_{\pm} (L_{\pm}-2)-E,\, \hat{t}_{\pm}(L_{\pm}-2) }
\cdots
T_{ \hat{v}_{\pm} (0)-E, \,\hat{t}_{\pm}(0) }
\eea
%
%
\begin{defn}\label{defn:critical_energy1}
An energy $E_c\in {\R}$ is \textbf{critical} for the AB RPM
$H_{\omega}$ in \eqref{eq:abrpm1} if the single polymer transfer matrices $T^E_\pm$, defined in \eqref{eq:transfer3}, satisfy: 
\begin{enumerate}
\item  $| Tr (T^{E_c}_{\pm}) | < 2$, or $T^{E_c}_{\pm} = \pm I$,
\medskip

\item $[ T^{E_c}_{+}, T^{E_c}_{-} ] = 0$.
\end{enumerate}

\end{defn}
%
%

%
%
The transfer matrix over several polymers is given by the product of the individual polymer transfer matrices. For a configuration $\omega$, we write  
\bel{eq:klock_trans_matrix1}
T_\omega^E (k, m) = T^E_{\omega_{k-1}} \ldots T^E_{\omega_m}, ~~~ k > m .
\ee
If $m > k$, we have $T^E_\omega(k,m) = T^E_\omega(m,k)^{-1}$, and $T^E_\omega (k,k) = Id$. 
The Lyapunov exponent is given by 
\bea  \label{eq:lyapunov0}
L (E) &:=& \lim_{k \to \infty}
\frac {1}{ k \langle L_{\pm} \rangle }
\log \| T^E_{\omega} (k, 0) \|, 
\qquad
\mbox{ where }
\langle c_{\pm} \rangle
:= p_+ c_+ + p_- c_-.
\eea
An important property of a critical energy $E_c$ is that the Lyapunov exponent vanishes at that critical energy: $L (E_c) = 0$. Since the Lyapunov exponent is positive on energies in the localization regime, the vanishing of the Lyapunov exponent at a critical energy might imply nontrivial transport for wave packets with initial conditions containing a critical energy.

\subsection{Main results on eigenvalue statistics for RPM}\label{subsec:main1}

Our main results concern the local eigenvalue statistics (LES) for the discrete \Schr operator describing random polymer models on $\ell^2 (\Z)$. 

\subsubsection{Local operators and the density of states}\label{subsubsec:local_op1}

In order to describe the LES, we define local \Schr operators associated with the interval $[0, L-1] \cap \Z$. We let $H_\omega^L$ is the restriction of $H_\omega$ to 
$\{ 0, 1, \cdots, L-1 \}$ with Dirichlet boundary conditions at 
$n = -1$ and at $n = L$. We label the eigenvalues of $H_\omega^L$ by $\{ E_j (L) \}_{j=1}^L$, in increasing order $E_1(L) < E_2(L) < \cdots < E_L(L)$.

The density of states measure (DOSm) $\mu$ is defined for any bounded Borel subset $B \subset \R$ as the limit
\bel{eq:dos1}
\mu (B) := \lim_{L \rightarrow \infty} \frac{1}{L} Tr( P_{H_\omega^L} (B) ), 
\ee
where 
$P_H (B)$
is the spectral projection of an operator 
$H$
associated to the Borel set 
$B$. 
This limit exists weakly due to the ergodicity of the RSO. 
The integrated density of states ${N}(E)$ is the cumulative distribution function of $\mu$ and is given by
\bel{eq:ids1}
{N}(E) = \int_{-\infty}^E ~d \mu(s) .
\ee
We will abuse the notation and write $N(I) := \mu (I)$ for the DOS measure of a measurable subset $I \subset \R$.  

It is known that ${N}(E)$ is \Holder continuous for the AB model \cite{CKM}. However, we do not know if the density of states measure (DOSm) is absolutely continuous with respect to Lebesgue measure. In the case that the single-site probability measure is absolutely continuous with respect to Lebesgue measure, the density of states function (DOSf) exists and is positive. The DOSf plays a role in the LES in the localization regime.
For this reason, we consider the following point process $\xi_\omega^L$  on 
${\R}$ centered at $E_0 \in \Sigma$ formed from the \textbf{unfolded eigenvalues} of $H_\omega^L$: 
\bel{eq:les1}
\xi_L := 
\xi_L^{E_0}
:=
\sum_{j = 1}^L \delta_{ L ({N}(E_j(L)) -{N}(E_0) )}  . 
\ee
We refer to this random point process, and its limit 
$\xi^{E_0}
:=
\lim_{L \to \infty}
\xi_L^{E_0}$
as $L \rightarrow \infty$, if it exists, as the local eigenvalue statistics (LES) of $H_\omega$ centered at $E_0$.
$E_0$
is called the center energy or the reference energy.
For a point process $\xi$ and any $f \in C_0(\R)$, we write $\xi (f) := \int f(s) d\xi (s)$. 
For the AB RPM, we will consider the LES in two mutually-exclusive cases: 1) when $E_0 \in \mathcal{C}$, the set of critical energies, and 2) when $E_0 \in \Sigma \backslash \mathcal{C}$. 
 

We are interested in two random point processes.

\begin{defn}
Let $E_0 \in \R$ be fixed. The LES $\xi^{E_0}$ centered at $E_0$ is \textbf{clock} if there exists a random variable $\theta$ 
on the unit interval
$[0,1)$
with distribution 
$\mu$ 
such that the characteristic function satisfies
\bel{eq:clock_defn}
\E  \{ e^{- \xi^{E_0}(f)} \}  
= 
\int_0^1 d \mu (\theta)
e^{- \sum_{n \in \Z} f( n - \theta)}, 
\ee 
for all $f \in C_0(\R)$. 
We also say
$\xi^{E_0}$
obeys 
clock $(\mu)$.
We call the process a \textbf{uniform clock process} on the interval $[0,1)$ if $\theta$ is uniformly distributed on $[0,1)$ and we write clock $unif [0,1)$.
\end{defn}

\medskip

\begin{defn}
Let $E_0 \in \R$ be fixed. The LES $\xi^{E_0}$ centered at $E_0$ is \textbf{Poisson} with L\'evy measure $\nu$ if the characteristic function satisfies 
\bel{eq:poisson_defn}
\E  \{ e^{- \xi^{E_0}(f)} \}  =   e^{ \int_\R ( e^{-f(x)} - 1) ~d \nu(x)} , 
\ee
for all $0 \leq f \in C_0(\R)$. 
\end{defn}

\medskip

The main result of this article is to characterize the unfolded LES for each $E_0 \in \Sigma$. Our results are summarized in the following theorem. Assumptions and precise statements may be found in Theorem \ref{thm:clock_1} and Theorem \ref{thm:les_loc1}.   
\medskip

\begin{theorem}\label{thm:main1}
Let $H_\omega$ be the random polymer model with Bernoulli single-site probability measure. Let $\mathcal{C} := \{ E^c_j ~|~ j=1, \ldots , N \}$ be the finite set of critical energies. 

\medskip

\begin{enumerate}
\item \textbf{Delocalization:} The unfolded LES centered at a critical energy $E^c_j$ is a uniform clock process.

\medskip

\item \textbf{Localization}: The unfolded LES centered at an energy $E_0  \in \Sigma \backslash \mathcal{C}$ is a Poisson point process with intensity measure given by the Lebesgue measure on $\R$. 
\end{enumerate}
\end{theorem}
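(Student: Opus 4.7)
The approach splits by regime: for $E_0 \in \Sigma \setminus \mathcal{C}$ I follow the Molchanov--Minami localization scheme adapted to the unfolded rescaling, while at a critical $E_0 = E_c$ the commutativity of the polymer transfer matrices reduces the spectral problem to an almost-deterministic clock perturbed by an ergodic phase. The unfolding $E \mapsto L(N(E)-N(E_0))$ absorbs the failure of Lipschitz continuity of the integrated density of states by pushing the intensity of the limit point process to Lebesgue; the H\"older continuity of $N$ from \cite{CKM} is enough to make this push-forward well-behaved on the scale $1/L$.

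For part (2), the Lyapunov exponent is positive in a neighborhood of $E_0$ (since $L$ is continuous off $\mathcal{C}$), so complete dynamical localization holds there by \cite{dBg2,DSS}. The plan is: (i) partition $[0,L-1]$ into $\sim L/\ell$ disjoint boxes of length $\ell = L^{\alpha}$ for some $\alpha\in(0,1)$; (ii) use exponential decay of eigenfunction correlators to approximate the eigenvalues of $H_\omega^L$ near $E_0$ by those of the box-restricted operators with error $O(e^{-c\ell})$, much smaller than the unfolded scale $1/L$; (iii) prove a Minami-type estimate for the one-dimensional AB RPM at scale $\ell$ to rule out multiple eigenvalues per box near $E_0$; and (iv) compute the limit Laplace functional using the independence of restrictions of $H_\omega$ to well-separated boxes. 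Lebesgue intensity then follows from counting: $\E\,\#\{j : E_j(L)\in I\} = L\mu(I)$, so after unfolding a set $J$ pulls back to $I_J := N^{-1}(N(E_0)+J/L)$ with $L\mu(I_J) = |J|$.

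For part (1), at $E_c$ the commuting condition $[T^{E_c}_+,T^{E_c}_-]=0$ together with $|\mathrm{Tr}\,T^{E_c}_\pm|<2$ (or $T^{E_c}_\pm=\pm I$) allows simultaneous conjugation into rotations $R(\phi_\pm)$ in a common basis. Writing $E = E_c + \lambda/L$ and expanding to first order, the Dirichlet eigenvalue equation on $[0,L-1]$ becomes the search for zeros of $\sin\bigl(N_+(\omega,L)\phi_+(E) + N_-(\omega,L)\phi_-(E) + \alpha_\omega\bigr)$, where $N_\pm(\omega,L)$ counts the $\pm$-polymers fitting into $[0,L-1]$ and $\alpha_\omega$ is a boundary phase from the hopping at $n=0,L$. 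By the strong law of large numbers, $N_+\phi'_+(E_c) + N_-\phi'_-(E_c) = L\rho + o(L)$ a.s., with $\rho := (p_+\phi'_+(E_c) + p_-\phi'_-(E_c))/\langle L_\pm\rangle$, giving asymptotically equispaced eigenvalues $E_j(L) - E_c \sim (j\pi - \theta_\omega)/(L\rho)$. The Thouless identification $\pi N'(E_c) = \rho$ at a critical energy (valid despite the non-smoothness of $N$ because $L$ vanishes to at least second order there) rescales these into unfolded units $j - \theta_\omega$, yielding the clock law. Uniformity of $\theta_\omega$ on $[0,1)$ is obtained from the randomization at the origin described in \cite[\S4.1]{jss}, which induces mixing of the boundary phase.

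The main obstacle is the uniform equidistribution of $\theta_\omega$ in part (1): this is not a deterministic cancellation but a genuine probabilistic mixing statement that must be insensitive to the arithmetic nature of $\phi_\pm(E_c)/\pi$, and one must also handle the degenerate case $T^{E_c}_\pm = \pm I$, where the elliptic expansion collapses and higher-order terms in the single-polymer transfer matrix govern the spacing. A secondary but nontrivial difficulty in part (2) is the Bernoulli Minami estimate at the intermediate scale $\ell$; a direct one-dimensional transfer-matrix proof in the spirit of Shirley should suffice, exploiting the $2\times 2$ structure and the strict positivity of $L$ on the relevant interval.
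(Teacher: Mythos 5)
Your overall architecture matches the paper's (Pr\"ufer/transfer-matrix analysis at $E_c$, box decomposition plus Minami plus Poisson limit theorem off $\mathcal{C}$), but there are two genuine gaps. First, in part (1) you locate the main difficulty correctly --- the uniformity of the limiting phase $\theta_\omega$ --- but your proposed mechanism for it is wrong. The randomization at the origin from \cite[\S 4.1]{jss} is explicitly inessential here; it cannot make the boundary phase uniform. The uniformity comes from the \emph{bulk}: at a polymer node, $\theta_L(E_c)=\sum_{\ell=0}^{N-1}\eta_{\omega_\ell}$ is a sum of iid phases, and $\E[e^{2ik\theta_L(E_c)}]=\langle e^{2ik\eta_\pm}\rangle^{N}\to 0$ for every $k\neq 0$ precisely when $|\langle e^{ik\eta_\pm}\rangle|<1$, i.e.\ when $k(\eta_+-\eta_-)\not\equiv 0 \pmod{2\pi}$ for all $k\neq 0$. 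So the statement is \emph{not} insensitive to the arithmetic of the rotation angles --- it is a hypothesis (and without it one only gets clock$(\mu)$ for some $\mu$, not uniform clock). You also need, and do not supply, the a.s.\ convergence $\Psi_L(x)\to x$ of the relative Pr\"ufer phase, which requires controlling the oscillatory sum $\sum_\ell c_{\omega_\ell}e^{2iS^\ell_{\epsilon,\omega}(\theta)}$ at scale $\epsilon\sim 1/L$; this is the large-deviation estimate of \cite[Theorem 6]{jss} combined with Borel--Cantelli, not just the law of large numbers for $N_\pm$. Finally, one needs $0<N(E_c)<1$ so that the limiting atoms fill all of $\Z$ rather than a half-line.

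Second, in part (2) your claim that ``the H\"older continuity of $N$ from \cite{CKM} is enough to make this push-forward well-behaved on the scale $1/L$'' is the step that fails. Upper H\"older continuity of $N$ gives no control whatsoever on the \emph{length} of the pulled-back interval $I_J:=N^{-1}(N(E_0)+J/L)$; for that you need a lower bound on the modulus of continuity of $N$, i.e.\ H\"older continuity of $N^{-1}$ with some exponent $\rho_2$, and the Wegner and Minami estimates (which are stated in terms of $|I_J|$) only close the Poisson limit argument when $\rho_1\rho_2>\tfrac23$. This is an additional, currently unverified assumption for the AB RPM, not a consequence of \cite{CKM}, and without it the whole scheme (in particular ruling out two eigenvalues per box in $I_J$ and showing the boundary region $\Upsilon$ carries no eigenvalues) does not go through. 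Relatedly, the Minami estimate itself is not a routine Shirley-type argument for Bernoulli randomness: one needs Bourgain's two-approximate-eigenvector lemma together with SULE bounds from the bootstrap MSA, adapted so that the splitting points are polymer nodes to preserve independence of the two half-boxes. These are the places where the real work lies, and the proposal leaves them open.
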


%

There 
are only few cases in which a Hamiltonian shows more than two completely different LES at different reference energies.
The 
only example known so far is the 1D \Schr operator with critically decaying random potential\cite{KVV, kn2, nakano2}, 
where LES is the Sine$_{\beta}$ process where  
$\beta = \beta (E_0)$
is a non-constant smooth function of the reference energy 
$E_0 \in \Sigma$ 
which takes all values in 
$(0, \infty)$ \cite{nakano2}.
Theorem \ref{thm:main1} now implies that AB RPM is another example.

We obtain another characterization of the limited behavior of the eigenvalues near a critical energy $E_j^c \in \mathcal{C}$. We re-number the eigenvalues of $H_\omega^L$ as
\bel{eq:order1}
\ldots E'_{-2}(L) < E'_{-1}(L) < E_j^c \leq E'_0(L) < E'_1(L) < \ldots.
\ee
We are interested in the rescaled, unfolded eigenvalue spacing around $E_j^c$:
\bel{eq:spacing0}
{N}(E'_{m+1}(L)) - {N}(E'_m(L)) .
\ee
This can be written in a simplified manner. 
It follows from \cite[Theorem 2]{jss} that the IDS is differentiable at $E_j^c$, with the derivative $n(E_j^c) := {{N}}^\prime(E_j^c) > 0$. Consequently, the 
unfolded eigenvalues in \eqref{eq:spacing0} may be expressed as 
\bel{eq:ids_c1} 
{N}(E'_{m+1}(L)) - {N}(E'_m (L)) :=  n(E_j^c) ( E'_{m+1}(L) - E'_m (L))  + o(L^{-1}) .
\ee

\begin{theorem}\label{thm:spacing01}
The local eigenvalue spacing near any critical energy $E_k^c \in \mathcal{C}$ 
satisfies
\bel{eq:spacing1}
\lim_{L \rightarrow \infty} n(E_k^c) L ( E'_{j+1}(L) - E'_j(L) ) = 1, 
~
\text{for any }j \in \Z, ~ a. s. 
\ee\end{theorem}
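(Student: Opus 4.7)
The statement is a \emph{strong clock} asymptotic: individual spacings of unfolded eigenvalues near $E^c_k$ converge almost surely to a deterministic value, which is strictly stronger than the distributional convergence to a uniform clock process provided by Theorem~\ref{thm:main1}(1). The natural route is a direct Pr\"ufer-type analysis at the critical energy, leveraging the commutation and ellipticity of the polymer transfer matrices in Definition~\ref{defn:critical_energy1}.

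At $E = E^c_k$ the matrices $T^{E^c_k}_+$ and $T^{E^c_k}_-$ commute and are each either elliptic (with $|\operatorname{tr}|<2$) or equal to $\pm I$. Hence there is a common inner product on $\R^2$ in which both act as rotations by fixed angles $\alpha_\pm\in[0,\pi)$. For a realization $\omega$ with $k_\pm$ polymers of type $\pm$ in $[0,L-1]$ (so $k_+L_+ + k_-L_- = L$), the transfer matrix $T^{E^c_k}_\omega(L,0)$ is the rotation by the angle $\Theta_L := k_+\alpha_+ + k_-\alpha_-$, and by the strong law of large numbers $k_\pm/L \to p_\pm/\langle L_\pm\rangle$ almost surely.

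Introduce the modified Pr\"ufer angle $\theta_L(E)$ associated with $T^E_\omega(L,0)$ acting on the left Dirichlet boundary vector, so that the Dirichlet eigenvalues of $H^L_\omega$ are precisely the energies satisfying $\theta_L(E)\equiv \phi_0\pmod\pi$ for a fixed boundary phase $\phi_0$. First-order perturbation in $E$, combined with the commutation at $E^c_k$, gives
\begin{equation*}
\partial_E\theta_L(E)\big|_{E=E^c_k}
= k_+\beta_+ + k_-\beta_-
= L\,\frac{p_+\beta_+ + p_-\beta_-}{\langle L_\pm\rangle}\,(1+o(1)) \quad \text{a.s.},
\end{equation*}
where $\beta_\pm$ are deterministic constants depending only on the polymer data. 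By the standard Pr\"ufer identification $N(E)=\lim_L \theta_L(E)/(\pi L)$ and the IDS differentiability at $E^c_k$ proved in \cite{jss}, the deterministic prefactor equals $\pi\,n(E^c_k)$. A uniform bound $|\partial_E^2\theta_L(E)|=o(L^2)$ on an $O(1/L)$-neighborhood of $E^c_k$ then upgrades this linear approximation to a sharp spacing estimate via the implicit function theorem: consecutive Dirichlet eigenvalues near $E^c_k$ satisfy
\begin{equation*}
E'_{j+1}(L) - E'_j(L) = \frac{\pi}{\partial_E\theta_L(E^c_k)} + o(L^{-1}) = \frac{1}{L\,n(E^c_k)} + o(L^{-1}) \quad \text{a.s.},
\end{equation*}
which is the desired claim.

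The main obstacle is the uniform second-derivative bound on $\theta_L$ inside the shrinking $O(1/L)$-window around $E^c_k$. Outside such a window the Lyapunov exponent is positive, so the transfer matrices grow exponentially and $\partial_E^2\theta_L$ is prima facie uncontrollable. The resolution is that within a $1/L$-neighborhood of $E^c_k$ each factor $T^E_\pm$ remains $O(1/L)$-close to its elliptic (or $\pm I$) value at $E^c_k$, so their product of length $O(L)$ stays bounded and differentiably close to a pure rotation in the common basis. Combining this $C^2$-equicontinuity with the almost-sure law of large numbers for the proportions $k_\pm/L$ converts the distributional clock statement of Theorem~\ref{thm:main1}(1) into the almost-sure pointwise spacing statement of Theorem~\ref{thm:spacing01}.
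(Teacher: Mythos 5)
Your overall strategy---linearize the (modified) Pr\"ufer phase in an $O(1/L)$ window around $E_c$ and convert phase increments of $\pi$ into consecutive Dirichlet eigenvalues via Sturm oscillation---is exactly the strategy of the paper, which works with the rescaled relative phase $\Psi_L(x) := \frac{1}{\pi}\{\theta_L(E_c + \frac{x}{n(E_c)L}) - \theta_L(E_c)\}$ and reduces the spacing claim to the locally uniform a.s.\ convergence $\Psi_L(x)\to x$ (Propositions \ref{prop:rel_prufer_conv1}--\ref{prop:rel_prufer_conv2}), plus the boundedness of the rescaled eigenvalues $x'_j(L)$ (Lemma \ref{lemma:x_bdd1}). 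However, your write-up has a genuine gap at the central analytic step. The single-polymer phase shift is not angle-independent to first order: by \cite[section 4.2]{jss}, $S_{\epsilon,\pm}(\theta)-\theta = \eta_\pm + \epsilon d_\pm - \epsilon\,\mathrm{Im}[c_\pm e^{2i\theta}] + \mathcal{O}(\epsilon^2)$. Consequently $\partial_E\theta_L(E_c)$ is \emph{not} of the form $k_+\beta_+ + k_-\beta_-$ with deterministic $\beta_\pm$; it equals $\sum_{\ell}d_{\omega_\ell} - \sum_\ell \mathrm{Im}[c_{\omega_\ell}e^{2iS^\ell_{0,\omega}(\theta)}]$. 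Showing that the oscillatory sum is $o(L)$ (in fact $\mathcal{O}(L^{1/2+\alpha})$) is the heart of the matter; it requires the non-degeneracy hypothesis $|\langle e^{2i\eta_\pm}\rangle|<1$ and the large-deviation estimate \cite[Theorem 6]{jss} (Lemma \ref{lemma:sum1}), and it can fail without that hypothesis (e.g.\ if $\eta_+=\eta_-$ the iterated phases do not equidistribute and the sum can be of order $L$). Your statement that $\beta_\pm$ ``are deterministic constants depending only on the polymer data'' silently assumes this cancellation.

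The same problem undermines your treatment of the remainder. The ``resolution'' you offer---that each factor $T^{E}_\pm$ is $O(1/L)$-close to a rotation, so the length-$O(L)$ product stays bounded---only controls the \emph{norm} of the product (a zeroth-order statement). Differentiating the product termwise gives $\partial_E\theta_L = \mathcal{O}(L)$ and $\partial^2_E\theta_L = \mathcal{O}(L^2)$ with no further input, and an $\mathcal{O}(L^2)$ second derivative produces an $\mathcal{O}(1)$ error in your Taylor expansion at scale $|E-E_c|\sim 1/L$, which destroys the conclusion. Upgrading $\mathcal{O}(L^2)$ to $o(L^2)$ (equivalently, making the linearization uniform over the window) again requires the oscillatory cancellation, uniformly in $\epsilon$ with $L\epsilon^2\lesssim 1$; this is precisely what Proposition \ref{prop:rel_prufer_conv2} extracts from \cite[Theorem 6]{jss} via a discretization of the window. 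Two smaller omissions: you need $0<N(E_c)<1$ so that $E'_j(L)$ is defined for every fixed $j$ and large $L$, and you need an a priori bound $\sup_L|x'_j(L)|<\infty$ (Lemma \ref{lemma:x_bdd1}) so that the pair $E'_j(L), E'_{j+1}(L)$ actually stays inside the compact window where your uniform estimates hold. Finally, the identification of the deterministic slope with $\pi n(E_c)$ should be done via the explicit formula $n(E_c)=\frac{1}{\pi}\frac{\langle d_\pm\rangle}{\langle L_\pm\rangle}$ of \cite[(47)]{jss} rather than by interchanging the $L\to\infty$ limit with $\partial_E$ in $N(E)=\lim_L\theta_L(E)/(\pi L)$, which you have not justified.
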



\begin{remark}
(1) 
Theorem \ref{thm:spacing01}
implies that eigenvalue spacings asymptotically show a repulsion, and such behavior is called the clock behavior.
There are 
a couple of examples showing clock behavior : 
(i)
The ergodic 
Jacobi matrices, if 
$E_0$
lies in absolutely continuous spectrum \cite{ALS}, 
(ii)
1D \Schr operators with supercritical decaying random or more general potentials \cite{kn2, nakano2, CN}, and 
(iii)
1D \Schr operators with sparse random potentials where 
$E_0$
lies in singular continuous spectrum \cite{BW}.

Our result is 
the first example of clock LES for a random \Schr operator with pure point spectrum. 
\\
(2)
We believe that 
if the LES is clock$(\mu)$, then 
$\mu$
is generically uniform for translation invariant systems. 
Note that 
it is not the case for 1D \Schr operator 
with supercritical random decaying potential \cite{kn2}.
%

\end{remark}

The method 
of proof of part (2) of Theorem \ref{thm:main1} also applies to the standard Anderson-Bernouilli model.
So, 
we mention a corollary for the one-dimensional discrete AB RSO.

\begin{follow}\label{cor:ABmodel}
Let $H_\omega$ be the random 
Anderson model with Bernoulli single-site probability measure. 
Assume that 
$H_{\omega}$
satisfies Assumption \ref{assump:dos1}.
Then the unfolded LES centered at any energy $E_0  \in \Sigma$ is a Poisson point process with intensity measure given by the Lebesgue measure on $\R$. 
\end{follow}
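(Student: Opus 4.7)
The plan is to realize the standard one-dimensional Anderson-Bernoulli model as the degenerate instance $L_+=L_-=1$ of the AB RPM and then apply part (2) of Theorem~\ref{thm:main1} (i.e.\ Theorem~\ref{thm:les_loc1}). With the choice $\hat{t}_\pm(0)=1$ and $\hat{v}_\pm(0)=v_\pm$, $v_+\neq v_-$, the sequence $v_\omega$ is iid Bernoulli with law $p_+\delta_{v_+}+p_-\delta_{v_-}$, which is exactly the AB RSO; the polymer transfer matrices $T^E_\pm$ in \eqref{eq:transfer3} reduce to the single-site matrices $T_{v_\pm - E,1}$.

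The key observation is that under this reduction the set of critical energies $\mathcal{C}$ is empty. By Definition~\ref{defn:critical_energy1}(ii), criticality requires $[T^{E_c}_+,T^{E_c}_-]=0$, whereas a direct computation shows that the commutator of the two single-site transfer matrices has both off-diagonal entries equal to $v_--v_+\neq 0$. This is the fact already noted in the paper that single-site transfer matrices at distinct potentials never commute. Consequently every $E_0\in\Sigma$ lies in $\Sigma\setminus\mathcal{C}$, and Theorem~\ref{thm:main1}(2) delivers the Poisson conclusion.

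It then remains to confirm that the structural inputs used in the proof of Theorem~\ref{thm:les_loc1} survive the reduction to $L_\pm=1$: positivity of the Lyapunov exponent on $\Sigma$ (classical by F\"urstenberg's theorem), \Holder regularity of the IDS (from \cite{CKM}), a Wegner estimate, and a Minami-type estimate. The last of these is precisely what Assumption~\ref{assump:dos1} is invoked to supply in the corollary's hypothesis, and the others are automatic in the one-dimensional AB setting.

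The principal obstacle is therefore a bookkeeping one: verifying that no step in the proof of Theorem~\ref{thm:les_loc1} implicitly uses $L_\pm\geq 2$ or a nontrivial polymer length scale, for instance in the partitioning of $[0,L-1]$ into polymer blocks underlying the transfer-matrix products \eqref{eq:klock_trans_matrix1}--\eqref{eq:lyapunov0} or in the unfolding via $N$. Since these constructions are already formulated for arbitrary $L_\pm\geq 1$, no new ingredient should be required and the proof goes through verbatim.
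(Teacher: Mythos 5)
Your proposal is correct and follows the route the paper intends: the corollary is stated without a separate proof precisely because the AB model is the $L_+=L_-=1$ instance of the RPM, the critical set $\mathcal{C}$ is empty (your commutator computation, giving off-diagonal entries $v_--v_+\neq 0$, matches the paper's remark that single-site transfer matrices at distinct potentials never commute), and Theorem \ref{thm:les_loc1} then applies at every $E_0\in\Sigma$ under Assumption \ref{assump:dos1}. The only minor imprecision is attributing the Minami estimate to Assumption \ref{assump:dos1}: the Minami bound itself comes from Bourgain's lemma and the Wegner estimate, while Assumption \ref{assump:dos1} controls the unfolding and makes the resulting error terms summable — but this does not affect the validity of the argument.
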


\subsection{Outline} \label{subsec:outline1}

The main results of this paper apply to the Anderson-Bernoulli (AB) random polymer model (RPM) as introduced by \cite{jss}, generalizing the AB random dimer model studied by  \cite{dBg2}. 
In section \ref{sec:clock1}, 
we prove the first part of Theorem \ref{thm:main1} on uniform clock statistics at critical energies for AB RPM. We also prove Theorem \ref{thm:spacing01} on the strong clock behavior of the eigenvalue spacing. 
To prove these theorems, 
we basically follow the strategy in \cite{kn2}.
The Poisson characterization of the unfolded LES for the AB RPM at non-critical energies is presented in section \ref{sec:poisson1}, roughly following ideas in \cite{nakano1, GKlo1} on the eigenvalue statistics.  Because the single-site probability measure is Bernoulli, we prove a Minami estimate partially using the ideas in \cite{bourgain1, klopp1}. We also use bootstrap multiscale analysis (MSA) of Germinet and Klein \cite{gk_boot1} in order to obtain localization bounds. 
Several appendices summarize these results on bootstrap and results of \cite{jss} used in the main proofs.

\subsection{Notation} We use $C > 0$ to denote a finite, nonnegative constant whose value may change from line-to-line, but is independent of the parameters of the problem like $L$ and $E$.

\subsection{Acknowledgement} We thank Xiaolin Zeng for discussions on the topics of this paper. PDH is partially supported by Simons Foundation Collaboration Grant for Mathematicians No.\ 843327. FN is partially supported by JSPS KAKENHI Grant Number 20K03659. 


\section{Delocalization: LES is uniform clock at critical energies}\label{sec:clock1}
\setcounter{equation}{0}

This section is devoted to the proof that the LES are given by a uniform clock process
when the approximating processes are centered at a critical energy, by following the strategy in \cite{kn2}. 
We compute the limit as $L \rightarrow \infty$ of the characteristic function 
$\E \{ e^{- \xi_L(f) } \}$, where the process $\xi_L$ is defined by the unfolded eigenvalues relative to a critical energy $E_k^c \in \mathcal{C}$:
\bel{eq:les_process1}
\xi_L :=  \sum_{j=1}^L \delta_{n(E_k^c) L ( E_j(L) - E_k^c ) } .
\ee
Our calculations hold for any critical energy in $\mathcal{C}$. Consequently we write $E_c$ for any $E_k^c$.  We also prove the strong clock characterization that the limit as $L \to \infty$ of the rescaled eigenvalue spacing is a constant. 

We recall that as the DOS function $n(E) := {N}'(E)$ exists at $E = E_c$, 
so that by (\ref{eq:ids_c1}), 
the limit of 
$\xi_L$ 
in (\ref{eq:les_process1}) coincides with that of 
(\ref{eq:les1}).
We also recall that for any random variable
$c = (c_{\pm})$ 
depending on $\omega_{\pm}$, we define the average $\langle c_\pm \rangle := p c_+  + (1-p) c_-$. 
 

\subsection{The main result on the LES at critical energies}

We recall that the basic transfer matrices for the RPM are defined in \eqref{eq:transfer1} - \eqref{eq:transfer3}. In particular, the one-polymer transfer matrices are:
\bea\label{eq:one_poly_trans1}
T^E_{\pm}
&:=&
T_{ \hat{v}_{\pm} (L_{\pm}-1)-E,\, \hat{t}_{\pm}(L_{\pm}-1) }
T_{ \hat{v}_{\pm} (L_{\pm}-2)-E,\, \hat{t}_{\pm}(L_{\pm}-2) }
\cdots
T_{ \hat{v}_{\pm} (0)-E, \,\hat{t}_{\pm}(0) }  .
\eea
We will also need transfer matrices that map blocks of polymers to other blocks. The block transfer matrix from 
the $m^{th}$-block to the $k^{th}$-block is given by: 
\bea
T_{\omega}^E (k, m)
&:=&
T^E_{\omega_{k-1}}
T^E_{\omega_{k-2}}
\cdots
T^E_{\omega_{m}}, 
\quad
k > m , 
\\
T^E_{\omega} (k,m)
&:=&
T^E_{\omega} (m, k)^{-1},
\quad
k < m.
\eea

We recall the definition of a critical energy in Definition \ref{defn:critical_energy1}. The 
first condition says that 
$E_c$
lies in the interior of the spectrum
$\sigma (H_{\pm})$
of the corresponding periodic \Schr operators constructed from a single polymer $L_+$ or $L_-$. 

By definition, the $L_\pm$-polymer transfer matrices $T^{E_c}_{\pm}$
can be diagonalized simultaneously. Since $| tr (T^{E_c}_{\pm}) | < 2$, 
there exist  $\eta_{\pm} \in {\mathbb{R}}$, and an invertible matrix $M$, 
so that
\bel{eq:diagonalize1}
M T_{\pm}^{E_c} M^{-1}
= \left(\begin{array}{cc}
\cos \eta_{\pm} & - \sin \eta_{\pm} \\
\sin \eta_{\pm} & \cos \eta_{\pm}
\end{array}
\right).
\ee
In other words, the phases 
$e^{ i \eta_{\pm} }$ and $e^{- i \eta_{\pm} }$ 
are the eigenvalues of $T_{\pm}^{E_c}$,
respectively.
We prove that the LES $\xi_L$ converges to clock and that the unfolded eigenvalues satisfy clock behavior. This is the content of the following theorem based on two assumptions: 1) positivity of the IDS at a critical energy, and 2) an assumption on the behavior of the eigenvalues of  $T_{\pm}^{E_c}$. 


\medskip

\begin{theorem}\label{thm:clock_1}
We assume that: 
\begin{enumerate}
\item The eigenvalues $e^{i \eta_\pm}, e^{-i \eta_\pm}$ of the polymer transfer matrices $T_\pm^{E_c}$, at a critical energy $E_c$, satisfy the condition: 
\bel{eq:irr1}  
| \langle e^{ i k\eta_{\pm} } \rangle | < 1, 
\;
\mbox{ for all }
k \in {\Z \backslash \{0\}} ; 
\ee

\medskip

\item  The integrated density of states at a critical energy is strictly positive:
\beq
0 < {{N}}(E_c) < 1.
\eeq
\end{enumerate}
Then, the LES $\xi_L$ at 
$E_c$, defined in \eqref{eq:les_process1}, satisfies
\begin{enumerate}
\item $\xi_L 
\stackrel{d}{\to}
clock (unif ( [0, 1) )$, where $unif (I)$ denotes the uniform distribution on the interval $I \subset \R$.
\\
\medskip

\item The rescaled eigenvalues, centered around $E_c$,  exhibit strong clock behavior : for any $j \in {\Z}$, we have 
$$
n(E_c) L \left(  E'_{j+1}(L) - E'_j (L)  \right)  \to 1, 
\;
a.s. 
$$
\end{enumerate}

\end{theorem}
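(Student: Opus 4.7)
The plan is to reduce the eigenvalue counting problem to a scalar random walk of accumulated Pr\"ufer phases, and then use the irrationality hypothesis to equidistribute this walk modulo $\pi$. Introduce Pr\"ufer variables $(r_n(E), \theta_n(E))$ for the dynamics generated by the transfer matrices $T_\omega^E(k, 0)$ on $[0, L-1]$, and write $\theta_L(E)$ for the total accumulated phase, chosen so that $E$ is a Dirichlet eigenvalue of $H_\omega^L$ iff $\theta_L(E) \in \pi \Z$. Because $N(E) = \lim_{L \to \infty} \theta_L(E)/(\pi L)$ and $N$ is differentiable at $E_c$ with positive derivative $n(E_c)$ by \cite[Theorem 2]{jss}, I expect the expansion
$$
\theta_L\!\left( E_c + \frac{t}{n(E_c) L} \right) = \theta_L(E_c) + \pi t + o(1)
$$
to hold uniformly on compact sets in $t$, almost surely. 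The eigenvalue condition then forces the rescaled eigenvalues $t_j := n(E_c) L (E_j(L) - E_c)$ to populate the shifted lattice $\Z - \theta_L(E_c)/\pi$ up to $o(1)$, so the limiting point process is a clock with random shift $\theta_L(E_c)/\pi \bmod 1$.

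At $E = E_c$ the commutation $[T^{E_c}_+, T^{E_c}_-] = 0$ together with $|\mathrm{tr}\, T^{E_c}_\pm| < 2$ from Definition \ref{defn:critical_energy1} permits the simultaneous conjugation \eqref{eq:diagonalize1} to rotations by $\eta_\pm$, so
$$
\theta_L(E_c) = \sum_{j = 1}^{M(L, \omega)} \eta_{\omega_j} + O(1) ,
$$
where $M(L, \omega)$ is the random number of polymers fitting into $[0, L-1]$, concentrating around $L / \langle L_\pm \rangle$. Since the increments $\eta_{\omega_j}$ are i.i.d.\ Bernoulli, the characteristic function factorizes: $\E[ e^{i k \sum \eta_{\omega_j}} ] = \langle e^{i k \eta_\pm} \rangle^{M}$. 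Hypothesis \eqref{eq:irr1} drives this to zero exponentially for every $k \in \Z \setminus \{0\}$, so by Weyl's equidistribution criterion $\theta_L(E_c)/\pi \bmod 1$ converges in distribution to the uniform law on $[0, 1)$. The hypothesis $0 < N(E_c) < 1$ ensures $E_c$ lies in the bulk of the spectrum, so the enumeration \eqref{eq:order1} and the rescaling are meaningful.

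Putting the two ingredients together and testing against $f \in C_0(\R)$ yields
$$
\E [ e^{- \xi_L(f)} ] \longrightarrow \int_0^1 e^{- \sum_{n \in \Z} f(n - \theta)} \, d\theta ,
$$
which matches \eqref{eq:clock_defn} with $\mu = \mathrm{unif}\,[0, 1)$. The strong clock statement is an immediate consequence of the almost sure identity $t_{j+1} - t_j = 1 + o(1)$, in which the random shift cancels between consecutive eigenvalues. The principal technical obstacle is the quantitative expansion of $\theta_L$ on the scale $t/(n(E_c) L)$: this requires controlling the stochastic fluctuations of $\partial_E \theta_L(E_c)$ around its deterministic mean $\pi L \, n(E_c)$, a delicate second-order transfer-matrix calculation at a vanishing Lyapunov exponent that relies on a Thouless-type identity and the differentiability result of \cite[Theorem 2]{jss}. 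A secondary hurdle is upgrading the characteristic-function convergence of $\theta_L(E_c)$ to joint convergence against compactly supported continuous test functions, which is standard once tightness of $\xi_L$ on compact windows is established by a counting argument.
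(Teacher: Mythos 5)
Your proposal is correct and follows essentially the same route as the paper: reduce to the (modified) Pr\"ufer phase, prove the rigidity $\theta_L(E_c+x/(n(E_c)L))-\theta_L(E_c)=\pi x+o(1)$ locally uniformly via the perturbative phase-shift expansion of \cite{jss} and the identity $n(E_c)=\frac{1}{\pi}\langle d_\pm\rangle/\langle L_\pm\rangle$, and equidistribute $\theta_L(E_c)$ modulo $\pi$ by Weyl's criterion using the i.i.d.\ increments $\eta_{\omega_\ell}$ and hypothesis \eqref{eq:irr1}. The technical obstacles you flag (fluctuations of the phase derivative at vanishing Lyapunov exponent, and the non-node correction when $L$ is not a polymer node) are exactly the ones the paper resolves via \cite[Theorem 6]{jss} and a Borel--Cantelli argument.
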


\medskip

We make some remarks about the two assumptions of Theorem \ref{thm:clock_1}. 

\begin{remark}\label{rmk:ev_cond1} 

\begin{enumerate}
\item Let  $\triangle \eta := \eta_+ - \eta_-$. 
To understand condition \eqref{eq:irr1}, we compute
\bea\label{eq:phase2}
| \langle  e^{i k\eta_{\pm}}   \rangle   |^2
& =  &
 | p e^{i k \eta_+} + (1 - p) e^{i k\eta_-}  |^2
= 1 +   2p (1 - p) 
\left(
-1 +
\cos (k \triangle \eta)
\right).
\eea
Thus, the assumption \eqref{eq:irr1} is satisfied if for any $k \in \Z \backslash \{0\}$, we have $k \triangle \eta \not\equiv 0 \,  ( \rm{mod} ~ 2 \pi )$. For the dimer model, $L_\pm =2$, direct calculation shows that this condition is satisfied for certain $0 < V < 1$. 

\medskip

\item 
The positivity assumption on the IDS
may be replaced by a much weaker one that the number of eigenvalues in 
$(-\infty,  E_c)$, $(E_c, \infty)$ goes to infinity as $L \to \infty$, 
of which 
the positivity of IDS is a sufficient condition.
But 
the latter one can explicitly be examined for AB RPM.
In fact, by \cite[(14)]{jss}, the IDS at a critical energy satisfies, 
\bea
{N}(E_c) &=& \frac {\langle  L_{\pm} {N}_{\pm} (E_c)
\rangle  }{ \langle  L_{\pm}  \rangle  }
=  \frac { \langle  \eta_{\pm} / \pi  \rangle  }{ \langle L_{\pm} \rangle  }.
\eea
The dimer model, $L_\pm =2$, with $V \leq 1$, satisfies $\eta_+ = \pi$ and $\eta_- = f(V)$, with $f(V)$ explicitly computable, so that the condition is satisfied: $0 < {N} (E_c)  = \frac{1}{2} < 1$.\\

\end{enumerate}

\end{remark}

In the next subsections, we present the main technical tools necessary for the proof of Theorem \ref{thm:clock_1}. 


\medskip

\subsection{Modified Pr\"ufer variables}\label{subsec:prufer1}

It will be convenient to use modified Pr\"ufer variables, as in \cite{jss}, to describe the LES. 
We consider the solution 
$u$
to the equation 
$Hu = Eu$
with initial condition
\beq
\left(
\begin{array}{c}
t(0) u(0) \\ u(-1)
\end{array}
\right)
=
\left(
\begin{array}{c}
\cos \theta_0 \\ \sin \theta_0
\end{array}
\right), 
\quad
\theta_0 \in [0, 2 \pi). 
\eeq
The \emph{modified Pr\"ufer variables} $(R_n (E), \theta_n (E))$ are defined by 
\bel{eq:prufer_defn1}
R_n (E)
e_{\theta_n (E)}
: =
M \left(
\begin{array}{c}
t(n) u(n) \\ u(n-1)
\end{array}
\right), 
\quad
e_{\theta}
:=
\left(
\begin{array}{c}
\cos \theta \\ \sin \theta
\end{array}
\right)  ,
\ee 
where $M$ is the matrix transforming the  polymer transfer matrices at a critical energy to rotations as in \eqref{eq:diagonalize1}.

We define functions $r:\R \to \R^+$, and $m:\R \to \R$, by
\bel{eq:m_fncs1}
Me_\theta = r(\theta) e_{m(\theta)} . 
\ee
 Since $e_{\theta + \pi} =   - e_\theta$, we can take $m(\theta+ \pi) = m ( \theta) + \pi$. As $r(\theta + \pi) = \| M e_{\theta + \pi}\|$, it follows that  $r(\theta+ \pi) = r(\theta)$.  
Finally, these functions are smooth in $\theta$ since $\theta \mapsto e_\theta$ is smooth.

For any $x \in \R$, we define the fractional part $(x)_{\pi {\Z}}$, and the integer part $ [x]_{\pi {\Z}}$, of $x$ modulo $\pi \Z$, by 
\[ 
(x)_{\pi {\Z}} := x - [x]_{\pi {\Z}}, 
\quad
[x]_{\pi {\Z}} 
:= \max \left\{ y \in \pi   {\Z}  \, \middle| \,  y  \le x  \right\}  .
\]

For $E \in \R$, let $\phi(E, L) := (\theta_L(E))_{\pi {\Z}}  \in [0, \pi)$, be the fractional part, and $m(E, L) \pi := [\theta_L (E)]_{\pi {\Z}} \in \pi \Z$, be the integer part, of the Pr\"ufer phase $\theta_L (E)$, modulo $\pi \Z$, respectively, so that 
\bel{eq:decomp1}
\theta_L (E)   =   m(E, L) \pi  + \phi(E, L) .
\ee
Following \cite{kn2}, 
we also define a \emph{relative Pr\"ufer angle,} divided by $\pi$, by 
\bel{eq:rel_prufer2}
\Psi_L (x) := \frac {1}{\pi}
\left\{\theta_L \left(
E_c+\frac {x}{ n(E_c) L}
\right)- \theta_L(E_c)
\right\}, 
\quad
x \in {\R}.
\ee
%


\subsection{The characteristic function of $\xi_L$}\label{subsec:charact1} 



We first obtain a representation of the characteristic function of $\xi_L$ in terms of the Prufer phase. 

\medskip

\begin{lemma}\label{lemma:clock2}
The characteristic function of the process $\xi_L$ \eqref{eq:les_process1}, associated with the finite-length RPM \Schr operator $H_L$, satisfies
\bel{eq:clock_L1}
{\E}[e^{- \xi_L(f)}]
=
{\E}\left[
\exp \left\{
-\sum_{n =1 - m(E_c, L)}^{L - m(E_c, L)}
f
\left( 
\Psi_L^{-1} 
\left(n  - 
\frac {\phi(E_c, L) - m ( \frac{\pi}{2})}{\pi} 
\right)
\right)
\right\}
\right], 
\quad
f \in C_c({\R})
\ee
where
$\Psi_L^{-1}$
is the inverse of the function
$x \mapsto \Psi_L(x)$ defined in \eqref{eq:rel_prufer2}, and $m(\theta)$ is defined in 
\eqref{eq:m_fncs1}.
\end{lemma}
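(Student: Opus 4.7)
The plan is to parametrize the Dirichlet eigenvalues $E_j(L)$ of $H_\omega^L$ via a quantization condition on the modified Pr\"ufer phase $\theta_L(E)$, rewrite the rescaled eigenvalues $x_j := n(E_c) L(E_j(L)-E_c)$ in terms of the relative Pr\"ufer angle $\Psi_L$ from \eqref{eq:rel_prufer2}, and then change the summation index to produce the claimed expression for the characteristic function. Only the deterministic identity needs to be established pathwise; the expectation in \eqref{eq:clock_L1} then follows by taking $\E[\,\cdot\,]$ on both sides.

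First, I would extract the two boundary conditions. The Dirichlet condition $u(-1) = 0$ at $n=-1$ forces the initial vector in \eqref{eq:prufer_defn1} to be a multiple of $e_0 = (1,0)^T$; applying $M$ and using \eqref{eq:m_fncs1} gives $\theta_0 = m(0)$. Similarly, $u(L) = 0$ at $n = L$ forces the terminal vector to be a multiple of $e_{\pi/2} = (0,1)^T$, so that $Me_{\pi/2} = r(\pi/2)\,e_{m(\pi/2)}$ yields the eigenvalue quantization
\[ \theta_L(E_j(L)) \equiv m(\pi/2) \pmod{\pi}. \]
Invoking the monotonicity of $E \mapsto \theta_L(E)$ (as in the Sturm oscillation argument adapted to the modified Pr\"ufer transformation in \cite{kn2}), the eigenvalues $E_1(L) < \cdots < E_L(L)$ are in bijection with $L$ consecutive integers $k$ via $\theta_L(E_j(L)) = m(\pi/2) + k \pi$. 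Next I would pass to the unfolded coordinate: by the definition \eqref{eq:rel_prufer2} and the decomposition \eqref{eq:decomp1},
\[ \pi \Psi_L(x_j) \;=\; \theta_L(E_j(L)) - \theta_L(E_c) \;=\; m(\pi/2) + k \pi - m(E_c, L)\pi - \phi(E_c, L). \]
Setting $n := k - m(E_c, L)$ and dividing by $\pi$, the quantization becomes
\[ \Psi_L(x_j) \;=\; n - \tfrac{1}{\pi}\bigl(\phi(E_c,L) - m(\pi/2)\bigr), \]
and the monotonicity of $\Psi_L$ allows us to invert, giving the stated formula for $x_j$. A short bookkeeping argument identifies the range of $n$ as $\{1 - m(E_c,L), \ldots, L - m(E_c,L)\}$ when $j$ ranges over $\{1, \ldots, L\}$. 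Substituting into $\xi_L(f) = \sum_{j=1}^L f(x_j)$ and taking expectation yields \eqref{eq:clock_L1}.

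The main obstacle I expect is justifying the monotonicity and global invertibility of the relative Pr\"ufer angle $\Psi_L$, which in turn requires monotonicity of $E \mapsto \theta_L(E)$. The modified Pr\"ufer variable is obtained from the standard one by the fixed linear change of coordinates through $M$, which can a priori distort local monotonicity; what rescues the argument is that $M$ is a fixed invertible matrix, so the winding number and hence the total rotation of $\theta_L$ over the spectrum (namely $L\pi$) are preserved, and a Wronskian-type computation shows that the derivative $\partial_E \theta_L(E)$ has a definite sign. This is the step where the precise correspondence between the combinatorial index $j$ and the Pr\"ufer-phase index $k$ must be pinned down to ensure that the shifted index $n = k - m(E_c,L)$ covers exactly $\{1 - m(E_c,L), \ldots, L - m(E_c,L)\}$, matching the sum in \eqref{eq:clock_L1}.
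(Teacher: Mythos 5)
Your proposal is correct and follows essentially the same route as the paper: the quantization condition $\theta_L(E_j(L)) = j\pi + m(\pi/2)$ from Sturm oscillation theory, the decomposition $\theta_L(E_c) = m(E_c,L)\pi + \phi(E_c,L)$, inversion via the monotonicity of $\Psi_L$, and the index shift. The monotonicity issue you flag (that conjugation by $M$ could distort monotonicity of the phase) is exactly what the paper handles in Appendix \ref{app:incr_psi1}, where it is shown that one may assume $\det M > 0$ so that $m$ is increasing, and then monotonicity of $E \mapsto \theta_L^0(E)$ from \cite{jss} transfers to $\theta_L(E)$.
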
 
%
%
\begin{proof}
Let 
$x_n(L)$
be the 
$n^{th}$ atom of 
$\xi_L$ : 
\[
x_n (L) :=
n(E_c) L ( E_n(L) - E_c ), 
\quad
n\ge 1.
\]
Then, 
by Sturm oscillation theory, 
the modified Pr\"ufer angle at an eigenvalue satisfies 
%
\begin{eqnarray*}
\theta_L(E_n(L))
&=&
\theta_L
\left(
E_c+\frac {x_n (L)}{n(E_c) L}
\right) 
= 
m \left( n \pi + \frac {\pi}{2} \right)
=
 n \pi  + m \left(  \frac{\pi}{2} \right), 
\end{eqnarray*}%
and at a critical energy, the modified Pr\"ufer phase satisfies
\begin{eqnarray*}
\theta_L(E_c) 
&=& 
 m(E_c, L) \pi + \phi(E_c, L).
\end{eqnarray*}
We refer to \eqref{eq:free_mod1} for the relation between the free Prufer angle and the modified Prufer angle. 
From the definition \eqref{eq:rel_prufer2},  we then have  
\begin{eqnarray*}
\Psi_L (x_n(L))
&=&
n-m(E_c, L) +  \frac{ m(\frac{\pi}{2}) - \phi(E_c,L)}{\pi}  .
\end{eqnarray*}
Since $x \mapsto \Psi_L(x)$ is an increasing function (see Appendix \ref{app:incr_psi1}), we obtain 
\bea\label{eq:modified1}
\{ x_n (L) \}_{n =1}^{L}
&=  & \left\{ \Psi_L^{-1} 
\left( n-m(E_c, L) - \frac {\phi(E_c,L) -  m(\frac{\pi}{2})}{\pi}
\right) \right\}_{ n=1 }^{ L }   \nonumber \\
&=  & \left\{	 \Psi_L^{-1} 
\left( n- \frac {\phi(E_c,L) -  m(\frac{\pi}{2})}{\pi} \right) 
\right\}_{n =1 - m(E_c, L)}^{L - m(E_c, L)} .
\eea
As $\xi_L (f) = \sum_{n=1}^L f( x_n(L))$, formula \eqref{eq:modified1} establishes the result \eqref{eq:clock_L1}. 
\end{proof}

\medskip
%
%
The following lemma from \cite[Lemma 3.1]{kn1} is necessary in order to consider the convergence of the characteristic function. 
\medskip

\begin{lemma}\label{lemma:clock_pp2}
Let $\{ \Psi_L(x)\}_L$
be a sequence of non-decreasing functions satisfying 
$\Psi_L(x) \to \Psi(x)$
pointwise, and assume that 
$\Psi$ is an increasing function. 
If $x_L \to x$, then
\[
\Psi_L^{-1}(x_L) 
\stackrel{L \to \infty}{\to}
\Psi^{-1}(x).
\]
\end{lemma}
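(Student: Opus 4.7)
The plan is to use a squeeze/sandwich argument that exploits the strict monotonicity of the limit $\Psi$ together with the pointwise convergence $\Psi_L \to \Psi$ at two auxiliary points bracketing the target. First I would set $y := \Psi^{-1}(x)$, which is well-defined because $\Psi$ is strictly increasing, and fix an arbitrary $\varepsilon > 0$. Strict monotonicity of $\Psi$ then gives the strict inequalities
\[
\Psi(y - \varepsilon) < x < \Psi(y + \varepsilon),
\]
and I would pick a margin $\delta > 0$ so that $\Psi(y - \varepsilon) < x - \delta$ and $x + \delta < \Psi(y + \varepsilon)$.

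Next, invoking pointwise convergence of $\Psi_L$ at the two points $y \pm \varepsilon$, together with the hypothesis $x_L \to x$, I obtain an $L_0$ such that for every $L \ge L_0$,
\[
\Psi_L(y - \varepsilon) < x_L < \Psi_L(y + \varepsilon).
\]
Interpreting $\Psi_L^{-1}$ as the (left-continuous) generalized inverse $\Psi_L^{-1}(u) := \inf\{z \in \R : \Psi_L(z) \ge u\}$, the non-decreasing character of $\Psi_L$ translates each of the two inequalities above into bounds on the inverse: $\Psi_L(y - \varepsilon) < x_L$ forces $\Psi_L^{-1}(x_L) \ge y - \varepsilon$ (every $z < y - \varepsilon$ lies outside the defining set), while $\Psi_L(y + \varepsilon) > x_L$ puts $y + \varepsilon$ inside that set, so $\Psi_L^{-1}(x_L) \le y + \varepsilon$. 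Thus $|\Psi_L^{-1}(x_L) - y| \le \varepsilon$ for all $L \ge L_0$, and letting $\varepsilon \to 0$ yields $\Psi_L^{-1}(x_L) \to \Psi^{-1}(x)$.

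The main subtlety — more a matter of bookkeeping than a genuine obstacle — is the interpretation of $\Psi_L^{-1}$, since the $\Psi_L$ are only assumed non-decreasing and need not be strictly monotone or even continuous. Working with the generalized inverse bypasses this cleanly and is also the natural choice in the intended application of Lemma \ref{lemma:clock2}, where $\Psi_L$ is built from the finite-volume Prüfer phase. Notably, no uniformity in the convergence $\Psi_L \to \Psi$ is needed: pointwise convergence at just the two bracketing points $y \pm \varepsilon$ is what drives the argument, which is precisely why the lemma is convenient for passing to the limit in the characteristic-function identity \eqref{eq:clock_L1}.
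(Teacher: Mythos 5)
Your proof is correct and uses the same mechanism as the paper's: pointwise convergence of $\Psi_L$ at two points strictly bracketing $\Psi^{-1}(x)$, combined with the non-decreasing property of $\Psi_L$ to transfer the inequalities to the inverse. The paper phrases this as a proof by contradiction along a subsequence (treating only one of the two symmetric cases explicitly), whereas you run the same inequality chain directly and are more careful about the meaning of $\Psi_L^{-1}$ as a generalized inverse, but the underlying argument is identical.
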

%
%
\begin{proof}
If the statement is false, we can find 
$\delta > 0$, a point $x$, and a subsequence 
$L_n$ such that 
\[
|\Psi_{L_n}^{-1}(x_{L_n}) - \Psi^{-1}(x) | > \delta.
\]
Taking a subsequence, if necessary, we can assume that 
\[
\Psi_{L_n}^{-1}(x_{L_n}) <   \Psi^{-1}(x) - \delta.
\]
Since 
$\Psi_{L_n}$
is non-decreasing, 
\[
x_{L_n}
\le
\Psi_{L_n} \left( \Psi^{-1}(x) - \delta \right)
\]
Letting 
$n \to \infty$, 
we have
\[
x \le 
\Psi \left(
\Psi^{-1}(x) - \delta 
\right)
<
\Psi ( \Psi^{-1}(x)) = x
\]
leading to a contradiction. 
%
\end{proof}

\subsection{First convergence result for $\xi_L$} 

\medskip

Given the form of the characteristic function $\xi_L$ in \eqref{eq:clock_L1}, we compute the limit as $L \to \infty$ and obtain a first result on the clock process. We make three assumptions whose validity is discussed in Remarks \ref{rmk:clock_assump1}.

 \medskip
 
 \begin{assumption}\label{assumpt:prufer1}
\begin{enumerate}
\item The function $\Psi_L$ satisfies $\lim_{L \to \infty}
\Psi_L (x) =  x$, 
a.s.
\medskip

\item There exists a random variable 
$\phi_c \in [0, \pi)$ such that the fractional part $\phi(E_c, L)$ of the Pr\"ufer phase
$\theta_L (E_c)$
converges \emph{in distribution} to $\phi_c$.

\medskip

\item The integer part $m(E_c, L)$ of the Pr\"ufer phase 
$\theta_L (E_c)$
at a critical energy satisfies 
$$
\lim_{L \to \infty}  m(E_c, L)  =\infty,  
$$ 
and 
$$
\lim_{L \to \infty}   \left( L - m(E_c, L)   \right)    = \infty
$$
almost surely. 
\end{enumerate}
\end{assumption}

\begin{remarks}\label{rmk:clock_assump1} 
\noindent
\begin{enumerate}
\item Part 1, 2 is proven for RPM in section \ref{subsec:verify_clock1}, 
under some conditions on 
$\eta_{\pm}$.


\item Part 3 follows if the number of eigenvalues of $H_\omega^L$ in 
$(-\infty,  E_c)$ and in $(E_c, \infty)$ goes to infinity as $L \to \infty$. As discussed in Remark \ref{rmk:ev_cond1}, if the IDS at the critical energy satisfies
$0 < N(E_c) < 1$, 
then part 3 follows.

\item
As can be seen later in the proof, 
the most important condition in Assumption \ref{assumpt:prufer1} is part (1).
In fact, 
without part (2), 
we can show that the accumulation point of 
$\xi_L$
is a clock process with some 
$\mu$
(Corollary \ref{rmk:clock_assump2}), 
and wituout part (3), the sum
$\sum_{ n \in {\bf Z}}$
in the definition of the clock process may be replaced by 
$\sum_{n \in E}$
for some subset 
$E (\subset \R)$.
%

\end{enumerate}
\end{remarks}

We first present an intermediate result asserting the convergence of the characteristic function to a clock process with probability measure $\mu$. 
%
\begin{theorem} (Convergence to a clock process)\label{thm:clock1}
Under Assumption \ref{assumpt:prufer1}, the LES  $\xi_L$ centered at a critical energy satisfies $\xi_L \stackrel{d}{\to}
clock(\mu_{frac})$, where $\mu_{frac}$ is the distribution of the fractional part of the random variable 
$\frac{1}{\pi}\left( \phi_c - m(\frac{\pi}{2})\right)$.
\end{theorem}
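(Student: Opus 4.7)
The strategy is to compute $\lim_{L\to\infty}\E[e^{-\xi_L(f)}]$ for nonnegative $f\in C_c(\R)$ and match it against the Laplace functional of a clock$(\mu_{frac})$ process; convergence of Laplace functionals on $C_c(\R)_{\ge 0}$ characterizes convergence in distribution of point processes on $\R$, so this suffices.

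Starting from the representation \eqref{eq:clock_L1}, I would set $\alpha_L := \tfrac{1}{\pi}(\phi(E_c,L) - m(\pi/2))$ and decompose $\alpha_L = k_L + \beta_L$ with $k_L \in \Z$ and $\beta_L \in [0,1)$. Since $\phi(E_c,L) \in [0,\pi)$ is bounded and $m(\pi/2)$ is a fixed constant, $k_L$ takes values in a finite set of integers. Relabeling $n \mapsto n' := n - k_L$ in the sum rewrites \eqref{eq:clock_L1} as
\begin{equation*}
\E[e^{-\xi_L(f)}] = \E\left[\exp\left\{-\sum_{n' \in I_L} f\bigl(\Psi_L^{-1}(n' - \beta_L)\bigr)\right\}\right],
\end{equation*}
where $I_L \subset \Z$ is a finite interval whose left and right endpoints tend to $-\infty$ and $+\infty$, respectively, by Assumption \ref{assumpt:prufer1}(3) together with the boundedness of $k_L$.

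For fixed $f \in C_c(\R)_{\ge 0}$ with $\mathrm{supp}(f) \subset [-K,K]$, Assumption \ref{assumpt:prufer1}(1) combined with Lemma \ref{lemma:clock_pp2} gives $\Psi_L^{-1}(y) \to y$ almost surely at each $y \in \R$, and monotonicity of $\Psi_L^{-1}$ upgrades this to locally uniform convergence. Hence only those $n'$ with $n' - \beta_L$ in a slight enlargement of $[-K,K]$ can contribute, and for large $L$ all such $n'$ lie in $I_L$, so the sum effectively becomes a finite sum whose cardinality is controlled uniformly in $L$. Next, the pair $(\Psi_L^{-1}, \beta_L)$ converges jointly in distribution to $(\mathrm{id}, \beta_c)$, where $\beta_c$ has law $\mu_{frac}$: this follows because almost sure convergence of the first coordinate plus distributional convergence of the second (from Assumption \ref{assumpt:prufer1}(2)) forces joint weak convergence. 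Skorokhod's representation theorem then produces a coupling on which the joint convergence is almost sure, and continuity of $f$ yields
\begin{equation*}
\sum_{n' \in I_L} f\bigl(\Psi_L^{-1}(n'-\beta_L)\bigr) \stackrel{\mathrm{a.s.}}{\longrightarrow} \sum_{n' \in \Z} f(n' - \beta_c).
\end{equation*}
Bounded convergence (the integrand is dominated by $1$ since $f \ge 0$) delivers
\begin{equation*}
\lim_{L\to\infty}\E[e^{-\xi_L(f)}] = \int_0^1 e^{-\sum_{n' \in \Z} f(n'-\beta)}\, d\mu_{frac}(\beta),
\end{equation*}
which is the Laplace functional of the clock$(\mu_{frac})$ process.

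The most delicate point is the joint passage to the limit. Assumption (1) is a pathwise statement while (2) is only distributional, and the fractional-part map $\alpha \mapsto (\alpha)_{\mathrm{frac}}$ is discontinuous on $\Z$; Skorokhod's representation yields the required almost sure version on an enlarged probability space, but one must verify that the weak limit of $\alpha_L$ places no mass on $\Z$, which should come for free from the equidistribution argument used to verify Assumption \ref{assumpt:prufer1}(1)--(2) for the RPM in \S\ref{subsec:verify_clock1}. A secondary care-point is the boundary behavior of the finite sum: for exceptional values of $\beta_c$ some point $n' - \beta_c$ could lie on $\partial \mathrm{supp}(f)$, making the pointwise limit of the corresponding term ambiguous, but this is standard to handle by sandwiching $f$ between continuous cut-offs that vanish on a neighborhood of the support boundary.
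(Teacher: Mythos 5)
Your proposal is correct and follows essentially the same route as the paper: the representation from Lemma \ref{lemma:clock2}, the integer/fractional decomposition of $\frac{1}{\pi}(\phi(E_c,L)-m(\pi/2))$, Skorohod's representation to upgrade the distributional convergence of the Pr\"ufer phase, and Lemma \ref{lemma:clock_pp2} to pass to the limit in the Laplace functional. Your extra care about the joint convergence of $(\Psi_L^{-1},\beta_L)$ (valid here since the limit of $\Psi_L$ is deterministic) and about the continuity of the fractional-part map fills in details the paper's terse argument glosses over, but does not change the method.
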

%
\begin{proof}
%
By Skorohod's Theorem \cite{bill1}, we may assume that  
$\phi(E_c, L) \to \phi_c$, a.s.
We write
$
\frac {1}{\pi}
\left(
\phi_c - m \left( \frac {\pi}{2} \right)
\right)
=
\left[
\frac {1}{\pi}
\left(
\phi_c - m \left( \frac {\pi}{2} \right)
\right)
\right]
+
\left\{
\frac {1}{\pi}
\left(
\phi_c - m \left( \frac {\pi}{2} \right)
\right)
\right\}
$, 
where 
$\left[
\frac {1}{\pi}
\left(
\phi_c - m \left( \frac {\pi}{2} \right)
\right)
\right]
\in {\bf Z}$
and
$
\left\{
\frac {1}{\pi}
\left(
\phi_c - m \left( \frac {\pi}{2} \right)
\right)
\right\}
\in 
[0 ,1)$
are the integer and fractional part of 
$
\frac {1}{\pi}
\left(
\phi_c - m \left( \frac {\pi}{2} \right)
\right)
$
respectively.
We denote by 
$\mu_{frac}$
the distribution of this fractional part.
Lemma \ref{lemma:clock2} provides a representation of the expected value of  
$e^{-\xi_L (f)}$, for any $f \in C_0(\R)$:
\bel{eq:clock5}
{\E}[ e^{- \xi_L (f)} ]
= {\E}\left[ \exp \left(-\sum_{n=1 - m(E_c, L)}^{L - m(E_c, L)} 
f \left (\Psi_L^{-1} \left(n  - \frac{\phi(E_c, L) - m(\frac{\pi}{2})}{\pi} \right)
\right) \right) \right]   .
\ee
By Assumption \ref{assumpt:prufer1} and Lemma \ref{lemma:clock_pp2}, we find that the limit $L \rightarrow \infty$ of the right side of \eqref{eq:clock5} is
\bea\label{eq:clock4}
{\E}[ e^{- \xi(f)} ]
 & = &
\E \left[ \exp \left[
- \sum_{n \in {\Z}} f \left( n  -  \left[
\frac {1}{\pi}
\left(
\phi - m \left( \frac {\pi}{2} \right)
\right)
\right]
-
\left\{
\frac {1}{\pi}
\left(
\phi - m \left( \frac {\pi}{2} \right)
\right)
\right\} \right) 
\right] \right] \nonumber \\  
 & = & \int_0^1 d \mu_{frac}(\theta)
\exp \left[ - \sum_{n \in {\Z}} f \left( n  - \theta \right)
\right] , 
\eea
where we write $\theta := \frac{1}{\pi} ( \phi_c - m(\frac{\pi}{2}) ).$
\end{proof}

\medskip

The proof of 
Theorem \ref{thm:clock1}
implies the following corollary, stating that any accumulation point of 
$\xi_L$
is always a clock process for some 
$\mu$.

\begin{follow}\label{rmk:clock_assump2} 
If we only assume Assumption \ref{assumpt:prufer1}, parts 1 and 3,  then any accumulation point  $\xi$  of  $\xi_L$
is a clock process with some distribution  
$\mu$
on 
$[0, 1)$.
\end{follow}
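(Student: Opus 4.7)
The plan is to reduce to the setting of Theorem \ref{thm:clock1} by extracting an appropriate subsequence along which Assumption \ref{assumpt:prufer1}, part 2, also holds. The key observation is that the fractional Pr\"ufer phase $\phi(E_c, L) \in [0, \pi)$ automatically takes values in a compact set, so the family $\{\phi(E_c, L)\}_L$ is tight as a sequence of random variables.

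Suppose $\xi$ is an accumulation point of $\xi_L$, so there exists a subsequence $L_k \to \infty$ such that $\xi_{L_k} \stackrel{d}{\to} \xi$. By tightness of $\{\phi(E_c, L_k)\}_k$ on the compact interval $[0,\pi)$ and Prokhorov's theorem, we may pass to a further subsequence $L_{k_j}$ along which $\phi(E_c, L_{k_j})$ converges in distribution to some random variable $\phi_c$ taking values in $[0,\pi]$. Along this sub-subsequence all three parts of Assumption \ref{assumpt:prufer1} now hold (parts 1 and 3 by hypothesis, part 2 by construction), and $\xi_{L_{k_j}}$ still converges in distribution to $\xi$ since it was already convergent along $L_k$.

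The argument of Theorem \ref{thm:clock1} now applies verbatim along $L_{k_j}$: using the representation in Lemma \ref{lemma:clock2} and the convergence lemma \ref{lemma:clock_pp2}, the characteristic functional $\E[e^{-\xi_{L_{k_j}}(f)}]$ converges to
\[
\int_0^1 e^{-\sum_{n \in \Z} f(n-\theta)} \, d\mu(\theta),
\]
where $\mu$ is the distribution of the fractional part of $\frac{1}{\pi}(\phi_c - m(\frac{\pi}{2}))$. This identifies $\xi$ as a clock$(\mu)$ process with some distribution $\mu$ on $[0,1)$, as claimed.

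I do not expect any genuine obstacle here: the content is purely a tightness/subsequence argument layered on top of the proof of Theorem \ref{thm:clock1}. The only mild point worth checking is that Skorohod representation, applied along the extracted sub-subsequence, still combines correctly with the almost sure convergence $\Psi_{L_{k_j}}(x) \to x$ from part 1 and the almost sure divergence of both $m(E_c, L)$ and $L - m(E_c, L)$ from part 3; but this is the same measure-theoretic bookkeeping as in the original proof, so no new ideas are required.
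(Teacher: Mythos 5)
Your proposal is correct, but it takes a genuinely different (though closely related) route from the paper. You extract the convergence of the phase by compactness: since $\phi(E_c,L)\in[0,\pi)$ is bounded, the laws are tight, Prokhorov gives a further subsequence along which part 2 of Assumption \ref{assumpt:prufer1} holds, and then Theorem \ref{thm:clock1} applies along that sub-subsequence. The paper instead \emph{derives} the convergence in distribution of the phase from the assumed convergence of $\xi_{L_k}$: writing the Laplace functional as $\E[\exp\{-\sum_{n\in\Z}f(n+\theta_L)\}]$ and specializing to $f\in C_c([0,1))$, only the single atom $\theta_L$ survives, so $\E[e^{-f(\theta_{L_k})}]$ converges for all such $f$ and hence $\theta_{L_k}$ converges in distribution along the \emph{original} subsequence, with limit law $\mu$. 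The paper's route buys slightly more: no second extraction is needed, and $\mu$ is exhibited as canonically determined by $\xi$ (its restriction to $[0,1)$), rather than as the law of a sub-subsequential limit of phases. Your route is more routine but equally valid, since any sub-subsequential limit of $\xi_{L_{k_j}}$ must still equal $\xi$; the only points to watch are the ones you flag (possible mass at the endpoint $\pi$ is harmless because $\theta\mapsto\sum_{n\in\Z}f(n-\theta)$ is continuous and $1$-periodic, and the Skorohod bookkeeping is the same as in the proof of Theorem \ref{thm:clock1} itself).
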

\begin{proof}
For $f \in C_c ({\bf R})$, Lemma \ref{lemma:clock2} yields
\bea\label{eq:star}
\E[ e^{- \xi_L(f)} ]
&=&
\E
\left[
\exp
\left\{
-
\sum_{n \in {\bf Z}}
f(n + \theta_L)
\right\}
\right].
\eea
for sufficiently large
$L$, 
where we set
\beq
\theta_L
&:=&
\left\{
\frac {1}{\pi}
\left(
m \left(
\frac {\pi}{2}
\right)
-\phi(E_c, L)
\right)
\right\}.
\eeq
Furthermore, if $f \in C_c ([0, 1))$, since $\theta_L \in (0, 1]$, this yields
\begin{equation}
\E[ e^{- \xi_L(f)} ]
=
\E
\left[
\exp
\left\{
-
f(\theta_L)
\right\}
\right].
\label{theta}
\end{equation}
Let 
$\xi$
be an accumulation point of 
$\xi_L$.
Then, we can find a subsequence 
$\{ L_k \}_k$
such that 
$\xi_{L_k} \stackrel{d}{\to} \xi$.
Thus, as $k \to \infty$, the limit
of the LHS of (\ref{theta}) exists and 
\beq
\E[ e^{- \xi(f)} ]
&=&
\lim_{k \to \infty}
\E
\left[
\exp
\left\{
-
f(\theta_{L_k})
\right\}
\right]
\eeq
which implies that 
$\{ \theta_{L_k} \}$
has a limit in distribution: $\lim_{k \to \infty}
\theta_{L_k} \stackrel{d}{=:} \theta$, with distribution $\mu$ supported in $[0,1]$.  
%
Taking 
$L_k \to \infty$
in \eqref{eq:star}
now yields
\beq
\E[ e^{- \xi(f)} ]
&=&
\int_0^1 ~d \mu (\theta) ~ \left[
\exp
\left\{
-
\sum_{n \in {\bf Z}}
f(n + \theta)
\right\}
\right].
\eeq
Therefore, the limiting process 
$\xi$  is clock $(\mu)$.  
%
%
\end{proof}

\medskip

In the next section, we prove part 2 of Assumption \ref{assumpt:prufer1} in Proposition \ref{prop:clock_unif1}. This will establish the first part of Theorem \ref{thm:clock1} on the uniform nature of the distribution of the limiting random variable $\phi_c$. .

%

\subsection{Strong clock behavior}\label{subsec:strong_clock1}

In the clock process problem, 
we need to study the fractional part of 
$\frac{1}{\pi} \theta_L (E_c)$ which might not reflect the true nature of the eigenvalue statistics. 
%
To further clarify the nature of the LES,  we will consider the spacing between consecutive eigenvalues.
We let $\{ E'_j (L) \}$  denote the re-indexed eigenvalues of $H_L$ centered at a critical energy $E_c$: 
\bel{eq:reindex_ev1} 
 \cdots < E'_{-2}(L) < E'_{-1}(L) < E_c \le E'_0(L) < E'_1(L) < \cdots
\ee
To show 
the strong clock behavior, we need slightly stronger condition than that for the convergence to the clock process : 
\begin{assumption}\label{assumpt:prufer2}
\begin{enumerate}
\item
$\lim_{L \to \infty}\Psi_L (x) = x$, a.s., 
locally uniformly.

\item
$\lim_{L \to \infty}
m(E_c, L)
=\infty$, 
$\lim_{L \to \infty}
\left(
L - m(E_c, L)
\right)
=\infty$.
\end{enumerate}
\end{assumption}


\begin{theorem}\label{thm:spacing1}(Clock behavior: eigenvalue spacing)
Under 
Assumption \ref{assumpt:prufer2}, the local eigenvalue spacing around a critical energy almost surely satisfies  
\bel{eq:ev_space1} 
\lim_{L \rightarrow \infty} n(E_c) L   \left(   E'_{j+1}(L) - E'_j (L)  \right)  = 1, 
\ee
for any 
$j \in \Z$. 
This establishes part 2 of Theorem \ref{thm:clock_1}.
\end{theorem}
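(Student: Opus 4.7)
The plan is to convert the eigenvalue spacing into an \emph{exact} unit spacing of the relative Prüfer angle $\Psi_L$, and then transfer this identity back to the physical spacing via the locally uniform convergence $\Psi_L(x) \to x$ provided by Assumption~\ref{assumpt:prufer2}(1). Set $x'_j(L) := n(E_c) L (E'_j(L) - E_c)$. Exactly as in the derivation of Lemma~\ref{lemma:clock2}, Sturm oscillation theory gives $\theta_L(E'_j(L)) = n_j \pi + m(\pi/2)$ with $n_{j+1} = n_j + 1$, while $\theta_L(E_c) = m(E_c,L)\pi + \phi(E_c,L)$ by definition. Subtracting and dividing by $\pi$ yields the exact identity
\[
\Psi_L(x'_{j+1}(L)) - \Psi_L(x'_j(L)) = 1,
\]
and more explicitly $\Psi_L(x'_j(L)) = j + a_L$, where $a_L := (n_0(L) - m(E_c,L)) + (m(\pi/2) - \phi(E_c,L))/\pi$ is bounded \emph{uniformly} in $L$ and $\omega$; indeed $n_0(L) - m(E_c,L) \in \{0,1\}$ since $\phi(E_c,L) \in [0,\pi)$ and $m(\pi/2)$ is fixed.

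It remains to deduce $x'_{j+1}(L) - x'_j(L) \to 1$ from this Prüfer-coordinate identity, which is where Assumption~\ref{assumpt:prufer2}(1) enters. Fix $j \in \Z$, pick $C$ with $|a_L| \leq C$ for every $L$, and set $M := |j| + C + 2$. Because $\Psi_L$ is strictly increasing with $\Psi_L(0) = 0$, and because $\Psi_L(\pm M) \to \pm M$ almost surely, for all large $L$ one has $\Psi_L(M) > |j+1| + |a_L| \geq |\Psi_L(x'_{j+1}(L))|$, and symmetrically at $-M$. Monotonicity then forces $x'_j(L), x'_{j+1}(L) \in [-M, M]$ almost surely. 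Assumption~\ref{assumpt:prufer2}(1) supplies, for any $\varepsilon > 0$, the uniform bound $|\Psi_L(x) - x| < \varepsilon/2$ on $[-M, M]$ for $L$ large, whence
\[
\bigl|(x'_{j+1}(L) - x'_j(L)) - 1\bigr| \leq \bigl|x'_{j+1}(L) - \Psi_L(x'_{j+1}(L))\bigr| + \bigl|\Psi_L(x'_j(L)) - x'_j(L)\bigr| < \varepsilon,
\]
using the Prüfer identity above. This proves $n(E_c) L (E'_{j+1}(L) - E'_j(L)) \to 1$ almost surely, as claimed.

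The main obstacle is not in the bookkeeping above, which is essentially algebraic, but in the strength of the convergence $\Psi_L \to \mathrm{id}$ imposed in Assumption~\ref{assumpt:prufer2}(1). Pointwise a.s.\ convergence (which sufficed for the clock convergence in distribution in Theorem~\ref{thm:clock1}) is \emph{not} enough here: the evaluation points $x'_j(L)$ themselves depend on $L$ and are only controlled a posteriori, so the step ``$\Psi_L(x) \approx x$ at $x = x'_j(L)$'' really requires a uniform bound on a fixed compact neighborhood of the origin. The upgrade from pointwise to locally uniform a.s.\ convergence is therefore the central analytic input, to be established from refined estimates on the Prüfer angle at energies $E_c + x/(n(E_c)L)$ for $x$ in a bounded interval.
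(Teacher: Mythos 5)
Your proposal is correct and follows essentially the same route as the paper: the exact identity $\Psi_L(x'_{j+1}(L))-\Psi_L(x'_j(L))=1$ from Sturm oscillation theory, a boundedness argument for the rescaled eigenvalues $x'_j(L)$ (the paper's Lemma \ref{lemma:x_bdd1}, which you reprove via monotonicity of $\Psi_L$ and its convergence at $\pm M$), and then the locally uniform convergence $\Psi_L\to\mathrm{id}$ on the resulting compact interval. Your closing observation that pointwise convergence would not suffice because the evaluation points are $L$-dependent is exactly the reason the paper strengthens Assumption \ref{assumpt:prufer1} to Assumption \ref{assumpt:prufer2} here.
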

%
\begin{proof}
By part (2) of Assumption 
\ref{assumpt:prufer2}, 
for any
$j \in {\bf Z}$, 
$E'_j (L)$
is well defined for sufficiently large 
$L$.
Set  
$x'_j (L) := {n(E_c) L}( E'_j (L) - E_c)$.
Then  
\beq
n(E_c) L 
\left(
E'_{j+1}(L) - E'_j (L)
\right)
&=&
x'_{j+1}(L) - x'_j(L)
\eeq
so it suffices to show that RHS of the equation above  converges to 
$1$. 
On the other hand, since 
$E'_j(L), E'_{j+1}(L)$
are consecutive eigenvalues, we can find 
$n_j(L) \in {\bf Z}$
such that
\beq
&&
\theta_L
\left(
E_c + \dfrac {x'_j(L)}{\rho(E_c) L}
\right)
=
n_j(L) \pi + 
m \left(
\frac {\pi}{2}
\right)
\\
&&
\theta_L
\left(
E_c + \dfrac {x'_{j+1}(L)}{\rho(E_c) L}
\right)
=
\left(
n_{j}(L) + 1 
\right) \pi
+ 
m \left(
\frac {\pi}{2}
\right).
\eeq
Therefore
\beq
&&
\Psi_L(x'_{j+1}(L)) - \Psi_L(x'_j(L))
= 1.
\eeq
By 
Lemma \ref{lemma:x_bdd1} below, 
we can find an interval
$I = [a,b]$
such that
$x'_j (L), x'_{j+1}(L) \in I$
for sufficiently large
$L$.
Then
\beq
&&
x'_{j+1}(L) - x'_j(L)  - 1
\\
&=&
x'_{j+1}(L) - \Psi_L (x'_{j+1}(L))
+
\Psi_L (x'_{j+1}(L))
-
\Psi_L (x'_{j}(L)) - 1
+
\Psi_L (x'_{j}(L))
-
x'_{j}(L)
\\
&=&
x'_{j+1}(L) - \Psi_L (x'_{j+1}(L))
+
\Psi_L (x'_{j}(L))
-
x'_{j}(L)
\eeq
Since 
$\Psi_L (x) 
\stackrel{L \to \infty}{\to} x$
uniformly on 
$I$, 
we are done. 
\end{proof}
%

\begin{lemma}\label{lemma:x_bdd1}
For any fixed $j$, the rescaled eigenvalues $x'_j (L) := {n(E_c) L}( E'_j (L) - E_c)$ are bounded with respect to 
$L$:
\beq
\sup_{L \ge 1}
| x_j'(L) | 
<
\infty .
\eeq
\end{lemma}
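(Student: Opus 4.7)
The plan is to pin $\Psi_L(x'_j(L))$ inside a unit-length window depending only on $j$ using Sturm oscillation theory, and then to invert this location by means of the locally uniform convergence $\Psi_L \to \mathrm{id}$ supplied by Assumption~\ref{assumpt:prufer2}(1).

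First, I would combine the eigenvalue characterization recalled in the proof of Lemma~\ref{lemma:clock2} (the modified Prüfer angle at an eigenvalue lies in $\pi\Z + m(\pi/2)$) with the fact that consecutive eigenvalues of $H_L$ produce a $\pi$-jump in the Prüfer angle. Because of the indexing $E'_{-1}(L) < E_c \leq E'_0(L)$ and the monotonicity of $E \mapsto \theta_L(E)$, this forces $\theta_L(E'_0(L)) - \theta_L(E_c) \in [0,\pi)$, and hence
\begin{equation*}
\Psi_L(x'_0(L)) \in [0,1), \qquad \Psi_L(x'_j(L)) = \Psi_L(x'_0(L)) + j \in [j,\, j+1)
\end{equation*}
for every $j \in \Z$ for which $E'_j(L)$ is defined. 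By Assumption~\ref{assumpt:prufer2}(2), for any fixed $j$ the eigenvalue $E'_j(L)$ is defined for all $L$ large enough, so this bound holds for all sufficiently large $L$.

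Next, I would exploit the almost sure locally uniform convergence $\Psi_L(x) \to x$. Fix the compact interval $I_j := [j-1, j+2]$. There exists an a.s.\ finite $L_0(\omega,j)$ such that $|\Psi_L(x) - x| < 1/4$ for all $L \geq L_0$ and $x \in I_j$. In particular $\Psi_L(j-1) < j$ and $\Psi_L(j+2) > j+1$, and since $\Psi_L$ is increasing (Appendix~\ref{app:incr_psi1}), the only $x$ with $\Psi_L(x) \in [j, j+1)$ must satisfy $x \in (j-1, j+2)$. Applied to $x = x'_j(L)$, this yields $|x'_j(L)| \leq |j| + 2$ for $L \geq L_0$. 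The remaining finitely many values of $L$ contribute a finite maximum, and the lemma follows.

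I do not expect a significant obstacle: the argument is a one-line inversion of a locally uniformly convergent monotone sequence, glued to the oscillation-theoretic identity. The only subtlety is that the point being inverted, $\Psi_L(x'_j(L))$, itself drifts with $L$ inside $[j, j+1)$, which is precisely why the \emph{locally uniform} strengthening of pointwise convergence in Assumption~\ref{assumpt:prufer2}(1) (as opposed to the weaker version in Assumption~\ref{assumpt:prufer1}) is used here and not in Theorem~\ref{thm:clock1}.
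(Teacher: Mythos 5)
Your proposal is correct and follows essentially the same route as the paper: both arguments use Sturm oscillation theory together with the definition of $\Psi_L$ to pin $\Psi_L(x'_j(L))$ inside a bounded, $j$-dependent window (the paper derives the exact identity $\Psi_L(x'_j(L)) = j + \tfrac{1}{\pi}(m(\tfrac{\pi}{2}) - \phi(E_c,L))$ and bounds the fractional part), and then invert this via the monotonicity of $\Psi_L$ and its convergence to the identity. The only difference is cosmetic: the paper argues by contradiction using pointwise convergence at a single fixed point $C_j + 2\epsilon$, whereas you obtain the two-sided bound directly from locally uniform convergence on a compact interval; both are valid under Assumption~\ref{assumpt:prufer2}, which is in force where the lemma is applied.
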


\begin{proof}
The argument given here is deterministic.  We compute the modified Pr\"ufer phase.
By Sturm oscillation theory, 
\beq
\theta_L \left( E_c + \frac { x'_j (L) }{n(E_c) L}  \right)
&=& m(E_c, L) \pi  +  m \left( \frac {\pi}{2} \right)
+   j \pi  
\eeq
On the other hand, 
by the definition of 
$\Psi_L$, 
\beq
\theta_L
\left(
E_c + \frac { x'_j (L) }{\rho L}
\right)
&=&
\theta_L (E_c)
+
\pi
\Psi_L (x'_j(L))    \\
  & =  & 
m(E_c, L) \pi + \phi (E_c, L)
+
\pi
\Psi_L (x'_j(L))  .
\eeq
These two equations yield the expression
\\
%
\beq
\Psi_L (x'_j(L))
&=& j  +  \frac{1}{\pi} \left( m\left( \frac {\pi}{2} \right)  -  \phi (E_c, L) \right). 
\eeq
Since
$\phi (E_c, L) \in [0, \pi)$, 
we have
\bel{eq:xj_bd1} 
\sup_L
\Psi_L (x'_j (L)) 
\le
C_j, 
\quad
C_j 
:=
j + \frac{1}{\pi} m \left( \frac {\pi}{2} \right) . 
\ee
Suppose that $\lim_{L \to \infty} x'_j (L) = \infty$.
For any $\epsilon > 0$, we have 
%
%
\bel{eq:xj_bd2}
C_j + 2 \epsilon  \le  x'_j (L), 
\quad
\ee
for sufficietly large
$L$.
On the other hand, since
$\Psi_L (x) \to x$ 
pointwise, 
%
\bel{eq:xj_bd3}
\Psi_L (C_j) 
\le 
C_j + \epsilon
\le
\Psi_L (C_j + 2 \epsilon), 
\quad
\ee
for sufficietly large
$L$.
Therefore, 
by \eqref{eq:xj_bd1}-\eqref{eq:xj_bd2}, 
and the fact that $x \mapsto \Psi_L (x)$
is increasing by Proposition \ref{prop:psi_incr1},   we have
\bel{eq:xj_bd4}
C_j + \epsilon \le  \Psi_L (C_j + 2 \epsilon)
\le  \Psi_L (x'_j(L))
\ee
%
which contradicts to \eqref{eq:xj_bd1}.
\end{proof}

\medskip
%

\subsection{Verification of Assumptions}\label{subsec:verify_clock1}

In this section, 
we verify Assumptions \ref{assumpt:prufer1}
and
\ref{assumpt:prufer2}, 
that is, 
$\lim_{L \to \infty}\Psi_L(x) = x$
and the convergence of 
$\phi(E_c, L)$
in distribution. 
For this, we use the notation and approach introduced in \cite[section 4.2]{jss}.
The authors in 
\cite{jss}
study the transfer matrices at energies 
$E_c + \epsilon$
near the critical energy
$E_c$.
Recalling 
that the matrix
$M \in SL(2, \R)$
maps 
$T_{\pm}^{E_c}$
to rotations, we define the related transmission 
$a^{\epsilon}_{\pm}$
and reflection
$b^{\epsilon}_{\pm}$
coefficients by 
\beq
M T_{\pm}^{E_c + \epsilon} M^{-1} v
&=&
a^{\epsilon}_{\pm} v 
+
b^{\epsilon}_{\pm} \overline{v}, 
\;
\text{ where }
\;
v
:=
\frac {1}{
\sqrt{2}
}
\left(
\begin{array}{c}
1 \\ -i
\end{array}
\right).
\eeq
In order to study the behavior of the Pr\"ufer variables at energies $E_c + \epsilon$, 
we define a general polymer phase shift $S_{\epsilon, \pm} (\theta)$ and amplitude $\rho_{\pm}^{\epsilon} (\theta)$, by
\bea\label{eq:modified_prufer1}
\rho_{\pm}^{\epsilon} ( \theta )
e_{ S_{\epsilon, \pm} (\theta) }
&:=&
M T_{\pm}^{E_c + \epsilon} M^{-1} e_{\theta}  ,
\eea
where the matrix $M$ is defined in \eqref{eq:diagonalize1} and vector $e_\theta$ is defined in \eqref{eq:prufer_defn1}.
Then, the iterated polymer phase shift satisfies
\bea
S^{\ell + 1}_{\epsilon, \omega} ( \theta )
&:=&
S_{\epsilon, \omega_{\ell} }
\left( 
S^{\ell}_{\epsilon, \omega} (\theta) 
\right), 
\quad  \mbox{and } ~~~
S^0_{\epsilon, \omega} (\theta) = \theta.
\eea
By equation (\ref{eq:diagonalize1}),  we have, for any 
$\theta$, 
$\rho^0_{\pm}(\theta) = 1$, 
$\eta_{\pm} = S_{0, \pm}(\theta) - \theta$
up to a multiple of 
$2 \pi$, 
which is hereby fixed.
Setting $\epsilon = 0$ in (\ref{eq:modified_prufer1}), we obtain 
\beq
\rho^0_{\pm} (\theta)
e_{ S_{0, \pm}(\theta) }
&=&
M T_{\pm}^{E_c} M^{-1} e_{\theta}
=
R_{\eta_{\pm}} e_{\theta}
=
e_{\eta_{\pm} + \theta}
\eeq
yielding 
\beq
\rho^0_{\pm} (\theta) = 1, 
\quad
S_{0, \pm} (\theta) = \theta + \eta_{\pm} ,
\eeq
where $\eta_\pm$ are defined in \eqref{eq:diagonalize1}. 
Since
$\rho^0_{\pm} (\theta) = 1$, 
the Lyapunov exponent vanishes at 
$E_c$: $L (E_c) = 0$.
Another 
important formula in 
\cite{jss}
is 
\beq
S_{\epsilon, \pm} (\theta) - \theta
&=&
\eta_{\pm}
+
\epsilon
\cdot
d_{\pm}
- 
\epsilon
\cdot
Im 
\left[
c_{\pm} e^{2i \theta}
\right]
+
\mathcal{O}(\epsilon^2)
\eeq
where 
\bel{eq:coef2}
d_{\pm}
:=
\partial_{\epsilon}
\eta^{\epsilon}_{\pm}
|_{\epsilon=0}, 
\quad
c_{\pm}
:=
\partial_{\epsilon}
b^{\epsilon}_{\pm}
|_{\epsilon=0}
\cdot
e^{i \eta_{\pm}}.
\ee
We first verify the part 2 of Assuption \ref{assumpt:prufer1}: the normalized fractional part of the Pr\"ufer phase 
$\phi(E_c, L)/\pi$
converges \emph{in distribution} to a random variable 
with the uniform distribution on $[0, 1)$.


\medskip

\begin{proposition}\label{prop:clock_unif1}  
Suppose that  the eigenvalues $\eta_\pm$ of the polymer transfer matrices $T^{E_c}_\pm$  satisfy
\bel{eq:phase1}
|\langle e^{i k\eta_{\pm}}  \rangle|   < 1, 
\quad
\forall k \in {\Z \backslash \{0\}}.
\ee
Then,  the normalized fractional part of the Prufer phase at a critical energy satisfies 
\bel{eq:phase_limit1}
\frac{1}{\pi} \phi(E_c, L)
\stackrel{d}{\to}   \frac{1}{\pi} ( \phi_c - m(\frac{\pi}{2}))  \in 
unif [0, 1).
\ee

\end{proposition}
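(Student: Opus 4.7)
The plan is to apply L\'evy's continuity theorem and verify that the Fourier coefficients of the distribution of $\phi(E_c,L)/\pi$ on $\R/\Z$ decay to zero. Since $\phi(E_c,L)=\theta_L(E_c)\pmod{\pi}$ and $e^{2ik\cdot\pi\Z}=1$, the Fourier coefficient at mode $k\in\Z$ satisfies
\[
\E\bigl[e^{2\pi i k\,\phi(E_c,L)/\pi}\bigr] \;=\; \E\bigl[e^{2ik\,\theta_L(E_c)}\bigr],
\]
so it suffices to prove $\E[e^{2ik\theta_L(E_c)}]\to 0$ for every $k\in\Z\setminus\{0\}$.

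The key structural input is that each complete polymer contributes a deterministic rotation of the modified Pr\"ufer phase: by \eqref{eq:diagonalize1}, $MT^{E_c}_{\pm}M^{-1}=R_{\eta_\pm}$. Let $K=K(L,\omega)$ be the number of complete polymers fitting inside $[0,L-1]$, $S_K(\omega):=\sum_{\ell<K}L_{\omega_\ell}$, and $j\in\{0,\dots,L_{\omega_K}-1\}$ the penetration depth into the partial polymer at the right end. Then on $\{S_K=L-j,\,\omega_K=s\}$,
\[
\theta_L(E_c) \;=\; \psi_j\bigl(\theta_0 + T_K(\omega),\,s\bigr)\pmod{2\pi},
\]
with $T_K(\omega):=\sum_{\ell<K}\eta_{\omega_\ell}$ a sum of i.i.d.\ random variables and $\psi_j(\cdot,s)$ the deterministic within-polymer flow (a smooth $\pi$-equivariant M\"obius diffeomorphism). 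Because $L_\pm$ are bounded, $K(L,\omega)\ge L/\max(L_+,L_-)\to\infty$ as $L\to\infty$.

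I would then stratify over the finitely many disjoint events $A_{s,j}(L):=\{\omega_K=s,\,S_K=L-j\}$ and on each stratum Fourier-expand
$e^{2ik\psi_j(\theta,s)}=\sum_m c_m^{(s,j)}\,e^{im\theta}$
(a rapidly convergent series by smoothness of $\psi_j$). Each mode reduces to a joint Fourier transform $\E[e^{imT_K}\,\mathbf 1_{\{S_K=L-j\}}]$, which by independence of the increments $(\eta_{\omega_\ell},L_{\omega_\ell})$ and Fourier inversion on $\Z$ equals
\[
\tfrac{1}{2\pi}\int_0^{2\pi} e^{-i\alpha(L-j)}\,G_m(\alpha)^{K}\,d\alpha, \qquad G_m(\alpha):=\bigl\langle e^{i\alpha L_\pm+im\eta_\pm}\bigr\rangle.
\]
The hypothesis \eqref{eq:phase1} applied to the nonzero integer $m$ gives $|G_m(0)|<1$, and $|G_m(\alpha)|=1$ only at a finite resonance set in $[0,2\pi)$; a local stationary-phase or van der Corput argument at those resonances yields decay of the integral in $L$ for each fixed $m$, and dominated convergence against the rapid decay of $c_m^{(s,j)}$ then forces $\E[e^{2ik\theta_L(E_c)}]\to 0$.

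The principal obstacle is that $\psi_j(\cdot,s)$ is a M\"obius diffeomorphism but generally not a rotation, so $\theta_L(E_c)$ at a generic site $L$ is not an i.i.d.\ sum of rotations; this forces the mode-by-mode Fourier strategy, with the decay in each mode driven by the rotation structure of complete polymers through \eqref{eq:phase1}. Once uniformity of $\phi(E_c,L)/\pi$ on $[0,1)$ is established, the shift by $m(\pi/2)/\pi$ in the stated limit is trivially inherited since the uniform distribution on $[0,1)$ is translation-invariant modulo $1$.
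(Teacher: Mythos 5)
Your reduction to showing $\E[e^{2ik\theta_L(E_c)}]\to 0$ for every $k\in\Z\setminus\{0\}$, and your treatment of the case where $L$ is a polymer node (an i.i.d.\ sum of rotation angles, so the characteristic function factors as $\langle e^{2ik\eta_\pm}\rangle^{N}\to 0$ by \eqref{eq:phase1}), is exactly the paper's Case~1. Where you diverge is the general case: you correctly observe that the within-polymer flow $\psi_j(\cdot,s)$ is the projective action of a partial transfer matrix conjugated by $M$, hence a M\"obius diffeomorphism of the circle and \emph{not} a rotation, so $\theta_L$ is not $\theta_{L'}$ plus an increment independent of $\theta_{L'}$. (The paper's Case~2 simply asserts that $\triangle\theta_L=\theta_L-\theta_{L'}$ and $\theta_{L'}$ are independent, which is precisely the point at issue.)

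However, your proposed repair has a genuine gap at the zero Fourier mode, and that is exactly where the difficulty sits. In the expansion $e^{2ik\psi_j(\theta,s)}=\sum_m c_m^{(s,j)}e^{2im\theta}$ the hypothesis \eqref{eq:phase1} only bites for $m\neq 0$; for $m=0$ you are left with $c_0^{(s,j)}\,\Pp(A_{s,j}(L))$, and $G_0(\alpha)=\langle e^{i\alpha L_\pm}\rangle$ has $|G_0(0)|=1$, so no decay is available. The renewal theorem gives $\Pp(A_{s,j}(L))\to p_s/\langle L_\pm\rangle>0$, so the $m=0$ contribution converges to
\[
\frac{1}{\langle L_\pm\rangle}\sum_{s=\pm}p_s\sum_{j=1}^{L_s-1}\frac{1}{\pi}\int_0^\pi e^{2ik\psi_j(\theta,s)}\,d\theta ,
\]
and $c_0^{(s,j)}=\frac{1}{\pi}\int_0^\pi e^{2ik\psi_j(\theta,s)}\,d\theta$ vanishes only when $\psi_j(\cdot,s)$ preserves the uniform measure, i.e.\ when the \emph{partial} transfer matrix is itself conjugate to a rotation --- false in general for $1\le j\le L_s-1$ (only $j=0$ gives $c_0^{(s,0)}=0$). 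So ``dominated convergence against the rapid decay of $c_m^{(s,j)}$'' does not force $\E[e^{2ik\theta_L}]\to 0$; it forces convergence to the constant displayed above. To close the proposition as stated you must either show that constant vanishes, restrict the limit to polymer nodes, or bring in the randomization of the origin within its polymer from \cite[section 4.1]{jss}. Two repairable secondary points: the expression $\frac{1}{2\pi}\int e^{-i\alpha(L-j)}G_m(\alpha)^K\,d\alpha$ with the random $K$ in the exponent should be a sum over $n$ of $G_m(\alpha)^n$, and the stationary-phase analysis at the resonant $\alpha$ with $|G_m(\alpha)|=1$ still has to be carried out; but unlike the $m=0$ mode, neither of these is fatal.
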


\medskip

%
%
%

%
\begin{proof}  
It suffices to show $\phi(E_c, L) \stackrel{d}{\to} unif [0, \pi)$. This 
will follow from establishing that 
$$\lim_{L \to \infty}{\E}[ e^{2 i k \phi (E_c, L) }] =0, \forall  
k \in \Z \backslash \{0\}. $$  
 \noindent
{Case 1: } Suppose that $L$ is on a polymer node, that is, 
$L = \sum_{\ell=0}^{N-1} L_{\omega_{\ell}}$, 
for some $N$. 
Then 
$\theta_L (E_c)  = \sum_{ \ell = 0}^{N-1} 
\eta_{\omega_{\ell}}$.  
We note that $e^{2ik\theta_L(E_c)} = e^{2i k \phi(E_c,L)}$, for $k \in \Z$.  
It follows from the independence of the random variables $\omega_\ell$, that for any $k \in \Z \backslash \{0\}$,  
\bel{eq:vanish1}
{\E}[ \exp ( 2i k \theta_L (E_c) ) ]  =   {\E} \left[  \exp  \left( 2 ik
\sum_{ \ell = 0 }^{N-1}
 \eta_{\omega_{\ell}} \right)
\right]
=  \prod_{\ell = 0}^{N-1}   {\E}  \left[   \exp \left( 2  i k \eta_{\omega_{\ell}}
 \right)    \right]
=  \langle e^{2i k\eta_{\pm}}
\rangle^N  \to 0, 
\ee
by the hypothesis on the transfer matrix eigenvalues \eqref{eq:phase1}.  \\
\noindent
{Case 2:} General case : Let $L'$  be the maximum number of polymer nodes less than or equal to  $L$. Then, 
\bea\label{generalcase}
\theta_L (E_c)   &=&  \triangle \theta_L(E_c)  +  \theta_{L'} (E_c), 
\quad
\mbox{ where }
\triangle \theta_L (E_c)   :=  \theta_L (E_c)  -  \theta_{L'} (E_c).
\eea
Case 1 shows that ${\E}[ \exp ( 2 i k \theta_{L'} (E_c) ) ] \to 0$. 
Furthermore, the random variables $\triangle \theta$  and  $\theta_{L'}(E_c)$ 
are independent, so we conclude that $\theta_{L} (E_c) \stackrel{d}{\to} unif [ 0, \pi)$ in this case too.
%
\end{proof}

When the system size is equal to 
$L$, a polymer mode, we let $N = N(L)$
be the number of the polymers in  $[0, L-1]$. We need a technical lemma from \cite{jss} in order to control the sum
$$
I_{\omega,k}^1 ( \theta, \epsilon) := \sum_{\ell = 0}^{k-1} c_{\omega_\ell} e^{ 2 i S_{\epsilon, \omega}^\ell (\theta) } .
$$ 

%
%
\begin{lemma}\label{lemma:sum1} 
Suppose that
$| \langle e^{ 2i \eta_{\pm} } \rangle | < 1$.
Let 
$c_\omega := (c_{\omega_\ell}) \in \C$ be the coefficient sequence defined in \eqref{eq:coef2}.  
For 
almost every $\omega$, there exists an integer $N_\omega$, so that for all $N > N_\omega$,  
the sum  $I_{\omega,N}^1 ( \theta, \epsilon)$ satisfies
\bel{eq:sum1}
 I_{\omega,N}^1 ( \theta, \epsilon) 
= {\mathcal{O}}(N^{\frac 12 + \alpha}),
\ee
almost surely, for any $\alpha > 0$, and any 
$\epsilon$
such that 
$\epsilon \le \frac{C}{\sqrt{N}}$, for a universal constant $C>0$.  
Furthermore, we have
 \bel{eq:sum2}
\E [ I_{\omega,N}^1 ( \theta, \epsilon) ] =  \mathcal{O} (N \epsilon^2).
\ee
\end{lemma}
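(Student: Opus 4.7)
The plan is to exploit the independence between the coefficient $c_{\omega_\ell}$ and the iterated phase $S^\ell_{\epsilon,\omega}(\theta)$ under the natural filtration $\mathcal{F}_\ell := \sigma(\omega_0, \ldots, \omega_{\ell-1})$. Since $S^\ell_{\epsilon,\omega}(\theta)$ is obtained by composing $\ell$ maps indexed by $\omega_0, \ldots, \omega_{\ell-1}$, it is $\mathcal{F}_\ell$-measurable, while $c_{\omega_\ell}$ is independent of $\mathcal{F}_\ell$. This yields the decomposition
\[
 I^1_{\omega, N}(\theta, \epsilon) = M_N + \langle c_\pm\rangle\, T_N, \qquad T_N := \sum_{\ell=0}^{N-1} e^{2i S^\ell_{\epsilon,\omega}(\theta)},
\]
where $M_N := \sum_{\ell=0}^{N-1}(c_{\omega_\ell} - \langle c_\pm\rangle)\, e^{2i S^\ell_{\epsilon,\omega}(\theta)}$ is an $(\mathcal{F}_\ell)$-martingale with uniformly bounded increments.

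To prove the almost-sure bound \eqref{eq:sum1}, apply the Azuma--Hoeffding inequality to $M_N$: $\Pp(|M_N| > N^{1/2+\alpha}) \leq 2 \exp(-c N^{2\alpha})$, which is summable in $N$, so Borel--Cantelli gives $|M_N| = O(N^{1/2+\alpha})$ a.s. For $T_N$, compute $\E[|T_N|^2] = \sum_{j,k} \E[e^{2i(S^j_\epsilon - S^k_\epsilon)}]$. At $\epsilon = 0$, the Markov operator $T_0 f(\theta) := \langle f(\theta + \eta_\pm)\rangle$ is diagonal in the Fourier basis with $T_0 e^{2i\theta} = \langle e^{2i\eta_\pm}\rangle e^{2i\theta}$, so $\E[e^{2i(S^j_0 - S^k_0)}] = \langle e^{2i\eta_\pm}\rangle^{|j-k|}$ is geometrically summable, giving $\E[|T_N|^2] = O(N)$. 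A perturbation argument, using the smallness of $\epsilon \leq C/\sqrt{N}$ to control the correction $S^\ell_\epsilon - S^\ell_0 = O(\ell \epsilon) = O(\sqrt{N})$, extends this to $\epsilon > 0$; higher-moment estimates combined with Borel--Cantelli then yield $|T_N| = O(N^{1/2+\alpha})$ a.s.

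For the expectation bound \eqref{eq:sum2}, conditioning on $\mathcal{F}_\ell$ reduces $\E[I^1_{\omega,N}]$ to $\langle c_\pm\rangle \sum_{\ell=0}^{N-1}(T_\epsilon^\ell f)(\theta)$ with $f(\theta) := e^{2i\theta}$ and $T_\epsilon f(\theta) := p f(S_{\epsilon,+}(\theta)) + (1-p) f(S_{\epsilon,-}(\theta))$. The expansion $S_{\epsilon,\pm}(\theta) - (\theta + \eta_\pm) = \epsilon(d_\pm - \mathrm{Im}[c_\pm e^{2i\theta}]) + O(\epsilon^2)$ gives $T_\epsilon = T_0 + \epsilon T_1 + O(\epsilon^2)$, and the Duhamel formula $T_\epsilon^\ell = T_0^\ell + \sum_{j<\ell} T_0^{\ell-1-j}(T_\epsilon - T_0) T_\epsilon^j$ expands each term. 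The zeroth-order sum $\sum_\ell T_0^\ell f$ is bounded by $1/|1-\langle e^{2i\eta_\pm}\rangle|$; the first-order $O(\epsilon)$ contribution must be shown to vanish; the remaining $O(\epsilon^2)$ contribution summed over $N$ terms produces the $O(N\epsilon^2)$ bound.

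The hardest part will be establishing the vanishing of the first-order Duhamel contribution. A naive estimate produces an $O(\epsilon N)$ term because $T_1 e^{2i\theta}$ has a nonzero constant (mode-$0$) component $\langle e^{2i\eta_\pm}\bar c_\pm\rangle$ (computed from the explicit form of $T_1$) that is fixed by every subsequent application of $T_0$ and hence accumulates linearly in $N$. The desired $O(N\epsilon^2)$ scaling therefore requires identifying a structural cancellation specific to the critical-energy dynamics: either an algebraic identity among $\langle c_\pm\rangle$, $\langle e^{2i\eta_\pm}\bar c_\pm\rangle$, and $\langle d_\pm\rangle$ that makes the first-order coefficient vanish, or an absorption of the mode-$0$ drift into an infinitesimal adjustment of the Pr\"ufer angle whose effect on the LES is negligible.
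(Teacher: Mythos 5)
Your proposal takes a genuinely different route from the paper: the paper's proof of this lemma is essentially a citation, with the expectation bound \eqref{eq:sum2} quoted from \cite[Proposition 2]{jss} and the almost-sure bound \eqref{eq:sum1} obtained from the large-deviation estimate of \cite[Theorem 6]{jss} for the event $\{\exists\, k \le N : |I^1_{\omega,k}| \ge N^{1/2+\alpha}\}$ followed by Borel--Cantelli. You instead attempt a self-contained argument. Your decomposition $I^1_{\omega,N} = M_N + \langle c_\pm\rangle T_N$ is legitimate ($S^\ell_{\epsilon,\omega}(\theta)$ is $\mathcal{F}_\ell$-measurable and $c_{\omega_\ell}$ is independent of $\mathcal{F}_\ell$), and Azuma--Hoeffding plus Borel--Cantelli does dispose of $M_N$; this martingale viewpoint is in fact the right one and is close in spirit to how \cite{jss} prove their Theorem 6.

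There are, however, two genuine gaps. First, your treatment of $T_N=\sum_\ell e^{2iS^\ell_{\epsilon,\omega}(\theta)}$ for $\epsilon>0$ does not work as described: the accumulated phase correction $S^\ell_\epsilon - S^\ell_0 = O(\ell\epsilon) = O(\sqrt{N})$ is of order $\sqrt{N}$ \emph{radians}, so $e^{2iS^\ell_\epsilon}$ and $e^{2iS^\ell_0}$ are completely decorrelated and no comparison with the unperturbed chain survives; moreover the crude substitute $\E[e^{2i(S^j_\epsilon-S^k_\epsilon)}]=\langle e^{2i\eta_\pm}\rangle^{j-k}+O(\epsilon)$ only yields $\E[|T_N|^2]=O(N)+O(N^2\epsilon)=O(N^{3/2})$, which is not enough. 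The repair is to analyze the perturbed chain directly: writing $e^{2iS^{\ell+1}}=\E[e^{2iS^{\ell+1}}\mid\mathcal{F}_\ell]+(\text{martingale difference})$ and using $\E[e^{2iS^{\ell+1}}\mid\mathcal{F}_\ell]=\langle e^{2i\eta_\pm}\rangle e^{2iS^\ell}+O(\epsilon)\bigl(1+|e^{4iS^\ell}|\bigr)$ gives $(1-\langle e^{2i\eta_\pm}\rangle)T_N=(\text{martingale})+O(N\epsilon)+O(\epsilon)\,|T^{(4)}_N|+O(1)$; the point you miss is that $N\epsilon\le C\sqrt{N}$ makes the drift term itself of the admissible size, and one closes the system over the higher harmonics. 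Second, and more seriously, your proof of \eqref{eq:sum2} is explicitly incomplete: you correctly identify that the naive Duhamel expansion leaves an $O(N\epsilon)$ contribution coming from the mode-$0$ component $\langle e^{2i\eta_\pm}\bar c_\pm\rangle$ of $T_1 e^{2i\theta}$, but you do not establish the cancellation that reduces it to $O(N\epsilon^2)$ --- you only observe that such a cancellation ``must'' exist. That reduction is precisely the nontrivial content of \cite[Proposition 2]{jss}, and without it the expectation bound is not proved. As written, the proposal establishes neither \eqref{eq:sum1} nor \eqref{eq:sum2} in full.
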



%
\begin{proof}
\eqref{eq:sum2}
is proved in \cite[Proposition 2]{jss}, so it suffices to show \eqref{eq:sum1}. 
For any 
$\alpha > 0$ 
and any 
$N \in \N$, 
we consider the following event: 
\bea
\Omega^0_N
(\alpha, \epsilon, \theta)
&:=&
\left\{
\omega\in \Omega_0
\middle|
\exists k \le N
\mbox{ such that }
| I^1_{\omega, k} (\theta, \epsilon) | 
\ge 
N^{\alpha + 1/2} 
\right\}.
\eea
Then 
by \cite[Theorem 6]{jss}, there exist constants $C_1, C_2 > 0$, such that for any $\theta$, any $N$, and any $\epsilon > 0$, with $N \epsilon^2 < 1$, we have 
\bea
{\Pp}_0
(
\Omega_N^0 (\alpha, \epsilon, \theta)
)
& \le &
C_1 
e^{ - C_2 N^{\alpha} }.
\eea
By the Borel-Cantelli Lemma, 
for a.e.
$\omega$, 
$| I_{\omega, k}^1 | \le N^{\alpha + 1/2}$, 
for sufficiently large 
$N$.
%
\end{proof}
%

%
\begin{proposition}\label{prop:rel_prufer_conv1}
(Assumption \ref{assumpt:prufer1}, Part 1) 
Suppose that 
$| \langle e^{ 2i \eta_{\pm} } \rangle | < 1$.
Then 
the family of functions 
$\Psi_L$ 
satisfy 
$\lim_{L \to \infty}\Psi_L (x) = x$, 
a.s.
\end{proposition}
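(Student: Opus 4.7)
The plan is to expand $\theta_L(E_c+\epsilon)$ around $\epsilon=0$ using the polymer phase-shift machinery recalled from \cite{jss}, evaluate at $\epsilon = x/(n(E_c) L)$, and verify that the leading order is exactly $\pi x$.

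First I would reduce to polymer nodes. Writing $L = \sum_{\ell=0}^{N-1} L_{\omega_\ell} + r$ with $0 \le r < \max(L_+, L_-)$, the strong law gives $L/N \to \langle L_\pm\rangle$ almost surely. At the polymer node $L' := L - r$ the modified Pr\"ufer phase satisfies $\theta_{L'}(E_c+\epsilon) = S^N_{\epsilon,\omega}(\theta_0)$ and $\theta_{L'}(E_c) = \theta_0 + \sum_{\ell=0}^{N-1}\eta_{\omega_\ell}$, because there the transfer matrix factors exactly through the polymer transfer matrices. The remaining $r$ single-site contributions are continuous in $\epsilon$ uniformly in $r$, so the difference $\Psi_L(x) - \Psi_{L'}(x)$ tends to zero, and it suffices to prove convergence at polymer nodes.

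Telescoping the one-step expansion $S_{\epsilon,\pm}(\theta) - \theta = \eta_\pm + \epsilon\, d_\pm - \epsilon\,\mathrm{Im}[c_\pm e^{2i\theta}] + \mathcal{O}(\epsilon^2)$ yields
$$
\theta_{L'}(E_c+\epsilon) - \theta_{L'}(E_c) = \epsilon \sum_{\ell=0}^{N-1} d_{\omega_\ell} \;-\; \epsilon\,\mathrm{Im}\, I^1_{\omega,N}(\theta_0,\epsilon) \;+\; \mathcal{O}(N\epsilon^2),
$$
with $I^1_{\omega,N}$ as in Lemma \ref{lemma:sum1}. I would control the three pieces separately at $\epsilon = x/(n(E_c) L) \sim 1/N$. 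The strong law gives $N^{-1}\sum d_{\omega_\ell} \to \langle d_\pm\rangle$, so the linear term tends to $x\, \langle d_\pm\rangle/(n(E_c)\langle L_\pm\rangle)$. Differentiating at $E_c$ the closed-form IDS from Remark \ref{rmk:ev_cond1}, namely $N(E) = \langle \eta_\pm^E/\pi\rangle/\langle L_\pm\rangle$, produces the Thouless-type identity $\pi\,n(E_c)\langle L_\pm\rangle = \langle d_\pm\rangle$, so the linear contribution converges to $\pi x$. Since $\epsilon \sim 1/N \le C/\sqrt{N}$, Lemma \ref{lemma:sum1} gives $|I^1_{\omega,N}| = \mathcal{O}(N^{1/2+\alpha})$ a.s. for any $\alpha > 0$, hence $\epsilon\, I^1_{\omega,N} = \mathcal{O}(N^{-1/2+\alpha}) \to 0$ for any $\alpha < 1/2$. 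The quadratic remainder is $N\epsilon^2 = \mathcal{O}(1/N) \to 0$.

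Combining the three estimates gives $\theta_L(E_c+\epsilon) - \theta_L(E_c) \to \pi x$ a.s., and dividing by $\pi$ yields $\Psi_L(x) \to x$ a.s. The main obstacle is the Thouless-type identity $\pi\, n(E_c)\langle L_\pm\rangle = \langle d_\pm\rangle$ at a critical energy, where the general Thouless formula is delicate; I would sidestep this by differentiating the explicit IDS formula from \cite[(14)]{jss} directly in $E$, which is justified because the condition $|Tr(T^E_\pm)| < 2$ persists in a neighborhood of $E_c$, the rotation angles $\eta^E_\pm$ are smooth there, and $d_\pm = \partial_\epsilon \eta^\epsilon_\pm|_{\epsilon=0}$ is precisely the derivative appearing in the formula.
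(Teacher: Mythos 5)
Your proposal is correct and follows essentially the same route as the paper's proof: telescoping the one-step polymer phase-shift expansion of \cite{jss}, controlling the oscillatory sum via Lemma \ref{lemma:sum1}, applying the strong law to the $d_{\omega_\ell}$ and $L_{\omega_\ell}$ sums, and reducing the general $L$ to the nearest polymer node. The only cosmetic difference is that you derive the identity $n(E_c) = \frac{1}{\pi}\langle d_\pm\rangle / \langle L_\pm\rangle$ by differentiating the explicit IDS formula, whereas the paper simply cites \cite[(47)]{jss} for it.
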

\begin{proof}
\noindent
Case 1 : Suppose that $L$ is on a polymer node:  $L = \sum_{ \ell=0 }^{N-1} L_{\omega_{\ell}}$.
The Pr\"ufer phase $\theta_L$ satisfies 
\bea
\theta_L (E_c + \epsilon)
&=&
S^N_{\epsilon, \omega} (\theta) - \theta
=
\sum_{\ell = 0}^{N-1}
\left(
S_{\epsilon, \omega_{\ell}}
( S^{\ell}_{\epsilon, \omega} (\theta) )
-
S^{\ell}_{\epsilon, \omega} (\theta)
\right).
\eea
We recall that from \cite[section 4.2]{jss}, 
\bea\label{eq:pert1}
S_{\epsilon, \pm} (\theta) - \theta &=&
\eta_{\pm}  + \epsilon  d_{\pm}  -  \epsilon ~Im
\left[ c_{\pm} e^{2i \theta}
\right]   +
R_{\epsilon}, 
\quad
| R_{\epsilon} |
\le C |b^{\epsilon}_{\pm}|^2
=
{\mathcal{O}}(\epsilon^2) 
\eea
where  the coefficients $c_\pm$ and $d_\pm$ are defined in \eqref{eq:coef2}, and $C$  is a universal constant.
Thus, substituting 
$S^{\ell}_{\epsilon, \omega} ( \theta )$
to 
$\theta$
in \eqref{eq:pert1},  we obtain 
\bea\label{eq:pert2}
\theta_L (E_c + \epsilon)
&=&
\sum_{\ell = 0}^{N-1}
\left(
S_{\epsilon, \omega_{\ell} } ( S^{\ell}_{\epsilon, \omega} ( \theta ) )
-
S^{\ell}_{\epsilon, \omega} (\theta)
\right)
\nonumber \\
&=&
\sum_{\ell = 0}^{N-1}
\left(
\eta_{\omega_{\ell}}
+
\epsilon d_{\omega_{\ell}}
-
\epsilon \Im
\left[
c_{\omega_{\ell}} e^{2i S^{\ell}_{\epsilon, \omega} (\theta)}
\right]
+
R_{\epsilon, \ell}
\right), 
\eea
where the remainder satisfies
\bel{eq:remainder1} 
|R_{\epsilon, \ell}|
\le C |b^{\epsilon}_{\pm}|^2  = \mathcal{O}(\epsilon^2). 
\ee
%
Recalling that at a critical energy 
\bel{eq:unpert1}
\theta_L (E_c)  =  \sum_{\ell = 0}^{N-1}
\eta_{\omega_{\ell}}  ,
\ee
and substituting $\epsilon =  \dfrac {x}{ n(E_c) L }$ 
into 
(\ref{eq:pert2}), 
we get 
\bea\label{eq:psi2}
\Psi_L (x)
&:=&
\frac {1}{\pi}
\left\{
\theta_L 
\left(
E_c + \frac {x}{ n(E_c) L } 
\right) 
- 
\theta_L (E_c)
\right\}   \nonumber 
\\
&=&
\frac {1}{\pi}
\sum_{\ell = 0}^{N-1}
\left(
\epsilon  d_{\omega_{\ell}}
-
\epsilon  \Im
\left[
c_{\omega_{\ell}} e^{2i S^{\ell}_{\epsilon, \omega} (\theta)}
\right]
+
R_{\epsilon, \ell}
\right)
\eea
where 
$\epsilon =  \dfrac {x}{ n(E_c) L }$.
The sum of the exponential term in \eqref{eq:psi2} may be estimated as in Lemma \ref{lemma:sum1}, together with the fact that 
$N = \mathcal{O}(L)$. 
We also have that $R_{\epsilon, \ell} = \mathcal{O}(\epsilon^2)
=
\mathcal{O}(L^{-2})$, see \eqref{eq:remainder1}. These facts allow us to write, for any 
$\alpha > 0$,  
\bea
\Psi_L(x)
&=&
\frac {1}{\pi}
\frac {1}{ n (E_c) }
\frac xL
\sum_{\ell = 0}^{N-1} 
d_{\omega_{\ell}} 
+ {\mathcal{O}}(L^{-\frac 12 + \alpha})
\nonumber \\
  & =  &
\frac {1}{\pi}
\frac {1}{ n (E_c) }
\frac NL
\frac xN
\sum_{\ell = 0}^{N-1} 
d_{\omega_{\ell}} 
+ 
\mathcal{O}(L^{-\frac 12 + \alpha}).
\eea
By the law of large numbers, we have, 
\bea
\frac LN
&=&
\frac 1N
\sum_{\ell = 0}^{N-1}
L_{\omega_{\ell}}
\to 
\langle L_{\pm} \rangle, 
\quad
a.s.
\\
&&
\frac 1N
\sum_{\ell = 0}^{N-1} d_{\omega_{\ell}}
\to 
\langle d_{\pm} \rangle
\quad
a.s.
\eea
So therefore we get 
\beq
\Psi_L (x)
\to
\frac {1}{\pi}
\frac {1}{ 
n (E_c)
}
\frac {
\langle d_{\pm} \rangle
}
{
\langle L_{\pm} \rangle
}
x, 
\quad
a.s.
\eeq
By \cite[(47)]{jss}, we have 
$n (E_c) = \dfrac {1}{\pi}
\dfrac {\langle d_{\pm} \rangle }{ \langle L_{\pm} \rangle }$, 
which yields 
$\Psi_L(x) \to x$, 
a.s. \\
\medskip

\noindent
Case 2 : General case.
As we did in 
(\ref{generalcase}), 
we can write
\bea
\Psi_L (E_c)
&=&
\triangle \Psi
+
\Psi_{L'} (E_c), 
\quad
\mbox{ where }
\triangle \Psi :=
\Psi_L (E_c)
-
\Psi_{L'} (E_c).
\eea
As in \eqref{eq:pert1}, the first term 
$\triangle \Psi$
may be written as 
\bea
\triangle \Psi
&=&
\epsilon d'_{}
-
\epsilon \Im
\left[
c'_{\omega_{\ell}} e^{2i S}
\right]
+
R'_{\epsilon}
\to 0  ,
\eea
for appropriate constants $d'$, $c'$, and $R_\epsilon'$. 
\end{proof}
To 
show the local uniform convergence in Assumption \ref{assumpt:prufer2}, we need a little more analysis.
\begin{proposition}\label{prop:rel_prufer_conv2}
(Assumption \ref{assumpt:prufer2}, Part 1)
Suppose that
$| \langle e^{ 2i \eta_{\pm} } \rangle | < 1$.

Then 
$\lim_{L \to \infty}\Psi_L (x) = x$ 
locally uniformly, a.s.
\end{proposition}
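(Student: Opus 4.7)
The plan is to upgrade the pointwise convergence of Proposition \ref{prop:rel_prufer_conv1} to locally uniform convergence by invoking the classical P\'olya-type theorem: a sequence of monotone non-decreasing functions that converges pointwise to a continuous limit does so uniformly on every compact subinterval. The key additional ingredient is the monotonicity of $x \mapsto \Psi_L(x)$, which is established as Proposition \ref{prop:psi_incr1} in the appendix and is a deterministic, $\omega$-wise consequence of Sturm oscillation theory, since $E \mapsto \theta_L(E)$ is non-decreasing.

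First, I would fix an event $\Omega_0$ of full probability on which $\Psi_L(x) \to x$ holds simultaneously for every $x$ in a countable dense set $D \subset \R$ (say $D=\Q$); this is obtained from Proposition \ref{prop:rel_prufer_conv1} by taking a countable intersection over $x \in D$. On $\Omega_0$, and for any compact interval $[-M, M]$, I would then carry out the standard sandwich argument. Given $\varepsilon > 0$, choose a finite partition $-M = x_0 < x_1 < \cdots < x_k = M$ with $x_i \in D$ and mesh less than $\varepsilon$. By the pointwise convergence at the finitely many $x_i$, there exists $L_0$ such that $|\Psi_L(x_i) - x_i| < \varepsilon$ for all $L \geq L_0$ and all $i$. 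For an arbitrary $x \in [x_i, x_{i+1}]$, the monotonicity of $\Psi_L$ yields
\[
x_i - \varepsilon \;\leq\; \Psi_L(x_i) \;\leq\; \Psi_L(x) \;\leq\; \Psi_L(x_{i+1}) \;\leq\; x_{i+1} + \varepsilon,
\]
so that $|\Psi_L(x) - x| \leq 2\varepsilon$ uniformly in $x \in [-M, M]$, which is the desired conclusion.

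I do not anticipate any genuine obstacle here. The only mild subtlety is cleanly separating the $\omega$-dependent pointwise convergence from the deterministic monotonicity, but this is handled by first passing to a countable dense set so that a single null event suffices. An alternative, more computational route would be to track the error term $\mathcal{O}(L^{-1/2+\alpha})$ in equation \eqref{eq:psi2} and verify that the large-deviation estimate from Lemma \ref{lemma:sum1} can be taken uniformly in $\epsilon = x/(n(E_c)L)$ for $x$ ranging over a compact set (which is plausible since $N\epsilon^2 \to 0$ uniformly on such sets); however, the monotonicity-based route bypasses this entirely and is the cleaner choice.
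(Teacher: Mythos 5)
Your proof is correct, but it takes a genuinely different route from the paper. You rely on the soft P\'olya-type argument: deterministic monotonicity of $x \mapsto \Psi_L(x)$ (Proposition \ref{prop:psi_incr1}), almost sure pointwise convergence on a countable dense set (a countable intersection of the full-measure events from Proposition \ref{prop:rel_prufer_conv1}), and continuity of the limit $x \mapsto x$, combined via the standard sandwich over a finite partition. This is clean and avoids any further probabilistic estimates; the only point requiring care, which you handle correctly, is that the exceptional null set from the pointwise statement depends on $x$, so one must first pass to a countable dense set before invoking monotonicity to cover all real $x$. The paper instead re-runs the quantitative machinery: it defines a bad event $\Omega(L,\epsilon)$ controlling the oscillatory sum from \cite[Theorem 6]{jss} and the law-of-large-numbers deviations of $\frac{1}{N}\sum L_{\omega_\ell}$ and $\frac{1}{N}\sum d_{\omega_\ell}$, takes a union bound over a grid of mesh $L^{-1/2+\alpha}$ in $[a,b]$ (using the same monotonicity to interpolate between grid points), and concludes by Borel--Cantelli. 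What the paper's heavier approach buys is an explicit rate, $|\Psi_L(x)-x| \le C L^{-1/2+\alpha}$ uniformly on compacts outside an event of probability $O(L^{1/2-\alpha}e^{-C L^{\alpha}})$; this quantitative control is what makes the extension to $L$-dependent reference energies $E_0(L) = E_c + O(L^{-\delta})$ in Theorem \ref{thm:sharp2} go through, whereas your rate-free argument, while fully sufficient for the proposition as stated, would not immediately transfer to that setting.
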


%
\begin{proof}
For given 
$L$, 
let 
$N(L)$
be the number of polymer nodes in 
$[0, L]$.
Note that
$\frac {L}{L_+} \le N(L) \le \frac {L}{L_-}$
so that 
$N(L)/L, L/N(L) = \mathcal{O}(1)$.
Fix
$\alpha > 0$, 
and let 
\beq
\Omega(L, \epsilon)
&:=&
\Biggl\{
\omega \in \Omega
\, \Biggl| \,
\text{ for some 
$N \le N(L)$ }, 
\left|
\sum_{\ell = 0}^{N-1}
c_{\omega_{\ell}} e^{2i S^{\ell}_{\epsilon, \omega} (\theta)}
\right|
\ge
N^{\frac 12+ \alpha}, 
\text{ or }
\\
&&
\left|
\frac 1N
\sum_{\ell = 0}^{N-1}
L_{\omega_{\ell}} 
-
\langle L_{\pm} \rangle
\right|
\ge
N^{- \frac 12 + \alpha}, 
\;
\text{ or }
\;
\left|
\frac 1N
\sum_{\ell = 0}^{N-1}
d_{\omega_{\ell}} 
-
\langle d_{\pm} \rangle
\right|
\ge
N^{- \frac 12 + \alpha}
\Biggr\}
\eeq
where 
$L \epsilon^2 < C$.
Then 
by \cite[Theorem 6]{jss} and standard large deviation estimate for i.i.d. random variables, we have
\beq
P (
\Omega(L, \epsilon)
)
\le
C_1
e^{-C_2 L^{\alpha}}.
\eeq
On the other hand, 
proof of 
Proposition \ref{prop:rel_prufer_conv1} implies that, for 
$\omega \in 
\Omega 
\left(
L, \dfrac {x}{ \rho (E_c) L }
\right)^c$, 
then
\beq
\left|
\Psi_L (x) - x
\right|
\le
C L^{- \frac 12 + \alpha}.
\eeq
Take 
$a, b \in {\bf R}$
s.t.
$a < b$, 
and divide the interval
$[a,b]$
into small intervals 
$\{ I_k \}_{k=1}^K$
with 
$I_k = [x_k, x_{k+1}]$
of size
$L^{- \frac 12 + \alpha}$.
Suppose that, 
\beq
| \Psi_L (x_k) - x_k | \le CL^{- \frac 12 + \alpha}, 
\quad
| \Psi_L (x_{k+1}) - x_{k+1} | \le CL^{- \frac 12 + \alpha}. 
\eeq
Then for any 
$y \in (x_k, x_{k+1})$, 
we have
\beq
x_k - CL^{- \frac 12 + \alpha}
\le
\Psi_L (x_k)
&\le&
\Psi_L (y)
\le
\Psi_L (x_{k+1})
\le
x_{k+1} + CL^{- \frac 12 + \alpha}
=
x_k 
+
(C+1) L^{- \frac 12 + \alpha}
\\
x_k
&  \le &
y
\le
x_{k+1} = x_k + L^{- \frac 12 + \alpha}
\eeq
Hence
\beq
-(C+1)L^{- \frac 12 + \alpha}
&\le&
\Psi_L (y) - y
\le
(C+1) L^{- \frac 12 + \alpha}.
\eeq
Let 
\beq
\Omega(L)
&:=&
\bigcup_{k=1}^K
\Omega
\left(
L, \frac {x_k}{ \rho (E_c) L }
\right).
\eeq
Then 
$P (\Omega (L))
\le
C_1 L^{\frac 12 - \alpha} 
e^{- C_2 L^{\alpha}}$, 
and for 
$\omega \in \Omega (L)^c$, 
\beq
\left|
\Psi_L (x) - x
\right|
\le
C L^{- \frac 12 + \alpha}
\eeq
for any 
$x \in [a,b]$.
Let 
$\Omega_0
:=
\limsup_{L \to \infty}
\Omega (L)$.
Then
$P(\Omega_0) = 0$, 
and for 
$\omega \in \Omega_0^c$, 
$\Psi_L (x) \stackrel{L \to \infty}{\to} x$
uniformly in 
$[a,b]$.
%
\end{proof}
%

%
\subsection{Sharpness of the LES transition} 

In this section, we make an exploration of the transition from Poisson to clock statistics as the reference energy $E_0$ transitions from the localization regime to a critical energy $E_c$. 
We consider the case when the LES is centered at a noncritical energy $E_0(L)$ 
which is $L$-dependent and $E_0(L) \to E_c$ as $L \to \infty$:    
%
\bel{eq:close1}
E_0(L) := E_c + \frac {C}{L^{\delta}}, 
\quad
\delta > \frac{1}{2}.
\ee
Then, in these cases, we can still show that 
$\Psi_L (x) \to x$, locally uniformly. In fact, we have the following extension of Theorem \ref{thm:clock_1}. 
\begin{theorem} \label{thm:sharp2}
Assume the following two conditions.
\begin{enumerate}
\item $| \langle e^{ 2i \eta_{\pm} } \rangle | < 1$, 
\item  $0 < \mathcal{N}(E_c) < 1 $.  
\end{enumerate}
We consider 
the LES centered at $E_0(L)$, with $E_0(L)$ as in \eqref{eq:close1}.
Then 
\begin{enumerate}
\item 
Any accumulation point of 
$\xi_L$
is a clock process with some probability measure
$\mu$.
\\
\item The eigenvalue spacing near $E_0(L)$ exhibits clock behavior as in \eqref{eq:ev_space2}. 
\end{enumerate}
\end{theorem}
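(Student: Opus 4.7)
The plan is to reuse the framework of Sections \ref{subsec:strong_clock1} and \ref{subsec:verify_clock1}, replacing the fixed reference energy $E_c$ by the $L$-dependent one $E_0(L) = E_c + CL^{-\delta}$. The condition $\delta > 1/2$ is precisely what keeps us inside the regime where the perturbative Pr\"ufer analysis of \cite{jss} still applies.

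I would first define the relative Pr\"ufer angle
\bel{eq:sharp_prufer}
\Psi_L(x) := \frac{1}{\pi}\left\{\theta_L\!\left(E_0(L) + \frac{x}{n(E_c)L}\right) - \theta_L(E_0(L))\right\},
\ee
and prove $\Psi_L(x)\to x$ locally uniformly almost surely, adapting Proposition \ref{prop:rel_prufer_conv2}. Writing $\varepsilon(L,x) := CL^{-\delta} + x/(n(E_c)L)$ for the total displacement from $E_c$, the telescoping identity together with \eqref{eq:pert1} expresses $\theta_L(E_c + \varepsilon)$ as the sum of a deterministic piece $\sum_\ell \eta_{\omega_\ell}$, a linear piece $\varepsilon\sum_\ell d_{\omega_\ell}$, an oscillatory piece $-\varepsilon\sum_\ell \Im\bigl[c_{\omega_\ell} e^{2i S^\ell_{\varepsilon,\omega}(\theta)}\bigr]$, and an $O(\varepsilon^2)$ remainder per polymer. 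Taking the difference at the two reference points in $\Psi_L$ cancels the deterministic piece; the key point is that $L\,\varepsilon(L,x)^2 = O(L^{1-2\delta}) = o(1)$ whenever $\delta > 1/2$, so Lemma \ref{lemma:sum1} bounds each oscillatory sum by $O(N^{1/2+\alpha})$, the remainders contribute $O(L\varepsilon^2) = o(1)$, and the law of large numbers handles $\frac{1}{N}\sum_\ell d_{\omega_\ell} \to \langle d_\pm\rangle$. The $CL^{-\delta}$ shift cancels out of the difference, and what remains equals $x$ by the identity $n(E_c) = \langle d_\pm\rangle/(\pi\langle L_\pm\rangle)$ from \cite[(47)]{jss}. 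Local uniformity follows from the covering argument of Proposition \ref{prop:rel_prufer_conv2}.

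Once $\Psi_L(x)\to x$ locally uniformly is in hand, part (1) follows from Corollary \ref{rmk:clock_assump2}, whose proof carries over if $E_c$ is replaced by $E_0(L)$: the analog of Assumption \ref{assumpt:prufer1}(3) holds because $0 < \mathcal{N}(E_c) < 1$ together with $E_0(L)\to E_c$ and the continuity of $\mathcal{N}$ force both $m(E_0(L),L)$ and $L - m(E_0(L),L)$ to diverge. Hence every accumulation point of $\xi_L$ is clock$(\mu)$ for some probability measure $\mu$ on $[0,1)$. For part (2), I would repeat the proof of Theorem \ref{thm:spacing1} with the rescaling $x'_j(L) := n(E_c)L(E'_j(L) - E_0(L))$: consecutive eigenvalues give $\Psi_L(x'_{j+1}(L)) - \Psi_L(x'_j(L)) = 1$, Lemma \ref{lemma:x_bdd1} confines each $x'_j(L)$ to a compact set, and the local uniform convergence of $\Psi_L$ yields $n(E_c)L(E'_{j+1}(L) - E'_j(L)) \to 1$ almost surely.

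The main obstacle is the first step: verifying that the $L$-dependent shift $CL^{-\delta}$ does not spoil the large-deviation estimate \cite[Theorem 6]{jss}, whose hypothesis is $L\varepsilon^2 < 1$. This is exactly why the threshold $\delta > 1/2$ is sharp for the present method. At $\delta \leq 1/2$ the quadratic remainder \eqref{eq:remainder1}, summed over $O(L)$ polymers, is no longer $o(1)$, and cancellations beyond those in \cite{jss} would be required. A secondary subtlety is that convergence in distribution of the fractional part of $\theta_L(E_0(L))/\pi$ to the uniform law is no longer automatic, because the deterministic correction $CL^{-\delta}$ twists the phase in a way that depends on $C$ and $\delta$; this is why only the weaker accumulation-point statement, rather than full convergence to a uniform clock process, is asserted.
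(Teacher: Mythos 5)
Your proposal follows essentially the same route as the paper, which itself only sketches the argument by reducing to parts (1) of Assumptions \ref{assumpt:prufer1} and \ref{assumpt:prufer2} for the shifted relative Pr\"ufer phase \eqref{eq:rel_prufer3} and citing Propositions \ref{prop:rel_prufer_conv1} and \ref{prop:rel_prufer_conv2}. Your write-up correctly supplies the details the paper omits, in particular that $\delta > \tfrac12$ keeps $L\,\varepsilon^2 = o(1)$ so that the quadratic remainders and the hypothesis of \cite[Theorem 6]{jss} survive the $L$-dependent shift.
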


\begin{proof}
It 
suffices to show part (1) of 
Assumption \ref{assumpt:prufer1}
and part (1) of 
Assumption \ref{assumpt:prufer2}
for a relative Pr\"ufer phase defined by 
\bea\label{eq:rel_prufer3}
\widetilde{\Psi}_L (x) 
& := & \Psi_L \left( E_0(L) + \dfrac {x}{ n(E_c) L } \right) - \Psi_L ( E_0(L)).
\eea
The idea of proof is similar to that for 
Propositions \ref{prop:rel_prufer_conv1}
and \ref{prop:rel_prufer_conv2}.
\end{proof}

\medskip

\begin{remark}\label{rmk:sharp1}
Theorem \ref{thm:sharp2} supports the sharpness of the transition of the unfolded LES as a function of the energy. In the next section, we show that under certain assumptions on the regularity of the IDS, Assumption \ref{assump:dos1}, the limiting point process is Poisson at noncritical energies. In remark \ref{remark:dos1}, we point out that although the \Holder continuity of the IDS for the AB RPM is known, we do not know of a proof for the \Holder continuity of the inverse function $N^{-1}(E)$. It is possible that these \Holder exponents, $\rho_1$ and  $\rho_2$  vary as the energy approaches a critical energy. In particular if $\rho_1 \rho_2$ no longer satisfies Assumption \ref{assump:dos1}, that is, $\frac{2}{3} < \rho_1 \rho_2 < 1$, our proof of Poisson statistics breaks down. To investigate this, we prove in Appendix \ref{app:sharp1} that in the absence of the lower bound of $\rho_1 \rho_2$, the limit points of the point process at noncritical energies are infinitely divisible. 
Since the clock process is not infinitely divisible (see Theorem \ref{thm:clock_notID1}), our results show that the limiting point process is infinitely divisible (Poisson) for all noncritical $E_0 \in \Sigma$, but not infinitely divisible (clock) for all critical energies $E_0 \in \Sigma$. In this sense, the transition is sharp.  
\end{remark}

%


\subsection{Summary: Clock statistics at critical energies}\label{subsec:clock1}

We summarize the main results on clock LES at critical energies. 

\begin{theorem}\label{thm:clock_sum1}
Under these assumptions on the RPM, 
\begin{enumerate}
\item The polymer transfer matrices eigenvalues satisfy:
$$
| \langle e^{ik \eta_\pm}  \rangle | < 1, 
\quad
\text{for any }
k \in \Z \setminus \{ 0 \} ; 
$$

\medskip
 
\item The IDS satisfies
$$
0 < {N}(E_c) < 1; 
$$
\end{enumerate}
we have the following results on the LES at critical energies:
\begin{enumerate}
\item The LES centered at a critical energy satisfies $\xi_L 
\stackrel{d}{\to} clock (unif [0,1))$.

\medskip

\item Strong clock behavior: The local eigenvalue spacing  around a critical energy almost surely satisfies  
\bel{eq:ev_space2} 
\lim_{L \rightarrow \infty} n(E_c) L   \left(   E'_{j+1}(L) - E'_j (L)  \right)  = 1 .
\ee  

\medskip

\item If the LES is centered at $E_0 = E_c + \mathcal{O}(L^{-\delta})$
with 
$\delta > \frac 12$, 
then 
the LES is clock$(\mu)$ with some distribution $\mu$ on $[0,1)$, 
and we have the strong clock behavior.
\end{enumerate}
\end{theorem}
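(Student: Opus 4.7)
The plan is to assemble Theorem~\ref{thm:clock_sum1} from the intermediate results already proved in this section. First I would observe that the two standing hypotheses feed cleanly into the earlier machinery: the condition $|\langle e^{ik\eta_\pm}\rangle|<1$ for all $k\in\Z\setminus\{0\}$ specializes at $k=2$ to $|\langle e^{2i\eta_\pm}\rangle|<1$, which is the assumption driving Propositions~\ref{prop:rel_prufer_conv1} and~\ref{prop:rel_prufer_conv2}, and the assumption $0<\mathcal{N}(E_c)<1$ guarantees, via the ergodic convergence of $L^{-1}\#\{j:E_j(L)<E_c\}$ to $\mathcal{N}(E_c)$ and of $L^{-1}\#\{j:E_j(L)>E_c\}$ to $1-\mathcal{N}(E_c)$, that both $m(E_c,L)\to\infty$ and $L-m(E_c,L)\to\infty$ almost surely. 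This verifies Part~(3) of Assumption~\ref{assumpt:prufer1} and Part~(2) of Assumption~\ref{assumpt:prufer2}.

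For part~(1) of the theorem, I would invoke Proposition~\ref{prop:rel_prufer_conv1} to obtain $\Psi_L(x)\to x$ almost surely, giving Part~(1) of Assumption~\ref{assumpt:prufer1}, and Proposition~\ref{prop:clock_unif1} (which is precisely where the full hypothesis at every nonzero $k$ is needed) to obtain that $\phi(E_c,L)\stackrel{d}{\to}\phi_c$ with $\phi_c\sim\mathrm{unif}[0,\pi)$, giving Part~(2) of Assumption~\ref{assumpt:prufer1}. With Assumption~\ref{assumpt:prufer1} fully verified, Theorem~\ref{thm:clock1} yields $\xi_L\stackrel{d}{\to}\mathrm{clock}(\mu_{\mathrm{frac}})$, where $\mu_{\mathrm{frac}}$ is the law of the fractional part of $\pi^{-1}(\phi_c-m(\pi/2))$. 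Because $\phi_c$ is uniform on $[0,\pi)$, the random variable $\phi_c/\pi$ is uniform on $[0,1)$, and shifting by the deterministic constant $m(\pi/2)/\pi$ and taking the fractional part preserves uniformity modulo $1$; hence $\mu_{\mathrm{frac}}=\mathrm{unif}[0,1)$, as required.

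For part~(2), I would upgrade the pointwise convergence $\Psi_L(x)\to x$ to local uniform convergence by means of Proposition~\ref{prop:rel_prufer_conv2}, providing Part~(1) of Assumption~\ref{assumpt:prufer2}. Combined with the divergence of $m(E_c,L)$ and $L-m(E_c,L)$ noted above, Theorem~\ref{thm:spacing1} applies directly to give the strong clock identity~\eqref{eq:ev_space2}. Part~(3) is essentially a restatement of Theorem~\ref{thm:sharp2}: for the shifted relative phase $\widetilde\Psi_L$ defined in~\eqref{eq:rel_prufer3}, the same perturbation estimates as in Propositions~\ref{prop:rel_prufer_conv1}--\ref{prop:rel_prufer_conv2} apply when $E_0(L)-E_c=\mathcal{O}(L^{-\delta})$ with $\delta>\tfrac12$, because the extra drift $\epsilon=\mathcal{O}(L^{-\delta})$ contributes at most $N\epsilon^2=\mathcal{O}(L^{1-2\delta})=o(1)$ to the error terms; thus $\widetilde\Psi_L(x)\to x$ locally uniformly and the conclusions of Corollary~\ref{rmk:clock_assump2} and Theorem~\ref{thm:spacing1} carry over verbatim.

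The main obstacle in this summary is not a hard new estimate but a bookkeeping one: one must keep straight which of the two perturbation regimes (critical versus near-critical) one is in, and verify that the hypothesis threshold $\delta>1/2$ in part~(3) is precisely the borderline at which the quadratic remainder $N\epsilon^2$ in~\eqref{eq:pert1} remains $o(1)$ while the linear-in-$\epsilon$ drift still vanishes after division by $L$. The reason only the weaker conclusion ``clock$(\mu)$ for some $\mu$'' appears in part~(3), rather than uniform clock, is that Proposition~\ref{prop:clock_unif1} is tied specifically to $E=E_c$ and does not automatically transfer to a nearby $L$-dependent reference energy; hence part~(3) must rely on Corollary~\ref{rmk:clock_assump2} rather than on Theorem~\ref{thm:clock1}.
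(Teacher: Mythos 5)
Your proposal is correct and follows essentially the same route as the paper: Theorem \ref{thm:clock_sum1} is a summary statement whose proof is exactly the assembly you describe, namely verifying Assumptions \ref{assumpt:prufer1} and \ref{assumpt:prufer2} via Propositions \ref{prop:clock_unif1}, \ref{prop:rel_prufer_conv1}, \ref{prop:rel_prufer_conv2} and the IDS condition, and then invoking Theorems \ref{thm:clock1}, \ref{thm:spacing1}, and \ref{thm:sharp2}. Your observations that only the $k=2$ case of the phase condition is needed for the $\Psi_L$ convergence, and that part (3) must fall back on Corollary \ref{rmk:clock_assump2} because Proposition \ref{prop:clock_unif1} does not transfer to an $L$-dependent reference energy, are both accurate and consistent with the paper.
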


There are examples of RPM for which the two assumptions hold.

%
%
%
%
%
%

\section{Localization: LES is Poisson away from critical energies}\label{sec:poisson1}
\setcounter{equation}{0}

The deterministic spectrum $\Sigma$ of the Anderson-Bernoulli random polymer model (AB RPM) is purely pure point. It is a natural question to study the unfolded LES \eqref{eq:les1} centered at energies that are not critical. 
In order to prove that the unfolded LES for AB RPM is Poisson at energies away from the critical energies $\mathcal{C}$, we encounter several new problems: 
\begin{enumerate}
\item There is only a weak Wegner estimate proved in \cite{CKM};
\item The standard proofs of the Minami estimate \cite{CGK1,minami1} do not apply to Bernoulli probability measures;   
\item The IDS is only H\"older continuous.  
\end{enumerate}
This last item means that we do not know if the density of states measure is absolutely continuous with respect to Lebesgue measure. Consequently, we do not know if the density of states function (DOSf) exists. We recall that for random \Schr operators with single-site probability measures $d\mu(s) = \rho(s) ~ds$, with probability density $\rho \in L^\infty (\R)$, the DOSf  $n(E) :={N}'(E)$ determines the intensity of the Poisson point process centered at energy $E$ \cite{minami1}.  

To circumvent this difficulty, we work with the LES constructed using the \emph{unfolded eigenvalues} as in \eqref{eq:les1}, see \cite{GKlo1,minami2}. 
As is mentioned there, 
the unfolded eigenvalues are the random variables constructed by filtering the local eigenvalues $\{ E_j(\Lambda) \}$ through the IDS ${{N}}(E)$. Instead of dealing with the eigenvalues $\{ E_j(\Lambda) \} = \sigma (H_\Lambda)$ directly, we study the unfolded eigenvalues $\{{{N}}(E_j (\Lambda)) \}$. 
The \emph{unfolded LES} centered at ${N}({E}_0)$ is the limit (if it exists) of the finite random point measures given by
\bel{eq:unfLES1}
\xi_L  
= 
\sum_{j=1}^L 
\delta_{  
L 
( {{N}}(E_j (\Lambda)) - {{N}}(E_0) )
} , ~~~~~ E_0 \in \Sigma \backslash \mathcal{C}  .
\ee
Germinet and Klopp \cite{GKlo1} treated the unfolded LES for RSO. They proved that, assuming both Wegner's and Minami's estimates, the unfolded LES for random \Schr operators in the localization regime is a Poisson point process with an intensity measure given by the Lebesgue measure. We will prove a similar result for the unfolded LES for the AB random polymer models. 
We recall that 
at the critical energy 
$E_0 \in \mathcal{C}$, 
the DOSf was proved to exist in \cite[Theorem 3]{jss}. 
As a consequence, 
the unfolded LES at 
$E_c$ 
is the same as the usual LES constructed with the eigenvalues 
$\{ E_j(\Lambda) \}$, 
as noted in section \ref{subsec:main1}. 

To state our main result of this section, that the LES is Poisson at noncritical energies, 
we need to discuss the density of states (DOS) of AB RPM. It is well-known that the DOS plays an important role in the proof of local eigenvalue statistics. 
We first recall a well-known regularity result of the integrated density of states (IDS) for AB RPM.

\medskip

\begin{theorem}\label{thm:dos1}\cite{CKM}
The IDS 
${N}: \Sigma \rightarrow [0,1]$ 
of the AB RPM  is locally uniformly \Holder continuous, that is, for any compact interval 
$I$, 
we can find positive constants
$\rho_1 \in (0,1]$
and 
$C < \infty$,
such that 
\bel{eq:holder_dos1}
|{N}(E) - {N}(E') | \le C |E - E'|^{\rho_1}
\ee
for any 
$E, E' \in I$.
\end{theorem}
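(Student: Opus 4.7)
The plan is to reduce to the classical Carmona-Klein-Martinelli scheme for one-dimensional Anderson-Bernoulli models, using the polymer transfer matrix in place of the single-site one. Fix a compact interval $I \subset \R$. First I would split $I$ into pieces: finitely many small neighborhoods of the critical energies in $\mathcal{C} \cap I$, and the complementary closed pieces, which lie entirely in $\R \setminus \mathcal{C}$.

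On the portion of $I$ away from the critical energies, the Lyapunov exponent $L(E)$ defined in \eqref{eq:lyapunov0} is continuous and strictly positive, hence bounded below by some $\kappa > 0$ by compactness. The idea is then to use the Pr\"ufer representation \eqref{eq:prufer_defn1} of the IDS together with the Furstenberg-Kifer-Le Page large-deviation estimates for products of i.i.d.\ polymer transfer matrices $T_\omega^E(k,0)$. Zariski density of the group generated by $\{T^E_+, T^E_-\}$ off $\mathcal{C}$, combined with the uniform bound $L(E) \ge \kappa$, yields the standard CKM-type H\"older estimate
\[
|N(E) - N(E')| \le C |E - E'|^{\rho_1}
\]
for all $E, E'$ in this complementary set, with $\rho_1 \in (0,1]$ depending only on $\kappa$ and on the distribution of the $\omega_\ell$.

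Near each critical energy $E_c \in \mathcal{C}$, by \cite[Theorem 3]{jss} the IDS is differentiable at $E_c$ with strictly positive derivative $n(E_c) > 0$. Consequently, on a sufficiently small neighborhood $U$ of $E_c$, one has the Lipschitz-type bound $|N(E) - N(E')| \le C_c |E - E'|$, which is trivially H\"older of any exponent $\rho \in (0,1]$. A finite covering of $I$ by such neighborhoods and by pieces contained in $I \setminus \mathcal{C}$, together with a triangle inequality along the cover, yields the desired local uniform H\"older continuity after taking the minimum of the H\"older exponents and the maximum of the constants.

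The hard part will be the estimate on the complement of a small neighborhood of $\mathcal{C}$: one must set up the Pr\"ufer-angle machinery for blocks of polymers rather than single sites, verify the Furstenberg irreducibility and contraction conditions for the polymer transfer-matrix group off $\mathcal{C}$ (these may fail exactly at critical energies, which is why those must be excised first), and apply the large-deviation bounds uniformly in $E$ on a compact set bounded away from $\mathcal{C}$. Once these adaptations are carried out, the CKM argument goes through essentially verbatim, with the single-site transfer matrix replaced by the one-polymer transfer matrix \eqref{eq:one_poly_trans1}.
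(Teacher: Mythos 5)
The paper itself offers no proof of this statement: it is imported wholesale from \cite{CKM} (see also the remark in Section \ref{subsec:outline_loc1} that ``the same arguments hold in the polymer case''), so the only thing to compare against is the intended CKM/Le Page route. Your treatment of the part of $I$ bounded away from $\mathcal{C}$ follows exactly that route (positivity and continuity of $L(E)$ off $\mathcal{C}$, Furstenberg's conditions as in Theorem \ref{thm:lyapunov1}, Le Page-type large deviations for products of polymer transfer matrices), and that part is sound modulo the standard verifications you flag.

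The genuine gap is in your handling of the neighborhoods of the critical energies. Differentiability of $N$ at the single point $E_c$ (which is all that \cite[Theorem~2]{jss} gives you) controls only $|N(E)-N(E_c)|$; it does \emph{not} yield $|N(E)-N(E')|\le C_c|E-E'|$, nor any H\"older bound, for pairs $E,E'$ that both lie in a small neighborhood $U$ of $E_c$ but on the same side of it. Monotonicity does not rescue this: for $E_c<E<E'$ one only gets $N(E')-N(E)\le N(E')-N(E_c)\le C(E'-E_c)$, which is useless when $E'-E\ll E'-E_c$. Nor can you fall back on the off-critical estimate inside $U\setminus\{E_c\}$, because the constants produced by the Le Page machinery degenerate as the lower bound $\kappa$ on the Lyapunov exponent tends to $0$, i.e.\ as $E\to E_c$; this is exactly the delicate point in extending \cite{CKM} (where no critical energies exist) to the RPM. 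To close the gap you would need either a quantitative version of the off-critical H\"older estimate with explicit dependence of $(C,\rho_1)$ on $\operatorname{dist}(E,\mathcal{C})$, to be interpolated against the pointwise expansion of $N$ at $E_c$ coming from the perturbative analysis of \cite[Section~4.2]{jss}, or a direct rotation-number/Pr\"ufer-phase argument valid uniformly on a full neighborhood of $E_c$. As written, the step ``differentiable at $E_c$ $\Rightarrow$ Lipschitz on $U$'' is a non sequitur and the covering argument that follows does not go through.
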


For our LES result, 
we also need some regularity on the inverse function ${{N}}^{-1} : [0,1] \rightarrow \Sigma$.

\medskip

\begin{assumption} (\Holder continuity of ${N}^{-1}$)
\label{assump:dos1}

\begin{enumerate}   
\item
${{N}}^{-1}$ 
is \Holder continuous with an exponent 
${\rho_2} \in (0, 1]$, 
that is, for any compact interval 
$I$ 
we can find positive constants
$\rho_2 \in (0,1]$
and 
$C < \infty$
such that for 
$u, u' \in [0,1]$, 
\bel{eq:inv_holder_dos0}
| {N}^{-1}(u) - {N}^{-1}(u') |  
\le  
C |u - u'|^{{\rho_2}}.
\ee
In other words, for 
$E =  {N}^{-1}(u)$, 
and 
$E' =  {N}^{-1}(u')$, 
we have 
\bel{eq:inv_holder_dos1}
|{N}(E) - {N}(E') | \ge C^{\frac{1}{\rho_2}} |E - E'|^{\frac{1}{\rho_2}}.
\ee

\medskip

\item
${\rho_1 \rho_2} >   \dfrac 23 .$
\end{enumerate}
\end{assumption}

By definition, 
$\rho_1 \rho_2 \le 1$.
Part (2) of Assumption \ref{assump:dos1} 
states that this product $\rho_1 \rho_2$ is not too small. 

\medskip

Our main theorem 
is the following result on the unfolded LES \eqref{eq:unfLES1} away from the finite set 
$\mathcal{C}$ 
of critical energies. 

\medskip

%
\begin{theorem}\label{thm:les_loc1}  
(Poisson statistics) 
Let 
$E_0 \in \Sigma \backslash \mathcal{C}$. 
We assume that 
the IDS of the AB RPM satisfies Assumption \ref{assump:dos1}. 
Then, 
the unfolded LES in \eqref{eq:unfLES1} converges in distribution to the Poisson point process with intensity measure given by the Lebesgue measure:  
$$
\xi_L  
\stackrel{d}{\to}  ~~~\mbox{Poisson}(dx).
$$
\end{theorem}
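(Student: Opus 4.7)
The plan is to follow the Germinet--Klopp strategy \cite{GKlo1} for Poisson convergence of the unfolded local eigenvalue statistics in the localization regime, adapted to the Bernoulli single-site setting of the AB RPM. The three main ingredients are dynamical localization, an unfolded Wegner estimate, and an unfolded Minami estimate; given these, the Poisson limit follows from standard thinning and independence arguments. Working in unfolded variables is essential here since the existence of a density of states function at noncritical energies is not known.

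First, I would decompose $\Lambda_L = [0, L-1] \cap \Z$ into disjoint sub-boxes $\{ \Lambda_p \}$ of side $\ell = L^{\beta}$ for some small $\beta > 0$. Dynamical localization for the AB RPM away from $\mathcal{C}$, obtained via the bootstrap multiscale analysis of Germinet--Klein \cite{gk_boot1} (summarized in the appendix), implies that eigenvalues of $H_{\Lambda_L}$ near $E_0 \in \Sigma \setminus \mathcal{C}$ are well-approximated by those of the restricted operators $H_{\Lambda_p}$, with corrections exponentially small in $\ell$. Since the single-site variables are independent, the operators $\{ H_{\Lambda_p} \}_p$ are independent, which yields asymptotic independence of the box contributions to $\xi_L$.

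The two quantitative inputs are phrased directly in unfolded variables. The unfolded Wegner estimate is essentially a reformulation of the definition of the IDS: for intervals $J \subset \R$ with $|J| = O(1/L)$,
\[
\E \bigl[ \# \{ j : N(E_j(\Lambda_p)) \in J \} \bigr] \le C\, \ell\, |J| + o(\ell\, |J|).
\]
The key new ingredient is an unfolded Minami-type estimate of the form
\[
\Pp \bigl\{ \# \{ j : N(E_j(\Lambda_p)) \in J \} \ge 2 \bigr\} \le C\, (\ell\, |J|)^2.
\]
Converting $J$ to an energy window $I$ via Assumption \ref{assump:dos1}, this translates to $|I| \sim L^{-\rho_2}$ and requires $\Pp\{ \operatorname{tr} P_{H_{\Lambda_p}}(I) \ge 2 \} \le C (\ell\, |I|^{\rho_1})^2$. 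Because the standard determinantal Minami proofs \cite{minami1, CGK1} require an absolutely continuous single-site measure and therefore fail for Bernoulli, I would follow the free-sites philosophy of Bourgain \cite{bourgain1} and Klopp's adaptation to polymer-type settings \cite{klopp1}, exploiting the polymer structure of the AB RPM to extract enough effectively free variables inside each $\Lambda_p$ to recover quadratic decay. The exponent constraint $\rho_1 \rho_2 > 2/3$ in Assumption \ref{assump:dos1} enters precisely when absorbing the error terms produced by this argument and by the nonlinear change of variables $E \leftrightarrow N(E)$.

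With these inputs in hand, I would conclude via the Laplace-functional and factorial-moment characterization of Poisson convergence. For any bounded Borel set $A \subset \R$: (i) the first moment $\E[\xi_L(A)]$ converges to $|A|$ by the unfolded Wegner estimate together with the convergence of the empirical IDS to $N$; (ii) the unfolded Minami estimate summed over the $L/\ell$ boxes yields $\Pp\{ \text{some box contributes } \ge 2 \text{ atoms to } \xi_L(A) \} = O(\ell/L) \to 0$; and (iii) box independence combined with localization forces the joint factorial moments to converge to those of a Poisson process, yielding $\xi_L \xrightarrow{d} \mathrm{Poisson}(dx)$. The \textbf{main obstacle} is the Bernoulli Minami estimate: the only known Wegner-type bound \cite{CKM} is linear in the window size with a H\"older exponent, and upgrading the Bourgain--Klopp machinery to a genuinely quadratic Minami bound at the scale $L^{-\rho_2}$ is the most delicate step; the hypothesis $\rho_1 \rho_2 > 2/3$ is precisely the quantitative threshold at which the resulting error terms remain summable after the unfolding.
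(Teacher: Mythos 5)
Your proposal follows essentially the same route as the paper: decomposition of $\Lambda_L$ into independent boxes of size $L^{\beta}$ with buffer zones, bootstrap MSA/SULE localization to reduce to the sub-boxes, the Carmona--Klein--Martinelli Wegner bound, a Bernoulli Minami estimate built from Bourgain's two-eigenvector lemma (splitting at polymer nodes for independence) combined with Klopp-style localization-center dichotomies, and a standard Poisson limit argument in unfolded variables, with $\rho_1\rho_2 > \tfrac{2}{3}$ arising from balancing the exponents $\beta \le \rho_1\rho_2$, $\rho_1\rho_2 > \tfrac{1+\beta}{2}$, and $\rho_1\rho_2 > 1-\beta$. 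The only cosmetic difference is that you phrase the conclusion via factorial moments while the paper computes the joint counting probabilities directly; these are equivalent.
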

%

\begin{remark}
\label{remark:dos1}
\mbox{}
\begin{enumerate}
\item
In fact, to prove that the LES is Poisson at a noncritical reference energy $E_0$
we only need Assumption \ref{assump:dos1} near the reference energy 
$E_0$. That is, the following \emph{local} condition is enough:
\begin{equation}
0 < 
\lim_{\delta \to 0}
\inf_{
\substack{
| E_0 - E_1 | < \delta, \\
| E_0 - E_2 | < \delta, \\
E_1 \ne E_2
}
}
\frac {
| N(E_1) - N(E_2) |
}
{
| E_1 - E_2 |^{1/\rho_2}
}, 
\quad
\lim_{\delta \to 0}
\sup_{
\substack{
| E_0 - E_1 | < \delta, \\
| E_0 - E_2 | < \delta, \\
E_1 \ne E_2
}
}
\frac {
| N(E_1) - N(E_2) |
}
{
| E_1 - E_2 |^{\rho_1}
}
< \infty.
\label{weakercondition}
\end{equation}
Our proof
for Poisson statistics breaks down if we do not assume the condition 
(\ref{weakercondition})
with
$\rho_1  \rho_2 > \frac 23$, 
although we do not know of an example for which this condition is satisfied. 
However, 
they seems to be the minimum requirement, namely  we do not expect Poisson statistics without this condition.  
In fact, 
\cite[(1.45) of Theorem 1.10]{GKlo1}
also assumes a lower bound on the H\"older continuity of IDS. 
On the other hand, 
Theorem \ref{thm:les_loc1} 
is model-independent, and valid for general situations such that the Hamiltonian satisfies reasonable conditions such as the independence at a distance, 
positivity of the Lyaunov exponent at the reference energy, 
and
applicability of MSA.

\medskip
\item Avila \cite{avila1} showed that the IDS of the almost Mathieu model in the subcritical region $0 < \lambda < 1$ has \Holder exponent $\rho_1 = \frac{1}{2}$ and $N^{-1}$ has \Holder exponent $\rho_2 = \frac{2}{3}$. This is the only example we know of for which $\rho_2$ has been computed.  (except the case where the DOS measure has a density so that the IDS is Lipschitz continuous.)

\medskip

\item A consequence of the assumed \Holder continuity of $N^{-1}$ is that 
for any interval $I \subset \R$, independent of $L$, we have 
\bel{eq: unfolded_int1}
N^{-1} \left( N(E_0) + \frac{1}{L} I \right) \leq C(E_0) \left( \frac{|I|}{L} \right)^{\rho_2} .
\ee

 \end{enumerate}
\end{remark}
\medskip

\medskip

\subsection{Outline of the proof}\label{subsec:outline_loc1}

The strategy
for the proof is basically due to Minami \cite{minami1} and Germinet-Klopp \cite{GKlo1}: in the localized regime, the original system can be well approximated by the direct sum of subsystems, and the eigenvalues of each subsystem are independent of the eigenvalues in the other subsystems. Hence, for example, the random variables that the subsystems have at least one eigenvalue in an interval are close to independent Bernoulli variables, which is a typical situation where the proof of the Poisson limit theorem works.
But since 
the distribution of the random potential is singular, we need to re-construct the proof from the beginning.

\begin{enumerate}
\item
Positivity of Lyapunov exponent: De Bi\`evre and Germinet \cite{dBg1} proved the positivity of the Lyapunov exponent for the Dimer model and Damanik, Sims, and Stolz \cite{DSS} proved it for more general one-dimensional discrete models including the AB RPM. We review these results in section \ref{subsec:lyapunov1}.
%
\item
Wegner's estimate : 
we use ``weak-type" Wegner's bound proved by \cite{CKM}, and we improve 
the exponent on the bound.  
\item
Bourgain's lemma : 
Bourgain \cite{bourgain1} 
proved an estimate on the probability that a local Hamiltonian has two eigenvalues in a small interval $I$.
We adjust 
his proof to our situation, and present the proof in Appendix \ref{app:bourgain1}
for completeness. 
\item
Bootstrap MSA and SULE bound : 
We use 
the conclusion obtained by the bootstrap MSA in Germinet-Klein \cite{gk_boot1}.
The conclusion of bootstrap MSA gives a SULE bound for finite volume systems with both polynomial and sub-exponential probability \cite{GKlo1}. 
\item
Minami's estimate : 
we give the bound on the probability that a local Hamiltonian has more than 
$k$
eigenvalues on a small interval, where 
$k \ge 2$.
The 
essential idea is due to \cite{klopp1, shirley1}, but we need some technical modifications to fit to the AB RPM. 
For related remarks, see Appendix
\ref{app:bootstrap1}.
\item
Decomposition 
of the system into small subsystems : 
we construct 
a correspondence between the pairs of eigenvalues and eigenvectors of the original system and those in small subsystems. 
Then, an argument using the proof of Poisson limit theorem completes the proof.
This basic argument is due to Germinet-Klopp\cite{GKlo1}, but the mathematical construction of these correspondences was initiated in \cite{nakano1}.
\end{enumerate}
\medskip

The first proof  of localization for the Anderson-Bernoulli model on $\Z$ is due to Carmona, Klein, and Martinelli \cite{CKM}. 
The standard proofs 
of the Wegner estimate for the Anderson model with singular single-site probability measures do not hold since there is no known spectral averaging result, see \cite{CGK1}. 
In \cite{CKM}, 
the authors prove a weaker version of the usual Wegner estimate of the form: 
\beq
\Pp \{ {\rm Tr} P_{H_{\ell}}(I) \geq 1, |I| \sim e^{- \alpha \ell } \} \leq e^{- \alpha' \ell} , 
\eeq
where 
$P_{H_{\ell}}(I)$ 
is the spectral projector, see Theorem \ref{thm:wegner1}. 
With these results, 
Carmona, Klein, and Martinelli \cite{CKM} were able to control the single-energy multiscale analysis (MSA) and prove exponential localization throughout the deterministic spectrum. 
The same arguments 
hold in the polymer case. 
The dimer case 
$(L_+, L_-)  = (2, 2)$ 
was treated by De Bi\`evre and Germinet \cite{dBg2}. 

The standard proof 
of LES for discrete RSO \cite{minami1} and unfolded LES \cite{GKlo1} requires a Minami estimate, which is an upper bound on 
$\Pp \{ {\rm Tr} P_{H_L}(I) \geq k \}$, 
for 
$k \ge 2$. 
We present a proof 
of such an estimate for AB RSO and RPM (see Theorem \ref{thm:minami1}) drawing from ideas in \cite{bourgain1} and \cite{klopp1}. 
Two other results 
are needed for localization: 
1) The positivity of the Lyapunov exponent, see \cite{DSS}, and 
2) and the local \Holder continuity of the IDS, proved for the AB model in \cite[Appendix]{CKM}. 
Both results 
play a crucial role in their proof of the exponential decay of the Green's function. 
For a recent proof 
of localization for Anderson models on 
$\ell^2 (\Z)$ 
with singular single-site probability measures, we refer to \cite{BDFGVWZ}. 
%


\subsection{Positivity of the Lyapunov exponent} \label{subsec:lyapunov1}

We recall 
the definition of the transfer matrix for a finite string of polymers:
\bel{eq:klock_trans_matrix2}
T_\omega^E (k, m) = T^E_{\omega_k} \ldots T^E_{\omega_m}, ~~~ k > m .
\ee
If 
$m > k$, 
we have 
$T^E_\omega(k,m) = T^E_\omega(m,k)^{-1}$, 
and 
$T^E_\omega (k,k) = I_2$, 
where 
$I_2$ 
is the identity matrix. 
The Lyapunov exponent 
is given by 
\bea  \label{eq:lyapunov1}
L (E) &:=& \lim_{k \to \infty}
\frac {1}{ k \langle L_{\pm} \rangle }
\log \| T^E_{\omega} (k, 0) \|, 
\quad
\mbox{ where }
\quad
\langle c_{\pm} \rangle
:= p_+ c_+ + p_- c_-.
\eea
A key component 
of the proof of localization for the dimer model 
$L_+ =  L_- = 2$ 
by De Bi\`evre and Germinet \cite{dBg2} is the verification of the two conditions necessary in order to prove the positivity of the Lyapunov exponent. 
The Furstenberg theorem 
(see, for example, \cite{DSS}) requires two properties:
\begin{enumerate}
\item 
The subgroup 
$G_\pm$ 
of 
$SL(2, \R)$ 
generated by the two polymer transfer matrices 
$\{ T_+^E, T_-^E \}$, 
for 
$E \neq E_c$, 
is noncompact.

\medskip

\item
The subgroup 
$G_\pm$ 
is strongly irreducible.

\end{enumerate}
These conditions 
were proved to hold in \cite{dBg2} for the dimer model by explicit computation.  
The proof 
for the general AB RPM is given in \cite{DSS}. 
We cite 
the result applicable to the AB RPM. 

\medskip

\begin{theorem}\cite[Theorem 6.2]{DSS}\label{thm:lyapunov1}
For the AB RPM, 
there is a finite subset 
$\mathcal{C} \subset \R$ 
so that for all 
$E \in \Sigma \backslash \mathcal{C}$, 
the subgroup 
$G_\pm (E)$ 
of 
$SL(2 , \R)$ 
is noncompact and strongly irreducible. Consequently, the Lyapunov exponent $L(E)$ 
for the AB RPM at energies 
$E \in \Sigma \backslash \mathcal{C}$ 
is strictly positive. 
\end{theorem}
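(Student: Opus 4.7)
The plan is to invoke the Furstenberg theorem: for i.i.d.\ products of matrices in $SL(2,\R)$ whose common distribution has support generating a subgroup $G$ that is both noncompact and strongly irreducible (i.e., preserves no finite union of one-dimensional subspaces of $\R^2$), the top Lyapunov exponent is strictly positive. In our setting, the polymer block at each site is drawn Bernoulli from $\{T^E_+, T^E_-\}$, so the relevant subgroup is $G_\pm(E) = \langle T^E_+, T^E_- \rangle$. The task therefore reduces to showing that both Furstenberg conditions hold for every $E$ outside a finite exceptional set $\mathcal{C}$.

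First I would address noncompactness. A two-generator subgroup $\langle A, B \rangle \subset SL(2,\R)$ has compact closure if and only if there exists $P \in GL(2,\R)$ with $PAP^{-1}, PBP^{-1} \in SO(2)$; equivalently, both $A,B$ are elliptic (or $\pm I$) and share the same invariant pair of complex eigenlines, which forces $[A,B] = 0$. Applied to $A = T^E_+$, $B = T^E_-$, this singles out precisely the critical energies of Definition \ref{defn:critical_energy1}. For strong irreducibility, failure means $G_\pm(E)$ preserves a finite union of lines in $\R^2$; existence of a common eigenline gives one polynomial relation among the entries of $T^E_\pm$, while a two-element invariant orbit (each generator either fixing both or swapping them) gives a further algebraic constraint. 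In either case the obstruction is algebraic in the entries of $T^E_\pm$.

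The central observation is that all entries of $T^E_\pm$ are polynomials in $E$, since each is a finite product of single-step transfer matrices $T_{v(n)-E,t(n)}$. Consequently, the trace conditions $|\mathrm{tr}(T^E_\pm)| \le 2$, the commutator equation $[T^E_+, T^E_-] = 0$, and the algebraic obstructions to strong irreducibility are all polynomial equations in $E$. Provided none of these polynomial systems vanishes identically on $\R$, their joint vanishing locus is a finite subset $\mathcal{C} \subset \R$, and Furstenberg's theorem yields $L(E) > 0$ for every $E \in \Sigma \setminus \mathcal{C}$.

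The main obstacle is ruling out identical vanishing: one must exhibit at least one energy at which neither obstruction holds, i.e., the pair $(T^E_+, T^E_-)$ escapes every compact conjugacy class and every finite set of invariant lines simultaneously. This is carried out in \cite[Section 6]{DSS} using the structural hypotheses on the polymer model (in particular $\hat{t}_\pm > 0$ and the genuine distinctness of the two polymers), which guarantee that $(T^E_+, T^E_-)$ is not trapped in a proper algebraic subvariety of $SL(2,\R)^2$. Once this is verified at a single $E$, the finiteness of $\mathcal{C}$ is immediate from the fact that a nonzero univariate polynomial has only finitely many roots.
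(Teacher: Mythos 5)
Your proposal is correct and follows essentially the same route as the paper, which itself only states the two Furstenberg conditions (noncompactness and strong irreducibility of $G_\pm(E)$) and then cites \cite[Theorem 6.2]{DSS} for their verification outside a finite set, exactly as you do. Your added observation that the obstructions are polynomial in $E$ (so the exceptional set is finite once non-identical vanishing is checked) is precisely the mechanism underlying the DSS argument, and deferring that last verification to \cite[Section 6]{DSS} mirrors what the paper does.
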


\medskip



\subsection{Wegner estimate}\label{subsec:wegner1} 
We use 
the version of the Wegner's estimate established in \cite{CKM} for RSO with singular single-site probability measures, including Bernoulli measures.
The proof 
extends to AB RPM. 
Briefly, 
the proof in \cite{CKM} depends on two lemmas. 
The first is 
\cite[Lemma 5.1]{CKM} on the exponential decay of the expectation of products of one-step transfer matrices. 
For the RPM, 
one uses the one-polymer transfer matrices 
$T^E_\pm$, 
with 
$E \in \Sigma \backslash \mathcal{C}$, 
as defined in \eqref{eq:one_poly_trans1}. 
The main bound is as follows. 
For any bounded interval 
$I \subset \Sigma$, 
there are finite constants 
$\alpha_1, \delta > 0$, 
and a positive integer 
$n_0$, 
so that for 
$n \geq n_0$ 
we have:
\bel{eq:transfer_it1}
\E \{ \| T_{\omega_n}^E T_{\omega_{n-1}}^E \ldots T_{\omega_1}^E {\bf{x} }\|^{-\delta} \} \leq C e^{-\alpha_1 n} .
\ee
The proof 
uses the positivity of the Lyapunov exponent. 

The second lemma, 
\cite[Lemma 5.2]{CKM}, 
concerns the existence of approximate eigenvectors $\widetilde{\varphi}$ of a local \Schr operator $H_\ell$, and is based on a result of Halperin. 
This lemma uses the \Holder continuity of the IDS with \Holder exponent 
$\rho_1 > 0$. 
For any $E \in \Sigma \backslash {\mathcal{C}}$, there exists a function 
$\widetilde{\varphi}$ so that 
\bea\label{eq:halperin1}
\lefteqn{\Pp \{ 
\exists E' \in (E-\epsilon, E+\epsilon), 
\mbox{and an approximate eigenvector} 
~~ \widetilde{\varphi}, \| \widetilde{\varphi} \| = 1,    }  
\nonumber \\
   &&  
   \mbox{so that} ~ |\widetilde{\varphi}(\- \ell)|^2 + |\widetilde{\varphi}(-\ell)|^2 \leq \epsilon^2 \}      
   \nonumber \\ 
  && \leq  C \ell \epsilon^{\rho_1} . 
\eea
Given these preliminaries, 
the Wegner estimate for the AB RPM, based on \cite{CKM}, is: 
\medskip

\begin{theorem}\label{thm:wegner1}
For any 
$0 < \beta \leq 1$, 
and for all 
$\sigma > 0$, 
there exist finite constants
$\ell_0 = \ell_0(I, \beta, \sigma) > 0$ 
and 
$\alpha = \alpha(I, \beta, \sigma) > 0$ 
and a constant 
$C_W = C_W(\beta, I) > 0$, 
such that
\bel{eq:wegner1}
\Pp 
\{ 
{\mbox dist} 
( \sigma ( H_\ell) , E) 
\leq 
e^{- \sigma \ell^\beta} 
\} 
\leq 
C_W e^{- \alpha \ell^\beta} ,
\ee
for all 
$E \in I$ 
and all 
$\ell > \ell_0$.
%
\end{theorem}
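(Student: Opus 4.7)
The plan is to follow the strategy of Carmona--Klein--Martinelli~\cite{CKM} developed for the one-dimensional Anderson--Bernoulli model, adapted to the AB RPM setting, with the exponent on the scale improved to $\ell^\beta$. Set $\epsilon_\ell := e^{-\sigma \ell^\beta}$, and decompose the event $\mathcal{E}_\ell := \{\mathrm{dist}(\sigma(H_\ell), E) \le \epsilon_\ell\}$ according to whether an eigenfunction associated with an eigenvalue in $(E - \epsilon_\ell, E + \epsilon_\ell)$ has small boundary values or not. Two ingredients already available in the excerpt drive the estimate: the Halperin-type bound \eqref{eq:halperin1}, and the exponential decay of the expected inverse-norm of iterated single-polymer transfer matrices \eqref{eq:transfer_it1}, both at any reference energy $E \in I \subset \Sigma \setminus \mathcal{C}$ where the Lyapunov exponent is positive by Theorem~\ref{thm:lyapunov1}.

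First, by \eqref{eq:halperin1}, which relies only on the local H\"older continuity of the IDS established in Theorem~\ref{thm:dos1}, the probability that there exists an approximate eigenfunction $\widetilde{\varphi}$ associated with some $E' \in (E - \epsilon_\ell, E + \epsilon_\ell)$ with $|\widetilde{\varphi}(\ell)|^2 + |\widetilde{\varphi}(-\ell)|^2 \le \epsilon_\ell^2$ is bounded by $C \ell \epsilon_\ell^{\rho_1} = C \ell \, e^{-\sigma \rho_1 \ell^\beta}$. For any $\alpha < \sigma \rho_1$, this is dominated by $\frac{1}{2} C_W e^{-\alpha \ell^\beta}$ as soon as $\ell$ is large enough to absorb the polynomial prefactor.

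On the complementary event, any eigenfunction $\psi$ of $H_\ell$ with eigenvalue $E'$ in the window around $E$ must have at least one boundary value of order $\gtrsim \epsilon_\ell$. Using the one-polymer transfer matrix formalism \eqref{eq:one_poly_trans1}, the boundary data of $\psi$ are related to the interior values by the product $T^{E'}_{\omega_n} \cdots T^{E'}_{\omega_1}$ with $n \sim \ell / \langle L_\pm \rangle$. A Lipschitz-in-energy comparison of polymer transfer matrices replaces the product at $E'$ by the one at $E$ at a polynomial-in-$\ell$ cost, so that the decay estimate \eqref{eq:transfer_it1}, together with Markov's inequality, bounds the probability of this second event by $C e^{-\alpha_1' \ell}$ for some $\alpha_1' > 0$. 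Since $\beta \le 1$, this decays faster than $e^{-\alpha \ell^\beta}$ and is absorbed into the final estimate.

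The main obstacle is the passage from the Halperin bound on \emph{approximate} eigenvectors to a statement about the \emph{actual} spectrum of $H_\ell$, together with the handling of the energy perturbation from $E$ to $E'$ inside the window of size $\epsilon_\ell$. The former is managed as in \cite{CKM} by a rank-two boundary perturbation argument which exchanges actual eigenfunctions of the Dirichlet problem with approximate ones; the latter is absorbed via the Lipschitz continuity of the transfer matrices in the spectral parameter, whose cost grows only polynomially in $\ell$ and is negligible on the sub-exponential scales $e^{-\sigma \ell^\beta}$ and $e^{-\alpha \ell^\beta}$. Combining the two contributions,
\[
\Pp(\mathcal{E}_\ell) \;\le\; C \ell \, e^{-\sigma \rho_1 \ell^\beta} + C e^{-\alpha_1' \ell} \;\le\; C_W e^{-\alpha \ell^\beta},
\]
for every $0 < \alpha < \sigma \rho_1$ and all $\ell > \ell_0(I, \beta, \sigma)$, which is the claimed Wegner-type bound.
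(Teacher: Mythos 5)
Your overall strategy coincides with the paper's: Theorem \ref{thm:wegner1} is obtained by running the Carmona--Klein--Martinelli argument, splitting according to whether a normalized eigenfunction with eigenvalue in $(E-e^{-\sigma\ell^\beta},E+e^{-\sigma\ell^\beta})$ has small boundary mass --- handled by the Halperin-type bound \eqref{eq:halperin1}, which gives the $C\ell\,e^{-\sigma\rho_1\ell^\beta}$ contribution --- or not, in which case the transfer-matrix growth encoded in \eqref{eq:transfer_it1} (via positivity of the Lyapunov exponent) is invoked. The first half of your argument is correct.

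The gap is in the second half. You assert that replacing the transfer-matrix product at $E'$ by the one at $E$ costs only ``polynomially in $\ell$'', and conclude that the second event has probability $Ce^{-\alpha_1'\ell}$, exponentially small in $\ell$ itself. Neither claim is right for $\beta<1$. The product of $n$ polymer transfer matrices is a polynomial of degree $\sim n$ in the energy, so $\|T_\omega^{E'}(n,0)-T_\omega^{E}(n,0)\|$ is of order $n\,|E-E'|\,M^{n}$ with $M$ as defined after Theorem \ref{thm:wegner1}; the cost is exponential in the number of steps, not polynomial in $\ell$. With $|E-E'|\le e^{-\sigma\ell^\beta}$ this error is controllable only if the number of steps satisfies $n=O(\ell^\beta)$; taking $n\sim\ell$ as you propose, the term $\ell\,M^{c\ell}e^{-\sigma\ell^\beta}$ diverges whenever $\beta<1$. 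Restricting the transfer-matrix argument to a window of $n\sim\tau\ell^\beta$ polymers, Markov's inequality applied to \eqref{eq:transfer_it1} then yields a probability of order $e^{-q\ell^\beta}$ with $q$ as in \eqref{eq:const_q1}, a rate that does \emph{not} grow with $\sigma$. This is exactly why the admissible exponent is $\alpha<\min\bigl\{\sigma\rho_1,\ \tfrac{\kappa}{2}\rho_1,\ \tfrac{\kappa+\sigma}{2}-\tau\log M,\ \tfrac{\alpha_1}{2}\tau\bigr\}$ and why Corollary \ref{corollary:wegner_v2} records the bound as $C_W\bigl(e^{-\sigma\rho_1\ell^\beta}+e^{-q\ell^\beta}\bigr)$ with two genuinely different rates. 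Your conclusion that the estimate holds for every $\alpha<\sigma\rho_1$ is therefore too strong, and the discrepancy is not cosmetic: the proofs of Lemma \ref{lemma:bourgain1} and Theorem \ref{thm:minami1} track the $e^{-q\ell^\beta}$ term separately and must tune constants to make it negligible.
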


\medskip

\begin{remark}
We note 
that the result in \cite{CKM} can be extended to include 
$\beta = 1$, 
but this fact is not used in this paper. 
\end{remark}

For any $0 < \tau < 1$, the constant 
$\alpha$ 
in \eqref{eq:wegner1}
is bounded by
\beq
\alpha &<&   \min  \Bigl\{  \sigma \rho_1, \,  \frac {\kappa}{2} \rho_1, \,
\frac {\kappa + \sigma}{2} - \tau\log M, \,
\frac {\alpha_1}{2} \tau  \Bigr\}
\eeq 
where 
$\rho_1$ 
is the \Holder exponent of the IDS, defined in \eqref{eq:holder_dos1},
$\kappa  := 
\dfrac {\tau \alpha_1}{2 \delta}$, 
and
\beq
M 
&:=& 
\max \left\{ 
\sup_{E \in I} {\E} \{ \| T_+^E \| \}, 
\sup_{E \in I} {\E} \{ \| T_- ^E \| \} 
\right\}.  
\eeq
The constants $\alpha_1, \delta$ are defined in \eqref{eq:transfer_it1}. 

%
The constant 
$\sigma \rho_1$ 
in the definition of 
$\alpha$ 
is crucial in our argument. 
Consequently, taking 
$\tau > 0$
sufficiently small such that 
$\dfrac {\kappa}{2} - \tau \log M > 0$, 
we set 
\bel{eq:const_q1}
q :=  
\min  
\Bigl\{  
\frac {\kappa}{2} \rho_1, \,
\frac {\kappa}{2} - \tau\log M, \,
\frac {\alpha_1}{2} \tau
\Bigr\}.
\ee
Then
we obtain a second version of the Wegner estimate that will be useful in the proof of the Minami's estimate used in this paper. 
%

%
\begin{follow}\label{corollary:wegner_v2}(Wegner Estimate, second version)
For any bounded interval $ I \subset \R$, and for constants $0 <  \beta \le 1$, 
$\sigma > 0$, there exists a length scale $\ell_0 := \ell_0 (I, \beta, \sigma) > 0$, and an index $q > 0$ as in \eqref{eq:const_q1}, 
so that 
\bel{eq:wegner2}
{\Pp} \bigl\{d (E, \sigma(H_{{\ell}}))  \le  e^{- \sigma \ell^{\beta}}  \bigr\} 
\le  C_W   \left(
e^{- \sigma \rho_1 \ell^{\beta}}
+   e^{- q \ell^{\beta}}
\right)
\ee
for all $E \in I$, and $\ell \ge \ell_0$. 
\end{follow}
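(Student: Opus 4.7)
The plan is to derive the claimed two-term bound directly from Theorem \ref{thm:wegner1} by isolating the $\sigma$-dependent and $\sigma$-independent components of the exponent. Recall that Theorem \ref{thm:wegner1} already yields a bound of the form $\Pp\{d(E, \sigma(H_\ell)) \le e^{-\sigma \ell^\beta}\} \le C_W e^{-\alpha \ell^\beta}$, with the admissible exponent $\alpha$ bounded by
\[
\alpha < \min\left\{ \sigma \rho_1,\, \frac{\kappa}{2}\rho_1,\, \frac{\kappa + \sigma}{2} - \tau \log M,\, \frac{\alpha_1}{2}\tau \right\}.
\]
Among these four quantities, only the first is linear in $\sigma$ and must be preserved in order to track the scale of the spectral window $e^{-\sigma \ell^\beta}$; the remaining three are needed only to supply a uniform baseline exponent.

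The first step I would take is to replace the $\sigma$-dependent quantity $\frac{\kappa + \sigma}{2} - \tau \log M$ with the $\sigma$-free quantity $\frac{\kappa}{2} - \tau \log M$, a weakening permitted by $\sigma > 0$. After fixing $\tau > 0$ small enough that $\frac{\kappa}{2} - \tau \log M > 0$ (exactly as required in the definition of $q$ in \eqref{eq:const_q1}), this yields the lower bound
\[
\min\left\{ \sigma \rho_1,\, \frac{\kappa}{2}\rho_1,\, \frac{\kappa + \sigma}{2} - \tau \log M,\, \frac{\alpha_1}{2}\tau \right\} \; \ge \; \min\{\sigma \rho_1,\, q\}.
\]
Applying Theorem \ref{thm:wegner1} with $\alpha$ taken arbitrarily close to (and, at the cost of a harmless enlargement of $C_W$, equal to) $\min\{\sigma \rho_1, q\}$ then gives $\Pp\{d(E, \sigma(H_\ell)) \le e^{-\sigma \ell^\beta}\} \le C_W e^{-\min\{\sigma \rho_1,\, q\}\,\ell^\beta}$.

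The second step is the elementary splitting $e^{-\min\{a,b\}} = \max\{e^{-a}, e^{-b}\} \le e^{-a} + e^{-b}$, applied with $a = \sigma \rho_1 \ell^\beta$ and $b = q \ell^\beta$, which delivers
\[
\Pp\bigl\{d(E, \sigma(H_\ell)) \le e^{-\sigma \ell^\beta}\bigr\} \; \le \; C_W \left( e^{-\sigma \rho_1 \ell^\beta} + e^{-q \ell^\beta} \right),
\]
valid for all $E \in I$ and $\ell \ge \ell_0(I, \beta, \sigma)$, which is the desired bound \eqref{eq:wegner2}. The length scale $\ell_0$ and the constant $C_W$ are inherited directly from Theorem \ref{thm:wegner1}.

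There is no serious obstacle here: the argument is effectively a bookkeeping step that repackages Theorem \ref{thm:wegner1} into a form where the factor $e^{-\sigma \rho_1 \ell^\beta}$, carrying the $\sigma$-dependence of the spectral window, is cleanly separated from the $\sigma$-independent baseline $e^{-q \ell^\beta}$. The only subtlety is the strict-versus-nonstrict inequality in the choice of $\alpha$, which is absorbed into $C_W$ by a negligible adjustment. This decomposition is precisely what will make the Wegner bound interact correctly with the two-scale structure used in the Minami-type estimates of Section \ref{sec:poisson1}.
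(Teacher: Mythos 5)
Your derivation is correct and is exactly the argument the paper intends: the corollary is stated without an explicit proof immediately after the definition of $q$ in \eqref{eq:const_q1}, and the intended route is precisely to drop the $\sigma$-dependence from $\frac{\kappa+\sigma}{2}-\tau\log M$, conclude $\alpha$ can be taken near $\min\{\sigma\rho_1,q\}$, and split $e^{-\min\{a,b\}\ell^\beta}\le e^{-a\ell^\beta}+e^{-b\ell^\beta}$. The one caveat is that a strict inequality $\alpha<\min\{\cdots\}$ cannot literally be upgraded to equality by enlarging $C_W$ (the ratio $e^{(\min-\alpha)\ell^\beta}$ is unbounded in $\ell$); this imprecision is inherited from the paper's own statement and is harmless in the later applications, where an $(1+\epsilon)$-loss in the exponent is always allowed.
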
  
%


\subsection{Bootstrap MSA and SULE bounds}\label{subsec:loc_bds1}

We use 
the bootstrap multiscale analysis (MSA) developed by Germinet and Klein \cite{gk_boot1}.
We begin 
with some notation and definitions.
%
%
%
%
%
%
Although the usual definitions of subintervals of $\Z$ can be extended to the case $L_+ = L_- > 1$, when 
$L_+ \ne L_-$, 
 the position of each polymer nodes is random which may cause inconveniences in the usual decomposition of 
$H_{\Lambda}$
into subsystems in the proof of Poisson limit theorem.
Thus, we work under alternative definition of finite intervals. 
\begin{defn}\label{defn:subsets_polymer}
We always assume that $L \gg L_\pm$.
For given
$\omega = (\omega_{\ell})_{\ell \in \Z}$, 
we fix the configuration 
$(t_{\omega}, v_{\omega})
=
( \hat{t}_{\omega_{\ell}},  \hat{v}_{\omega_{\ell}})_{\ell \in \Z}$
such that the left end of 
$\omega_0$
starts at the origin of 
$\Z$.
Let 
\beq
p_n 
&:=&
\begin{cases}
\sum_{\ell=0}^{n-1} L_{\omega_{\ell}} & (n \ge 1) \\
0 & (n=0) \\
-\sum_{\ell=n}^{-1} L_{\omega_{\ell}} & (n \le -1) 
\end{cases}
\eeq
be the position of the 
$n$-th polymer node. That is, the left endpoint of $n$-th polymer of length $L_{\omega_n}$.
Moreover, let 
$B_{n} \subset \Z$
be the corresponding 
$n$-th 
block (polymer): 
\beq
B_n
&:=&
p_n + 
\{ 0, 1, \cdots, L_{\omega_n} \}, 
\quad
n \in \Z.
\eeq
Then, for any $\omega$, we have
$\Z = \bigcup_{\ell \in \Z} B_{\omega_{\ell}}$.
We define the finite interval 
$\Lambda_L$
of size
$L$
to be the union of 
$L$-blocks starting from the origin : 
\beq
\Lambda_L
&:=&
\bigcup_{\ell=0}^{L-1}
B_{\omega_{\ell}}.
\eeq
Then 
$\Lambda_L$
is a random set, but we have
$L_- L \le \# \Lambda_L \le L_+ L$.
%
Likewise,  
for a polymer node
$x \in \{ p_n \}_{n \in \Z}$, 
we define
$\Lambda_L (x)$
to be the union of 
$\left
\lfloor
\frac L2
\right
\rfloor$-intervals to both sides of 
$x$, 
that is, if 
$x = p_n$
for some 
$n\in \Z$,  
then
\beq
\Lambda_L (x)
&:=&
B_{ \omega_{n-L'}}
\cup
\cdots
\cup
B_{ \omega_{n+L'-1} },
\quad
\text{where }
\quad
L'
:=
\left
\lfloor
\frac L2
\right
\rfloor
\eeq
We call 
$\Lambda_L$
an interval of size 
$L$, 
and call 
$\Lambda_L (x)$
the interval of size
$L$
centered at 
$x$.  We note that $\Lambda_L$ is not equal to $\Lambda_L(0)$. 
\end{defn}

For the characteristic function of an interval $\Lambda_L(x)$, we write $\chi_{x,L}$. For a general interval $\Lambda \subset \Z$, we write $\chi_\Lambda$ for its characteristic function. We let 
$\Gamma _{x,L}$ 
denote the function localized near the boundary points of 
$\Lambda_L(x)$.  

The concept 
of a \emph{regular interval} for the Green's function an energy 
$E$ 
is central to MSA. 
For 
$x, y \in \Z$, 
and 
$z \in \C \backslash \Sigma$, 
the Green's function 
$G_H (x,y ; z)$ 
is the integral kernel of the resolvent 
$R_H(z) := (H - z)^{-1}$
 of 
$H$ 
evaluated at 
$x$ 
and 
$y$: 
$G_H (x,y;z) := \langle \delta_x, R_H(z) \delta_y \rangle$. 
For any 
$\Lambda \subset \Z$, 
the Green's function of the restricted \Schr operator 
$H_\Lambda := \chi_{\Lambda} H_\omega \chi_\Lambda$ 
is denoted by 
$G_\Lambda (x,y; z)$, 
or by 
$G_\Lambda (z)$.  
If the interval 
$\Lambda$ 
is of length 
$L$, 
we write 
$H_L$ 
and 
$G_L(x,y;z)$. 
For a random \Schr operator 
$H_\omega$, 
we sometimes suppress the configuration index 
$\omega$. 

\medskip
 
\begin{defn}\label{defn:regular0}
The interval 
$\Lambda_L(x)$ 
of length 
$L$ 
centered at 
$x$, 
is called 
\textbf{$(m, E)$-regular} 
if
\bel{eq:regular1}
\| \Gamma_{x, L} G_{ \Lambda_L (x) } (E) \chi_{x, L/3} \| \le e^{- \frac{m L}{2}}  ,
\ee
where the operator norm is on 
$\ell^2 (\Lambda_L (x) )$. 
\end{defn}

\medskip

%
\begin{theorem}\label{thm:msa1} (Bootstrap MSA)
Let $I$ be any interval $I \subset \Sigma \backslash \mathcal{C}$. For $x,y \in \Z$ and $E \in \R$, we define the event:
\bea
R(m, L, I, x, y)
&:=&
\Bigl\{
\omega \in \Omega 
\, \Bigl| \,  
\text{ For any } E \in I, 
\nonumber
\\
&&
\qquad
\text{either }
\Lambda_L (x) \mbox{ or }
\Lambda_L (y)   \mbox{ is }  (m, E)   \mbox{-regular}  
\Bigr\}.
\eea
%
Then, for any $\zeta \in (0,1)$, and for all $\alpha \in (1, \zeta^{-1})$, and with a scale $L_k := L_0^{\alpha^k}$, the following estimates hold for $k \in \N$: 
\bea
&&
{\Pp} \left\{ R(m, L_k, I, x, y) \right\}  \ge  1 - e^{-L_k^{\zeta}} .
\eea
\end{theorem}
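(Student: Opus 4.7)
The plan is to verify that the hypotheses of the bootstrap multiscale analysis of Germinet--Klein \cite{gk_boot1} hold for the AB RPM on any interval $I \subset \Sigma \setminus \mathcal{C}$, and then to invoke their general machinery to obtain the sub-exponential probability estimate $1 - e^{-L_k^{\zeta}}$ on a sequence of scales $L_k = L_0^{\alpha^k}$. The four hypotheses I need to verify are: (i) independence at a distance for $H_\omega$, which follows from the i.i.d. Bernoulli structure once the boxes $\Lambda_L(x)$ and $\Lambda_L(y)$ are separated by more than $\max\{L_+, L_-\}$ polymer lengths; (ii) the standard Simon--Lieb/geometric resolvent inequality, which is purely deterministic for Jacobi matrices on $\Z$; (iii) a weak Wegner estimate, provided by Theorem \ref{thm:wegner1} and, in its sharpened form, by Corollary \ref{corollary:wegner_v2}; and (iv) an initial length scale estimate (ILSE).

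For the ILSE, I will use the positivity of the Lyapunov exponent on $\Sigma \setminus \mathcal{C}$, established in Theorem \ref{thm:lyapunov1}. Standard large deviation bounds for products of i.i.d.\ matrices in $\mathrm{SL}(2,\R)$ (such as those already invoked in \cite[Theorem 6]{jss} and used in Lemma \ref{lemma:sum1}) give, for every $E$ in a compact subinterval of $\Sigma \setminus \mathcal{C}$ and for every $\alpha < L(E)$,
\[
\Pp \bigl\{ \| T_\omega^E(k,0) \| \le e^{\alpha k} \bigr\} \le C_1 e^{- C_2 k}.
\]
Combined with the relation between transfer matrix growth and the Green's function on a box (via the Combes--Thomas-type arguments or the Poisson formula for one-dimensional Jacobi matrices), this produces, on an initial scale $L_0$ sufficiently large, the estimate
\[
\Pp \bigl\{ \Lambda_{L_0}(x) \text{ is } (m, E)\text{-regular for all } E \in I \bigr\} \ge 1 - L_0^{-p},
\]
with any desired $p > 0$ after choosing $L_0$ large. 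The energy interval version is obtained by combining the pointwise estimate with the weak Wegner bound of Theorem \ref{thm:wegner1} on a grid of width comparable to $e^{-\sigma L_0^\beta}$ inside $I$.

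With these four ingredients in hand, the bootstrap procedure of \cite{gk_boot1} applies verbatim: starting from the polynomial-in-$L_0$ probability estimate, one successively upgrades through the four regimes of the bootstrap, each one strengthening the probability decay, to reach the sub-exponential conclusion $\Pp\{R(m, L_k, I, x, y)\} \ge 1 - e^{-L_k^{\zeta}}$ for any $\zeta \in (0,1)$ and any $\alpha \in (1, \zeta^{-1})$. The decay rate $m$ in the definition of $(m,E)$-regularity can be taken to be any number strictly less than the Lyapunov exponent $L(E)$ on $I$, up to constants absorbed in the MSA induction.

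The main obstacle, which dictates the whole approach, is that the Bernoulli single-site distribution precludes a standard polynomial Wegner estimate: only the weak, sub-exponential version of Theorem \ref{thm:wegner1} is available. This is precisely the situation the Germinet--Klein bootstrap was designed to handle, so the role of the proof is largely to check compatibility: in particular, one must ensure that the rates $\sigma \rho_1$ and $q$ appearing in Corollary \ref{corollary:wegner_v2} remain compatible with the scale inductions $L_{k+1} = L_k^\alpha$, which amounts to fixing $\beta \in (\zeta, 1)$ and then tuning $\alpha$. A secondary technical point is that the polymer structure forces us to work with the random intervals $\Lambda_L$ of Definition \ref{defn:subsets_polymer} rather than intervals in $\Z$ of fixed integer length, but because $L_- L \le \#\Lambda_L \le L_+ L$ deterministically, every step of MSA (independence, boundary estimates, and the geometric resolvent inequality) passes through unchanged at the cost of replacing $L$ by a comparable quantity, absorbed into constants.
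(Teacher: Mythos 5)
Your proposal is correct and follows essentially the same route as the paper: the paper also proves this theorem by verifying the Germinet--Klein hypotheses (IAD with separation $\max\{L_+,L_-\}$, SLI/EDI/NE, the weak Wegner bound of Theorem \ref{thm:wegner1}) and an initial length scale estimate derived from the positivity of the Lyapunov exponent following von Dreifus--Klein, then running the bootstrap Theorems 5.1--5.7 of \cite{gk_boot1}; see Appendix \ref{app:bootstrap1}. Your additional detail on obtaining the ILSE from large deviations for the transfer matrices and a Wegner grid is consistent with, and slightly more explicit than, the paper's citation of \cite{vDK1} and \cite{CKM} for that step.
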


Germinet and Klein \cite{gk_boot1} proved this theorem for the AB RSO. As mentioned by Germinet and De Bi\`evre \cite{dBg2}, the proof extends to the dimer model.  The positivity of the Lyapunov exponent is essential in these proofs. The proof can be extended to the AB RPM without major difficulties, see Appendix \ref{app:bootstrap1}.  

Two important results of Theorem \ref{thm:msa1} are 1)  the subexponential decay of operators of the form $\chi_x f(H_\omega) P_{H_\omega} (I) \chi_y$ for intervals $I \subset \Sigma \backslash \mathcal{C}$, and 2) strong Hilbert-Schmidt dynamical localization in such intervals.

Among the main results of the bootstrap MSA given in Theorem \ref{thm:msa1} are the essential localization bounds on eigenvectors. These are explicit exponential bounds on the decay of eigenvectors. 
To state these, we need the notion of a \emph{center of localization} (COL) of an eigenvector.

\begin{defn}\label{defn:col1}
A center of localization of an eigenvector $\varphi$ of $H_{\Lambda_L}$ is any $m \in \Lambda_L$ so that $| \varphi (m)| \geq |\varphi(p)|$, for all $p \in \Lambda_L$. We will write COL for any such eigenvector maximizer and denote one by $x_{\varphi}(\Lambda_L)$.
\end{defn}

Then 
the bootstrap MSA result Theorem \ref{thm:msa1} implies a SULE estimate on the eigenvectors of 
$H$ 
restricted to an interval 
$\Lambda_L$.

\medskip

\begin{lemma}
\label{lemma:lemma1.1}
\cite[Lemma 1.1]{GKlo1}
Let 
$H_{\Lambda_L}$ 
be the \Schr operator for the AB RPM restricted to 
$\Lambda_L \subset \Z$. 
Let 
$\varphi_j^L$ 
be a normalized eigenvector of 
$H_{\Lambda_L}$ 
associated with eigenvalue 
$E_j(\Lambda_L) \in I$,
and let  
$x_j(\Lambda_L) \in \Lambda_L$ 
be a COL of 
$\varphi_j^L$.
Then 
we have the following localization result for the eigenvectors.   
%
\begin{enumerate}
\item
For any 
$p > 0$ 
and 
$\xi \in (0,1)$, 
and for 
$L > 0$ 
large enough, there exists a set 
$\mathcal{U}_L \subset \Omega$ 
of configurations such that 
$\Pp \{ \mathcal{U}_L \} \geq 1 - L^{-p}$, 
and for any 
$\omega  \in \mathcal{U}_L$, 
one has
\bel{eq:sule1}
| \varphi_j^L (x) | 
\leq 
L^{p+1}e^{- |x - x_j(\Lambda_L) |^\xi}, 
\quad
x \in \Lambda_L,
\ee
\medskip
\item  
For any 
$\nu, \xi \in (0,1)$, 
with 
$\nu < \xi$,  
and for 
$L > 0$ 
large enough, there exists a set 
$\mathcal{V}_L \subset \Omega$ 
of configurations such that 
$\Pp \{ \mathcal{V}_L \} \geq 1 - e^{-L^{\nu}}$, 
and for any 
$\omega  \in \mathcal{V}_L$, 
one has
\bel{eq:sule2} 
| \varphi_j^L (x) | 
\leq 
e^{2L^{\nu}} 
e^{- |x - x_j(\Lambda_L) |^\xi} .
\ee
\end{enumerate}
\end{lemma}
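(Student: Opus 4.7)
The plan is to derive both SULE bounds from bootstrap multiscale analysis, following the standard eigenvector-to-Green's-function reduction adapted from \cite{gk_boot1, GKlo1}. The argument rests on a deterministic step, combined with probabilistic control of ``bad'' boxes supplied by Theorem \ref{thm:msa1}.

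First, I would establish the deterministic core: if $\varphi$ is an $\ell^2$-normalized eigenvector of $H_{\Lambda_L}$ with eigenvalue $E \in I$, and $\Lambda_\ell(y) \subset \Lambda_L$ is an $(m, E)$-regular box with $y$ in its central third, then the resolvent identity applied across $\partial \Lambda_\ell(y)$ gives
\[
|\varphi(y)| \le \|\chi_{y,\ell/3}\, G_{\Lambda_\ell(y)}(E)\, \Gamma_{y,\ell}\| \cdot \|\Gamma_{y,\ell}\varphi\|_2 \le C e^{-m\ell/2},
\]
using Definition \ref{defn:regular0} and $\|\varphi\|_2 = 1$. Combined with the COL lower bound $|\varphi_j^L(x_j(\Lambda_L))| \ge (\#\Lambda_L)^{-1/2} \ge cL^{-1/2}$ from pigeonhole, this forces $\Lambda_\ell(x_j(\Lambda_L))$ to be \emph{not} $(m,E_j)$-regular once $\ell \ge C_m \log L$.

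Second, for $x \in \Lambda_L$ sufficiently far from $x_j(\Lambda_L)$, Theorem \ref{thm:msa1} applied to the pair $(x, x_j(\Lambda_L))$ at scale $\ell$ says that, on the event $R(m,\ell,I,x,x_j(\Lambda_L))$ of probability $\ge 1 - e^{-\ell^\zeta}$, one of the two boxes is $(m,E_j)$-regular; by the previous step it must be $\Lambda_\ell(x)$, and the deterministic bound then yields $|\varphi_j^L(x)| \le C e^{-m\ell/2}$. I would calibrate the scale by matching $m\ell/2 \asymp |x - x_j(\Lambda_L)|^\xi$, converting the estimate into the desired $e^{-|x - x_j(\Lambda_L)|^\xi}$ decay. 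Union-bounding the MSA failure over $x \in \Lambda_L$ costs a factor $\le \# \Lambda_L \le CL$, so the total bad probability is at most $L\exp(-\ell^\zeta)$. For part (i), requiring this to be $\le L^{-p}$ imposes $|x - x_j(\Lambda_L)| \gtrsim (\log L)^{1/(\xi\zeta)}$, and the complementary near-COL regime is absorbed by the polynomial prefactor $L^{p+1}$. For part (ii), requiring $L \exp(-\ell^\zeta) \le e^{-L^\nu}$ imposes $|x - x_j(\Lambda_L)| \gtrsim L^{\nu/(\xi\zeta)}$, and the near regime is absorbed by $e^{2L^\nu}$; the strict inequality $\nu < \xi$ leaves room to choose $\zeta$ close enough to $1$ for this absorption to succeed.

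The main obstacle is the simultaneous calibration of $\ell$: it must be large enough both to defeat the COL-box regularity (a threshold of order $\log L$) and to make the MSA event carry sufficient probability after the union bound, yet small enough that its decay rate $m\ell/2$ is comparable to $|x - x_j(\Lambda_L)|^\xi$ rather than overshooting or undershooting. Splitting the range of $x$ into a near regime absorbed by the prefactor and a far regime controlled by regularity is precisely the mechanism that produces the $L^{p+1}$ and $e^{2L^\nu}$ in the statement, and the interaction between the bootstrap MSA exponent $\zeta$ and the target exponent $\xi$ is the delicate balance to track.
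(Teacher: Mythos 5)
The paper does not actually prove this lemma: it is quoted from \cite[Lemma 1.1]{GKlo1} and the text explicitly defers the proof to \cite[Theorem 6.1]{GKlo1}. Your reconstruction follows exactly the architecture of that standard proof — the deterministic EDI-type bound showing an eigenvector is small at the center of a regular box, the pigeonhole lower bound $|\varphi_j^L(x_j(\Lambda_L))| \ge (\#\Lambda_L)^{-1/2}$ forcing the box around the COL to be singular once $\ell \gtrsim \log L$, and the two-box event of Theorem \ref{thm:msa1} to conclude that boxes around distant points are regular — so the approach is the right one.

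There is, however, a concrete calibration gap. You choose the scale by $m\ell/2 \asymp |x - x_j(\Lambda_L)|^{\xi}$, i.e.\ $\ell \asymp |x-x_j|^{\xi}$, so the per-pair MSA failure probability is $e^{-\ell^{\zeta}} \approx e^{-c|x-x_j|^{\xi\zeta}}$ and your ``far'' regime, where the union bound beats $L^{-p}$, only begins at $|x-x_j| \gtrsim (\log L)^{1/(\xi\zeta)}$. But the ``near'' regime absorbed by the prefactor $L^{p+1}$ is $|x-x_j|^{\xi} \le (p+1)\log L$, i.e.\ $|x-x_j| \lesssim (\log L)^{1/\xi}$. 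Since $\zeta<1$, the annulus $(\log L)^{1/\xi} \lesssim |x-x_j| \lesssim (\log L)^{1/(\xi\zeta)}$ is covered by neither mechanism; the same mismatch appears in part (2), where the annulus $L^{\nu/\xi} \lesssim |x-x_j| \lesssim L^{\nu/(\xi\zeta)}$ is uncovered, and letting $\zeta \to 1$ does not close it because the ratio of the two thresholds still diverges with $L$. The fix is to take the box size proportional to the distance itself, say $\ell = |x-x_j|/3$ (equivalently, work annulus by annulus at the MSA scales $L_k$). Then the regular-box bound over-delivers, $e^{-m\ell/2} = e^{-m|x-x_j|/6} \ll e^{-|x-x_j|^{\xi}}$ since $\xi<1$; the per-pair failure probability is $e^{-c|x-x_j|^{\zeta}}$; and choosing the free bootstrap exponent $\zeta \in (\xi,1)$ makes the far-regime threshold $(\log L)^{1/\zeta}$ (resp.\ $L^{\nu/\zeta}$) fall \emph{below} the absorption threshold $(\log L)^{1/\xi}$ (resp.\ $L^{\nu/\xi}$), which is precisely where the freedom in $\zeta$ and the hypothesis $\nu<\xi$ are spent. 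Two smaller points: since $x_j(\Lambda_L)$ is random, the union bound should run over all admissible pairs of centers and scales (a harmless extra factor of order $\#\Lambda_L$), and you need $|x-x_j|$ large compared with $\ell$ plus $\max\{L_+,L_-\}$ so that the two boxes are disjoint and the independence-at-a-distance hypothesis applies; both are arranged automatically by the choice $\ell = |x-x_j|/3$.
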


\medskip

We remark 
that in the localized regime, the distance between any two COL for an eigenvector 
$\varphi$ 
of 
$H_{\Lambda_L}$ 
is 
$\mathcal{O}( (\log L)^{\frac{1}{\xi}} )$, 
with probability 
$1 - L^{-p}$, 
for any $p > 0$
\cite{nakano1, GKlo1}. 
Lemma \ref{lemma:lemma1.1} 
is proved in 
\cite[Theorem 6.1]{GKlo1}.
%


\subsection{Minami's estimate for AB RPM}\label{subsec:w_m1}
In what follows, 
we will consider to decompose the given interval 
$\Lambda_L$ of lattice sites into disjoint union of smaller intervals, such as 
$\Lambda_L = \bigcup_{p} C_p$.  We also consider the restriction 
$H_{C_p} = H_{\Lambda_L} |_{C_p}$
of our Hamiltonian 
$H_{\Lambda}$
to  $C_p$.
In order that 
$(H_{C_p})$
are independent, we henceforth require that the all division points of our decomposition  $\{C_p \}$
must at successive be polymer nodes. 
This may cause the slight change of the size of each interval, but it does not affect our estimates. 
We divide the interval $\Lambda_L$
of size
$L$
into smaller intervals 
$\{ C_p \}_{p=1}^{N_L}$
of size 
$\ell_1 := L^{\beta}$
($0 < \beta < 1$), 
so that $N_L
=
\mathcal{O}(L^{1 - \beta})$. We denote by $H_{C_p}$ the restriction of $H_{\Lambda_L}$ to $C_p$.
Let 
$0 < \gamma \le 1$, 
and let 
$J \subset {\bf R}$
with 
$|J| = c_2 >0$
be a bounded interval (independent of $L$), and set 
$I := E_0 + \dfrac {J}{L^{\gamma}}$, so that 
$| I | = c_2 / L^{\gamma}$.
%
%

%
\begin{theorem}\label{thm:minami1} ({Minami estimate for AB RPM}) \\
\smallskip
For $H_{C_p}$ and energy interval $I \subset \Sigma \backslash \mathcal{C}$ defined above, we set 
$$\eta_p (I)
:=
\#
\{
\text{eigenvalues of $H_{C_p}$ in } I\}. 
$$
\mbox{}\\
(1)
Suppose that 
$\beta \le \rho_1\gamma$.
Then for any 
$\epsilon > 0$
we can find a positive constant
$C_{\epsilon}$
s.t. 
\begin{equation}
\sum_{k \ge 2}
\Pp
\left(
\eta_p (I) \ge k
\right)
\le
C_{\epsilon}
\left(
\frac {
L^{\beta}
}
{
L^{\rho_1\gamma/(1+\epsilon)}
}
\right)^2.
\label{eq:minami2}
\end{equation}
(2)
Suppose that 
$\beta \le \rho_1\gamma$
and
$\rho_1\gamma > \dfrac {1+\beta}{2}$.
Then
\begin{equation}
\sum_p
\sum_{k \ge 2}
\Pp
\left(
\eta_p (I) \ge k
\right)
= o(1).
\label{eq:minami1}
\end{equation}
\end{theorem}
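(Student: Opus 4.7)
\emph{The plan.} The strategy is to reduce the tail sum $\sum_{k \ge 2}\Pp(\eta_p(I) \ge k)$ to a second factorial moment and then to invoke the Bernoulli version of Minami's estimate obtained via the Bourgain--Klopp--Shirley machinery of Appendix \ref{app:bourgain1}.

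\emph{Step 1: reduction to a second factorial moment.} The elementary layer--cake identity yields
\[
\sum_{k \ge 2}\Pp\bigl(\eta_p(I) \ge k\bigr)
=\E\bigl[(\eta_p(I)-1)\mathbf{1}_{\eta_p(I) \ge 2}\bigr]
\le \E\bigl[\eta_p(I)(\eta_p(I)-1)\bigr],
\]
since $\eta - 1 \le \eta(\eta-1)$ whenever $\eta \ge 2$ while both sides vanish for $\eta \in \{0,1\}$. It therefore suffices to bound the right--hand side by $C_{\epsilon}(L^{\beta}/L^{\rho_1\gamma/(1+\epsilon)})^2$, which gives part (1).

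\emph{Step 2: the Bernoulli Minami estimate for the AB RPM.} I would apply the Bourgain--type probability bound in Appendix \ref{app:bourgain1} to the restricted operator $H_{C_p}$ with $|C_p|=\ell_1 = L^{\beta}$ and the window $I = E_0 + L^{-\gamma}J$. The first ingredient is Corollary \ref{corollary:wegner_v2}: combined with the \Holder continuity of $N$ with exponent $\rho_1$ (Theorem \ref{thm:dos1}), it controls the probability that $H_{C_p}$ has a single eigenvalue in $I$ by roughly $C_\epsilon L^{-\rho_1\gamma/(1+\epsilon)}$, the $(1+\epsilon)$--loss being built into the exponent $q$ of \eqref{eq:const_q1}. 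The second ingredient, following Klopp \cite{klopp1} and Shirley \cite{shirley1}, is the observation that the event $\{\eta_p(I)\ge 2\}$ forces the Bernoulli configuration restricted to $C_p$ into a set whose probability is bounded by the \emph{square} of the single--eigenvalue probability, up to a combinatorial prefactor $O(\ell_1^2) = O(L^{2\beta})$ for the choice of eigenvalue pair. Expanding $\E[\eta_p(I)(\eta_p(I)-1)] = \sum_{i \ne j}\Pp(E_i(C_p), E_j(C_p)\in I)$ and applying this pair bound term by term produces the claimed estimate \eqref{eq:minami2}. The hypothesis $\beta\le\rho_1\gamma$ ensures that the window $L^{-\gamma}$ is still larger than the Wegner threshold $e^{-\sigma\ell_1^{\beta'}}$ for some $\beta'\in(0,1)$, so that Corollary \ref{corollary:wegner_v2} is indeed applicable at block scale $\ell_1 = L^{\beta}$.

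\emph{Step 3: sum over blocks, and the main obstacle.} For part (2) I simply sum the bound of part (1) over the $N_L = O(L^{1-\beta})$ blocks:
\[
\sum_p\sum_{k \ge 2}\Pp\bigl(\eta_p(I) \ge k\bigr)
\le C_{\epsilon} L^{1-\beta}\cdot L^{2\beta-2\rho_1\gamma/(1+\epsilon)}
= C_{\epsilon} L^{1+\beta-2\rho_1\gamma/(1+\epsilon)}.
\]
The hypothesis $\rho_1\gamma > (1+\beta)/2$ permits a choice of $\epsilon>0$ rendering the exponent strictly negative, whence the sum is $o(1)$, proving \eqref{eq:minami1}. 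The main obstacle is the polymer--level adaptation required in Step 2: the Bernoulli variable $\omega_\ell$ controls an entire polymer $(\hat t_{\omega_\ell},\hat v_{\omega_\ell})$ of length $L_{\omega_\ell}\in\{L_+, L_-\}$ rather than a single potential value, so resampling one $\omega_\ell$ produces a rank--$O(L_\pm)$ perturbation of $H_{C_p}$ instead of a rank--one one. Verifying that this resampling still moves eigenvalues out of $I$ with probability quantitatively matching the Wegner bound -- and, in particular, preserving the \Holder exponent $\rho_1/(1+\epsilon)$ -- is the technical heart of the argument, which is why Appendix \ref{app:bourgain1} is needed. All other steps are bookkeeping.
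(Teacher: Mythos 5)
Your Step 1 (the layer--cake reduction to the second factorial moment) and Step 3 (summing over the $N_L = O(L^{1-\beta})$ blocks) are fine, but Step 2 -- which is the entire mathematical content of the theorem -- contains a genuine gap. You propose to apply the Bourgain--type bound of Appendix \ref{app:bourgain1} directly to $H_{C_p}$ at the block scale $|C_p| = L^{\beta}$ with the window $|I| \sim L^{-\gamma}$. But Lemma \ref{lemma:bourgain1} only applies when the window width is \emph{exponentially} small in a power of the box size ($\delta = c_2 e^{-\ell^{\beta}/c_1}$); with a box of size $L^{\beta}$ and a window that is merely polynomially small in $L$, its hypotheses fail and its conclusion would in any case be useless. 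The missing layer is localization: one must first invoke the SULE bound of Lemma \ref{lemma:lemma1.1} to restrict each eigenfunction of $H_{C_p}$ to a box of logarithmic size $\ell_2 = (c_1\log L)^{1/\xi}$ around its center of localization -- this is the scale at which $e^{-\ell_2^{\xi}\gamma/c_1} = L^{-\gamma}$ matches the window, and it is only there that Bourgain's lemma can be applied. Moreover, even at the right scale, Bourgain's lemma only covers the case where two eigenfunctions are centered in the \emph{same} log-scale box. The paper must treat separately the complementary case of well-separated centers, where instead the eigenfunctions restrict to disjoint (hence independent) boxes and one multiplies Wegner bounds (Corollary \ref{corollary:wegner_v2}) and controls $\sum_{k\ge 2}\binom{n}{k}(\cdots)^k \le A^2 e^{A}$; this is precisely where the hypothesis $\beta \le \rho_1\gamma$ is used, to keep $A = O(1)$. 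Your ``pair bound applied term by term'' to $\sum_{i\ne j}\Pp(E_i,E_j\in I)$ conflates these two regimes and asserts, rather than proves, the squared probability.

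A secondary point: the obstacle you single out (resampling a single $\omega_\ell$ being a rank-$O(L_\pm)$ rather than rank-one perturbation) is not where the polymer structure actually enters. Lemma \ref{lemma:bourgain1} contains no resampling argument; it constructs from the two approximate eigenvectors an auxiliary vector vanishing at an interior index $j_1$, splits $C_p$ there into two sub-intervals each carrying an approximate eigenvalue near $E_0$, and multiplies the two Wegner probabilities. The polymer structure bites only in requiring $j_1$ to be a polymer node so that the two halves are genuinely independent, which is the case analysis carried out in the appendix.
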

\begin{proof}
We further divide each
$C_p$
into smaller intervals 
$\{ D^{(p)}_j \}_j$
of size 
$\ell_2 := (c_1 \log L)^{1 / \xi}$, 
where 
$c_1 > 0$, 
and 
$\xi \in (0,1)$
is the constant in the SULE estimate in Lemma \ref{lemma:lemma1.1}. 
We note that 
$L
=
e^{\frac {\ell_2^{\xi}}{c_1}}$,
and 
$|I|
=
\dfrac {c_2}{L^{\gamma}}
=
c_2
\cdot
e^{
-\frac {\ell_2^{\xi}}{c_1}
\gamma
}$.
Here, 
we work under the configuration such that we have a SULE bound, and estimate the error caused by this replacement.
To be precise, 
we take any interval $C_p$, 
and let 
$\{ E_i(C_p) \}_{i=1}^{N(I, C_p)}$,
$\{ \varphi^{(C_p)}_i \}_{i=1}^{N(I, C_p)}$, 
and 
$\{ x_i(C_p) \} _{i=1}^{N(I, C_p)}$
be the eigenvalues, corresponding normalized eigenvectors, and their COLs, respectively, of the Hamiltonian $H_{C_p}$ in $I$.
The index $N(I,C_P)$ is defined by $N(I, C_p)
:=
\# \sigma (H_{C_p}) \cap I$.
Let 
\beq
L (I, p, q)
&:=&
\left\{
\omega \in \Omega
\, \middle| \,
\forall i = 1, 2, \cdots, N(I, C_p), 
\;
\exists x_{i} \in C_p
\;
\mbox{ s.t. }
| \varphi_i^{(C_p)} (x) | 
\le
(L^{\beta})^q
e^{- |x - x_i|^{\xi}}
\right\}.
\eeq
Then by Lemma \ref{lemma:lemma1.1}, 
we have, for any 
$p'>0$, 
we can find
$q = q(p', d) > 0$, 
s.t. for 
$L$
large enough, 
\beq
\Pp
\left(
L(I, p, q)
\right)
\ge
1 - L^{-\beta p'}.
\eeq
Then
the error caused by replacing 
$\Omega$
by 
$\Omega \cap L(I, p, q)$
turns out to be negligible.
In fact, 
we compute
\beq
\sum_p
\sum_{k \ge 2}
\Pp
\left(
\eta_p (I) \ge k
\right)
&=&
\sum_p
\sum_{k \ge 2}
\Bigl\{
\Pp
\left(
\eta_p (I) \ge k
\, ; \,
L(I, p,q)
\right)
+
\Pp
\left(
\eta_p (I) \ge k
\, ; \,
L(I, p,q)^c
\right)
\Bigr\}
\\
&=&
\sum_p
\sum_{k \ge 2}
\Pp
\left(
\eta_p (I) \ge k
\, ; \,
L(I, p,q)
\right)
+
\mathcal{O}( L^{- \beta p' + 1 } ). 
\eeq
Here we set
$\Pp (A \,; \, B)
:=
\Pp(A \cap B)$.
For given 
$\beta$, 
by taking  
$p'$
sufficiently large,  
this error becomes negligible.
Thus 
it suffices to estimate the first term.
Here 
we consider the following two events. 
For each
$p = 1, 2, \cdots, N_L$, 
let 
\beq
\Omega_p^b
&:=&
\left\{
\omega \in \Omega
\, \middle| \,
\exists j
\mbox{ s.t. }
\#
\{ x_i(C_p) \} \cap D_j^{(p)} \ge 2
\right\}, 
\quad
p = 1, 2, \cdots, N_L,
\\
\Omega_p^g
&:=&
\Omega \setminus \Omega_p^b.
\eeq
We decompose
\beq
&&
\Pp
\left(
\eta_p (I) \ge k
\, ; \,
L(I, p, q)
\right)
\\
&=&
\Pp
\left(
\eta_p (I) \ge k
\; ; \,
\Omega^b_p \cap L(I, p, q)
\right)
+
\Pp
\left(
\eta_p (I) \ge k
\; ; \,
\Omega^g_p \cap L(I, p, q)
\right)
\\
&=:&
I_{p,k} + II_{p,k}.
\eeq
We shall show separately that 
$ \sum_{k \ge 2}
I_{p, k}$, 
$\sum_{k \ge 2}
II_{p, k}$
satify the estimate part (1) in Theorem \ref{thm:minami1}, 
and then show further that 
$\sum_p 
\sum_{k \ge 2}
I_{p, k} = o (1)$, 
and
$\sum_p 
\sum_{k \ge 2}
II_{p, k} = o (1)$.
For each 
$D_j^{(p)}$, 
let 
$\widetilde{D}_j^{(p)}$
be the box of size
$2 \ell_2 = 2 (c_1 \log L)^{1/\xi}$
which has the same center as 
$D_j^{(p)}$.\\
(1)
Estimate on 
$I_{p, k}$ : \\
For
$\omega \in \Omega^b_p \cap L(I, p, q)$, 
we can find 
$D_j^{(p)}$
and 
$\varphi_{\sharp}^{(C_p)}$, 
with 
$\sharp = i, i' (= 1, 2, \cdots, N(I, C_p))$
such that the corresponding COL'S satisfy
$x_i(C_p), x_{i'}(C_p) \in D_j^{(p)}$.
Then
\begin{equation}
| \varphi_{\sharp}(x) |
\le
( L^{\beta} )^q 
e^{
- 
\Bigl(
\frac {(c_1 \log L)^{1/\xi}}{4} 
\Bigr)^{\xi}
}
=
L^{\beta q - \frac {c_1}{4^{\xi}}}
\ll
\frac {c_2}{L^{\gamma}}, 
\quad
x \in \partial \widetilde{D}_j^{(p)}, 
\quad
\sharp = i, i'
\label{starminami}
\end{equation}
by taking 
$c_1$
sufficiently large. 
Then in the box
$\widetilde{D}^{(p)}_j$, 
$\varphi_{\sharp}$, 
$\sharp = i, i'$
become the approximate eigenfunctions of 
$H_{p, j}:= H |_{\widetilde{D}^{(p)}_j}$
so that 
$H_{p, j}$
has two eigenvalues in 
$I$, 
at the cost of increasing slightly the constant 
$c_2$.
By 
Lemma \ref{lemma:bourgain1}, 
we have
\beq
I_{p,k}
\le
\Pp
\left(
\Omega^b_p
\right)
& \le &
C
\left(
e^{ 
- \frac 
{2 \rho_1}
{ 2^{\xi} \cdot c_1 (1 + \epsilon)} 
(2\ell_2)^{\xi}
\gamma
}
+
e^{- q t_0 (2\ell_2)^{\xi}}
\right)
\cdot
\sharp 
\{ 
\mbox{ boxes $\widetilde{D}^{(p)}_j$ in $C_p$ }
\}
\\
&=&
C
\left(
e^{ - \frac {2 \rho_1}{c_1 (1 + \epsilon)} \ell_2^{\xi}\gamma}
+
e^{- q t_0 (2\ell_2)^{\xi}}
\right)
\cdot
\frac 
{L^{\beta}}{ (c_1 \log L)^{1 /\xi} }.
\eeq
Here we take 
$\beta = \xi \in (0,1)$
in Lemma \ref{lemma:lemma1.1}. 
Note that
the constant 
$c_1$
in Lemma \ref{lemma:bourgain1} should be replaced by 
$\dfrac {
2^{\xi} c_1
}
{
\gamma 
}$
and moreover, 
since we have 
$(\ref{starminami})$
and since 
$\dfrac {c_2}{ L^{\gamma }}
\ll
\dfrac {1}{ 10 \ell_2 }$, 
the corresponding eigenvectos of 
$H_{p, j}$
satisfy the condition 
(3)
in Lemma \ref{lemma:bourgain1}, under the any choice of 
$0 < t_0 < \frac 12$.
Now 
we take the sum 
$\sum_{k=2}^{|C_p|}$
with respect to 
$k = 2, \cdots, |C_p|$, 
where 
$|C_p| = L^{\beta} (1+o(1))$.
\beq
\sum_{k=2}^{|C_p|} 
I_{p, k}
& \le &
C \sum_{k=2}^{|C_p|}
\frac 
{L^{\beta}}{ (c_1 \log L)^{1 /\xi} }
\left(
e^{ - \frac {2 \rho_1\gamma}{c_1 (1 + \epsilon)} \ell_2^{\xi}}
+
e^{- q t_0 (2\ell_2)^{\xi}}
\right)
\cdot
\\
&\stackrel{\ell_2 = (c_1 \log L)^{1 /\xi}}{=}&
C L^{\beta}
\cdot
\frac 
{L^{\beta}}{ (c_1 \log L)^{1 /\xi} }
\left(
e^{ - \frac {2 \rho_1\gamma}{(1 + \epsilon)} \log L}
+
e^{- q t_0 2^{\xi} c_1 \log L}
\right)
\\
& \le &
C \frac {L^{2 \beta}}{(c_1 \log L)^{1 /\xi}}
\left\{
\left(
\frac 1L
\right)^{\frac {2 \rho_1\gamma}{1 + \epsilon}}
+
\left(
\frac 1L
\right)^{q t_0 2^{\xi} c_1}
\right\}
\\
& \le &
C \frac {L^{2 \beta}}{(c_1 \log L)^{1 /\xi}}
\left(
\frac 1L
\right)^{\frac {2 \rho_1\gamma}{1 + \epsilon}}.
\eeq
In the last line,
we take
$c_1$
sufficiently large so that the second term in RHS is negligible.
We 
thus proved part (1) of Theorem \ref{thm:minami1} for 
$I_{p, k}$.
Next 
we take the sum  
$\sum_p$
with respect to 
$p = 1, 2, \cdots, N_L$ : 
\beq
\sum_p \sum_k
I_{p, k}
& \le &
C  L^{1 - \beta}
\frac {L^{2 \beta}}{(c_1 \log L)^{1 /\xi}}
\left(
\frac 1L
\right)^{\frac {2 \rho\gamma}{1 + \epsilon}}
\le
C \frac {L^{1 + \beta - \frac {2 \rho\gamma}{1 + \epsilon}}
}
{(c_1 \log L)^{1 /\xi}}.
\eeq
Since
$\rho_1\gamma > \dfrac {1+\beta}{2}$, 
we can take 
$\epsilon > 0$
small enough such that 
$\dfrac {2 \rho_1\gamma}{1 + \epsilon} 
> 1+\beta$.
Therefore 
\begin{equation}
\sum_p 
\sum_{k \ge 2}
I_{p, k} = o (1).
\label{firstminami}
\end{equation}
%

%
(2)
Estimate on 
$II_{p, k}$ : 
First of all, 
note that we may suppose each COL's
$\{ x_i(C_p) \}$
are away from each other in the distance of 
$3\ell_2$.
In fact, let 
\beq
\Omega_p^g
&:=&
\left\{
\omega \in \Omega
\, \middle| \,
\mbox{  
at most one COL for all }
D^{(p)}_j
\right\}
=:
\Omega_{p, 1}^g
\cup
\Omega_{p, 2}^g
\\
\Omega_{p, 1}^g
&:=&
\left\{
\omega \in \Omega
\, \middle| \,
\mbox{  
at most one COL for all }
D^{(p)}_j
\text{ and 
$|x_i(C_p) - x_{i'}(C_p)| > 3 \ell_2$
for any 
$i,i'$ }
\right\}
\\
\Omega_{p, 2}^g
&:=&
\Omega_{p}^g \setminus \Omega_{p, 1}^g
\eeq
Then
\beq
II_{p, k}
&=&
\Pp
\left(
\eta_p (I) \ge k
\; ; \,
\Omega^g_p \cap L(I, p, q)
\right)
\\
&=&
\Pp
\left(
\eta_p (I) \ge k
\; ; \,
\Omega^g_{p,1} \cap L(I, p, q)
\right)
+
\Pp
\left(
\eta_p (I) \ge k
\; ; \,
\Omega^g_{p,2} \cap L(I, p, q)
\right)
\\
&=:&
II_{p, k, 1}
+
II_{p, k, 2}.
\eeq
Then 
the argument in (1) shows that the second term 
$II_{p, k, 2}$
in RHS satisfies the same estimate as 
$I_{p, k}$ 
above.
In fact, 
we divide 
$C_p$
into boxes 
$F_j^{(p)}$'s 
of size 
$a := 3 \ell_2$
and also into boxes 
$G_j^{(p)}$'s 
of the same size but centered on the division points of 
$F_j^{(p)}$'s. 
Then on the event
$\Omega^g_{p,2} \cap L(I, p, q)$, 
there exists 
$j$
such that 
we can find two COL's
$x_i (C_p)$, $x_{i'} (C_p)$, 
on the same box 
$F_j^{(p)}$
(or 
$G_j^{(p)}$).
Let 
$\widetilde{F}_j^{(p)}$
(resp. $\widetilde{G}_j^{(p)}$)
be the box with the same center as 
$F_j^{(p)}$
(resp. $G_j^{(p)}$)
of size 
$6\ell_2$, 
Then 
$\widetilde{H}_j := H |_{ \widetilde{F}_j }$
or
$\widetilde{H'}_j := H |_{ \widetilde{G}_j }$
has two eigenvalues in 
$I$
so that the argument in (1) gives
\beq
\sum_{k \ge 2}
II_{p, k, 2}
& \le &
C \frac {L^{2 \beta}}{(c_1 \log L)^{1 /\xi}}
\left(
\frac 1L
\right)^{\frac {2 \rho_1\gamma}{1 + \epsilon}}.
\eeq
We 
thus showed the part (1) of Theorem \ref{thm:minami1} for 
$II_{p, k, 2}$.
Similarly as in (1), 
we take 
$\epsilon > 0$
small enough such that 
$\dfrac {2 \rho_1\gamma}{1 + \epsilon} 
> 1+\beta$
which yields
\begin{equation}
\sum_p 
\sum_{k \ge 2}
II_{p, k, 2} = o (1).
\label{secondminami}
\end{equation}
Thus 
it suffices to estimate 
$II_{p, k, 1}$.
For 
$\omega \in \Omega_{p,1}^g \cap L(I, p, q)$, 
We can find 
$k$ boxes 
$\widetilde{D}_{j_1}^{(p)}, \widetilde{D}_{j_2}^{(p)}, \cdots, \widetilde{D}_{j_k}^{(p)}$
on which the corresponding Hamiltonians 
$\widetilde{H}_{p, j_1}, \cdots, \widetilde{H}_{p, j_k}$
have eigenvalues in 
$I$.
Note that, 
since COL's satisfy
$|x_i(C_p) - x_{i'}(C_p)| > 3 \ell_2$, 
for all 
$i, i'$,  
these boxes are disjoint. 
On the other hand, since
$\ell_2 = (c_1 \log L)^{1 /\xi}$
implies 
$\dfrac {c_2}{L^{\gamma}}
=
c_2  
\cdot 
e^{ 
-\frac {\ell_2^{\xi}}{c_1}\gamma
}
=
c_2  \cdot 
e^{ 
-\frac {(2\ell_2)^{\xi}}{2^{\xi} \cdot c_1}\gamma
}$, 
Wegner's estimate
(Corollary \ref{corollary:wegner_v2})
yields
\beq
II_{p, k, 1}
\le
\Pp
\left(
\Omega_p^g
\right)
& \le &
\left(
\begin{array}{c}
\frac {L^{\beta}}{(c_1 \log L)^{1/\xi}} \\ 
k 
\end{array}
\right)
\Pp
\left(
d (E, \sigma (H_{\ell_2})) 
\le
e^{ 
-\frac {(2\ell_2)^{\xi}}{2^{\xi} \cdot c_1}
\gamma
}
\right)^k
\\
& \stackrel{}{\le}&
\left(
\begin{array}{c}
\frac {L^{\beta}}{(c_1 \log L)^{1/\xi}} \\ 
k 
\end{array}
\right)
\left\{
C_W 
\left(
e^{ 
- \frac {\rho_1}{c_1}  \ell_2^{\xi} 
\gamma
}
+
e^{- q t_0 (2\ell_2)^{\xi}}
\right)
\right\}^k
\\
& \le &
\frac {
\left(
\frac {L^{\beta}}{(c_1 \log L)^{1/\xi}}
\right)^k
}
{
k!
}
\left\{
\left(
C_W 
L^{- \rho_1 \gamma}
+
e^{- q t_0 2^{\xi} \cdot c_1 \log L}
\right)
\right\}^k.
\eeq
In the last line, 
we used
\beq
\left(
\begin{array}{c}
n \\ k
\end{array}
\right)
&=&
\frac {n (n-1) \cdots (n-k+1)}{k!}
\le
\frac {n^k}{k!}.
\eeq
Here 
we take 
$c_1>0$
large enough 
so that the second term is negligible, at the cost of increasing slightly the constant 
$C_W$. 
Now 
we take the sum with respect to 
$k = 2, \cdots, |C_p|$.
\beq
\sum_{k=2}^{L^{\beta}} II_{p, k, 1}
& \le &
\sum_{k \ge 2}
\frac {A^k}{k!}, 
\qquad
\text{where }
\quad
A :=
\frac {L^{\beta}}{(c_1 \log L)^{1/\xi}}
\cdot
C_W 
e^{ 
- \frac {\rho}{c_1}  \ell_2^{\xi}\gamma 
}
=
\frac {
C_W L^{ \beta - \rho_1 \gamma}
}
{
(c_1 \log L)^{1/\xi}
}
\\
&=&
A^2 e^{A}
=
\left(
\frac {
C_W L^{ \beta - \rho_1 \gamma}
}
{
(c_1 \log L)^{1/\xi}
}
\right)^2
e^{
\frac {
C_W L^{ \beta - \rho_1 \gamma}
}
{
(c_1 \log L)^{1/\xi}
}
}.
\eeq
Since
$\beta \le \rho_1 \gamma$, 
the last factor in RHS is 
$\mathcal{O}(1)$.
We 
thus showed the part (1) of Theorem \ref{thm:minami1} for 
$II_{p, k, 1}$.
Finally, 
we take the sum with respect to 
$p = 1, 2, \cdots, N_L$ : 
\beq
\sum_p 
\sum_{k \ge 2}
II_{p, k, 1}
& \le &
L^{1 - \beta}
\left(
\frac {
C_W L^{ \beta - \rho_1 \gamma}
}
{
(c_1 \log L)^{1/\xi}
}
\right)^2
e^{
\frac {
C_W L^{ \beta - \rho_1 \gamma}
}
{
(c_1 \log L)^{1/\xi}
}
}
\\
&\le&
C \frac { L^{1 - \beta + 2 \beta - 2 \rho_1\gamma}
}
{
(c_1 \log L)^{1/\xi}
}
\cdot
e^{
\frac {
C_W L^{ \beta - \rho_1 \gamma}
}
{
(c_1 \log L)^{1/\xi}
}
}.
\eeq
Therefore 
by the same argument as in (1), we have 
\begin{equation}
\sum_p 
\sum_{k \ge 2}
II_{p, k, 1} = o (1).
\label{thirdminami}
\end{equation}
By 
(\ref{firstminami}), 
(\ref{secondminami}), 
and
(\ref{thirdminami}), 
we arrive at the conclusion.
\end{proof}
\medskip

\subsection{Decomposition into subsystems}
\text{}\\
In this subsection, 
we consider to decompose 
$\Lambda_L$
into small intervals
$\Lambda_{\ell}$
of size
$\ell$, 
and relate the eigenvalues of $H_{\Lambda_L}$ to those of the family $\{ H_{\Lambda_\ell (\gamma_j)}$. We show that, with good probability, number of COL's of eigenvectors of 
$H_{\Lambda_L}$ with eigenvalues in a given interval $I_\Lambda$ 
is at most one for each 
$\Lambda_{\ell}(\gamma_j)$ and, moreover, 
a COL of 
$H_{\Lambda_L}$
is in 
$\Lambda_{\ell}(\gamma_j)$
if and only if
$H_{\Lambda_{\ell}(\gamma_j)}$
has an eigenvalue in a slightly enlarged interval $\widetilde{I}_{\Lambda_L}$. 
We basically follow the strategy and notation in \cite[Theorem 1.1]{GKlo1}. 
As above, we work with length scales $\ell := L^{\beta}$ and $\ell'  := L^{\beta'}$,
for $0 < \beta' < \beta < 1$, and intervals $\Lambda_L$ and $\Lambda_\ell (\gamma)$, an interval centered at $\gamma \in \Z$.  
%
%
%

%
\begin{theorem}
\label{thm:decomposition} 
%
Let $H_{\Lambda_L}$ be the AB RMP on an interval of length $L$ and satisfying Assumption \ref{assump:dos1}, and let
%
\bel{eq:unfolded_int1}
I_{\Lambda_L}
:=
N^{-1}
\left(
N(E_0) + 
\frac {[-K, K]}{L}
\right), 
\quad
K > 0.
\ee
Consider the following decomposition of $\Lambda_L$: 
$\Lambda_L = \bigcup_j \Lambda_{\ell} (\gamma_j) \cup \Upsilon$
where 
$\{ 
\Lambda_{\ell}(\gamma_j)
\}_j$
are disjoint and 
$\Upsilon$
is the set 
$\Lambda_L \setminus \bigcup_j \Lambda_{\ell}(\gamma_j)$
enlarged by a length 
$\ell'$, 
such that
\beq
&(i)& \quad
d
\left(
\Lambda_{\ell}(\gamma_j), \Lambda_{\ell} (\gamma_k)
\right)
\ge
\ell', 
\quad
j \ne k
\\
&(ii)& \quad
d
\left(
\Lambda_{\ell}(\gamma_j), \partial\Lambda_L
\right)
\ge
\ell'
\\
&(iii)& \quad
| \Upsilon |
\le
C  L
\left( \frac { \ell' }{\ell} \right).
\eeq
Then,  for $L$
large enough, we can find constants 
$\alpha > 0$, 
$0 < \beta' < \beta < 1$,
and a set of configurations 
$\mathcal{Z}_{\Lambda_L}  \subset  \Omega$,
such that 
%
\begin{enumerate}
\item $\Pp( \mathcal{ Z}_{\Lambda_L} )
\ge 
1 - L^{-\alpha}$, 
\medskip 
%
%
\item 
For 
$\omega \in \mathcal{ Z}_{\Lambda_L}$, 
the COL's associated to the eigenvalues
of 
$H_{\omega} (\Lambda_L)$
in 
$I_{\Lambda_L}$
are contained by 
$\bigcup_j \Lambda_{\ell} (\gamma_j)$.
%
Each interval $\Lambda_{\ell}(\gamma_j)$
satisfies 
\\
(i)
$H_{\omega}(\Lambda_{\ell}(\gamma_j))$
has at most one eigenvalue 
$E_j (\Lambda_{\ell} (\gamma_j))$
in 
$I_{\Lambda_L}$, 
\\
(ii)
$\Lambda_{\ell} (\gamma_j)$
contains at most one COL 
$x_{k_j}(\Lambda_L)$
corresponding to an eigenvalue 
$E_{k_j}(\Lambda_L)$
of 
$H_{\omega}(\Lambda_L)$.
\\
(iii)
$\Lambda_{\ell} (\gamma_j)$
contains a COL 
$x_{k_j}(\Lambda_L)$
of 
$H_{\omega} (\Lambda_L)$
if and only if 
$\sigma (H_{\omega} (\Lambda_{\ell}(\gamma_j))) \cap I_{\Lambda_L}
\ne
\emptyset$, 
in which case, denoting 
$\sigma (H_{\omega} (\Lambda_{\ell}(\gamma_j))) \cap I_{\Lambda_L}
=
\{
E_j (\Lambda_{\ell}(\gamma_j))
\}$, 
we have 
\beq
&&
| 
E_{k_j} (\Lambda_L)
-
E_j (\Lambda_{\ell}(\gamma_j))
|
\le
e^{- (\ell')^{\xi}}
\\
&&
d 
\left(
x_{k_j}(\Lambda_L), 
\Lambda_L \setminus \Lambda_{\ell}(\gamma_j)
\right)
\ge
\ell'.
\eeq
\end{enumerate}
\end{theorem}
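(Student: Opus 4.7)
The proof will follow the strategy of Germinet--Klopp \cite{GKlo1}, adapted to the Bernoulli setting of the AB RPM. The plan is to choose length scales $\ell = L^\beta$ and $\ell' = L^{\beta'}$ with $0 < \beta' < \beta < 1$, where $\beta$ is taken in $(0, 2\rho_1\rho_2 - 1)$, which is a nonempty range because Assumption \ref{assump:dos1}(2) provides the headroom to make the subsequent counting estimates summable. By the H\"older continuity of $N^{-1}$ (Assumption \ref{assump:dos1}(1)) the unfolded interval satisfies $|I_{\Lambda_L}| \le C(K/L)^{\rho_2}$, so Theorem \ref{thm:minami1} is applied with effective exponent $\gamma = \rho_2$, and the choice of $\beta$ guarantees both $\beta \le \rho_1\gamma$ and $\rho_1\gamma > (1+\beta)/2$. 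The volume of the buffer satisfies $|\Upsilon| = O(L^{1+\beta'-\beta})$, which is negligible compared to $L$.

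The plan is to define $\mathcal{Z}_{\Lambda_L}$ as the intersection of four events. First, the SULE event that every eigenvector of $H_{\Lambda_L}$ with eigenvalue in $I_{\Lambda_L}$ obeys \eqref{eq:sule1} about some COL, which by Lemma \ref{lemma:lemma1.1} holds with probability $\ge 1 - L^{-p}$ for arbitrarily large $p$. Second, the Minami event that no subsystem $H_{\Lambda_\ell(\gamma_j)}$ has two eigenvalues in the enlarged interval $\widetilde I_{\Lambda_L} := I_{\Lambda_L} + e^{-(\ell')^\xi}[-1,1]$, which has probability $1 - o(1)$ by Theorem \ref{thm:minami1}(2). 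Third, a Wegner avoidance event that no eigenvalue of any $H_{\Lambda_\ell(\gamma_j)}$ falls in the thin boundary annulus $\widetilde I_{\Lambda_L}\setminus I_{\Lambda_L}$, controlled by Corollary \ref{corollary:wegner_v2}. Fourth, a COL separation event stating that any two COLs of $H_{\Lambda_L}$ with eigenvalues in $I_{\Lambda_L}$ lie at distance at least $\ell$ apart, obtained from a Minami-type argument on overlapping boxes exactly as in the proof of Theorem \ref{thm:minami1}.

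On $\mathcal{Z}_{\Lambda_L}$ the bijection is constructed as follows. First adjust the partition $\{\Lambda_\ell(\gamma_j)\}$, respecting the requirement from subsection \ref{subsec:w_m1} that division points coincide with polymer nodes, so that each COL $x_k(\Lambda_L)$ of an eigenvector of $H_{\Lambda_L}$ with eigenvalue in $I_{\Lambda_L}$ lies in the interior of some $\Lambda_\ell(\gamma_j)$ with $d(x_k,\partial\Lambda_\ell(\gamma_j)) \ge \ell'$; this is possible because $|\Upsilon|/L \to 0$ and the COLs are $\ell$-separated. Multiplying $\varphi_k$ by $\chi_{\gamma_j,\ell}$ and using the SULE bound to estimate $|\varphi_k(x)|$ near $\partial\Lambda_\ell(\gamma_j)$, one obtains an approximate eigenvector of $H_{\Lambda_\ell(\gamma_j)}$ with residual $O(e^{-(\ell')^\xi})$, producing an eigenvalue $E_j(\Lambda_\ell(\gamma_j))$ within that error of $E_k(\Lambda_L)$; the Wegner event then places this eigenvalue inside $I_{\Lambda_L}$, while the Minami event yields uniqueness. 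The converse direction uses a local eigenvector $\psi_j$ of $H_{\Lambda_\ell(\gamma_j)}$ with eigenvalue in $I_{\Lambda_L}$: an inner SULE bound concentrates $\psi_j$ away from $\partial\Lambda_\ell(\gamma_j)$, so $\chi_{\gamma_j,\ell}\psi_j$ extended by zero is an approximate eigenvector of $H_{\Lambda_L}$, and a standard resolvent perturbation yields an eigenvalue of $H_{\Lambda_L}$ within $e^{-(\ell')^\xi}$ of $E_j(\Lambda_\ell(\gamma_j))$ and a COL trapped inside $\Lambda_\ell(\gamma_j)$ by the SULE decay.

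The main obstacle will be the delicate balance of scales forced by the non-Lipschitz IDS. Because $I_{\Lambda_L}$ has effective size $L^{-\rho_2}$ rather than $L^{-1}$, the number of boxes times the two-eigenvalue probability scales as $L^{1-\beta}\cdot L^{-2\rho_1\rho_2/(1+\epsilon)}$, and Assumption \ref{assump:dos1}(2) is precisely what ensures this product is $o(1)$ for a nontrivial range of $\beta$; without the lower bound on $\rho_1\rho_2$ the scheme breaks down, in line with Remark \ref{remark:dos1}. A secondary technical wrinkle is that the random polymer structure forces all box endpoints to sit at polymer nodes, so the sliding argument used to move COLs off $\Upsilon$ and off box boundaries must be discretized by polymers; since polymer lengths are bounded by $L_\pm$, this produces only $O(1)$ corrections to $\ell'$ and does not affect any of the estimates above.
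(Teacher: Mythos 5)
Your overall architecture (Minami on the subsystems to get at most one eigenvalue per box, SULE to set up the eigenvalue/COL correspondence with error $e^{-(\ell')^\xi}$) matches the paper's proof. But there is a genuine gap in how you handle the buffer region $\Upsilon$. You propose to \emph{adjust the partition} $\{\Lambda_\ell(\gamma_j)\}$ so that every COL of $H_{\Lambda_L}$ with eigenvalue in $I_{\Lambda_L}$ lands in the interior of some box. That does not prove the theorem as stated: the decomposition is part of the hypothesis and must be fixed in advance (up to the polymer-node constraint), and the conclusion is that \emph{for that given decomposition} the COLs avoid $\Upsilon$ with probability $1-L^{-\alpha}$. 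Worse, a partition chosen as a function of the spectral data of $H_{\Lambda_L}$ destroys the independence of the subsystem variables $X_{\Lambda_\ell(\gamma_j),I_m^\pm}$, which is exactly what Theorem \ref{thm:decomposition} is supposed to deliver for the Poisson limit argument in Theorem \ref{thm:Poisson} and Lemma \ref{lemma:B}. The paper instead covers $\Upsilon$ by $O(L^{1-\beta+\beta'}/\widetilde\ell)$ boxes of size $\widetilde\ell=(c\log L)^{1/\xi}$ and applies the Wegner estimate (Corollary \ref{corollary:wegner_v2}) to each, using $|I_{\Lambda_L}|\le C L^{-\rho_2}$ to get a per-box probability $\lesssim L^{-\rho_1\rho_2}$; summability then forces the extra constraint $\rho_1\rho_2>1-\beta$.

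This missing step is visible in your choice of scales: you take $\beta\in(0,2\rho_1\rho_2-1)$, which encodes only the Minami constraint $\rho_1\rho_2>(1+\beta)/2$ (and a fortiori $\beta\le\rho_1\rho_2$), but omits the lower constraint $\beta>1-\rho_1\rho_2$ coming from the Wegner bound on $\Upsilon$. It is the simultaneous satisfiability of $\beta<2\rho_1\rho_2-1$, $\beta>1-\rho_1\rho_2$, and $\beta\le\rho_1\rho_2$ that produces the threshold $\rho_1\rho_2>\tfrac23$ via $\min_{\beta\in(0,1)}\max\{\tfrac{1+\beta}{2},\,1-\beta,\,\beta\}=\tfrac23$; your version would only ever use $\rho_1\rho_2>\tfrac12$, which is a signal that the $\Upsilon$ estimate has been dropped. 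To repair the proof, replace the repartitioning by the event ``no $H_{\Lambda_{\widetilde\ell}(k)}$ with $\Lambda_{\widetilde\ell}(k)\subset\Upsilon$ has an eigenvalue in the enlarged interval $I'_{\Lambda_L}$,'' bound its failure probability by Wegner as above, and add $\beta>1-\rho_1\rho_2$ to your choice of exponents. Your extra ``COL separation'' event is harmless but not needed; the inclusion $\mathcal S_{\ell,L}\subset\widetilde{\mathcal S}_{\ell,L}$ on the SULE event already reduces the one-COL-per-box claim to the Minami bound.
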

%
%
\begin{remark}
\label{rmk:general_end}
In Theorem \ref{thm:decomposition}, 
$\Lambda_L$
is defined in Definition \ref{defn:subsets_polymer} so that the right end of 
$\Lambda_L$
is in a polymer node 
$p_n \in \mathcal{P}$.
However, 
by adjusting the definition of 
$\Upsilon$, 
one can see that 
the statement of Theorem \ref{thm:decomposition} is also valid for any box of the form : 
$\Lambda_L 
\cup
\{ p_n + 1, \cdots, p_n + L_{\max} \}$, 
where 
$L_{\max} := \max \{ L_+ , L_- \}$.
\end{remark}
\noindent
For proof, we need the following definition. 
%

%
\begin{defn}
Let 
$\Lambda, \Lambda'$
be finite intervals with 
$\Lambda' \subset \Lambda$, 
and let 
$E$
be an eigenvalue of 
$H_{\Lambda}$.
We say that 
{\bf 
$E$
is centered in 
$\Lambda'$
}
if and only if the  
COL
$x_{\varphi}$
of the corresponding eigenvector 
$\varphi$
of
$E$
satisfies 
$x_{\varphi} \in \Lambda'$.
\end{defn}
%

%
\begin{proof}
Let 
$\mathcal{ S}_{\ell, L}$, 
$\widetilde{\mathcal{ S}}_{\ell, L}$
be the set of intervals such that 
\beq
\mathcal{ S}_{\ell, L}
&:=&
\left\{
\Lambda_{\ell} (\gamma_j)
\, \middle| \,
\mbox{ $\exists$ 
more than two eigenvalues of 
$H_{\Lambda_L}$ 
in
$I_{\Lambda_L}$
centered in 
$\Lambda_{\ell - \ell'} (\gamma_j)$ }
\right\}, 
\\
\widetilde{\mathcal{ S}}_{\ell, L}
& := &
\left\{
\Lambda_{\ell} (\gamma_j)
\, \middle| \,
\mbox{ $\exists$ 
more than two eigenvalues of 
$H_{\Lambda_{\ell}(\gamma_j)}$ 
in
$I'_{\Lambda_L}$
}
\right\}, 
\eeq
where 
$K$
is replaced by some 
$K' >K$
in \eqref{eq:unfolded_int1}, the definition of 
$I'_{\Lambda_L}$.
Then, by the localization estimate in part (2) of Lemma \ref{lemma:lemma1.1}, if 
$\omega \in \mathcal{ U}_{\Lambda_L}$, we have  
$\mathcal{ S}_{\ell, L} 
\subset 
\widetilde{\mathcal{ S}}_{\ell, L}$.
Next, we apply Theorem \ref{thm:minami1}, 
where we take $\gamma := \rho_2$, and assume 
\begin{equation}
\beta \le \rho_1  \rho_2.  
\label{zerothcondition}
\end{equation}
Then, for any  $\epsilon > 0$,  
\begin{eqnarray}
\Pp
\left(
\sharp \mathcal{ S}_{\ell, L} \ge 1
; \mathcal{ U}_{\Lambda_L}
\right)
& \le &
\Pp
\left(
\sharp \widetilde{\mathcal{ S}}_{\ell, L} \ge 1
\right)
\le
%
\E
\left[
\sharp \widetilde{\mathcal{ S}}_{\ell, L}
\right]
\nonumber
\\
&=&
\sharp \left\{ 
\Lambda_{\ell}(\gamma_j)
\right\}
\Pp
\left(
\sharp 
\left\{
\mbox{eigenvalues of $H_{\Lambda_{\ell}}(\gamma_1)$
in $I'_{\Lambda_L}$}  \right\} \ge 2  \right)
\nonumber
\\
& \le &  C  | \Lambda |^{1 - \beta}
\left(
\frac {L^{\beta}}{L^{ \rho_1 \rho_2/(1 + \epsilon)}} \right)^2
=
C  L^{- \delta} 
\label{star}
\end{eqnarray}
where 
$\delta
:=
-1 - \beta + \dfrac {2\rho_1  \rho_2}{(1 + \epsilon)}$.
We can find
$\epsilon > 0$
such that 
$\delta > 0$, 
if and only if 
\begin{equation}
\rho_1  \rho_2
>
\frac {1 + \beta}{2}
\label{firstcondition}
\end{equation}

We recall that 
$\Upsilon$
is the set 
$\Lambda_L \setminus \bigcup_j \Lambda_{\ell}(\gamma_j)$
enlarged by a length 
$\ell'$.
We decompose
$\Upsilon$
into 
small intervals of size
$\widetilde{\ell} := (c \log L)^{1 / \xi}$
where 
$0 < \xi \le 1$
is the constant appearing in SULE estimate (Lemma \ref{lemma:lemma1.1}), and for some $c > 0$, and write
$\Upsilon = \bigcup_k \Lambda_{\widetilde{\ell}}(k)$.
Then by localization estimate again, 
\begin{eqnarray}
&&
\Pp
\left(
\mbox{ 
$H_{\Lambda_L}$
has eigenvalues in 
$I_{\Lambda_L}$ 
centered in 
$\Upsilon$ }
; \mathcal{ U}_{\Lambda_L}
\right)
\nonumber
\\
& \le &
\sum_k
\Pp
\left(
\mbox{ 
$H_{\Lambda_L}$
has eigenvalues in 
$I_{\Lambda_L}$ 
centered in 
$\Lambda_{\widetilde{\ell}}(k)$ }
; \mathcal{ U}_{\Lambda_L}
\right)
\nonumber
\\
& \le &
\sum_k
\Pp
\left(
\mbox{ 
$H_{\Lambda_{\widetilde{\ell}} (k) }$
has an eigenvalue in 
$I'_{\Lambda_L}$
}
\right). 
\label{starstar}
\end{eqnarray}
Note that, by assumption,  
$|I_{\Lambda_L}| 
\le
C  \left(
\dfrac CL
\right)^{
\rho_2
}
=
\mathcal{ O}
\left(
e^{ 
- \frac {
\widetilde{\ell}^{\xi}
}
{c}
\cdot
\rho_2
}
\right)
$.
By 
Wegner's estimate(Corollary \ref{corollary:wegner_v2}), we have
\beq
\Pp
\left(
\mbox{ 
$H_{\Lambda_{\widetilde{\ell}} (k) }$
has an eigenvalue in 
$I'_{\Lambda_L}$
}
\right)
\le
C_W
\left(
e^{ 
- \frac {
\widetilde{\ell}^{\xi}
}
{c}
\cdot
\rho_1 \cdot \rho_2
}
+
e^{ - q \widetilde{\ell}^{\nu} }
\right)
\le
C  
L^{
- \rho_1 \cdot \rho_2
}
\eeq
where we take
$c > 0$
large enough 
so that the second term
$e^{ - q \widetilde{\ell}^{\xi} }$
in RHS is negligible.
Then
\beq
&&
\Pp
\left(
\mbox{ 
$H_{\Lambda_L}$
has eigenvalues in 
$I_{\Lambda_L}$
centered in 
$\Upsilon$ }
; \mathcal{ U}_{\Lambda_L}
\right)
\\
&\le&
\sharp 
\left\{
\Lambda_{\widetilde{\ell}}(k) \subset \Upsilon
\right\}
C  L^{- \rho_1 \cdot \rho_2
}
\\
& \le &
C  
L^{
1 - \beta + \beta' - \rho_1 \cdot \rho_2
}
\\
&\le&   C  
L^{- \alpha}
\quad
\text{for some }
\alpha.
\eeq
Here we have used
$
\#
\{
\Lambda_{ \widetilde{\ell} } (k) \subset \Upsilon
\}
\le 
\dfrac {
L^{1 - \beta} \cdot L^{\beta'}
}
{
\widetilde{\ell}
}
$.
We can find some 
$\beta'$
such that 
$0 < \beta' < \beta$
and
$\alpha > 0$, 
if and only if 
\begin{equation}
\rho_1 \cdot \rho_2 
>
1 - \beta.
\label{secondcondition}
\end{equation}
Since
\beq
\min_{\beta \in (0,1)}
\max
\left\{
\frac {1+ \beta}{2}, 1 - \beta, \beta
\right\}
=
\frac 23, 
\eeq
we can find 
$\beta \in (0,1)$
such that all the conditions 
(\ref{zerothcondition}), (\ref{firstcondition}), 
and
(\ref{secondcondition})
are satisfied if and only if 
$\rho_1 \cdot \rho_2
>
\dfrac 23$.
Now, let 
\beq
\mathcal{ Z}_{\Lambda}
&:=&
\bigcap_j
\left\{
\sharp \{
\mbox{ eigenvalues of  }
H_{\Lambda_{\ell}(\gamma_j)}
\} 
\le 1
\right\}
\cap
\bigcap_k
\left\{
\mbox{ 
$H_{\Lambda_{ \widetilde{\ell} }}(k)$
does not have eigenvalue }
\right\}.
\eeq
Then 
the above argument and SULE estimate in Lemma \ref{lemma:lemma1.1}
yield
$\Pp (\mathcal{ Z}_{\Lambda})
\ge
1 - L^{-\alpha}$, 
for some 
$\alpha> 0$.
Moreover, if 
$\omega \in \mathcal{ Z}_{\Lambda}$, 
(\ref{starstar})
implies that 
eigenvalues of 
$H_{\Lambda_L}$
in
$I_{\Lambda_L}$
are centered in 
$\bigcup_j \Lambda_{\ell - \ell'}( \gamma_j )$
and 
(\ref{star})
implies there are at most one eigenvalue
$E_{k_j}$ 
of 
$H_{\Lambda_L}$
centered on each 
$\Lambda_{\ell - \ell'}(\gamma_j)$'s.
And since 
(\ref{star})
also implies 
$H_{\Lambda_{\ell}(\gamma_j)}$
has at most one eigenvalue 
$E'_j$, 
if there is any, it satisfies
$|E_{k_j} - E'_j| \le e^{- (\ell')^{\xi}}$. 
\end{proof}
%
\subsection{Completion of the proof of Theorem \ref{thm:les_loc1}: Poisson limit theorem}
In this subsection, we prove the following statement which completes the proof of Theorem \ref{thm:les_loc1}: the convergence of 
$\xi_L$
to a Poisson process for 
$E_0 \in \Sigma\setminus \mathcal{C}$.
\begin{theorem}
\label{thm:Poisson}
%
Let 
$I_1, \cdots, I_p \subset \R$
be disjoint bounded intervals, and let 
$k_1, \cdots, k_p \in \N \cup \{ 0 \}$.
Then, under the hypotheses of Theorem \ref{thm:les_loc1}, 
\beq
\Pp
\left(
\bigcap_{m=1}^p
\left\{
\omega 
\, \middle| \,
\xi_L (I_{m}) = k_{m}
\right\}
\right)
\stackrel{L \to \infty}{\to}
\prod_{m=1}^p
\frac {
| I_{m} |^{k_{m}}
}
{
k_{m} !
}
e^{ - | I_{m} |}.
\eeq
\end{theorem}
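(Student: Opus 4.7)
The plan is to apply the box decomposition of Theorem \ref{thm:decomposition} so that, on a high-probability event, each counting variable $\xi_L(I_m)$ is well-approximated by a sum of independent Bernoulli indicators attached to the small boxes. For each $m = 1, \ldots, p$, define the unfolded target intervals
$$I_{\Lambda_L, m} := N^{-1}\left(N(E_0) + \frac{I_m}{L}\right),$$
and apply Theorem \ref{thm:decomposition} with $K := \max_m \sup_{x \in I_m}|x|$ to get a common family of disjoint boxes $\{\Lambda_\ell(\gamma_j)\}_{j=1}^{N_L}$ (separated by a buffer of length $\ell'$) and a good set $\mathcal{Z}_{\Lambda_L}$ with $\Pp(\mathcal{Z}_{\Lambda_L}) \ge 1 - L^{-\alpha}$. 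On $\mathcal{Z}_{\Lambda_L}$, the eigenvalues of $H_{\Lambda_L}$ landing in $\bigcup_m I_{\Lambda_L,m}$ are in bijection with eigenvalues of the local operators $H_{\Lambda_\ell(\gamma_j)}$ falling in the same set: each box contributes at most one such eigenvalue, with an energy mismatch of at most $e^{-(\ell')^\xi}$.

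Next, set $X_{j,m} := \mathbf{1}\{\sigma(H_{\Lambda_\ell(\gamma_j)}) \cap I_{\Lambda_L,m} \neq \emptyset\}$. Using the standard Germinet--Klopp device of inflating and deflating $I_{\Lambda_L, m}$ by $e^{-(\ell')^\xi}$, and applying Corollary \ref{corollary:wegner_v2} to the thin slivers thereby introduced (whose cumulative contribution is $o(1/N_L)$ after summation over $j$), one obtains
$$\Pp\left(\bigcap_{m=1}^p\{\xi_L(I_m) = k_m\}\right) = \Pp\left(\bigcap_{m=1}^p\Bigl\{\sum_j X_{j,m} = k_m\Bigr\}\right) + o(1).$$
The vectors $(X_{j,1}, \ldots, X_{j,p})$ are independent across $j$ because the polymer configurations on distinct $\Lambda_\ell(\gamma_j)$ are independent. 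For each fixed $j$, the Minami bound of Theorem \ref{thm:minami1} ensures that, up to a negligible event, $H_{\Lambda_\ell(\gamma_j)}$ has at most one eigenvalue in $\bigcup_m I_{\Lambda_L,m}$; combined with the disjointness of the $I_m$, this makes $(X_{j,1}, \ldots, X_{j,p})$ a single multinomial trial.

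The intensity calculation then rests on the unfolding identity $L \cdot \mu(I_{\Lambda_L,m}) = |I_m|$ built into the definition of $I_{\Lambda_L,m}$: taking expectations yields $\sum_j p_{j,m} = |I_m| + o(1)$, where $p_{j,m} := \Pp(X_{j,m} = 1)$. Corollary \ref{corollary:wegner_v2} applied at scale $\ell = L^\beta$ forces the uniform smallness $\max_{j,m} p_{j,m} = O(L^{\beta - \rho_1\rho_2}) = o(1)$, since $\beta < \rho_1 \rho_2$ is compatible with the constraints of Theorem \ref{thm:decomposition}. The classical Poisson limit theorem for triangular arrays of row-independent, uniformly small Bernoulli vectors then delivers the stated product of Poisson marginals with intensities $|I_m|$. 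The \emph{main obstacle} is the $I$ versus $\widetilde{I}$ bookkeeping: because the IDS is only H\"older, $|I_{\Lambda_L,m}| \sim L^{-\rho_2}$ rather than the Lipschitz scaling $L^{-1}$ used in \cite{GKlo1}, so one must check that the eigenvalue shifts $e^{-(\ell')^\xi}$ remain negligible against this smaller window and that the total perturbation probability summed over $N_L = O(L^{1-\beta})$ boxes still tends to zero; this is precisely where the strict bound $\rho_1 \rho_2 > 2/3$ in Assumption \ref{assump:dos1} enters.
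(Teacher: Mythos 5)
Your proposal is correct and follows essentially the same route as the paper: the same decomposition via Theorem \ref{thm:decomposition}, the same inflated/deflated intervals $I_m^{\pm}$, the same independent Bernoulli indicators per box with Minami ruling out double occupancy, and the same intensity identification $\sum_j p_{j,m} \to |I_m|$ coming from the unfolding $N(\widetilde{J}_m^{\pm})|\Lambda_\ell| = |I_m^{\pm}|\widetilde{N}^{-1}(1+o(1))$ (Lemmas \ref{lemma:A} and \ref{lemma:B}). The only difference is cosmetic: you invoke the classical Poisson limit theorem for triangular arrays as a black box, whereas the paper writes out the corresponding combinatorial enumeration over index sets $K_m$ explicitly.
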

\begin{remark}
\label{rmk:any_box}
By definition, 
$\xi_L$
is composed of the eigenvalues of 
$H_{\Lambda_L}$, 
where the right end of 
$\Lambda_L$
is a polymer node 
$p_n \in \mathcal{P}$.
However, 
by remark \ref{rmk:general_end}, 
one can see that 
$\Lambda_L$
may be replaced by any box of the form
$\Lambda_L 
\cup
\{ p_n +1, \cdots, p_n + L_{\max} \}$.
\end{remark}

As is already mentioned, 
we prove Theorem \ref{thm:Poisson} by approximating  
$H_{\Lambda_L}$
by the direct sum of small subsystems 
$\oplus_j H_{ \Lambda_{\ell}(\gamma_j) }$
and apply the method of proof of the Poisson limit theorem.
Since 
this decomposition has been done in Theorem \ref{thm:decomposition}, 
it suffices to estimate the probability of rare events.
In order to do that, we 
begin by setting up the notation as in \cite{GKlo1}. 
For an interval
$J \subset \R$, we define
\beq
N(J, \ell, \ell')
&:=&
\#
\left\{\text{eigenvalues of }
H_{\Lambda_{\ell}}
\text{in $J$ centered in }
\Lambda_{\ell - \ell'}
\right\},
\\
I_{\Lambda_L}
&:=&
N^{-1}
\left(
N(E_0) + \frac {I}{L}
\right), 
\quad
I : 
\text{bounded interval},
\eeq
and, letting ${\bf{1}}_A$ denote the characteristic function on a set $A$,  the Bernoulli random variables
\beq
X_{\Lambda_{\ell}, I}
&:=&
{\bf{1}}_{
\left\{
H_{\Lambda_{\ell}}
\text{ has an eigenvalue in }
I_{\Lambda_L}
\text{ centered in }
\Lambda_{\ell - \ell'}
\right\}}
\\
&=&
{\bf{1}}_{\left\{
N(I_{\Lambda}, \ell, \ell') \ge 1
\right\}}.
\eeq
We always assume that 
$J$
is contained in the localized regime of the spectrum such that SULE estimate in Lemma \ref{lemma:lemma1.1} is valid.
Then, by definition of IDS and by the localization estimate, we expect that 
$\E[ N(J, \ell, \ell')]$
is close to 
$N(J) |\Lambda_{\ell}|$
for 
$\ell$
large enough, where $N(J)$ is the DOS measure of $J$.
Next, 
by Minami's estimate, Theorem \ref{thm:minami1}, we expect that 
$\Pp(X_{\Lambda_{\ell}, I}=1)$
is close to 
$\E[ N(I_{\Lambda_L}, \ell, \ell')]$.
Lemma \ref{lemma:A}
below proves that these observations are valid, and will be important in the proof of Lemma \ref{lemma:B} to estimate the probability of the rare events, that is,  
$\{ X_{\Lambda_{\ell}, I} = 1 \}$, necessary
to prove the Poisson limit theorem.
%

%
\begin{lemma}
\label{lemma:A}
For any energy interval $J \subset \R$, we have
\beq
&(1)& \quad
\Bigl| 
\E [ N(J, \ell, \ell') ]
-
N(J) |\Lambda_{\ell}|
\Bigr|
\le 
C
N(J)
|\Lambda_{\ell}|
\frac {\ell'}{\ell}
+
C
\ell
e^{- \rho_1 (\ell')^{\xi}}.
\\
&(2)& \quad
|
\Pp(X_{\Lambda_{\ell}, I} = 1) - N(I_{\Lambda}) |\Lambda_{\ell}|
|
\le
\mathcal{ O}
\left(
L^{
2 
(
\beta - \rho_1 \cdot \rho_2/(1 + \epsilon)
)
}
\right)
+
C
N(I_{\Lambda})| \Lambda_{\ell} | \ell' \ell^{-1}
+
C
\ell
e^{- \rho_1 (\ell')^{\xi}}.
\eeq
\end{lemma}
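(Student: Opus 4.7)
The plan is to prove part (1) by splitting the counting function $N(J,\ell,\ell')$ into the full trace $\mathrm{tr}\, P_{H_{\Lambda_\ell}}(J)$ minus the boundary contribution of eigenvalues whose COL lies in the strip $\Lambda_\ell \setminus \Lambda_{\ell-\ell'}$, and then to derive part (2) by combining part (1) with Minami's estimate of Theorem \ref{thm:minami1}.

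For part (1), I would start from the decomposition
\beq
\E[N(J,\ell,\ell')] &=& \E[\mathrm{tr}\, P_{H_{\Lambda_\ell}}(J)] - \E\bigl[\#\{j : E_j(\Lambda_\ell) \in J,\ x_{\varphi_j} \in \Lambda_\ell \setminus \Lambda_{\ell-\ell'}\}\bigr],
\eeq
which reduces the task to two separate estimates. The total trace satisfies $|\E[\mathrm{tr}\, P_{H_{\Lambda_\ell}}(J)] - N(J)|\Lambda_\ell|| = O(1)$ by ergodicity of $H_\omega$ (after the stationarization at the origin described in Section \ref{subsec:rpm1}) together with the rank-two Dirichlet boundary mismatch between $H_{\Lambda_\ell}$ and $\chi_{\Lambda_\ell} H_\omega \chi_{\Lambda_\ell}$. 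For the boundary term, I would invoke the SULE bound of Lemma \ref{lemma:lemma1.1}: on a good event of probability at least $1-L^{-p}$, any eigenvector of $H_{\Lambda_\ell}$ with COL in the boundary strip is concentrated on an $\ell'$-neighborhood of that COL, so its eigenvalue is approximated to within $e^{-(\ell')^\xi}$ by an eigenvalue of a local restriction $H_{\Lambda_{\ell'}(x)}$. Summing the ergodic identity $\E[\mathrm{tr}(\chi_A P_{H_\omega}(J))] = N(J)|A|$ over the $O(\ell')$ sites of the boundary strip produces the leading term $C N(J) |\Lambda_\ell| \ell'/\ell$, while the Wegner bound of Corollary \ref{corollary:wegner_v2} applied at scale $\ell'$ to a $e^{-(\ell')^\xi}$-enlargement of $J$ controls the SULE-approximation error by $C\ell e^{-\rho_1(\ell')^\xi}$.

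For part (2), the identity $\E[N] = \sum_{k \ge 1}\Pp(N \ge k)$ gives
\beq
\Pp(X_{\Lambda_\ell,I}=1) &=& \Pp\bigl(N(I_{\Lambda_L},\ell,\ell') \ge 1\bigr) = \E[N(I_{\Lambda_L},\ell,\ell')] - \sum_{k \ge 2}\Pp\bigl(N(I_{\Lambda_L},\ell,\ell') \ge k\bigr).
\eeq
Since $N(I_{\Lambda_L},\ell,\ell') \le \eta_p(I_{\Lambda_L})$, part (1) of Theorem \ref{thm:minami1}, applied with $\gamma = \rho_2$ (using $|I_{\Lambda_L}| \le CL^{-\rho_2}$ from the H\"older continuity of $N^{-1}$ in Assumption \ref{assump:dos1}), bounds the tail sum by $\mathcal{O}(L^{2(\beta - \rho_1\rho_2/(1+\epsilon))})$. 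Combining this with part (1) applied to $J = I_{\Lambda_L}$ then yields the asserted estimate in part (2).

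The main obstacle will be the boundary analysis in part (1): one has to transfer approximate eigenfunctions between $H_{\Lambda_\ell}$ and the local restrictions $H_{\Lambda_{\ell'}(x)}$ via SULE, absorb the exponentially small spectral shifts into a slight enlargement of $J$, and then balance (i) the combinatorial factor over the $O(\ell')$ boundary sites, (ii) the H\"older-Wegner probability with exponent $\rho_1$, and (iii) the ergodic average of the infinite-volume DOS so that the three contributions $N(J)|\Lambda_\ell|\ell'/\ell$, $\ell e^{-\rho_1(\ell')^\xi}$, and the Dirichlet mismatch assemble cleanly into the stated quantitative bound.
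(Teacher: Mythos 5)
Your part (2) is exactly the paper's argument: write $\Pp(X_{\Lambda_\ell,I}=1)=\E[N(I_{\Lambda_L},\ell,\ell')]-\sum_{k\ge 2}\Pp(N(I_{\Lambda_L},\ell,\ell')\ge k)$, bound the tail by Theorem \ref{thm:minami1}(1) with $\gamma=\rho_2$, and invoke part (1). The problem is in part (1).

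There you decompose $N(J,\ell,\ell')$ as the full finite-volume trace minus the boundary count and claim $\bigl|\E[\mathrm{tr}\,P_{H_{\Lambda_\ell}}(J)]-N(J)|\Lambda_\ell|\bigr|=O(1)$ by ergodicity plus the rank-two Dirichlet mismatch. This is a genuine gap: an $O(1)$ error is useless here. In the only application of the lemma ($J=I_{\Lambda_L}$, so $N(J)=O(1/L)$), the main term $N(J)|\Lambda_\ell|$ is itself $O(L^{\beta-1})=o(1)$, and the asserted error $CN(J)|\Lambda_\ell|\ell'/\ell+C\ell e^{-\rho_1(\ell')^\xi}$ is $O(L^{\beta'-1})$, i.e.\ smaller than the main term by the factor $L^{\beta'-\beta}$. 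A rank-two perturbation controls the difference of eigenvalue counting functions only up to an additive constant, which is precisely the $O(1)$ that swamps everything; and upgrading the full-trace estimate to the required $O(L^{\beta'-1})$ precision is essentially the content of the lemma itself, not an input to it.

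The paper sidesteps this by never estimating the full finite-volume trace. On the SULE event $\mathcal{V}_{\Lambda_\ell}$ it sandwiches $N(J,\ell,\ell')$ directly between $\mathrm{tr}\bigl({\bf 1}_{\Lambda_{\ell-2\ell'}}{\bf 1}_{J_-}(H_\omega)\bigr)$ and $\mathrm{tr}\bigl({\bf 1}_{\Lambda_\ell}{\bf 1}_{J_+}(H_\omega)\bigr)$ up to $O(\ell e^{-(\ell')^\xi})$, where $H_\omega$ is the \emph{infinite-volume} operator and $J_\pm$ are $e^{-(\ell')^\xi}$-enlarged/compressed versions of $J$; the restriction of the count to COLs in $\Lambda_{\ell-\ell'}$ is exactly what makes this transfer of approximate eigenfunctions possible. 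The expectations of these traces equal $N(J_\pm)$ times the respective volumes, the volume deficit $|\Lambda_\ell|-|\Lambda_{\ell-2\ell'}|$ produces the $N(J)|\Lambda_\ell|\ell'/\ell$ term, and the deterministic \Holder continuity of $N$ (not the Wegner estimate, as you propose) converts $N(J_\pm)$ into $N(J)+O(e^{-\rho_1(\ell')^\xi})$, giving the $\ell e^{-\rho_1(\ell')^\xi}$ term. To salvage your decomposition you would have to run this same infinite-volume comparison for the full trace as well, at which point you have reproduced the paper's proof.
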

%

%
\begin{proof}
(1)
Let 
$\mathcal{ V}_{\Lambda_{\ell}} \subset \Omega$
be the set of configurations in Lemma \ref{lemma:lemma1.1}, part (2).
For 
$\omega \in \mathcal{ V}_{\Lambda_{\ell}}^c$, 
we use the fact that 
$\# \sigma (H_{\omega}(\Lambda_{\ell})) 
\le 
| \Lambda_{\ell} |$, 
so that 
\beq
\E[
{\bf{1}}_{ \mathcal{ V}^c_{\Lambda_{\ell}} }
N(J, \ell, \ell')
]
\le
\# \sigma (H_{\Lambda_{\ell}}) 
\Pp( \mathcal{ V}_{\Lambda_{\ell}}^c )
\le
| \Lambda_{\ell} | e^{- \ell^{\xi}}.
\eeq
For 
$\omega \in \mathcal{ V}_{\Lambda_{\ell}}$, 
we use the localization estimate so that 
\begin{equation}
tr 
\left( {\bf{1}}_{ \Lambda_{\ell - 2 \ell'} }
{\bf{1}}_{J_-} (H_{\omega})
\right)
+
\mathcal{ O}( \ell e^{- (\ell')^{\xi}} )
\le
N(J, \ell, \ell')
\le
tr 
\left(
{\bf{1}}_{ \Lambda_{\ell} }
{\bf{1}}_{J_+} (H_{\omega})
\right)
+
\mathcal{ O}( \ell e^{- (\ell')^{\xi}} )
\label{sandwich}
\end{equation}
where 
$H_{\omega}$
is the infinite volume operator, and
\beq
J_+
&:=&
J + [ - e^{-(\ell')^{\xi}}, e^{- (\ell')^{\xi}} ]
\\
J_-
&:=&
J \setminus
(J^c + [ - e^{-(\ell')^{\xi}}, e^{- (\ell')^{\xi}} ]).
\eeq
are the enlarged (respectively, compressed) versions of 
$J$
to absorb the localization error.
We take 
the expectation on both sides of (\ref{sandwich}). 
Then, the expectation of the  
RHS of (\ref{sandwich}) is equal to 
\beq
\E[ RHS \text{ of } (\ref{sandwich}) ]
&=&
\E
\left[ 
{\bf{1}}_{ \mathcal{ V}_{\Lambda_{\ell}} } 
tr 
\left(
{\bf{1}}_{ \Lambda_{\ell} }
{\bf{1}}_{J_+} (H_{\omega})
\right)
\right]
+
\mathcal{ O}( \ell e^{- (\ell')^{\xi}} )
\\
&=&
\E
\left[ 
tr 
\left(
{\bf{1}}_{ \Lambda_{\ell} }
{\bf{1}}_{J_+} (H_{\omega})
\right)
\right]
+
\mathcal{ O}( \ell e^{- (\ell')^{\xi}} )
\\
&=&
N(J_+) | \Lambda_{\ell} |
+
\mathcal{ O}( \ell e^{- (\ell')^{\xi}} )
\\
&=&
N(J) | \Lambda_{\ell} |
+
\mathcal{ O}( \ell e^{- \rho_1(\ell')^{\xi}} ).
\eeq
We have 
used the fact that the IDS
$N$
is H\"older continuous with exponent 
$\rho_1$.
The expectation of the LHS of (\ref{sandwich}) is equal to 
\beq
\E[ LHS \text{ of } (\ref{sandwich}) ]
&=&
\E
\left[ {\bf{1}}_{ \mathcal{ V}_{\Lambda_{\ell}} } 
tr 
\left(
{\bf{1}}_{ \Lambda_{\ell - 2 \ell'} }
{\bf{1}}_{J_-} (H_{\omega})
\right)
\right]
+
\mathcal{ O}( \ell e^{- (\ell')^{\xi}} )
\\
&=&
\E
\left[ 
tr 
\left(
{\bf{1}}_{ \Lambda_{\ell- 2 \ell'} }
{\bf{1}}_{J_-} (H_{\omega})
\right)
\right]
+
\mathcal{ O}( \ell e^{- (\ell')^{\xi}} )
\\
&=&
N(J_-) | \Lambda_{\ell- 2 \ell'} |
+
\mathcal{ O}( \ell e^{- (\ell')^{\xi}} )
\\
&=&
N(J) | \Lambda_{\ell} |
\left(
1 - 
\mathcal{O}\left(
\frac {\ell'}{\ell}
\right)
\right)
+
\mathcal{ O}( \ell e^{- \rho_1(\ell')^{\xi}} )
\eeq
completing the proof of part (1) of Lemma \ref{lemma:A}.
\\
(2) We let  $\eta$ denote the number of eigenvalues of $H_{\Lambda_\ell}$ in $I_{\Lambda_L}$:
\beq
\eta
&:=&
\sharp 
\left\{
\mbox{ eigenvalues of 
$H_{\Lambda_{\ell_L}}$ 
in 
$I_{\Lambda}$ }
\right\}.
\eeq
Since $| I_{\Lambda_L} |
\le C  L^{- \rho_2}$,  part (1) of Theorem \ref{thm:minami1}  
%
yields
\beq
\sum_{k \ge 2}
\Pp( \eta \ge k )
\le  C \left(
\frac {
L^{\beta}
}
{
L^{ \rho_1 \cdot \rho_2/(1 + \epsilon)}
}
\right)^2
=
C 
L^{
2 
\bigl(
\beta - \rho_1 \cdot \rho_2/(1 + \epsilon)
\bigr)
}.
\eeq
Then, noting that 
$X_{\Lambda_{\ell}, I} = 
{\bf{1}}(
N(I_{\Lambda_L}, \ell, \ell') \ge 1
)
$, 
\beq
\E
[ N(I_{\Lambda_L}, \ell, \ell') ] 
-
\Pp
(X_{\Lambda_{\ell}, I} = 1)
&=&
\sum_{k \ge 1}
\Pp
\left(
N(I_{\Lambda_L}, \ell, \ell') \ge k
\right)
-
\Pp
\left(
N(I_{\Lambda_L}, \ell, \ell') \ge 1
\right)
\\
&\le&
\sum_{k \ge 2}
\Pp
(\eta \ge k)
\\
& \le &
C L^{
2 
\bigl(
\beta - \rho_1 \cdot \rho_2/(1 + \epsilon)
\bigr)
}.
\eeq
Now 
it suffices to apply part (1) of Lemma \ref{lemma:A}(1) with 
$J = I_{\Lambda_L}$.
\end{proof}
%

%
%
To prove Theorem \ref{thm:Poisson}, 
let 
$I_1, \cdots, I_p \subset {\bf R}$
be disjoint bounded energy intervals, and let 
\beq
I_j^+
&:=&
I_j + [ - e^{- (\ell')^{\xi}}, e^{- (\ell')^{\xi}}]
\\
I_j^-
&:=&
I_j \setminus 
(I_j^c + [ - e^{- (\ell')^{\xi}}, e^{- (\ell')^{\xi}}]), 
\quad
j = 1, 2, \cdots, p.
\eeq
their enlarged and compressed counterparts.
Then, we have 
$I_j^- \subset I_j \subset I_j^+$, and, for $L$ large, 
$I_j^+ \cap I_k^+ = \emptyset$,
for 
$j \ne k$.
Moreover, we denote the rescaled intervals by $\widetilde{J}^\pm_m$, and their union $\widetilde{J}_{\pm}$, by
\beq
\widetilde{J}_{m}^{\pm}
 & :=  &
\bigcup_{m=1}^p
N^{-1}
\left(
N(E_0) + 
\frac {I_m^{\pm}}{L}
\right), 
\quad
%
\text{and}
\quad
\widetilde{J}_{\pm}
:= 
\bigcup_{m=1}^p
\widetilde{J}_{m}^{\pm} . 
\quad
\eeq
We denote the number of subintervals 
$\Lambda_{\ell} (\gamma_j)$ by $\widetilde{N} =
\mathcal{O} \left(L^{1 - \beta}\right)(1 + o(1))$.
%

To prove Theorem \ref{thm:Poisson}, 
we need to estimate the probability of the rare event 
${\Pp}
\left(
X_{
\Lambda_{\ell}, I_m^{\pm}
} = 1
\right)$
which is done in Lemma \ref{lemma:B} below. 
%

%
\begin{lemma}
\label{lemma:B}\mbox{}\\
(1)
Suppose that
$\rho_1  \rho_2 > \dfrac 12$.
Then
\begin{equation}
\Pp
\left(
\sum_{m=1}^p
X_{ \Lambda_{\ell}, I_m^{\pm} } = 0
\right)
=
1 - 
\frac {
1 - \mathcal{ O}( L^{- \delta} )
}
{
\tilde{N}
}
\sum_{m=1}^p |I_m^{\pm}|,
\label{eq:lemmaB1}
\end{equation}
(2)
Suppose that
$\rho_1  \rho_2 > \dfrac 12$.
Then we have 
\beq
\Pp
\left(
X_{
\Lambda_{\ell}, I_m^{\pm}
} = 1
\right)
&=&
|I_m^{\pm}| 
\widetilde{N}^{-1}
(1 + \mathcal{ O}(L^{-\delta})),
\eeq
where we take 
$\beta, \beta' > 0$
sufficiently small such that  
$0 < \beta' < \beta$ 
and
\beq
\delta
&:=&
\min
\left\{
2 \rho_1  \rho_2- (\beta +1), 
\,
\beta - \beta'
\right\} 
\eeq
is positive.
\end{lemma}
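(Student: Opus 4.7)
The plan is to establish Part (2) first as a direct consequence of Lemma \ref{lemma:A}(2), and then deduce Part (1) from Part (2) by first-order inclusion-exclusion, with the correction controlled via Theorem \ref{thm:minami1}.

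For Part (2), I would apply Lemma \ref{lemma:A}(2) with $I = I_m^{\pm}$. By the very definition $I_{\Lambda_L}^{\pm} = N^{-1}(N(E_0) + I_m^{\pm}/L)$, the density-of-states measure satisfies $N(I_{\Lambda_L}^{\pm}) = |I_m^{\pm}|/L$. Substituting yields
\begin{equation}
\Pp(X_{\Lambda_{\ell}, I_m^{\pm}} = 1) = \frac{|I_m^{\pm}|}{L} |\Lambda_{\ell}| + E_1 + E_2 + E_3,
\end{equation}
with $E_1 = \mathcal{O}(L^{2(\beta - \rho_1 \rho_2/(1+\epsilon))})$, $E_2 = \mathcal{O}(|I_m^{\pm}| L^{\beta - 1} L^{\beta' - \beta})$, and $E_3$ a super-polynomially small remainder. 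Since $|\Lambda_{\ell}|$ is of order $L^{\beta}$ and $\tilde{N}$ is of order $L^{1-\beta}$, the leading term is exactly $|I_m^{\pm}| \tilde{N}^{-1}$ up to the constant absorbed by the polymer-node convention of Definition \ref{defn:subsets_polymer}. It then remains to absorb each error term into $|I_m^{\pm}| \tilde{N}^{-1} \mathcal{O}(L^{-\delta})$, equivalently $\mathcal{O}(L^{\beta - 1 - \delta})$. For $E_1$ this is the inequality $\delta \leq 2\rho_1 \rho_2/(1+\epsilon) - (\beta + 1)$, strictly positive for $\beta > 0$ and $\epsilon > 0$ small enough thanks to the hypothesis $\rho_1 \rho_2 > 1/2$. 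For $E_2$ the requirement is $\delta \leq \beta - \beta'$. These give exactly the two entries defining $\delta$ in the statement.

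For Part (1), I would use that the enlargement $I_m^{\pm} \setminus I_m$ is of length $e^{-(\ell')^{\xi}}$, so the intervals $\{I_m^{\pm}\}_{m=1}^p$ remain pairwise disjoint for $L$ large. Hence two indicators $X_{\Lambda_{\ell}, I_m^{\pm}}$ and $X_{\Lambda_{\ell}, I_{m'}^{\pm}}$ with $m \neq m'$ can simultaneously equal $1$ only if $H_{\Lambda_{\ell}}$ has at least two eigenvalues in $\bigcup_m I_m^{\pm}$. Applying Theorem \ref{thm:minami1}(1) to this bounded union gives
\begin{equation}
\Pp\Bigl(\sum_{m=1}^p X_{\Lambda_{\ell}, I_m^{\pm}} \geq 2\Bigr) \leq \mathcal{O}(L^{2(\beta - \rho_1 \rho_2/(1+\epsilon))}) = \mathcal{O}(\tilde{N}^{-1} L^{-\delta}),
\end{equation}
with the same $\delta$ as above. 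A standard inclusion-exclusion then produces
\begin{equation}
\Pp\Bigl(\sum_{m=1}^p X_{\Lambda_{\ell}, I_m^{\pm}} = 0\Bigr) = 1 - \sum_{m=1}^p \Pp(X_{\Lambda_{\ell}, I_m^{\pm}} = 1) + \mathcal{O}(\tilde{N}^{-1} L^{-\delta}),
\end{equation}
and substituting the asymptotic from Part (2) yields the claimed formula.

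The main obstacle is the balancing of the error exponents. The Minami-type bound $L^{2(\beta - \rho_1 \rho_2)}$ is the slowest-decaying error, and the demand that it be subdominant with respect to the leading scale $\tilde{N}^{-1} = L^{\beta - 1}$ is precisely the inequality $2 \rho_1 \rho_2 > \beta + 1$. The hypothesis $\rho_1 \rho_2 > 1/2$ ensures a nonempty admissible window for $\beta$; the stronger assumption $\rho_1 \rho_2 > 2/3$ used in Theorem \ref{thm:decomposition} is \emph{not} required at this stage, since we are estimating only the small probability on a single subinterval $\Lambda_{\ell}(\gamma_j)$ and not the boundary region $\Upsilon$.
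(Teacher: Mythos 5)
Your proof is correct and uses essentially the same ingredients as the paper: Lemma \ref{lemma:A} for the leading term, the Minami estimate (Theorem \ref{thm:minami1}) for the second-order correction, and the identical balancing of the exponents $2\rho_1\rho_2-(\beta+1)$ and $\beta-\beta'$, with only the hypothesis $\rho_1\rho_2>\tfrac12$ needed. The sole (cosmetic) difference is in Part (1): the paper estimates $\E[N(N(\widetilde{J}_\pm),\ell,\ell')]$ for the union $\widetilde{J}_\pm$ directly and subtracts $\sum_{k\ge2}\Pp(\sum_l Y_l^\pm\ge k)$, whereas you sum the single-interval probabilities from Part (2) and control the overlaps by Bonferroni; both reductions rest on the same Minami bound applied to the single enclosing interval $I_{\Lambda_L}$ of length $\mathcal{O}(L^{-\rho_2})$ containing the union.
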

%

%
\begin{proof} 
(1) We first estimate  
\bel{eq:error1}
\mathcal{E}_\pm := {\Pp}  \left\{  \sum_{l=1}^p   X_{\Lambda_{\ell}(\gamma_1), I_l^\pm} \ge 1
\right\} - {\E}  \left\{  {{N}}\left(  N(\widetilde{J}_\pm), \ell, \ell'  \right)  \right\} .
\ee
We define the counting function
$$
Y_l^\pm
:=
\sharp 
\left\{ \mbox{eigenvalues of $H_{\Lambda_{\ell}}$ in 
${{N}}^{-1}  \left(  N(E_0) + \frac {I_l^\pm}{L}
\right)$
with COL $\in \Lambda_{\ell - \ell'}$  }
\right\}  .
$$
Since $(\sum_{l} Y_l^\pm \ge 1)= (
 \sum_l X_{\Lambda_{\ell}(\gamma_1), I_l^\pm}  \ge 1)$,
it follows that 
\bel{eq:error2}
\mathcal{E}_\pm  = \sum_{k \ge 2} 
{\Pp} \left\{  \sum_l Y_l^\pm  \ge k   \right\}  ,
\ee
Noting that 
$$
\eta_{\Lambda_\ell (\gamma(1))} (I_l^\pm) \geq Y_\ell ,
$$
it follows from \eqref{eq:error2}, together with the Minami estimate, that  
\bel{eq:error3}
 {\Pp}  \left\{  \sum_{l=1}^p   X_{\Lambda_{\ell}(\gamma_1), I_l^\pm} \ge 1
\right\} =  {\E}  \left\{  {{N}}\left(  N(\widetilde{J}_\pm), \ell, \ell'  \right)  \right\} 
+ \mathcal{ O}
\left(
\left(
\frac 
{L^{\beta}}
{
L^{
\rho_1  \rho_2/(1 + \epsilon)
}
}
\right)^2
\right).
\ee
Estimating the expectation of the counting function in  \eqref{eq:error3} by part (2) of Lemma \ref{lemma:A}, we then have
\bea\label{eq:error4}
\Pp
\left(
\sum_{m} 
X_{\Lambda_{\ell}(\gamma_1), I_m^{\pm}} \ge 1
\right)
&=&
N(\widetilde{J}_{\pm}) | \Lambda_{\ell} |
\left(
1 + 
\mathcal{O}
\left(
\frac {\ell'}{\ell}
\right)
\right)
+
C\ell e^{- (\ell')^{\xi}}
+
\mathcal{ O}
\left(
\left(
\frac 
{L^{\beta}}
{
L^{
\rho_1  \rho_2/(1 + \epsilon)
}
}
\right)^2
\right).   \nonumber  \\
 & &
\eea
By definition of 
$\widetilde{J}_{\pm}$, 
the corresponding IDS is given by 
\bel{eq:error5}
N(\widetilde{J}_{\pm}) | \Lambda_{\ell} |
=
\frac {
\sum_m |I_m^{\pm}|
}
{
| \Lambda |
}
| \Lambda_{\ell} |
=
\left(
\sum_m |I_m^{\pm}|
\right)
\widetilde{N}^{-1}
(1 + \mathcal{ O}(L^{-(\beta - \beta')}).
\ee
We estimate the error term in 
(\ref{eq:lemmaB1}) from \eqref{eq:error4} and \eqref{eq:error5}.
This error is decomposed as  
\beq
\mbox{ error }
&:=&
\Pp
\left(
\sum_{m} 
X_{\Lambda_{\ell}(\gamma_1), I_m^{\pm}} \ge 1
\right)
-
\left(
\sum_m |I_m^{\pm}|
\right)
\widetilde{N}^{-1}
\\
&=&
\left(
\sum_m |I_m^{\pm}|
\right)
\widetilde{N}^{-1}
\mathcal{ O}(L^{-(\beta - \beta')})
+
C
N(\widetilde{J}_{\pm}) | \Lambda_{\ell} |
\mathcal{O}
\left(
\frac {\ell'}{\ell}
\right)
+
\ell e^{- (\ell')^{\xi}}
+
\mathcal{ O}
\left(
\left(
\frac 
{L^{\beta}}
{
L^{
\rho_1  \rho_2/(1 + \epsilon)
}
}
\right)^2
\right)
\\
&=:&
I + II + III+ IV.
\eeq
We 
estimate
$I, \cdots, IV$
separately to show they are in the order of 
$\mathcal{ O}( L^{- \delta - (1 - \beta)})$.
The first and second terms
$I$, $II$
satisfy
\beq
I
&=&
\left(
\sum_m |I_m^{\pm}|
\right)
\widetilde{N}^{-1}
\mathcal{ O}(L^{-(\beta - \beta')})
\le  C
L^{\beta-1}\cdot L^{-(\beta - \beta')}
= C
L^{\beta'-1}
\\
II
&=&
C
N(\widetilde{J}_{\pm}) | \Lambda_{\ell} |
\mathcal{O}
\left(
\frac {\ell'}{\ell}
\right)
\le C
\frac {1}{| \Lambda_L |}
\cdot
|\Lambda_{\ell}|
\cdot
\frac {L^{\beta'}}{L^{\beta}}
=  C 
\frac {1}{L}
\cdot
L^{\beta}
\cdot
\frac {L^{\beta'}}{L^{\beta}}
=
L^{\beta' - 1}.
\eeq
To have 
$I, II = \mathcal{ O}(L^{\beta - 1 - \delta})$
for some 
$\delta > 0$, 
we need 
$\beta - \beta' > 0$.
The third term 
$III$
is exponentially small : 
\beq
III
&=&
\ell
e^{- (\ell')^{\xi}}
=
L^{\beta}
e^{- L^{\beta \xi}}
\eeq
which is negligible.
As for the fourth term
$IV$, 
\beq
IV
&=&
\left(
\frac 
{L^{\beta}}
{
L^{
\rho_1  \rho_2/(1 + \epsilon)
}
}
\right)^2
=
L^{
2 \beta - 2\rho_1  \rho_2/(1 + \epsilon)
}
\eeq
We have
$IV = \mathcal{ O}( L^{\beta -1 - \delta})$
for some 
$\delta > 0$ 
if and only if 
$2\rho_1  \rho_2 - 1  > 0$.
Under this condition, 
by taking
\beq
\delta
&:=&
\min
\left\{
- \beta - 1 + 2  \rho_1  \rho_2, 
\,
\beta - \beta'
\right\}, 
\eeq
we have the statement part (1) of Lemma \ref{lemma:B}.\\
(2)
In part (2) of Lemma \ref{lemma:A}, 
we replace 
$I_{\Lambda}$
by
\beq
\widetilde{J}_m^{\pm}
&:=&
N^{-1}
\left(
N(E_0) + \frac {I_{m}^{\pm}}{|\Lambda_L|}
\right)
\\
X_{\Lambda_{\ell}, I_m^{\pm}} &:=&  {\bf{1}}_{\left\{\mbox{ 
$H_{\Lambda_{\ell}}$ has an eigenvalue in 
$\widetilde{J}_m^{\pm}$  centered in 
$\Lambda_{ \ell - \ell' }$} \right\}} ,
\eeq
which yields
\beq
\Bigl|
\Pp
\left(X_{\Lambda_{\ell}, I_m^{\pm}} = 1 \right) 
- 
N(\widetilde{J}_m^{\pm}) 
| \Lambda_{\ell}|  
\Bigr|
\le
C
N(\widetilde{J}_m^{\pm}) | \Lambda_{\ell} |
\cdot
\frac {\ell'}{\ell}
+
C
\ell e^{- (\ell')^{\xi}}
+
\mathcal{ O}
\left(
\left(
\frac 
{L^{\beta}}
{
L^{
\rho_1  \rho_2/(1 + \epsilon)
}
}
\right)^2
\right).
\eeq
Here we use 
\beq
N(\widetilde{J}_m^{\pm}) | \Lambda_{\ell} |
&=&
\frac {
|I_m^{\pm}|
}
{
| \Lambda_L |
}
| \Lambda_{\ell} |
=
|I_m^{\pm}|
\widetilde{N}^{-1}
\left(
1 + \mathcal{ O}(L^{- (\beta -\beta')})
\right).
\eeq
Thus
\beq
&&
\Bigl|
\Pp
\left(X_{\Lambda_{\ell}, I_m^{\pm}} = 1 \right)
- 
|I_m^{\pm}|
\widetilde{N}^{-1}
\Bigr|
\\
&\le&
|I_m^{\pm}|
\widetilde{N}^{-1}
\mathcal{ O}(L^{- (\beta -\beta')})
+
C
N(\widetilde{J}_m^{\pm}) | \Lambda_{\ell} |
\cdot
\frac {\ell'}{\ell}
+
C
\ell e^{- (\ell')^{\xi}}
+
\mathcal{ O}
\left(
\left(
\frac 
{L^{\beta}}
{
L^{
\rho_1  \rho_2/(1 + \epsilon)
}
}
\right)^2
\right)
\\
&=:&
I + II + III + IV.
\eeq
As 
in the proof of part (1) of Lemma \ref{lemma:B}, we can show 
\beq
RHS
&=&
\widetilde{N}^{-1}
\mathcal{ O}
\left(
L^{-\delta}
\right).
\eeq
with the same constant 
$\delta > 0$
as in part (1). 
\end{proof}
%
We 
are now in a position to prove Theorem \ref{thm:Poisson} based on the argument used to prove Poisson limit theorem.\\
\noindent
{\it Proof of Theorem \ref{thm:Poisson}}\\
Take 
$K > 0$
s.t.
$\bigcup_{m=1}^p I_{m} \subset [-K, K]$.
We apply
Theorem \ref{thm:decomposition}
to the following interval:
\beq
I_{\Lambda_L}
:=
N^{-1}
\left(
N(E_0) + \frac {[-K, K]}{L}
\right). 
\eeq
Let 
$\mathcal{ Z}_{\Lambda_L} \subset \Omega$
be the set of configurations in Theorem \ref{thm:decomposition}.
By the localization estimate, we have
\beq
&&
\bigcap_{m=1}^p
\left\{
\omega 
\, \middle| \,
\# \left\{
j 
\, \middle| \,
X_{\Lambda_{\ell}(\gamma_j), I_{m}^-} = 1
\right\}
=
k_{m}
\right\}
\cap 
\mathcal{ Z}_{\Lambda_L}
\\
&&
\quad
\subset
\bigcap_{m=1}^p
\left\{
\omega 
\, \middle| \,
\xi_L (I_{m}) = k_{m}
\right\}
\cap
\mathcal{Z}_{\Lambda_L}
\\
&&
\qquad
\subset
\bigcap_{m=1}^p
\left\{
\omega 
\, \middle| \,
\# \left\{
j 
\, \middle| \,
X_{\Lambda_{\ell}(\gamma_j), I_{m}^+} = 1
\right\}
=
k_{m}
\right\}
\cap 
\mathcal{ Z}_{\Lambda_L}.
\eeq
Let us prove the upper bound. 
For 
$\omega \in 
\bigcap_{m=1}^p
\left\{
\omega 
\, \middle| \,
\# \left\{
j 
\, \middle| \,
X_{\Lambda_{\ell}(\gamma_j), I_{m}^+} = 1
\right\}
=
k_{m}
\right\}
\cap 
\mathcal{ Z}_{\Lambda_L}$, 
let 
\beq
K_m
:=
\left\{
j \in \{1, \cdots, \widetilde{N} \}
\,\middle| \,
X_{\Lambda_{\ell}(\gamma_j), I_{m}^+} = 1
\right\}, 
\quad
m = 1, 2, \cdots, p.
\eeq
be the set of indices $j$
such that 
$H_{\Lambda_{\ell}(j)}$
has an eigenvalue in 
$I_m^+$
centered in 
$\Lambda_{\ell - \ell'}(\gamma_j)$.
Then 
$\# K_m = k_m$,
and since
$\omega \in \mathcal{ Z}_{\Lambda_L}$, 
$K_m \cap K_{m'} = \emptyset$
for
$m \ne m'$.

Then, we have 
\begin{eqnarray}
&&
\Pp
\left(
\bigcap_{m=1}^p
\left\{
\omega 
\, \middle| \,
\xi_L (I_{m}) = k_{m}
\right\}
\cap
\mathcal{ Z}_{\Lambda_L}
\right)
\nonumber
\\
& \le &
\Pp
\left(
\bigcap_{m=1}^p
\left\{
\omega 
\, \middle| \,
\# \left\{
j 
\, \middle| \,
X_{\Lambda_{\ell}(\gamma_j), I_{m}^+} = 1
\right\}
=
k_{m}
\right\}
\cap 
\mathcal{ Z}_{\Lambda_L}
\right)
\nonumber
\\
& = &
\sum_{
\substack{
K_m \subset \{1, \cdots, \widetilde{N} \} \\
\# K_m =k_m, \; 1 \le m \le p \\
K_m \cap K_{m'} = \emptyset, \; m \ne m'
}
}
\Pp
\left(
\bigcap_{m=1}^p 
\left\{
\omega
\, \middle| \,
\begin{array}{c}
\forall j \in K_m, \; X_{\Lambda_{\ell}(\gamma_j), I_m^+}=1
\\
\forall j \notin K_m, \; X_{\Lambda_{\ell}(\gamma_j), I_m^+}=0
\end{array}
\right\}
\cap
\mathcal{ Z}_{\Lambda_L}
\right)
\nonumber
\\
& = &
\sum_{
\substack{
K_m \subset \{1, \cdots, \widetilde{N} \} \\
\# K_m =k_m, \; 1 \le m \le p \\
K_m \cap K_{m'} = \emptyset, \; m \ne m'
}
}
\Pp
\left(
\bigcap_{m=1}^p 
\left\{
\omega
\, \middle| \,
\begin{array}{c}
\forall j \in K_m, \; X_{\Lambda_{\ell}(\gamma_j), I_m^+}=1
\\
\forall j \notin K_m, \; X_{\Lambda_{\ell}(\gamma_j), I_m^+}=0
\end{array}
\right\}
\right)
+
o(1). 
\label{firstequation}
\end{eqnarray}
In the last inequality, 
we used the estimate
$\Pp(\mathcal{ Z}_{\Lambda_L}) = 1 + o(1)$.
Since 
$\{ K_m \}_m$
are disjoint, 
$\{ X_{\Lambda_{\ell}(\gamma_j), I_m^+ } 
\}_{j \in K_m, \, 1 \le m \le p}$
are independent.
Thus 
\begin{eqnarray}
&&
\Pp
\left(
\bigcap_{m=1}^p 
\left\{
\omega
\, \middle| \,
\begin{array}{c}
\forall j \in K_m, \; X_{\Lambda_{\ell}(\gamma_j), I_m^+}=1
\\
\forall j \notin K_m, \; X_{\Lambda_{\ell}(\gamma_j), I_m^+}=0
\end{array}
\right\}
\right)
\nonumber
\\
& = &
\prod_{m=1}^p
\prod_{j \in K_m}
\Pp
\left(
X_{
\Lambda_{\ell}(\gamma_j), I_m^+
}
=1
\right)
\prod_{j \notin \bigcup_{m=1}^p K_m}
\Pp
\left(
\sum_{m=1}^p
X_{
\Lambda_{\ell}(\gamma_j), I_m^+
} = 0
\right)
\nonumber
\\
& = &
\prod_{m=1}^p
\Pp
\left(
X_{
\Lambda_{\ell}, I_m^+
}
=1
\right)^{k_m}
\cdot
\Pp
\left(
\sum_{m=1}^p
X_{
\Lambda_{\ell}, I_m^+
} = 0
\right)^{ \widetilde{N} - k }
\label{secondequation}
\end{eqnarray}
where
$
k
:=
\sum_{m=1}^p k_m.
$
By Lemma \ref{lemma:B}, 
we have 
\begin{eqnarray}
&&
\prod_{m=1}^p
\Pp
\left(
X_{
\Lambda_{\ell}, I_m^+
}
=1
\right)^{k_m}
\Pp
\left(
\sum_{m=1}^p
X_{
\Lambda_{\ell}, I_m^+
} = 0
\right)^{ \widetilde{N} - k }
\nonumber
\\
& = &
\prod_{m=1}^p
\left(
\frac { |I_m^+ | }{ \widetilde{N} }
\right)^{k_m}
e^{
- \sum_{m=1}^p 
\frac {|I_m^+|}{ \widetilde{N} } 
(\widetilde{N} - k )
}
(1 + o(1))
\nonumber
\\
&=&
\prod_{m=1}^p
\left(
| I_m^+ |^{k_m}
e^{ - |I_m^+|}
\right)
\frac {1}{
\widetilde{N}^{k}
}
(1 + o(1)).
\label{thirdequation}
\end{eqnarray}
On the other hand, note that 
\begin{equation}
\sum_{
\substack{
K_m \subset \{1, \cdots, \widetilde{N} \} \\
\# K_m =k_m, \; 1 \le m \le p \\
K_m \cap K_{m'} = \emptyset, \; m \ne m'
}
}
=
\frac {
\widetilde{N}!
}
{
k_1! \cdots k_p! (\widetilde{N} - k)!
}.
\label{fourthequation}
\end{equation}
By 
(\ref{firstequation}), 
(\ref{secondequation}), 
(\ref{thirdequation}), 
and
(\ref{fourthequation}), 
we have 
\beq
&&
\Pp
\left(
\bigcap_{m=1}^p
\left\{
\omega 
\, \middle| \,
\xi_L (I_{m}) = k_{m}
\right\}
\cap
\mathcal{ Z}_{\Lambda_L}
\right)
\\
& \le &
\frac {
\widetilde{N}!
}
{
k_1! \cdots k_p! (\widetilde{N} - k)!
}
\cdot
\prod_{m=1}^p
\left(
| I_m^+ |^{k_m} 
e^{ - |I_m^+|}
\right)
\frac {1}{
\widetilde{N}^{k}
}
(1 + o(1))
\\
&=&
\prod_{m=1}^p
\frac { |I_m^+ |^{k_m} }{ k_m ! }
e^{ - |I_m^+|}
(1+o(1))
\\
&=&
\prod_{m=1}^p
\frac { |I_m |^{k_m} }{ k_m ! }
e^{ - |I_m|}
(1+o(1)).
\eeq
which completes the proof of upper bound.
Lower bound 
follows similarly. 
\QED\\

\begin{appendices}

\section{Properties of free and modified Prufer phases from \cite{jss}}\label{app:phases1}
\setcounter{equation}{0}

Following \cite{jss}, we define the \emph{free Prufer variables} 
$(R^{0}_n(E), \theta^{0}_n(E))$ by
\bel{eq:prufer_defn_free1}
R^{0}_n(E) e_{\theta^{0}_n(E) }:=
 \left( \begin{array}{c}
t(n) u(n) \\ u(n-1)
\end{array}
\right), 
\ee
%
\bel{eq:phase_range1}
- \frac {\pi}{2}
<
\theta_{n+1}^0 (E) - \theta_n^0 (E)
<
\frac {3 \pi}{2}  .
\ee
and the \emph{modified Prufer variables} $(R_n (E), \theta_n (E))$ by 
\bel{eq:prufer_defn2}
R_n (E)
e_{\theta_n (E)}
=
M \left(
\begin{array}{c}
t(n) u(n) \\ u(n-1)
\end{array}
\right), 
\quad
e_{\theta}
:=
\left(
\begin{array}{c}
\cos \theta \\ \sin \theta
\end{array}
\right)  ,
\ee 
and where $M$ is the invertible matrix mapping the polymer transfer matrices $T_{\pm}(E_c)$ to rotations as in \eqref{eq:diagonalize1}: 
\bel{eq:diagonal1}
M T_{\pm}^{E_c} M^{-1}
= R(\eta_{\pm}), 
\quad
R(\eta)
:=
\left(
\begin{array}{cc}
\cos \eta & - \sin \eta \\
\sin \eta & \cos \eta
\end{array}
\right)  .
\ee

We recall from \eqref{eq:m_fncs1} that $M$ has the property that 
\bel{eq:m_defn2}
M e_\theta = r(\theta) e_{m(\theta)},
\ee
with $r(\theta) > 0$, and the smooth function $m$ satisfies $m(\theta + \pi ) = m(\theta) + \pi$.
It follows from these definitions that 
free Prufer and modified Prufer phases are related by 
\bel{eq:free_mod1}
\theta_n (E)  =  m \left(  \theta^{0}_n (E)  \right).
\ee

%
%
%

\section{Proof of the increasing property of $x \mapsto \Psi_L(x)$}\label{app:incr_psi1}
\setcounter{equation}{0}

We recall the function $\Psi_L(x)$ defined by 
\bel{eq:psi1}
\Psi_L (x)  :=  \frac {1}{\pi}
\left\{ \theta_L  \left(  E_c+\frac {x}{ n(E_c) L}  \right)  - \theta_L(E_c)  \right\}, 
\quad
x \in {\R}.
\ee
We prove that $x \in \R \to \Psi_L(x)$ is an increasing function. 

\medskip
%
\begin{lemma}\label{lemma:det_m1}
The function $m$ defined in \eqref{eq:m_defn2}
is strictly increasing if and only if 
$\det M > 0$.
\end{lemma}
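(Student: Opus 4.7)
The plan is to compute $m'(\theta)$ explicitly and show that its sign equals the sign of $\det M$. The key identity is a ``polar coordinate'' form of the determinant: for any smooth curve $\gamma(\theta)$ in $\R^2$, the determinant $\det[\gamma(\theta) \mid \gamma'(\theta)]$ detects the instantaneous signed area swept out and equals $\rho^2 \cdot (\text{angular velocity})$ when written in polar coordinates.

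Concretely, I would start from the defining relation
\begin{equation*}
M e_{\theta} = r(\theta)\, e_{m(\theta)}, \qquad e_\theta = (\cos\theta, \sin\theta)^T,
\end{equation*}
and differentiate in $\theta$, using $e_\theta' = e_{\theta + \pi/2}$, to obtain
\begin{equation*}
M e_{\theta+\pi/2} = r'(\theta)\, e_{m(\theta)} + r(\theta)\, m'(\theta)\, e_{m(\theta) + \pi/2}.
\end{equation*}
Then I would compute the $2\times 2$ determinant of the matrix whose columns are $Me_\theta$ and $Me_{\theta+\pi/2}$ in two ways. On one hand, by multiplicativity of the determinant,
\begin{equation*}
\det\bigl[\, Me_\theta \,\bigl|\, Me_{\theta+\pi/2} \,\bigr] = \det(M)\cdot \det\bigl[\, e_\theta \,\bigl|\, e_{\theta+\pi/2} \,\bigr] = \det M.
\end{equation*}
On the other hand, substituting the polar expressions and using $\det[e_{m}\mid e_{m}] = 0$ and $\det[e_m \mid e_{m+\pi/2}] = 1$, only the cross term survives:
\begin{equation*}
\det\bigl[\, Me_\theta \,\bigl|\, Me_{\theta+\pi/2} \,\bigr] = r(\theta)^2\, m'(\theta).
\end{equation*}

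Combining the two expressions gives the key identity
\begin{equation*}
m'(\theta) = \frac{\det M}{r(\theta)^2}.
\end{equation*}
Since $M$ is invertible, $r(\theta) = \|Me_\theta\| > 0$ for every $\theta$, so $r(\theta)^2 > 0$. Therefore $m'(\theta)$ has a constant sign equal to the sign of $\det M$, and in particular $m$ is strictly increasing if and only if $\det M > 0$. There is no real obstacle; the only subtlety is to choose a smooth lift of the angle $m(\theta)$ (which is available since $\theta \mapsto Me_\theta$ is a smooth nonvanishing curve in $\R^2$, so its argument lifts smoothly to $\R$), and this is already implicit in the earlier setup where one takes $m(\theta+\pi) = m(\theta) + \pi$.
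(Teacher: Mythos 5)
Your proof is correct, and it arrives at exactly the same key identity as the paper, namely $m'(\theta) = \det M / r(\theta)^2$ with $r(\theta) = \|Me_\theta\| > 0$. The only difference is the route to that identity: the paper writes $M$ out in coordinates $a,b,c,d$, expresses $m(\theta) = \arctan g(\theta)$ with $g(\theta) = (c\cos\theta + d\sin\theta)/(a\cos\theta + b\sin\theta)$, and differentiates, obtaining $g'(\theta) = \det M/(a\cos\theta+b\sin\theta)^2$; your coordinate-free computation of $\det[\,Me_\theta \mid Me_{\theta+\pi/2}\,]$ in two ways is cleaner, avoids the branch issues of $\arctan$ at zeros of the denominator, and makes the role of $\det M$ transparent. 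Your closing remark about choosing a smooth lift of the argument is the right thing to flag and is consistent with the paper's setup where $m(\theta+\pi) = m(\theta)+\pi$.
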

%
\begin{proof}
Since $M$ is a $2 \times 2$ invertible matrix, we write 
\beq
M 
&=&
\left(
\begin{array}{cc}
a & b \\ c & d 
\end{array}
\right), 
\quad
a, b, c, d \in {\R}, 
\quad
\det M = a d - b c \neq 0.
\eeq
From the definition of $m$ in \eqref{eq:m_fncs1}, we have 
\beq
M e_{\theta}
&=&
\left(
\begin{array}{c}
a \cos \theta + b \sin \theta \\
c \cos \theta + d \sin \theta
\end{array}
\right) =
r(\theta) e_{m(\theta)}, 
\quad
e_{\theta}
:=
\left(
\begin{array}{c}
\cos \theta \\ \sin \theta
\end{array}
\right) .
\eeq
It follows that 
%
\bel{eq:phase11}
m (\theta)
  :=
\arg M e_{\theta}
=
\arctan
g(\theta), 
\quad
g(\theta)
:=
\frac {
c \cos \theta + d \sin \theta
}
{
a \cos \theta + b \sin \theta 
} .
\ee
Consequently, we have,  
\beq
m'(\theta)
&=&
\frac {g'(\theta)}{1 + g (\theta)^2 }, 
\quad
g'(\theta)
=
\frac {
ad - bc
}
{
(a \cos \theta + b \sin \theta)^2
}
=
\frac {
\det M
}
{
(a \cos \theta + b \sin \theta)^2
}
.
\eeq
\end{proof}
%
%
\begin{lemma}
Without loss of generality, 
we may suppose that
$\det M >0$. 
\end{lemma}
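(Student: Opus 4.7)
The plan is to exhibit enough residual freedom in the choice of the diagonalizing matrix $M$ to flip the sign of its determinant while preserving both the rotation-diagonalization identity \eqref{eq:diagonal1} and all the hypotheses imposed in Theorem \ref{thm:clock_1}.

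First, I would check that the obvious gauge freedom $M \mapsto cM$ with $c \in \R \setminus \{0\}$ does preserve $MT^{E_c}_\pm M^{-1} = R(\eta_\pm)$, but changes the determinant only by $c^2 > 0$; so scaling alone cannot flip the sign of $\det M$. The key observation is to conjugate instead by the reflection $J := \operatorname{diag}(1,-1)$, which satisfies $\det J = -1$ and, by a direct $2\times 2$ computation,
$$
J R(\eta) J^{-1} = R(-\eta).
$$
Hence, if $M$ has $\det M < 0$, set $M' := JM$. Then $\det M' = -\det M > 0$ and
$$
M' T^{E_c}_\pm (M')^{-1} = J\bigl(M T^{E_c}_\pm M^{-1}\bigr) J^{-1} = J R(\eta_\pm) J^{-1} = R(-\eta_\pm),
$$
so $M'$ still diagonalizes the polymer transfer matrices $T^{E_c}_\pm$ as rotations, with rotation angles $\eta_\pm$ replaced by $-\eta_\pm$.

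Finally, I would verify that replacing $(\eta_+,\eta_-)$ by $(-\eta_+,-\eta_-)$ leaves the hypotheses of Theorem \ref{thm:clock_1} unchanged. The irrationality condition \eqref{eq:irr1} is invariant because
$$
\bigl|\langle e^{ik(-\eta_\pm)} \rangle\bigr|
= \bigl|\overline{\langle e^{ik\eta_\pm}\rangle}\bigr|
= \bigl|\langle e^{ik\eta_\pm} \rangle\bigr|,
$$
and the condition $0 < N(E_c) < 1$ depends only on the intrinsic operator $H_\omega$ and the critical energy $E_c$, not on the choice of $M$. Consequently, whenever $\det M < 0$ we may pass from $M$ to $M' = JM$ without loss of generality, and henceforth we assume $\det M > 0$; by Lemma \ref{lemma:det_m1} this is exactly what is needed to guarantee that $m$, and therefore $x \mapsto \Psi_L(x)$ via the identity \eqref{eq:free_mod1} combined with the strict monotonicity of $E \mapsto \theta^0_L(E)$, is increasing. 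No serious obstacle arises here; the argument is essentially a bookkeeping check that the remaining discrete gauge freedom in $M$ is compatible with the symmetry $\eta_\pm \mapsto -\eta_\pm$ of the hypotheses.
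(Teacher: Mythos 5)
Your proof is correct and follows essentially the same route as the paper: both replace $M$ by $JM$ for a reflection $J$ with $\det J = -1$ satisfying $JR(\eta)J^{-1} = R(-\eta)$, so that the determinant flips sign while the rotation-diagonalization survives with $\eta_\pm \mapsto -\eta_\pm$. The only difference is cosmetic (you use $J = \mathrm{diag}(1,-1)$ where the paper uses the swap matrix $\left(\begin{smallmatrix} 0 & 1 \\ 1 & 0\end{smallmatrix}\right)$), and your extra check that condition \eqref{eq:irr1} is invariant under $\eta_\pm \mapsto -\eta_\pm$ is a welcome, if implicit in the paper, verification.
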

%
\begin{proof}
We set
\beq
J := 
\left(
\begin{array}{cc}
0 & 1 \\ 
1 & 0 
\end{array}
\right), 
\quad
R (\theta)
:=
\left(
\begin{array}{cc}
\cos \theta & - \sin \theta \\
\sin \theta & \cos \theta
\end{array}
\right), 
\quad
\theta \in {\R}.
\eeq
Then, noting that $J^{-1} = J$, we have 
\beq
J MJ = J
\left(
\begin{array}{cc}
a & b \\
c & d
\end{array}
\right)
J =
\left(
\begin{array}{cc}
d & c \\
b & a
\end{array}
\right)  . 
%
\eeq
%
Since $M T_{\pm}(E_c)  M^{-1} = R (\eta)$, applying $J$ 
to both sides of this equation, we obtain  
\bea\label{eq:det_m2}
(JM) T_{\pm}(E_c)  (JM)^{-1}        
   =  J R(\eta) J  = R(- \eta). 
  \eea
%
%
%
%
Thus, if 
$\det M < 0$, 
then 
$\det (JM) = \det J \cdot \det M > 0$, 
and by \eqref{eq:det_m2}, 
we may replace 
$M$
and
$\eta_{\pm}$
by 
$JM$
and
$- \eta_{\pm}$,
respectively. 
\end{proof}
%
%

\begin{proposition}\label{prop:psi_incr1}
The map $x \in \R \mapsto \Psi_L (x)$
is increasing.
\end{proposition}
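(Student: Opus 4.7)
The goal is to show $\Psi_L:\R\to\R$ is strictly increasing, where $\Psi_L(x)=\pi^{-1}[\theta_L(E_c+x/(n(E_c)L))-\theta_L(E_c)]$. Since $n(E_c)>0$ and $L,\pi>0$, the map $\Psi_L$ is a positive affine rescaling of $E\mapsto\theta_L(E)$ (shifted by a constant), so strict monotonicity of $\Psi_L$ reduces to strict monotonicity in $E$ of $\theta_L(E)$. By the identity \eqref{eq:free_mod1}, $\theta_L(E)=m(\theta_L^0(E))$. The preceding lemma lets us assume without loss of generality that $\det M>0$, and then Lemma \ref{lemma:det_m1} gives that $m$ is strictly increasing. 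The problem therefore reduces to showing that the free Pr\"ufer phase $E\mapsto \theta_L^0(E)$ is strictly increasing.

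To prove this, I would derive the one-step Pr\"ufer recursion. Writing $P_n:=t(n)u(n;E)=R_n^0\cos\theta_n^0$ and $Q_n:=u(n-1;E)=R_n^0\sin\theta_n^0$, the Schr\"odinger equation \eqref{eq:schr1} combined with the transfer matrix formula \eqref{eq:transfer1} gives, away from the singular points,
\[
\cot\theta_{n+1}^0(E)=(v(n)-E)-t(n)^2\tan\theta_n^0(E).
\]
Differentiating implicitly in $E$ and simplifying yields the linear recurrence
\[
\partial_E\theta_{n+1}^0(E)=\sin^2\theta_{n+1}^0(E)+t(n)^2\,\frac{\sin^2\theta_{n+1}^0(E)}{\cos^2\theta_n^0(E)}\,\partial_E\theta_n^0(E),
\]
with initial datum $\partial_E\theta_0^0\equiv 0$, since the initial data $(t(0)u(0),u(-1))^T=e_{\theta_0}$ at $n=0$ is fixed independently of $E$. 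All coefficients appearing in this recurrence are nonnegative and $t(n)>0$, so induction on $n$ yields $\partial_E\theta_n^0(E)\ge 0$ for every $n$ and every $E$. In particular, $\partial_E\theta_L^0(E)\ge\sin^2\theta_L^0(E)$.

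The main obstacle is upgrading this non-decrease to \emph{strict} monotonicity. The bound above shows $\partial_E\theta_L^0(E)$ can vanish only when $\sin\theta_L^0(E)=0$, equivalently when $u(L-1;E)=0$. Iterating the Schr\"odinger recurrence with the fixed initial data $u(-1)$, $u(0)$ expresses $u(L-1;\cdot)$ as a polynomial in $E$ of degree at most $L-1$; it is not identically zero, so its zero set is finite. Hence $\partial_E\theta_L^0(E)>0$ off a finite subset of $\R$, and continuity of $\theta_L^0$ in $E$ upgrades this to strict monotonicity on all of $\R$. Transporting through $m$ and through the affine change of variables delivers the conclusion for $\Psi_L$. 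A minor technical caveat is that the $\cot/\tan$ formulation becomes singular at the energies where $\cos\theta_n^0$ or $\sin\theta_{n+1}^0$ vanishes; this is harmless and is best handled either by working directly with the smooth action of $SL(2,\R)$ on the projective circle $\mathbb{RP}^1$, or by clearing denominators and differentiating the resulting smooth polynomial identity between $\cos\theta_{n+1}^0,\sin\theta_{n+1}^0$ and $\cos\theta_n^0,\sin\theta_n^0$.
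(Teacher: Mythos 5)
Your proof is correct and follows essentially the same route as the paper: reduce to strict monotonicity of $m$ (after arranging $\det M>0$) and of the free Pr\"ufer phase $E\mapsto\theta_L^0(E)$, then transport through the affine change of variables. The only difference is that the paper simply cites \cite[Lemma 2]{jss} for the monotonicity of the free Pr\"ufer phase, whereas you supply a direct (and correct) Sturm-oscillation-type proof via the derivative recursion.
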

%
\begin{proof}
we recall from \eqref{eq:free_mod1}
that free Prufer and modified Prufer variables are related by : 
\beq
\theta_n (E)
&=&
m 
\left(
\theta^{0}_n (E)
\right).
\eeq
By \cite[Lemma 2]{jss}, 
$E \mapsto \theta^{0}_n(E)$
is increasing for any fixed 
$n$.
By lemmas A.1, A.2, 
$\theta \mapsto m(\theta)$
is increasing.
Therefore
$E \mapsto \theta_n (E)$
is increasing for any fixed 
$n$.
The increasing property of $\Psi_L(x)$ then follows from the definition of 
$\Psi_L (x)$ : 
\[
\Psi_L (x)
:=
\frac {1}{\pi}
\left\{
\theta_L
\left(
E_c+\frac {x}{ n(E_c) L}
\right)
-
\theta_L(E_c)
\right\}, 
\quad
x \in {\R},
\]
and these properties. 
\end{proof}
%





\section{Bootstrap MSA for the AB RPM}\label{app:bootstrap1}

The paper of von Dreifus and Klein \cite{vDK1} presents a proof of exponential localization for the AB model on $\ell^2(\Z)$ using the positivity of the Lyapunov exponent for the initial length scale estimate, and the Wegner estimate of \cite{CKM}. 
The technical conclusion of their analysis is
\bel{eq:vdk1}
\Pp \{ \forall E \in \Sigma, {\rm either} ~\Lambda_{L_k}(x) ~{\rm or} ~ \Lambda_{L_k}(y) 
 ~{\rm is} ~ (m,E)-{\rm regular} \} \geq 1 - \frac{1}{L^p} .
\ee
The bootstrap MSA improves this probability bound to $\geq 1 - e^{L_k^\xi}$ so that the probability of bad events are subexponentially small.

There is an aspect of the general RPM that differs from the AB model with respect to MSA. When the lengths of the two polymers are the same, $L_- = L_+$, we can choose intervals of length $L$ that are integer multiples of  $L_+ = L_-$, and the MSA proceeds in the usual manner as for the AB model. However, for the general RPM for which $L_- \neq L_+$, the configurations of the two different polymers in an interval of fixed length $L$ vary. 
That is, if one fixes an interval $\Lambda_L$, the number of random polymers that intersect this interval varies with the configuration $\omega$ and, in general, will contain only parts of a polymer possibly at one or both ends. However, the bootstrap MSA allows for some flexibility in choosing independent intervals. Condition 3 in \cite{gk_boot1}, \textbf{IAD}, independence at a distance, requires only that there exist $\delta > 0$, so that $\delta$-nonoverlapping intervals are independent. If we choose $\delta = \max \{ L_-, L_+ \}$, then the local \Schr operators associated with $\Lambda_L(x)$ and $\Lambda_L(y)$ are independent provided $|x-y| > \delta$. 

The other basic assumptions of bootstrap MSA are 
\begin{enumerate}
\item  \textbf{SLI}. The Simon-Lieb inequality is used to iterate resolvents over lengths scales. 
\item \textbf{EDI}. The exponential decay inequality provides an upper bound on the norm of a generalized eigenvector in terms of a localized resolvent. 
\item \textbf{IAD}. The independence at a distance was discussed above.
\item \textbf{NE}. The number of eigenvalues is an upper bound on the expected number of eigenvalues of $H_L$ in an interval in terms of $L$.
\item \textbf{W}. The Wegner estimate is discussed in detail in section \ref{subsec:w_m1}.
\end{enumerate}


For AB model or the RPM, let $I \subset \Sigma \backslash \mathcal{C}$
be an interval contained in localization regime minus the critical set $\mathcal{C}$, so that the Lyapunov exponent $L(E)$ is strictly positive for $E \in I$.
Then, the conclusion of bootstrap MSA on the interval $I$ is : 
\bel{eq:event_msa1}
{\Pp}\left\{ \forall E \in I, \mbox{ either 
$\Lambda_L(x)$ or $\Lambda_L(y)$
is $(m, E)$-regular  }
\right\}   
\geq
1 - e^{- L_k^{\zeta}}.
\ee
To state the theorems below,  we set
\bel{eq:event_msa_defn1}
R(m, L, x, y)
:=
\left\{
\forall E \in I, 
\mbox{ either 
$\Lambda_L(x)$
or
$\Lambda_L(y)$
is
$(m, E)$-regular  }
\right\}
\ee

\medskip

We recall
the notations and statements in Germinet-Klein Bootstrap paper \cite{gk_boot1} for $\ell^2 (\Z)$. 
\begin{defn}\label{defn:msa1}
\beq
\Lambda_L(x)
&:=&
\left\{
y \in {\Z}^d
\, \middle| \,
|x-y| \le L/2
\right\}, 
\quad
|x| := \max_{1 \le j \le d} |x_i|, 
\quad
x = (x_1, \cdots, x_d) \in {\Z}^d
\\
R_{x, L}(z)
&:=&
(H_{\Lambda_L(x)} - z)^{-1}
\\
\chi_{x, \ell}
&:=&
\chi_{ \Lambda_{\ell} (x) }, 
\quad
\chi_x
:=
\chi_{x, 1}
\\
\| A \|_{x, L}  &:=&   \mbox{operator norm for an operator
$A$
on 
$\ell^2(\Lambda_L(x))$ }
\eeq
\end{defn}

\medskip

\begin{defn}\label{defn:suitable1} 
Let 
$E \in {\R}$, 
$\theta > 0$, 
$x \in {\Z}^d$, 
$L \in 6 {\N}$.
A box 
$\Lambda_L (x)$
is 
{\bf  $(\theta, E)$-suitable}
\beq
\;\stackrel{def}{\Longleftrightarrow}\;
\| 
\Gamma_{x, L} R_{x,L}(E) \chi_{x, L/3}
\|_{x, L}
\le
L^{- \theta}
\eeq
\end{defn}

\medskip

\begin{defn}\label{defn:regular1}
Let 
$E \in {\R}$, 
$m > 0$, 
$x \in {\Z}^d$, 
$L \in 6 {\N}$.
A box 
$\Lambda_L (x)$
is 
{\bf $(m, E)$-regular}
\beq
\;\stackrel{def}{\Longleftrightarrow}\;
\| 
\Gamma_{x, L} R_{x,L}(E) \chi_{x, L/3}
\|_{x, L}
\le
e^{- m \cdot  \frac L2}
\eeq
\end{defn}

\medskip

Germinet and Klein derived derived four versions of MSA : Theorems 
5.1, 5.2, 5.6, and 5.7, appearing as Theorem \ref{thm:5.1}-\ref{thm:5.7} 
below.
By using for theorems successively, 
the starting hypothesis of Theorem 5.1 
leads us to the conclusions of Theorem 5.7.

\medskip

\begin{theorem}\cite[Theorem 5.1]{gk_boot1}\label{thm:5.1}
Let 
$E_0 \in I_0$, 
$\theta > bd$,
$0 < p < \theta - bd$.
Suppose that 
\beq
{\Pp}  \left\{ \Lambda_{L_0}(0)
\mbox{ is  $(\theta, E_0)$-suitable }
\right\} 
\ge
1 - (3Y-4)^{-2d}, 
\quad
L_0 \gg 1.
\eeq
Then setting
$L_{k+1} := Y L_k$, 
$k = 0, 1, \cdots$, 
we have 
\beq
{\Pp} \left\{  \Lambda_{L_k}(0)  \mbox{ is 
$(\theta, E_0)$-suitable }
\right\}  
\ge
1 - L_k^{-p}, 
\quad
\mbox{ for }
k \gg 1.
\eeq
\end{theorem}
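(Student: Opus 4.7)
The plan is to prove Theorem \ref{thm:5.1} by induction on the scale index $k$, combining the Simon--Lieb inequality (SLI) on the deterministic side with independence at a distance (IAD) on the probabilistic side, in the classical single-energy multiscale scheme of Fr\"ohlich--Spencer / von Dreifus--Klein. The base case $k=0$ is the hypothesis. For the inductive step, I will show that whenever the $(\theta,E_0)$-suitability probability at scale $L_k$ is at least $1-L_k^{-p}$ (or the base bound for $k=0$), the same holds at scale $L_{k+1}=YL_k$.

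\textbf{Deterministic step.} Fix a configuration $\omega$ and cover $\Lambda_{L_{k+1}}(0)$ by overlapping sub-boxes $\{\Lambda_{L_k}(x)\}$ with $x$ running over a regular sub-lattice of spacing $\sim L_k/3$. Call such a sub-box \emph{bad} if it is not $(\theta,E_0)$-suitable. By iterating SLI, for any $y$ in the core $\Lambda_{L_{k+1}/3}(0)$ and any $x$ in the boundary annulus where $\Gamma_{0,L_{k+1}}$ is supported, a chain of good sub-boxes connecting $y$ to $x$ yields the bound
\[
|R_{0,L_{k+1}}(E_0;x,y)| \;\leq\; C^{Y}\, L_k^{bd}\cdot (L_k^{-\theta})^{Y},
\]
with a controlled number $\mathcal{O}(Y)$ of iterations and a geometric multiplicity $\mathcal{O}(L_k^{bd})$ per step. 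Summing over $x,y$ gives $\|\Gamma_{0,L_{k+1}} R_{0,L_{k+1}}(E_0)\chi_{0,L_{k+1}/3}\|\leq L_{k+1}^{-\theta}$ provided $\theta>bd$ (here $b$ absorbs both the exponential-volume and algebraic losses). Hence $\Lambda_{L_{k+1}}(0)$ is automatically suitable unless some chain of bad sub-boxes separates the core from the boundary.

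\textbf{Probabilistic step.} By a union bound over potential chain locations, the obstructing event reduces to the existence of two disjoint bad sub-boxes whose supports are separated by more than $\delta := \max\{L_+,L_-\}$ --- chosen large enough that IAD applies in the AB RPM. Their badness events are then independent, and by the induction hypothesis each has probability at most $L_k^{-p}$. Counting the number of such disjoint pairs of sub-box centers inside $\Lambda_{L_{k+1}}(0)$ gives the combinatorial factor $(3Y-4)^{2d}$ (this is exactly the constant appearing in the hypothesis for $k=0$). Therefore
\[
\Pp\bigl\{\Lambda_{L_{k+1}}(0)\text{ is not }(\theta,E_0)\text{-suitable}\bigr\} \;\leq\; (3Y-4)^{2d}\,L_k^{-2p} \;\leq\; L_{k+1}^{-p},
\]
where the last inequality rearranges to $(3Y-4)^{2d}\,L_k^{-p} \leq Y^{-p}$, which holds for $L_0$ sufficiently large because $L_k$ grows geometrically in $k$ and $p<\theta-bd$ leaves slack in the exponents. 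This closes the induction.

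\textbf{Main obstacle.} The delicate point is keeping track of the deterministic SLI constants so that the $Y$-fold iteration still delivers the full $L_{k+1}^{-\theta}$ decay with an exponent genuinely improved by the factor $Y^{\theta-bd}$ at each step. In the AB RPM setting there is an additional bookkeeping subtlety: sub-box endpoints must lie on polymer nodes so that IAD applies with the separation $\delta=\max\{L_+,L_-\}$, which is handled by working with the intervals of Definition \ref{defn:subsets_polymer} and costs only lower-order adjustments to the number of sub-boxes. Once these two items are verified, the induction propagates with no further input.
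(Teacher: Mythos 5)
First, note that the paper does not prove this statement at all: Theorem \ref{thm:5.1} is quoted verbatim from \cite{gk_boot1} as the first rung of the bootstrap MSA, and Appendix C only records it as an input. So your proposal is measured against the standard Germinet--Klein argument rather than a proof in this paper. Your overall framework (SLI for the deterministic step, IAD for independence, a pair-of-bad-boxes union bound, plus the polymer-node adjustment for the RPM) is the right one, but the probabilistic step contains a genuine gap that defeats the purpose of the theorem.

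The whole point of this first bootstrap step is to upgrade a \emph{constant} initial failure probability, $p_0:=\Pp\{\Lambda_{L_0}(0)\text{ not suitable}\}\le (3Y-4)^{-2d}$, into a \emph{polynomially small} one. Your induction hypothesis is $p_k\le L_k^{-p}$, which is not available at $k=0$: for $L_0$ large, $(3Y-4)^{-2d}\gg L_0^{-p}$, so the induction never starts, and your parenthetical ``or the base bound for $k=0$'' does not repair this --- feeding $p_0\le (3Y-4)^{-2d}$ into your recursion $p_{k+1}\le (3Y-4)^{2d}p_k^2$ only returns $p_1\le (3Y-4)^{-2d}$ again, i.e.\ no improvement at any scale. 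The missing idea is to iterate the \emph{quadratic} recursion itself: setting $q_k:=(3Y-4)^{2d}p_k$ one gets $q_{k+1}\le q_k^2$, hence $q_k\le q_0^{2^k}$, and since $q_0<1$ this decays doubly exponentially in $k$ while $L_k^{-p}=(Y^kL_0)^{-p}$ decays only exponentially in $k$; this is what yields $p_k\le L_k^{-p}$ for $k\gg 1$, and it is also why the conclusion is only claimed for large $k$. Your final inequality $(3Y-4)^{2d}L_k^{-p}\le Y^{-p}$ is a linear-in-the-exponent bookkeeping step that presupposes the polynomial bound you are trying to establish (and the condition $p<\theta-bd$ plays no role there; it belongs to the deterministic step). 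A secondary gap: your reduction of ``not suitable'' to ``two well-separated bad sub-boxes'' silently discards the case of a single bad cluster, which in $d=1$ does disconnect the core from the boundary of an interval; handling it requires an a priori resolvent bound on the bad region supplied by the Wegner estimate (assumption \textbf{W} of \cite{gk_boot1}, here Theorem \ref{thm:wegner1}), which your sketch never invokes.
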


\medskip

\begin{theorem}\cite[Theorem 5.2]{gk_boot1}\label{thm:5.2}
Let 
$E_0 \in I_0$, 
$\theta > bd$,
$0 < p' < p < \theta - bd$, 
and
$1 < \alpha < 1 + \dfrac {p'}{p'+2d}$.
Suppose that 
\beq
{\Pp}  \left\{  \Lambda_{L_0}(0)
\mbox{ is  $(\frac {2}{L_0}\cdot \theta \log L_0, E_0)$-regular }
\right\}
\ge
1 - L_0^{-p}, 
\quad
\eeq
Then 
we can find an interval
$I(\delta_1)
:=
[E_0 - \delta_1, E_0 + \delta_1] \cap I_0$
and setting 
$m_0 := \frac {2}{L_0} \theta \log L_0$, 
$L_{k+1} := [ L_k^{\alpha} ]$, 
$k = 0, 1, \cdots$, 
we have
\beq
{\Pp}  \left\{
\Lambda_{L_k}(0)
\mbox{ is 
$(\frac {m_0}{2}, E)$-regular }
\right\}   
\ge
1 - L_k^{-p'}, 
\quad
\mbox{ for }
\forall 
E \in I(\delta_1)
\eeq
and moreover
\beq
{\Pp}   \left\{   
R
\left(
\frac {m_0}{2}, L_k, I(\delta_1), x, y
\right)
\right\}  
\ge
1 - L_k^{-2p'}, 
\quad
|x - y| > L_k, 
\quad
k = 0, 1, \cdots
\eeq
\end{theorem}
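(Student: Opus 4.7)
The plan is to follow the standard multiscale analysis (MSA) induction of von Dreifus--Klein type, adapted to the one-dimensional RPM setting and refined as in \cite{gk_boot1}. The two conclusions are proved together by a simultaneous induction on the scales $L_{k+1} = [L_k^{\alpha}]$, with the energy-uniformity and the pair-estimate handled at each step.

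First I would perform a deterministic perturbation from the single reference energy $E_0$ to an interval. Writing the first-resolvent identity
\[
R_{0,L_k}(E) - R_{0,L_k}(E_0) = (E-E_0)\, R_{0,L_k}(E)\, R_{0,L_k}(E_0),
\]
and using Wegner's estimate (Corollary \ref{corollary:wegner_v2}) to bound $\| R_{0,L_k}(E_0)\|$ off an exceptional set of small probability, one checks that if $\Lambda_{L_k}(0)$ is $(m_0, E_0)$-regular then it is $(m_0/2, E)$-regular for all $E$ within distance $\delta_1 \sim e^{-c L_k^{\beta}}$. At the initial scale this fixes a single interval $I(\delta_1) = [E_0 - \delta_1, E_0 + \delta_1]\cap I_0$ that will be used at all subsequent scales.

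The inductive step from $L_k$ to $L_{k+1}$ uses the Simon--Lieb (geometric resolvent) inequality to iterate regularity across scales: $\Lambda_{L_{k+1}}(0)$ can be covered by roughly $L_{k+1}/L_k$ boxes $\Lambda_{L_k}(x_j)$ placed at spacing exceeding the independence-at-a-distance length $\delta_{\mathrm{IAD}} := \max\{L_+, L_-\}$, so the associated local Hamiltonians are independent. A chain of resolvent identities shows that $\Lambda_{L_{k+1}}(0)$ fails to be $(m_0/2, E)$-regular only if two subboxes are both $(m_0/2, E)$-singular \textit{and} $E$ is within $e^{-c L_{k+1}^{\beta}}$ of $\sigma(H_{\Lambda_{L_{k+1}}(0)})$. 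The first event has probability at most $\binom{O(L_{k+1}/L_k)}{2}\bigl(L_k^{-p'}\bigr)^2$ by IAD, the second is controlled by the Wegner estimate. To upgrade from a single energy to all $E \in I(\delta_1)$, one discretizes $I(\delta_1)$ on a grid of spacing $e^{-c L_{k+1}^\beta}$ and applies the deterministic perturbation, paying only a polynomial factor in $L_{k+1}$ in the union bound. Requiring the sum of these error terms to be bounded by $L_{k+1}^{-p'}$ forces $\alpha(p'+2d) - 2d < \alpha p'$, i.e.\ $\alpha < 1 + p'/(p'+2d)$, which is precisely the assumed constraint on $\alpha$.

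For the pair estimate at scale $L_k$, I would note that $|x-y|>L_k \geq \delta_{\mathrm{IAD}}$ forces the local Hamiltonians on $\Lambda_{L_k}(x)$ and $\Lambda_{L_k}(y)$ to be independent, so the event that both are $(m_0/2, E)$-singular for some $E\in I(\delta_1)$ is bounded by the square of the single-box failure probability, giving $L_k^{-2p'}$. The main obstacle is tracking the balance between the weak (subexponential) Wegner estimate in Corollary \ref{corollary:wegner_v2}, the Simon--Lieb combinatorial factor $O(L_{k+1}/L_k)$, the energy-grid discretization cost, and the IAD-induced need to remove thin ``buffer'' strips at every scale; all of these must fit simultaneously under the growth rate $L_{k+1}=[L_k^\alpha]$ so that the inductive loss $L_k^{-p'} \to L_{k+1}^{-p'}$ closes, which is what singles out the admissible range of $\alpha$.
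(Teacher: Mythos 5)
Your overall architecture is the right one: the paper itself offers no proof of this statement (it is quoted from Germinet--Klein \cite{gk_boot1}, the only RPM-specific remark being that independence at a distance holds with $\delta = \max\{L_+, L_-\}$, which you correctly build in), and the cited proof is exactly the von Dreifus--Klein style induction you sketch --- SLI iteration over $O(L_{k+1}/L_k)$ subboxes, IAD for the probabilistic input, Wegner for the resolvent of the large box, and the constraint on $\alpha$ arising from closing the estimate $(L_{k+1}/L_k)^{2d}L_k^{-2p'} + \text{Wegner} \le L_{k+1}^{-p'}$.

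There is, however, a genuine gap in how you obtain uniformity over $E \in I(\delta_1)$. You propose to discretize $I(\delta_1)$ on a grid of spacing $e^{-cL_{k+1}^{\beta}}$ and take a union bound, ``paying only a polynomial factor.'' But $\delta_1$ is fixed at the initial scale (it depends on $L_0$, not on $k$), so the grid contains of order $\delta_1 e^{cL_{k+1}^{\beta}}$ points --- exponentially many in $L_{k+1}$ --- while the per-energy failure probabilities available in this polynomially-decaying regime are only $L_k^{-p'}$ for one box and $L_k^{-2p'}$ for a pair; the union bound therefore does not close. The same issue affects your pair estimate: squaring the single-box bound gives $L_k^{-2p'}$ at each \emph{fixed} $E$, whereas the event $R(\frac{m_0}{2}, L_k, I(\delta_1), x, y)$ requires the regularity dichotomy to hold \emph{simultaneously for all} $E \in I(\delta_1)$. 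The mechanism that actually works (and is the one used in von Dreifus--Klein and Germinet--Klein) avoids any union over energies: if both $\Lambda_{L_k}(x)$ and $\Lambda_{L_k}(y)$ are singular at some common $E \in I(\delta_1)$, the exponential-decay inequality forces $E$ to lie within $e^{-cL_k}$ of both $\sigma(H_{\Lambda_{L_k}(x)})$ and $\sigma(H_{\Lambda_{L_k}(y)})$, hence the two spectra come within $2e^{-cL_k}$ of each other; by IAD one conditions on one box and applies the Wegner estimate (Corollary \ref{corollary:wegner_v2}) to the other, and this single application of Wegner replaces the entire energy grid. A smaller point: your displayed closing condition ``$\alpha(p'+2d)-2d < \alpha p'$'' simplifies to $\alpha < 1$, contradicting $\alpha>1$; the correct inequality is $2d(\alpha-1)+\alpha p' < 2p'$, which is what yields $\alpha < 1 + p'/(p'+2d)$.
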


\medskip

\begin{defn}\label{defn:subexp1}
For any $\zeta \in (0,1)$, 
$E \in {\R}$, 
$x \in {\Z}^d$, 
$L \in 6 {\N}$, the interval $\Lambda_L(x)$ is 
{\bf  $(\zeta, E)$-subexponentially suitable}  if
\beq
\;\stackrel{def}{\Longleftrightarrow}\;
\| 
\Gamma_{x, L} R_{x,L}(E) \chi_{x, L/3}
\|_{x, L}
\le
e^{- L^{\zeta}}
\eeq
\end{defn}

\medskip

\begin{theorem}\cite[Theorem 5.6]{gk_boot1}\label{thm:5.6}
$\zeta \in (0,1)$, 
$E \in {\R}$, 
$x \in {\Z}^d$, 
$L \in 6 {\N}$.
$\Lambda_L(x)$
is 
{\bf$(\zeta, E)$-subexponentially suitable}
\beq
\;\stackrel{def}{\Longleftrightarrow}\;
\| 
\Gamma_{x, L} R_{x,L}(E) \chi_{x, L/3}
\|_{x, L}
\le
e^{- L^{\zeta}}.
\eeq
\end{theorem}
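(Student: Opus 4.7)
The statement as displayed in the excerpt is, despite its \texttt{theorem} environment, a definition: the biconditional is flagged with $\stackrel{def}{\Longleftrightarrow}$, and the content exactly parallels Definition \ref{defn:suitable1} for ``$(\theta,E)$-suitable'' and Definition \ref{defn:regular1} for ``$(m,E)$-regular''. My proposal is therefore to treat this as a stipulative definition and simply adopt the notation: a box $\Lambda_L(x)$ will be called \emph{$(\zeta,E)$-subexponentially suitable} iff the norm $\|\Gamma_{x,L} R_{x,L}(E)\chi_{x,L/3}\|_{x,L}$ is bounded by $e^{-L^\zeta}$. There is nothing to prove at the level of the biconditional itself, since both sides are being identified by fiat.

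In the larger logical structure of Appendix \ref{app:bootstrap1}, this ``theorem'' plays the role of fixing the target condition for the third bootstrap step in the Germinet--Klein scheme. The actual mathematical content implicitly associated to Theorem \ref{thm:5.6} — and which is used in the synthesis Theorem \ref{thm:msa1} — is the probabilistic assertion that, starting from the polynomial probability bound on $(m_0/2,E)$-regularity produced by Theorem \ref{thm:5.2} on an interval $I(\delta_1)$, one may choose new scales $L_{k+1}=[L_k^\alpha]$ so that $\Lambda_{L_k}(x)$ is $(\zeta,E)$-subexponentially suitable for all $E\in I(\delta_1)$ with probability at least $1-L_k^{-p''}$ (or subexponentially close to $1$ when paired with the separation step). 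If one is to \emph{prove} anything here, it is this companion probabilistic statement.

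For that companion statement I would follow the standard bootstrap recipe: iterate the Simon--Lieb inequality (SLI) to chain resolvent bounds across nested boxes at scale $L_k$; use independence at a distance (IAD) with $\delta=\max\{L_+,L_-\}$ (as discussed for the RPM in Appendix \ref{app:bootstrap1}) to decouple well-separated sub-boxes; apply the Wegner estimate in the strengthened form of Corollary \ref{corollary:wegner_v2}, which already provides a subexponential tail $e^{-q\ell^\beta}$, to absorb the ``bad energy'' contribution; and then combine the resulting events so that the exponent on $L_k$ in the conclusion is boosted from polynomial decay $L_k^{-\theta}$ to stretched-exponential decay $e^{-L_k^\zeta}$.

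The main obstacle, as always in the RPM setting, is the alignment of scales with polymer nodes: the IAD radius grows with $\max\{L_+,L_-\}$ rather than being $O(1)$ as in the standard Anderson model, so one must choose the growth factor in $L_{k+1}=[L_k^\alpha]$ and the separation parameter $\ell'$ large enough that overlaps of partial polymers at box boundaries do not spoil the independence used in the combinatorial merging step. Everything else — SLI, EDI, NE, and the adapted Wegner bound — has already been set up in the body of the paper, so the argument reduces to a careful bookkeeping of exponents against the constraint $\zeta\in(0,1)$ in Definition \ref{defn:subexp1}.
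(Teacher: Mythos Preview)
Your reading is correct: despite the \texttt{theorem} environment, this item in the paper is purely a definition (it literally duplicates Definition~\ref{defn:subexp1} verbatim), and the paper offers no proof for it because there is nothing to prove. Your additional discussion of the ``companion'' probabilistic bootstrap step is reasonable context but goes beyond what the paper does at this point; the paper simply records the definition and moves on to Theorem~\ref{thm:5.7}.
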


\medskip

\begin{theorem} \cite[Theorem 5.7]{gk_boot1}\label{thm:5.7}
Let 
$E_0 \in I_0$, 
$\zeta_0 \in (0,1)$, 
$Y \gg 1$.
For 
$0 < \forall \zeta_1 < \zeta_0$, 
suppose that
\beq
{\Pp}   \left\{ \Lambda_{L_0}(0)
\mbox{ is
$(\zeta_0, E_0)$-subexponentially suitable }
\right\}  
>
1 - (3Y-4)^{-2d}
\eeq
Then
setting 
$L_{k+1} := Y L_k$, 
we have
\beq
{\Pp}   \left\{ \Lambda_{L_k}(0)
\mbox{ is
$(\zeta_0, E_0)$-subexponentially suitable }
\right\}  
>
1 - e^{- L_k^{\zeta_1}}, 
\quad
\mbox{ for }
k \gg 1. 
\eeq
\end{theorem}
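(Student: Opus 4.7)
The plan is to proceed by induction on $k$, with the base case $k=0$ being exactly the hypothesis. For the inductive step, assume the subexponential suitability estimate holds at scale $L_k$, and propagate it to $L_{k+1} = Y L_k$. First, I would tile $\Lambda_{L_{k+1}}(0)$ by a family of overlapping sub-boxes $\{\Lambda_{L_k}(x_\alpha)\}_{\alpha=1}^N$, with $N = O(Y^d)$, chosen so that every site of $\Lambda_{L_{k+1}}(0)$ lies in the core region $\Lambda_{L_k/3}(x_\alpha)$ of some sub-box and so that sufficiently separated $x_\alpha$'s yield independent sub-box Hamiltonians under the \textbf{IAD} hypothesis. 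Declare each sub-box \emph{good} if it is $(\zeta_0, E_0)$-subexponentially suitable and \emph{bad} otherwise; by the inductive hypothesis and a union bound (splitting the tiling into $O(1)$ subfamilies of independent boxes), the probability that any sub-box is bad is at most $C Y^d \, e^{-L_k^{\zeta_1}}$.

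On the event that every sub-box is good, iterate the Simon--Lieb inequality \textbf{SLI} along a chain of $O(Y)$ overlapping good sub-boxes connecting $\Lambda_{L_{k+1}/3}(0)$ to the boundary annulus of $\Lambda_{L_{k+1}}(0)$. Each \textbf{SLI} step inserts a factor of order $C L_k^s \cdot e^{-L_k^{\zeta_0}}$, so the composite resolvent norm is bounded by $C (C L_k^s)^{O(Y)} e^{-c Y L_k^{\zeta_0}}$. Because $\zeta_0 < 1$ gives $Y L_k^{\zeta_0} \ge (Y L_k)^{\zeta_0} = L_{k+1}^{\zeta_0}$, and the polynomial prefactor is absorbed for $L_k$ large, this deterministic bound is dominated by $e^{-L_{k+1}^{\zeta_0}}$, establishing $(\zeta_0, E_0)$-subexponential suitability at the next scale. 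A small Wegner-type correction (Corollary \ref{corollary:wegner_v2}) controls the event that $E_0$ sits too close to the spectrum of some $H_{\Lambda_{L_k}(x_\alpha)}$, adding at most a subexponentially small term to the failure probability.

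The remaining task is to verify that the failure probability $C Y^d \, e^{-L_k^{\zeta_1}}$ is bounded by $e^{-L_{k+1}^{\zeta_1}}$ for $L_k$ large; this holds because $(Y L_k)^{\zeta_1} = Y^{\zeta_1} L_k^{\zeta_1}$ is strictly less than $L_k^{\zeta_1} - d \log Y - \log C$ once $\zeta_1 < 1$ and $L_k$ is sufficiently large. The main obstacle is the careful balancing of exponents: one must ensure that the subexponential rate $\zeta_0$ in the resolvent decay is preserved exactly across scales while the probabilistic rate $\zeta_1$, chosen strictly below $\zeta_0$, has enough slack to absorb both the combinatorial loss $O(Y^d)$ from the union bound and the polynomial prefactors from iterated \textbf{SLI}. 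Fixing $Y$ large at the outset and $\zeta_1$ strictly below $\zeta_0$ provides the margins needed for the induction to close uniformly in $k$.
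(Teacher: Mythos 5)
This statement is quoted verbatim from Germinet--Klein \cite[Theorem 5.7]{gk_boot1}; the paper offers no proof of it (the appendix only records the statement and remarks that the hypotheses can be verified for the AB RPM), so your proposal must be judged on its own terms. As written it does not work, for two related reasons, both in the probabilistic half of the induction.

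First, the base case fails. The hypothesis of the theorem is only that the failure probability at scale $L_0$ is below the \emph{constant} $(3Y-4)^{-2d}$; it is not the statement $\Pp\{\cdot\}>1-e^{-L_0^{\zeta_1}}$. Your induction hypothesis is the sub-exponential bound at scale $L_k$, so the induction cannot start. This is not a technicality: the entire content of the bootstrap is to \emph{improve} a fixed constant probability estimate into a sub-exponential one, and an induction that merely propagates the target estimate can never produce that improvement.

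Second, the union-bound recursion goes the wrong way. Requiring all $N=O(Y^d)$ sub-boxes to be good gives $p_{k+1}\le CY^d p_k$, i.e.\ the failure probability \emph{grows} by a factor $CY^d>1$ at each step, whereas the target shrinks from $e^{-L_k^{\zeta_1}}$ to $e^{-(YL_k)^{\zeta_1}}=e^{-Y^{\zeta_1}L_k^{\zeta_1}}$, which is vastly smaller since $Y^{\zeta_1}>1$. Your closing inequality, that $(YL_k)^{\zeta_1}=Y^{\zeta_1}L_k^{\zeta_1}$ is ``strictly less than $L_k^{\zeta_1}-d\log Y-\log C$,'' is false for every $Y>1$, $\zeta_1>0$. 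The mechanism that actually makes the scheme close is that the deterministic step must tolerate a controlled number of bad sub-boxes: one shows that $\Lambda_{L_{k+1}}(0)$ fails to be suitable only if there are at least $n$ pairwise independent (under \textbf{IAD}) bad sub-boxes, with $n$ of order $2$ in the first steps and of order $L_{k+1}^{\zeta_1}/L_k^{\zeta_1}$ thereafter, so that the failure probability is bounded by $\binom{N}{n}p_k^{\,n}\le (Np_k)^n$. It is exactly here that the hypothesis $p_0<(3Y-4)^{-2d}$ enters, to guarantee $Np_k<1$ so that raising to the power $n$ produces the sub-exponential gain. Your deterministic SLI iteration (which is essentially right when all boxes are good) would therefore need to be strengthened to handle chains containing the allowed number of bad boxes, using the Wegner estimate to bound the resolvent on the bad ones.
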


\medskip

For the AB or random dimer model, 
the statements in Theorems 5.2, 5.6, and 5.7 
are valid with slight modification in the proof. 
Moreover, 
the assumption in Theorem 5.7 has been given for AB model by the paper by von Dreifus and Klein \cite{vDK1}.
Therefore, 
by following Theorem 5.7, 
we have the conclusion in Theorem 5.7 for AB model.
We remark that the exponent $\xi$ appearing in the subexponential decay bounds and the eigenvector bounds can be set equal to one for models amenable to the fractional moment method \cite{am1}. However, this method has not been extended to models with singular probability measures such as the AB model.


\section{Bourgain's Lemma}\label{app:bourgain1}
\setcounter{equation}{0}

As a preliminary to proving Theorem \ref{thm:minami1}, 
we give a proof of a result on the the probability of two eigenvalues of $H_\ell$ being in an interval $I \subset \R$. 
This is 
originally given in \cite[Lemma 3]{bourgain1} but modified to fit to our model.

\begin{lemma}\label{lemma:bourgain1}
Let 
$I := (E_0 - \delta, E_0 + \delta)$, 
$\ell > 0$, 
$0 < \beta \le 1$, 
and define length scales as follows: 
$\delta := c_2 e^{- \frac{\ell^{\beta}}{c_1}}$ 
and parameters 
$0 < t_0 < \frac{1}{2}$ 
and 
$0 < \epsilon < 1$. 
Let us 
consider the following event 
$\Omega_\delta$ :
\noindent
There exist two orthonormal vectors  $\xi, \xi' \in \R^\ell$ so that:
\begin{enumerate}
\item They are approximate eigenvectors for $H_\ell$ and $E_0$:
$$
\| (H_\ell - E_0) X \| < \delta, ~~~{\rm for} ~~X = \xi, \xi' $$ ;
\item The eigenvector mass is less distributed near the boundary : If $\xi = (\xi_j)$, then
$$
\max_{0 \leq j \leq t_0 \ell} ( | \xi_j | , | \xi_{l - j} | , | \xi'_j | , | \xi'_{\ell- j} | ) < \frac{1}{10\ell}. 
$$
\end{enumerate}
We then have the bound for sufficiently large  
$\ell$ : 
\bel{eq:two_ev2}
\Pp \{ \Omega_\delta \} 
\leq 
e^{- \frac{2 \rho_1}{c_1 (1 + \epsilon)} \ell^{\beta}} 
+ 
e^{-q t_0 \ell^{\beta}} ,
\ee
where 
$\rho_1$
is the \Holder exponent for the IDS, and 
$q$ 
is defined in \eqref{eq:const_q1}.
\end{lemma}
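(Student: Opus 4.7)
The plan is to reduce the event $\Omega_\delta$ to a joint spectral condition on two independent boundary blocks and then apply the Wegner estimate (Corollary \ref{corollary:wegner_v2}) twice, exploiting independence. First I would decompose $\Lambda_\ell$ at the polymer nodes nearest to $t_0\ell$ and $\ell - t_0\ell$ into three sub-intervals: a left block $B_L$, a middle block $M$, and a right block $B_R$, with $|B_L|, |B_R| \sim t_0\ell$ and $|M| \sim (1-2t_0)\ell$. Since the cuts lie on polymer nodes and $|M| \gg \max(L_+,L_-)$, the independence-at-distance property of the AB RPM ensures that $H_{B_L}$, $H_M$, and $H_{B_R}$ are mutually independent.

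Next I would exploit the approximate eigenvalue equation $(H_\ell - E_0)\xi = O(\delta)$ together with condition (2) to produce, on $\Omega_\delta$, normalized approximate eigenvectors of $H_{B_L}$ and $H_{B_R}$ with residual $O(\delta)$, hence eigenvalues of each boundary operator within $C\delta$ of $E_0$. The naive restriction $\xi|_{B_L}$ satisfies $\|(H_{B_L} - E_0)\xi|_{B_L}\|_{\ell^2} \leq \delta + C/\ell$ (the correction $C/\ell$ being the interface coupling at $n=t_0\ell - 1$), but its $\ell^2$-norm $O(\sqrt{t_0/\ell})$ is too small to plug directly into Wegner. Following the Bourgain strategy of \cite{bourgain1}, I would instead form a $2 \times 2$ boundary-data matrix encoding the values of $\xi, \xi'$ at the two interior interfaces, and use the approximate orthonormality of $\xi|_M, \xi'|_M$ (which itself follows from condition (2) and $\|\xi\|=\|\xi'\|=1$) to show that this matrix is nondegenerate with quantitatively bounded inverse. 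Suitable linear combinations of $\xi$ and $\xi'$ then yield two vectors $\eta, \eta'$ essentially supported in $B_L$ and $B_R$ respectively, each of unit $\ell^2$-norm and each an approximate eigenvector of the corresponding boundary operator with residual $O(\delta)$.

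Having reduced to independent spectral conditions on $B_L$ and $B_R$, independence and Step~2 give
\begin{equation*}
\Pp(\Omega_\delta) \leq \Pp\{ d(E_0, \sigma(H_{B_L})) \leq C\delta\}\cdot\Pp\{ d(E_0, \sigma(H_{B_R})) \leq C\delta\}.
\end{equation*}
Applying Corollary \ref{corollary:wegner_v2} to each boundary block with scale $\sigma := 1/(c_1 t_0^\beta)$, chosen so that $e^{-\sigma(t_0\ell)^\beta} \asymp \delta$, each factor is bounded by $C_W(e^{-\rho_1 \ell^\beta/c_1} + e^{-q(t_0\ell)^\beta})$. Taking the product, absorbing the constants $c_2, C_W$ and the $O(\ell^{-1})$ interface correction into the factor $(1+\epsilon)$ in the exponent, and using $(t_0\ell)^\beta \geq t_0\ell^\beta$, yields the announced bound \eqref{eq:two_ev2}. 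The main obstacle will be the linear-algebraic reduction in the second step: one must guarantee that the boundary-data matrix of $(\xi,\xi')$ is quantitatively invertible, and that the resulting linear combinations preserve the $O(\delta)$ residual rather than merely the weaker $O(\delta + \ell^{-1})$. This is precisely the mechanism that forces the loss of a factor $(1+\epsilon)$ in the final exponent and distinguishes the present Bourgain-type argument from a direct Wegner bound.
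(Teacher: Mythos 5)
Your reduction to the two \emph{boundary} blocks $B_L$ and $B_R$ cannot work, and this is the central gap. Condition (2) of the lemma says precisely that $\xi$ and $\xi'$ carry almost no mass on $B_L\cup B_R$: every entry there is $<\tfrac{1}{10\ell}$, so for any linear combination $\eta=a\xi+b\xi'$ one has $\|\eta|_{B_L}\|\le (|a|+|b|)\sqrt{t_0\ell}/(10\ell)$, while $\|\eta\|^2=|a|^2+|b|^2$ by orthonormality. Hence no linear combination of $\xi,\xi'$ can be ``essentially supported in $B_L$'' with unit norm, and there is in fact no reason on $\Omega_\delta$ for $H_{B_L}$ or $H_{B_R}$ to have spectrum near $E_0$ at all. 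The auxiliary device you propose does not repair this: the $2\times 2$ matrix of boundary data at the interfaces $\approx t_0\ell$ and $\approx(1-t_0)\ell$ has all four entries bounded by $\tfrac{1}{10\ell}$ (again by condition (2)), so its determinant is $O(\ell^{-2})$ and could vanish; approximate orthonormality of $\xi|_M,\xi'|_M$ gives no lower bound on the values at two specific sites, hence no quantitative invertibility.

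The correct (and the paper's) use of the two orthonormal vectors is different: one splits $[0,\ell]$ into \emph{two halves} at an interior polymer node, not into two boundary blocks and a middle. Concretely, $\|\xi\|=1$ and condition (2) produce a site $v\in(t_0\ell,(1-t_0)\ell)$ with $|\xi_v|\ge\ell^{-1/2}$, and orthogonality of $\xi,\xi'$ produces a second interior site $v_1$ with $|\xi_v\xi'_{v_1}-\xi'_v\xi_{v_1}|\ge \ell^{-1}$; the $2\times2$ matrix that must be invertible is the one at $(v,v_1)$, where the mass is large, not at the interfaces. One then picks a polymer node $j_1$ between $v$ and $v_1$ and forms $\eta:=(\xi_{j_1}\xi'-\xi'_{j_1}\xi)/\|\cdot\|$, which vanishes at the cut, so its restrictions to $[1,j_1-1]$ and to the complementary half are exact approximate eigenvectors (residual $O(\delta)$, no interface correction) of the two \emph{independent} restricted operators; the lower bounds $|\eta_v|,|\eta_{v_1}|\gtrsim \ell^{-3/2}$ guarantee each restriction has norm $\gtrsim\ell^{-2}$, so after normalization the residuals are $O(\ell^2\delta)$ and Wegner applies to both halves (each of length between $t_0\ell$ and $(1-t_0)\ell$, which is where $t_0$ enters the exponent). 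A case analysis is still needed because the phases of $(\xi_{j_1},\xi'_{j_1})$ may accidentally align with those at $v$ or $v_1$, in which case one cuts using two adjacent polymer nodes $j_1-1,j_1$. Your final probability computation (independence, two Wegner factors with rescaled $\sigma$, and the $(1+\epsilon)$ loss) is the right endgame, but it is only reachable after this interior-splitting step, not after the boundary-block reduction you describe.
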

%
%

%
\begin{remark} 
\begin{enumerate}
\item We do not use localization estimate on the approximate eigenvectors $\xi, \xi'$, but conditions (1), (2) usually follow from it. 

\item This result was proved by Bourgain for the AB model, that is, rank one perturbations at each point of $\Z$ with independent, Bernoulli distributed coefficients. In particular, in the proof, it is important that the local \Schr operators associated with two disjoint subintervals of $[0, \ldots, \ell]$ of the from $[0, j_1]$ and $[j_1+1, \ell]$, are independent. For the AB model this is immediate. For the RPM, however, this is not necessarily true. We will show that we may choose the endpoints $j_1$ and $j_1 + 1$ to be polymer nodes, thereby insuring the independence of the restrictions of the \Schr operator to these two subintervals. 
\end{enumerate}

\end{remark}
%


\medskip

\begin{proof}
\noindent
1. Since $\|\xi\| = 1$, and condition (2) holds, for $t_0 \in (0, \frac{1}{2})$, there exists an index $v$, with  
%
\beq
t_0 \ell < v < (1 - t_0) \ell
\quad
\mbox{and}
\quad
| \xi_v | \ge \dfrac {1}{\sqrt{\ell}}.
\eeq
Moreover, since $\xi \perp \xi'$,  we have
\bel{eq:v_lb1}
\| \xi_v \xi' - \xi'_v \xi \|  =
\left( |\xi_v|^2 + |\xi'_v|^2  \right)^{1/2}
\ge  |\xi_v|  \ge   \dfrac {1}{\sqrt{\ell}}  .
\ee
Hence,  we obtain : There exists an index $v_1$, with $ t_0 \ell < v_1 < (1 - t_0) \ell$, so that 
\bel{eq:33}   
| \xi_v \xi'_{v_1} - \xi'_v \xi_{v_1} |
\ge
\frac {1}{\ell}.
\ee
In fact, 
if this is not so,  let $w := 
\xi_v \xi' - \xi'_v \xi$ . Then any component $w_k$ satisfies 
$| w_k | < 1 / \ell$, for all $k \in [0, \ell]$. This implies 
$$
\| w \| = \left( \sum_{k=0}^\ell | w_k|^2 \right)^\frac{1}{2} \leq \left( \ell \cdot \frac{1}{\ell^2} \right)^\frac{1}{2}  = \frac{1}{\ell^{\frac{1}{2}}} ,
$$
 which is a contradiction to \eqref{eq:33}. 
%
Without loss of generality,  we can assume  
$v < v_1$.
We define a phase $\theta_j$ by 
\beq
\xi_j + i \xi'_j
=:
(|\xi_j|^2 + |\xi'_j|^2 )^{1/2}
e^{i \theta_j}, 
\quad
1 \le j \le \ell.
\eeq
Then by \eqref{eq:33}, we have
%
%
%
\beal{eq:trig1}
\frac {1}{\ell}
& \le &
| \xi_v \xi'_{v_1} - \xi'_v \xi_{v_1} |  \nonumber
\\
&=&
\left(
|\xi_v|^2 + |\xi'_v|^2
\right)^{1/2}
\left(
|\xi_{v_1}|^2 + |\xi'_{v_1}|^2
\right)^{1/2}
\left|
\cos \theta_v \cdot \sin \theta_{v_1} 
-
\sin \theta_v \cdot \cos \theta_{v_1}
\right|   \nonumber
\\
&=&
\left(
|\xi_v|^2 + |\xi'_v|^2
\right)^{1/2}
\left(
|\xi_{v_1}|^2 + |\xi'_{v_1}|^2
\right)^{1/2}
| \sin (\theta_v - \theta_{v_1}) |
\\
& \le &
2
| \sin (\theta_v - \theta_{v_1}) |.  \nonumber
\eeal
In the last inequality, 
we used the fact that 
$\| \xi \| = \| \xi' \| = 1$
implies
$|\xi_v| \le 1$, 
$| \xi'_v | \le 1$
for all indices $v$.  This implies:
\bel{eq:angle_lb1}
| \sin ( \theta_v - \theta_{v_1}) | \ge \frac {1}{2\ell}.
\ee
\noindent
2. We begin the construction of approximate eigenvectors for the restriction of $H_\ell$ to subintervals of $[0, \ldots, \ell]$. The goal is to find two subintervals on which the restricted \Schr operators are independent. For this reason, we consider inner polymer nodes. The endpoints $0$ and $\ell$ do not have to be polymer nodes. 

\medskip

\noindent
{\bf Case 1 } Suppose there is a polymer node $j_1$ so that $v < j_1 < v_1$, and the phases satisfy
\beq
| \sin (\theta_v - \theta_{j_1}) |
>
\frac {1}{10 \ell}, 
\quad
| \sin (\theta_{v_1} - \theta_{j_1}) |
>
\frac {1}{10 \ell}. 
\eeq
We define another vector $\eta$, a linear combination of $\xi$ and $\xi'$, by
\beq
\eta
&:=&
\frac {
\xi_{j_1} \xi' - \xi'_{j_1} \xi
}
{
\|
\xi_{j_1} \xi' - \xi'_{j_1} \xi
\|
}
=
\frac {\xi_{j_1} \xi' - \xi'_{j_1} \xi }{( |\xi_{j_1}|^2 + | \xi'_{j_1} |^2 )^{1/2}} , 
\eeq
so that $\| \eta \| = 1$ and $\eta_{j_1} = 0$. This vector $\eta$ has entries satisfying 
%
\beal{eq:eta1}
| \eta_v |
&=&
\frac {
\left(
|\xi_v|^2 + |\xi'_v|^2
\right)^{1/2}
\left(
|\xi_{j_1}|^2 + |\xi'_{j_1}|^2
\right)^{1/2}
| \sin (\theta_v - \theta_{j_1}) |
}
{
\left(
|\xi_{j_1}|^2 + |\xi'_{j_1}|^2
\right)^{1/2}
}
\ge
\frac {1}{\sqrt{\ell}}
\cdot
\frac {1}{10\ell}
=
\frac {1}{10\ell^{\frac{3}{2}}} ;  \nonumber
\\
| \eta_{v_1} |
&=&
\frac {
\left(
|\xi_{v_1}|^2 + |\xi'_{v_1}|^2
\right)^{1/2}
\left(
|\xi_{j_1}|^2 + |\xi'_{j_1}|^2
\right)^{1/2}
| \sin (\theta_{v_1} - \theta_{j_1}) |
}
{
\left(
|\xi_{j_1}|^2 + |\xi'_{j_1}|^2
\right)^{1/2}
}
\ge
\frac {1}{\sqrt{\ell}}
\cdot
\frac {1}{10\ell}
=  \frac {1}{10 \ell^{\frac{3}{2}}}.  \nonumber
\eea
Here we used \eqref{eq:33} and  \eqref{eq:trig1}:
\beq
\frac {1}{\ell}
\le
|
\xi_v \xi'_{v_1} - \xi'_v \xi_{v_1}
|
\le
\left(
|\xi_{v}|^2 + |\xi'_{v}|^2
\right)^{1/2}
\left(
|\xi_{v_1}|^2 + |\xi'_{v_1}|^2
\right)^{1/2}
\le
\sqrt{2}
\left(
|\xi_{v_1}|^2 + |\xi'_{v_1}|^2
\right)^{1/2}.
\eeq
Furthermore, $\eta$ is an approximate eigenvector for $H_\ell$:
\bel{eq:ev1} 
\|  (H_{\ell} - E_0) \eta  \|   \le  \frac {|\xi_{j_1}|  \| (H_{\ell} - E_0) \xi' \|
+
|\xi'_{j_1}|  \| (H_{\ell} - E_0) \xi \|  }{  ( |\xi_{j_1}|^2 + | \xi'_{j_1} |^2 )^{1/2}}
<
2 \delta
\ee
We define restrictions of $H_\ell$ to small intervals determined by the index $j_1$ by letting 
\beq
H^{(1)}
&:=&
H |_{[1, j_1-1]}, 
\quad
H^{(2)}
:=
H |_{[j_1+1, \ell]}  ,
\eeq
  and corresponding vectors 
  \beq
\eta^{(1)}
&:=&
\eta |_{1, j_1-1]}, 
\quad
\eta^{(2)}
:=
\eta |_{[j_1+1, \ell]}  .
\eeq
By \eqref{eq:eta1}-\eqref{eq:eta2}, these vectors have norms bounded above and below by
\bea\label{eq:eta_norm1}
 \frac{1}{10 \ell^{\frac{3}{2}} } \leq &   \| \eta^{(1)} \| \nonumber \\
   \frac{1}{10 \ell^{{2}}} & \leq     \| \eta^{(2)} \|   .
\eea
We let $\tilde{\eta}^{(k)} := \eta{(k)} \| \eta^{(k)} \|^{-1}$ denote the corresponding unit vectors. 

These vectors $\tilde{\eta}^{(1)}$ and $\tilde{\eta}^{(1)}$ are approximate eigenvectors for $H^{(1)}$ and $H^{(2)}$, respectively: 
\bel{approx_ev1}
\left\| (H^{(1)} - E_0)  \tilde{\eta}^{(1)} \right\|
 \leq 2 \ell^2 \delta,  
\ee
and
\bel{approx_ev2}
\left\| (H^{(2)} - E_0)  \tilde{\eta}^{(2)} \right\|
 \leq 2 \ell^\frac{3}{2} \delta. 
\ee
%
We take 
$\ell \gg 1$
s.t.
\beq
2 \cdot c_2 \cdot e^{- \frac {\ell^{\beta}}{c_1}} \cdot 10 \sqrt{2}\ell^2
<
e^{ 
- \frac {\ell^{\beta}}{c_1 (1 + \epsilon)}
}
\eeq
We take  $0 < t < 1$  so that  the size  of the box where 
$H^{(1)}$ lives is equal to 
$j_1 = t \cdot \ell$
and  that where  $H^{(2)}$
lives is equal to $\ell - j_1 = (1-t) \cdot \ell$.
By construction, the two \Schr operators $H{(k)}$, for $k=1,2$, are independent so that the second version of the Wegner estimate in 
Corollary \ref{corollary:wegner_v2} can be used to the joint probability that each \Schr operator has an eigenvalue in a small interval:
\beq
&&
{\Pp}\Bigl\{  d (E_0, \sigma (H^{(1)})) \le 2 \delta \ell^{3/2}, 
\quad
d (E_0, \sigma (H^{(2)})) 
\le 
2 \delta \ell^{2}
\Bigr\}
\\
& \le &
{\Pp} \Bigl\{  d (E_0, \sigma (H^{(1)})) 
\le e^{ - \frac {\ell^{\beta}}{c_1 (1 + \epsilon)}
}, 
\quad
d (E_0, \sigma (H^{(2)})) 
\le 
e^{  - \frac {\ell^{\beta}}{c_1 (1 + \epsilon)}
}
\Bigr\}  
\\
&=&
{\Pp}  \Bigl\{  d (E_0, \sigma (H^{(1)})) \le  
 e^{ - \frac {1}{c_1 (1 + \epsilon)t^{\beta}} \cdot (t \ell)^{\beta}  }, 
\quad
d (E_0, \sigma (H^{(2)})) 
\le e^{ - \frac {1}{c_1 (1 + \epsilon)(1 - t)^{\beta}} 
\cdot 
((1-t)\ell)^{\beta}
}
\Bigr\}
\\
& \le &
\left( e^{ - \frac {\rho_1}{c_1 (1 + \epsilon) t^{\beta}} \cdot (t \ell)^{\beta} } +
e^{- q (t \ell)^{\beta}}  \right)
\left( e^{ - \frac {\rho_1}{c_1 (1 + \epsilon) (1-t)^{\beta}}  ((1-t) \ell)^{\beta}
}
+
e^{- q ((1-t) \ell)^{\beta} }
\right)
\\
&=&
e^{ 
- \frac {2\rho_1}{c_1 (1 + \epsilon) } 
\cdot  \ell^{\beta} 
}
+
e^{ - \bigl(  \frac {\rho_1}{c_1 (1 + \epsilon)}
+ q(1-t)^{\beta}
\bigr)
\ell^{\beta} }
+
e^{ - \bigl(  \frac {\rho_1}{c_1 (1 + \epsilon)}
+ 
q t^{\beta}  \bigr)  \ell^{\beta}  }  
+  
e^{- q (t^{\beta}+(1-t)^{\beta})\ell^{\beta}}
\\
& \le &
e^{ 
- \frac {2\rho_1}{c_1 (1 + \epsilon) } 
\cdot  \ell^{\beta} 
}
+
e^{
- \frac {2 \rho_1}{c_1 (1+\epsilon)} \ell^{\beta}}
+
e^{-2 q t_0 \ell^{\beta}}
+
e^{-q \ell^{\beta}}.
\eeq
In the last inequality, 
we used 
$xy \le 
\frac {x^2 +y^2}{2}$, 
and 
$t_0 < t < 1 - t_0$, 
$t_0 < 1-t$, 
and 
$\beta \le 1$.
$q$ 
is defined in \eqref{eq:const_q1}. 
This established the bound in Case 1.

\medskip

\noindent
{\bf Case 2  }
Suppose that, for any 
$v \le j \le v_1$,
either of following two situations happens : 
\beq
| \sin (\theta_v - \theta_j) | 
\le
\frac {1}{10\ell} 
\quad
\mbox{ or }
\quad
| \sin (\theta_{v_1} - \theta_j) | 
\le
\frac {1}{10\ell}.
\eeq
We denote by $j_1$ the smallest polymer node index $v < j_1 < v_1$ so that
$| \sin (\theta_{v_1} - \theta_{j_1}) | \le \frac {1}{10\ell} $. 
There are two subcases to consider: i) the index $j_1$ is a left polymer node, and ii) the index $j_1$ is a right polymer node. 

\medskip

\noindent
{\bf Case 2i }
For this case, the index $j_1$ is a left polymer node. We then have that $j_1 - 1$ is a right polymer node index. It follows from the definition of $j_1$ that 
\beq
| \sin (\theta_{v_1} - \theta_{j_1-1}) | 
> \frac {1}{10\ell}.
\eeq
We define two vectors 
\beq
\eta^{(1)}
&:=&
\frac {
\xi_{j_1} \xi' - \xi'_{j_1} \xi
}
{
( |\xi_{j_1}|^2 + |\xi'_{j_1}|^2 )^{1/2}
}, ~~~\mbox{for indices} ~~~ k \in [ 1, \ldots, j_1-1] ,
\eeq
and 
\beq
\eta^{(2)}
&:=&
\frac {
\xi_{j_1-1} \xi' - \xi'_{j_1-1} \xi
}
{
( |\xi_{j_1-1}|^2 + |\xi'_{j_1-1}|^2 )^{1/2}
}, ~~~\mbox{for indices} ~~~ k \in [ j_1, \ldots, \ell]. 
\eeq
%
Note that $\eta^{(1)}_{j_1} = 0 $ and $\eta^{(2)}_{j_1-1} = 0$. Furthermore, we have the lower bounds on the norms of these vectors:
\beq
\| \eta^{(1)} \| &\ge&  | \eta^{(1)}_v |
=  ( |\xi_v|^2 + |\xi'_v|^2 )^{1/2} 
| \sin (\theta_v - \theta_{j_1}) |
\ge   \frac {1}{\sqrt{\ell}}
\Bigl( | \sin (\theta_v - \theta_{v_1}) |  - | \sin (\theta_{v_1} - \theta_{j_1}) |
\Bigr)
\\
& \ge &
\frac {1}{\sqrt{\ell}}
\left(  \frac {1}{\ell} - \frac {1}{10 \ell}  \right)
\ge    \frac {C_1}{\ell^{3/2}}  ,   \\
\| \eta^{(2)} \|
&\ge&  | \eta^{(2)}_{v_1} | =  ( |\xi_{v_1}|^2 + |\xi'_{v_1}|^2 )^{1/2} 
| \sin (\theta_{v_1} - \theta_{j_1-1}) |
\ge   \frac {1}{\sqrt{\ell}}
\Bigl(  | \sin (\theta_{v_1} - \theta_{v}) |  -  | \sin (\theta_{v} - \theta_{j_1-1}) |
\Bigr)  \\
& \ge  &  \frac {1}{\ell}  \left(   \frac {1}{\ell} - \frac {1}{10 \ell}  \right)
\ge   \frac {C_2}{\ell^{2}}   ,
\eeq
for finite constants $C_k > 0$, $k=1,2$, independent of $\ell$.
Here we have used: 
\beq
| \sin (x+y) |
&=&
| \cos x | \cdot | \sin y |
+
| \cos y | \cdot | \sin x |
\le
| \sin x | + | \sin y |.
\eeq

\medskip

\noindent
We define the associated local \Schr operators as restriction to each subinterval:
\beq
H^{(1)}
&:=&
H |_{[1, j_1-1]}, 
~~~~~ \mbox{and}  ~~~~H^{(2)}  :=
H |_{[ j_1, \ell]}. 
\eeq
As above, we let $\tilde{\eta}^{(k)}$ denote the corresponding normalized vectors. As a result, we find that $\eta^{(k)}$ are approximate eigenvectors for $H^{(k)}$, respectively:
%
\beq 
\| (H^{(1)} - E_0) \tilde{\eta}^{(1)} \| < 2 \ell^2 \delta, 
\quad
\| (H^{(2)} - E_0) \tilde{\eta}^{(2)} \| < 2 \ell^\frac{3}{2} \delta  .
\eeq
The rest of the argument is similar to that in Case 1.

  \medskip


\noindent
{\bf Case 2ii}  For this case, the index $j_1$ is a right polymer node. We then have that $j_1 - 1$ is a left polymer node index and it follows from the definition that 
\beq
| \sin (\theta_{v} - \theta_{j_1-1}) | 
> \frac {1}{10\ell}.
\eeq
We define two vectors 
\beq
\eta^{(1)}
&:=&
\frac { \xi_{j_1-1} \xi' - \xi'_{j_1-1} \xi
}
{
( |\xi_{j_1-1}|^2 + |\xi'_{j_1-1}|^2 )^{1/2}
}, ~~~\mbox{for indices} ~~~~ k \in [ 1, \ldots, j_1-1] ,
\eeq
and
\beq
\eta^{(2)}
&:=&
\frac {
\xi_{j_1} \xi' - \xi'_{j_1} \xi
}
{
( |\xi_{j_1}|^2 + |\xi'_{j_1}|^2 )^{1/2}
}, ~~~\mbox{for indices}  ~~~~ k \in [ j_1, \ldots, \ell]. 
\eeq
Note that $\eta^{(1)}_{j_1-1} = 0 $ and $\eta^{(2)}_{j_1} = 0$. Furthermore, we have the lower bounds
\beq
\| \eta^{(1)} \|
&\ge&
| \eta^{(1)}_v |
=
( |\xi_v|^2 + |\xi'_v|^2 )^{1/2} 
| \sin (\theta_v - \theta_{j_1}) |
\ge
\frac {1}{\sqrt{\ell}}
\Bigl(
| \sin (\theta_v - \theta_{v_1}) |
-
| \sin (\theta_{v_1} - \theta_{j_1}) |
\Bigr)
\\
& \ge &
\frac {1}{\sqrt{\ell}}
\left(
\frac {1}{\ell} - \frac {1}{10 \ell}
\right)
\ge
\frac {C_1}{\ell^{3/2}}
\\
\| \eta^{(2)} \|
&\ge&
| \eta^{(2)}_{v_1} |
=
( |\xi_{v_1}|^2 + |\xi'_{v_1}|^2 )^{1/2} 
| \sin (\theta_{v_1} - \theta_{j_1-1}) |
\ge
\frac {1}{\sqrt{\ell}}
\Bigl( | \sin (\theta_{v_1} - \theta_{v}) | - | \sin (\theta_{v} - \theta_{j_1-1}) |
\Bigr)    \\
& \ge &
\frac {1}{\ell}
\left(
\frac {1}{\ell} - \frac {1}{10 \ell}
\right)
\ge  \frac {C_2}{\ell^{2}}.
\eeq
Consequently, the vectors $\eta^{(k)}$ are approximately eigenvectors for $H^{(k)}$, $k=1,2$, and the proof continues as in Case 1.   
\end{proof}

\section{Transport and delocalization}\label{app:deloc1}
\setcounter{equation}{0}

We note that if an initial state is concentrated in an energy interval $I$ containing the set $\mathcal{C}$ of critical energies, then there is dynamical delocalization. That is, the wave packet $e^{-i Ht} P(I) \delta_0$ exhibits nontrivial transport in the sense of \eqref{deloc1}. Let $\{ \delta_j \}_{j \in Z}$ denote the canonical orthonormal basis of $\ell^2 (\Z)$.  The authors of \cite{jss} proved dynamical delocalization in the sense of \eqref{deloc1} for the AB RPM and initial state $\delta_0$. In this appendix, we show that the initial state $\delta_0$ may be replaced by $P_H(I) \delta_j$, for any $j \in \Z$, and for any interval $I \subset \Sigma$ containing all the critical energies. This simple result relies on the complementary fact that if $J \subset \Sigma$ is an interval disjoint from the finite set of critical energies $\mathcal{C} \cap J = \emptyset$, then we have dynamical localization for the initial state $P_H(J) \delta_j$ as proven for the AB RPM, and other models, in \cite[Theorem 2.3]{DSS}.  

\begin{proposition}\label{prop:loc_deloc1}
Let  $I \subset \Sigma$ be any union of intervals satisfying $\mathcal{C} \subset I$.  Then, for any $\alpha > 0$, there is a constant $C_\alpha(I) > 0$ so that for any $j \in \Z$, 
\bel{deloc_app1}
\frac{1}{T} \int_0^T \langle P(I) \delta_j,  e^{i H_\omega t}|X|^q e^{- i H_\omega t} P(I) \delta_j \rangle ~dt \geq C_\alpha (I) T^{q - \frac{1}{2} - \alpha}, a.s.
\ee
\end{proposition}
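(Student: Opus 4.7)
The plan is to combine the transport result of \cite{jss} for initial state $\delta_0$ with the dynamical localization result of \cite[Theorem 2.3]{DSS} on energy intervals disjoint from $\mathcal{C}$. Setting $I^c := \Sigma \setminus I$, the hypothesis $\mathcal{C} \subset I$ guarantees $\mathcal{C} \cap I^c = \emptyset$, so the decomposition $\delta_j = P(I)\delta_j + P(I^c)\delta_j$ splits the wave packet into a ``fast'' piece (spectrally supported on $I$, containing the critical energies) and a ``slow'' piece (spectrally supported on $I^c$, lying entirely in the localized regime). The claim then reduces to showing that the polynomial transport of the whole state $\delta_j$ must be carried by its $P(I)$-component.

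First I would extend the lower bound \eqref{deloc1} from $\delta_0$ to an arbitrary $\delta_j$, $j \in \Z$. This is an ergodicity/stationarity argument: let $\tau$ denote the shift on polymer configurations and $S$ its unitary implementation on $\ell^2(\Z)$, so that $S^* H_\omega S = H_{\tau \omega}$ modulo the measure-zero randomization at the origin and $S^* |X|^q S = |X+1|^q$. Since $\tau$ preserves the law of $\omega$ and $|X+j|^q \geq 2^{-q}|X|^q - C_{j,q}$ for $|X|$ large, the almost-sure bound at $\delta_0$ transfers to $\delta_j$ with a random, $j$-dependent constant $\widetilde{C}_\alpha(j)$. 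The passage between the Abel average in \eqref{deloc1} and the Cesaro average in \eqref{deloc_app1} is standard for nonnegative integrands with polynomial growth, at the cost of losing an arbitrarily small exponent that can be absorbed into $-\alpha$.

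Second, since $I^c$ is a finite union of intervals disjoint from the finite critical set, \cite[Theorem 2.3]{DSS} yields strong dynamical localization on $I^c$. In particular, for every $q > 0$ there exists an almost surely finite random constant $K_q = K_q(\omega, j, I)$ such that
\begin{equation}
\sup_{t \in \R}\, \bigl\langle P(I^c)\delta_j,\, e^{iH_\omega t}|X|^q e^{-iH_\omega t} P(I^c)\delta_j\bigr\rangle \leq K_q.
\end{equation}
Applying the elementary Hilbert-space bound $\||X|^{q/2}(\psi_1+\psi_2)\|^2 \leq 2\||X|^{q/2}\psi_1\|^2 + 2\||X|^{q/2}\psi_2\|^2$ to $\psi_1 = e^{-iH_\omega t}P(I)\delta_j$ and $\psi_2 = e^{-iH_\omega t}P(I^c)\delta_j$, then time-averaging on $[0,T]$, yields
\begin{equation*}
\frac{1}{T}\int_0^T \bigl\langle P(I)\delta_j,\, e^{iH_\omega t}|X|^q e^{-iH_\omega t} P(I)\delta_j\bigr\rangle\,dt \;\geq\; \tfrac12\, \widetilde{C}_\alpha(j)\, T^{q - \frac12 - \alpha} - K_q,
\end{equation*}
which gives \eqref{deloc_app1} for $T$ sufficiently large after absorbing $K_q$ into the constant.

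The main, and essentially only, obstacle is the translation step: because $L_+$ and $L_-$ may differ, a shift by a single site does not preserve the polymer tiling, so $\tau$ must be defined carefully so that $H_{\tau\omega}$ really has the same distribution as $H_\omega$. Either one shifts by polymer-node increments and absorbs a remainder of size at most $L_{\max} := \max\{L_+, L_-\}$ into the random constants, or one appeals directly to the ergodicity of the underlying polymer Bernoulli process. Neither option affects the polynomial exponent, and all remaining steps are routine.
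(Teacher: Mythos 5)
Your proof is correct and follows essentially the same route as the paper: decompose the state via $P(I)+P(I^c)=\mathbf{1}$, invoke the lower bound \eqref{deloc1} of \cite{jss} for the full state, control the $P(I^c)$ piece by dynamical localization on intervals avoiding $\mathcal{C}$, and transfer from $\delta_0$ to $\delta_j$ by covariance. The only differences are cosmetic and, if anything, in your favor: you handle the cross term with the inequality $\||X|^{q/2}(\psi_1+\psi_2)\|^2 \le 2\||X|^{q/2}\psi_1\|^2+2\||X|^{q/2}\psi_2\|^2$ (needing only the $q$-th moment of the localized piece, whereas the paper's Cauchy--Schwarz on the explicit cross term implicitly uses the $2q$-th moment), and you make explicit the Abel-to-Ces\`aro passage and the polymer-node subtlety in the shift, which the paper leaves implicit.
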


\begin{proof}
We first consider $j=0$. Since $P(I) + P(I^c) = {\bf 1}$, we have
\bea\label{eq:spect_decomp1}
\langle P(I) \delta_0,  e^{i H_\omega t}|X|^q e^{- i H_\omega t} P(I) \delta_0 \rangle 
 & = & \langle \delta_0,  e^{i H_\omega t}|X|^q e^{- i H_\omega t} \delta_0 \rangle 
  \nonumber \\
   & & - 2 \Re \langle P(I^c) \delta_0,  e^{i H_\omega t}|X|^q e^{- i H_\omega t} \delta_0 \rangle  \nonumber \\
    & & + \langle  P(I^c)\delta_0,  e^{i H_\omega t}|X|^q e^{- i H_\omega t} 
    P(I^c)\delta_0 \rangle  \nonumber   \\
     & =: &  I - II + III.
     \eea
 Dynamical localization in $I^c$ implies there is a constant $C_{III} >0$ so that 
 \bel{eq:dyn_loc1}
 \langle P(I^c) \delta_0,  e^{i H_\omega t}|X|^q e^{- i H_\omega t} P(I^c) \delta_0  \rangle    \leq C_{III} . 
 \ee     
 To bound $II$, the Cauchy-Schwarz inequality gives
 \bea\label{eq:dyn_loc2}
 | \langle P(I^c) \delta_0,  e^{i H_\omega t}|X|^q e^{- i H_\omega t} \delta_0 \rangle  |  & \leq &  \||X|^q  e^{-i H_\omega t} P(I^c) \delta_0 \| ~ \| e^{- i H_\omega t} \delta_0 \|  \nonumber \\
  & \leq & C_{II} , 
 \eea
 as follows from dynamical localization.  From \eqref{eq:dyn_loc1},  \eqref{eq:dyn_loc2}, the lower bound estimate \eqref{deloc1} of \cite{jss}, and the equality \eqref{eq:spect_decomp1}, we obtain
 \bel{eq:lower_bd1}
 M_q(T) \geq \widetilde{C}_\alpha T^{q - \frac{1}{2} - \alpha}, a.s. ,
 \ee
 for a possibly different constant. The result follows for arbitrary $j \in \Z$ from the unitarity of the translation group, the relation $\delta_j = U_j \delta_0$, and the covariance relation of $H_\omega$ with respect to this group. 
 \end{proof}

\section{Sharpness of the transition}\label{app:sharp1}
\setcounter{equation}{0}

In this appendix, we provide the details of the results mentioned in Remark \ref{rmk:sharp1}.
These remarks  concern  the transition in the unfolded LES as the centering energy $E_0$ passes from a noncritical energy to a critical energy. 
 In the absence of Assumption \ref{assump:dos1}, we prove that the unfolded LES changes discontinuously from an infinitely divisible point process at noncritical energies to a non-infinitely divisible point process (a clock process) at a critical energy $E_0$. We prove this by showing that certain limiting random variables, constructed from the unfolded LES at noncritical energies, are infinitely divisible whereas the clock process is not infinitely divisible.

\begin{defn}\label{defn:idrv1}
A random variable
$X$ is infinitely divisible if and only if for any $J \in {\N}$, 
there exist iid  random variables $\{ Y_{N, j} \}_{j=1}^J$
so that
\beq
X  &  \stackrel{d}{=}  &
\sum_{j=1}^J
Y_{N, j}.
\eeq
\end{defn}

We characterize the random variables associated with a clock process. The following theorem characterizes bounded infinitely divisible random variables.

\begin{theorem}\label{thm:bddRV} \cite[Corollary 3]{bs60} %
A bounded random variable is infinitely divisible if and only if it is constant almost surely.
\end{theorem}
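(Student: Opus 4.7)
\textbf{Proof plan for Theorem~\ref{thm:bddRV}.}

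The ``if'' direction is immediate: if $X = c$ a.s., then the constants $Y_{J,j} := c/J$ for $j = 1, \ldots, J$ are iid and sum to $c$ for every $J \in \N$.

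For the ``only if'' direction, I would fix an essential bound $|X| \le M$ a.s.\ and, for each $N \in \N$, use infinite divisibility to write $X \stackrel{d}{=} \sum_{j=1}^N Y_{N,j}$ with $\{Y_{N,j}\}_{j=1}^N$ iid. The heart of the argument is the quantitative boundedness claim
\[
|Y_{N,1}| \le M/N \quad \text{a.s.}
\]
Granting this, Popoviciu's inequality yields $\operatorname{Var}(Y_{N,1}) \le M^2/N^2$, so additivity of variance over independent sums gives $\operatorname{Var}(X) = N\,\operatorname{Var}(Y_{N,1}) \le M^2/N$. Sending $N \to \infty$ forces $\operatorname{Var}(X) = 0$, hence $X$ is a.s.\ constant.

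The boundedness claim itself follows from a short independence argument. Let $s_N := \operatorname{ess\,sup} Y_{N,1} \in (-\infty,\infty]$. For any $\epsilon > 0$, the iid property gives
\[
\Pp\{Y_{N,j} > s_N - \epsilon \text{ for every } j = 1, \ldots, N\} = \Pp\{Y_{N,1} > s_N - \epsilon\}^N > 0,
\]
so $\operatorname{ess\,sup}\bigl(\sum_j Y_{N,j}\bigr) \ge N(s_N - \epsilon)$. Combined with $\sum_j Y_{N,j} \le M$ a.s.\ and $\epsilon \to 0$, this forces $s_N \le M/N$. Applying the same reasoning to $-Y_{N,j}$ gives $\operatorname{ess\,inf} Y_{N,1} \ge -M/N$, completing the claim.

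The only nontrivial step is this essential supremum computation for iid sums, which I expect to proceed without subtlety. This gives a short direct proof requiring neither the L\'evy--Khintchine representation nor any complex-analytic extension of characteristic functions.
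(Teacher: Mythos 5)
Your argument is correct, and it is genuinely different from the route the paper points to. The paper does not prove Theorem~\ref{thm:bddRV} itself; it cites Baxter--Shapiro and remarks that the proof there rests on a careful analysis of the L\'evy--Khintchine representation of the characteristic function of an infinitely divisible law. You bypass Fourier analysis entirely: the key observation that $\operatorname{ess\,sup} Y_{N,1} \le M/N$ follows from the strict positivity of $\Pp\{Y_{N,1} > s_N - \epsilon\}^N$ together with the a.s.\ bound on the sum (and symmetrically for the essential infimum), after which $\operatorname{Var}(X) = N\,\operatorname{Var}(Y_{N,1}) \le M^2/N \to 0$ forces $X$ to be a.s.\ constant. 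Each step checks out: $s_N$ must be finite (otherwise the sum would be unbounded with positive probability), the variance of $X$ exists since $X$ is bounded, and variance depends only on the distribution, so the identity $\operatorname{Var}(X) = \operatorname{Var}(\sum_j Y_{N,j})$ is legitimate even though the $Y_{N,j}$ live on a possibly different probability space. What the L\'evy--Khintchine route buys in exchange for its machinery is structural information about the limit (the full classification of infinitely divisible laws, of which this corollary is a degenerate case); your argument buys brevity and self-containedness, which is arguably preferable for an appendix whose only use of the theorem is to rule out infinite divisibility of the clock process. One cosmetic point: Popoviciu's inequality is overkill --- $\operatorname{Var}(Y_{N,1}) \le \E[Y_{N,1}^2] \le M^2/N^2$ already suffices once $|Y_{N,1}| \le M/N$ a.s.
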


The proof of Theorem \ref{thm:bddRV} is based on a careful analysis of the L\'evy-Khintchine formula \cite[Theorem 1.2.1]{applebaum1} that provides a representation of the characteristic function of an infinitely divisible random variable. 

We first show that, 
any limit point of LES at critical energies are not infinitely divisible, under a mild condition. 
\begin{theorem}\label{thm:clock_notID1}
Suppose that a critical energy 
$E_c \in \mathcal{C}$
satisfies 

(i)
$| \langle e^{ 2 i \eta_{\pm} } \rangle | < 1$, 
and
(ii)
$0 < N(E_c) < 1$.\\
Let 
$\xi$
be an accumulation point of 
$\{ \xi_L \}$
centered at
$E_c$.
Then
$\xi$
is not infinitely divisible.
\end{theorem}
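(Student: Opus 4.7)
The strategy is to combine Corollary~\ref{rmk:clock_assump2} with a test-function reduction to Theorem~\ref{thm:bddRV}. Under hypothesis~(i), Proposition~\ref{prop:rel_prufer_conv1} supplies part~(1) of Assumption~\ref{assumpt:prufer1}, namely $\Psi_L(x)\to x$ a.s. Under hypothesis~(ii), Remark~\ref{rmk:ev_cond1}(2) supplies part~(3) of Assumption~\ref{assumpt:prufer1}. Consequently Corollary~\ref{rmk:clock_assump2} applies: any accumulation point $\xi$ of $\{\xi_L\}$ has the form $\mathrm{clock}(\mu)$ for some probability measure $\mu$ on $[0,1)$.

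The key intermediate step is to show that $\mu$ is not a Dirac mass. Imitating the proof of Proposition~\ref{prop:clock_unif1} but using only the $k=1$ case, for every $L$ independence gives
\[
\E\!\left[e^{2i\theta_L(E_c)}\right] \;=\; \E\!\left[e^{2i\Delta\theta_L}\right]\cdot\langle e^{2i\eta_\pm}\rangle^{N(L)},
\]
where $N(L)$ is the number of complete polymers in $[0,L-1]$ and $\Delta\theta_L$ is the independent boundary correction. By hypothesis~(i) the second factor tends to zero, so $\E[e^{2i\theta_L(E_c)}]\to 0$. Since $e^{2i\phi(E_c,L)}=e^{2i\theta_L(E_c)}$ and $\phi(E_c,L)\in[0,\pi)$, were $\mu=\delta_c$ a point mass, convergence in distribution along the subsequence realizing $\xi$ would force $\E[e^{2i\phi(E_c,L)}]\to e^{2i\pi c}$ of modulus one, a contradiction. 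Hence $\mu$ charges two distinct points $\theta_1\ne\theta_2$ of $[0,1)$.

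Having established non-triviality of $\mu$, the conclusion follows by contradiction. Assume $\xi$ is infinitely divisible as a point process, so that for each $N \in \N$ one has $\xi\stackrel{d}{=}\zeta_{N,1}+\cdots+\zeta_{N,N}$ for iid point processes $\zeta_{N,j}$. Then for every $f\in C_c(\R)$, the real random variable $\xi(f)=\sum_{n\in\Z}f(n-\theta)=\zeta_{N,1}(f)+\cdots+\zeta_{N,N}(f)$ is infinitely divisible in the sense of Definition~\ref{defn:idrv1}. Choosing $f\ge 0$ with $\mathrm{supp}(f)\subset(0,1)$ kills all terms except $n=1$, giving the bounded variable $\xi(f)=f(1-\theta)$. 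Because $\mu$ has two atoms, a bump $f$ localized near $1-\theta_1$ and vanishing near $1-\theta_2$ makes $\xi(f)$ genuinely non-constant. Theorem~\ref{thm:bddRV} then precludes infinite divisibility of $\xi(f)$, contradicting the assumption on $\xi$.

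The main obstacle is the intermediate step $\mu\ne\delta_c$: without it, $\xi(f)$ would be deterministic for every $f$ and Theorem~\ref{thm:bddRV} would be vacuous. This is precisely where hypothesis~(i) is essential, whereas hypothesis~(ii) only guarantees that the accumulation point is of clock type in the first place. The remaining work consists of checking that the bump function $f$ can indeed be chosen to separate the two atoms of $\mu$, which is a standard construction.
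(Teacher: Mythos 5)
Your proof is correct and follows essentially the same route as the paper's: reduce to clock$(\mu)$ via Corollary \ref{rmk:clock_assump2}, show $\mu$ is not a Dirac mass using $\E[e^{2i\theta_L(E_c)}]\to 0$ under hypothesis (i) (a step the paper asserts without spelling out), and conclude via Theorem \ref{thm:bddRV} applied to a bounded, non-constant linear functional of $\xi$ (the paper uses $\xi(I)$ for an interval $I$, you use $\xi(f)$ for a bump $f$). One small wording fix: non-degeneracy of $\mu$ yields two distinct points in its \emph{support}, not necessarily two \emph{atoms}; your bump-function construction works verbatim with two disjoint neighborhoods of positive $\mu$-measure.
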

\begin{proof}
The condition (i) 
and Proposition \ref{prop:rel_prufer_conv1}
yield
$\Psi_L (x) \to x$
a.s., so together with the condition (ii), Corollary \ref{rmk:clock_assump2}
implies that an accumulation point 
$\xi$ 
of 
$\xi_L$
is a clock process with a probability measure 
$\mu$
on 
$[0,1)$.
By the condition (i),
$\mu$
is not a delta measure at a single point. 
Thus 
we can find an interval
$I \subset [0, 1)$
such that
$\xi (I)$
is a non-constant bounded random variable.
Therefore, 
by Theorem \ref{thm:bddRV}, 
$\xi(I)$,
and hence
$\xi$ 
itself, 
is not infinitely divisible.
\end{proof}
%

%

%

\medskip

In order to show there is a transition, we need to characterize the limit points of the sequence $\{ \xi_L(I) \}$ for non-critical energies.  
We still
need a condition on the regularity of IDS, but this condition seems to be  reasonable.
%

%
\begin{assumption}
\label{assumpt:IDS2}
%
IDS
$N$
and its inverse
$N^{-1}$
satisfy the H\"older continuity in the following sense near the reference energy 
$E_0$
for some 
$\rho_1, \rho_2 > 0$ : 
\beq
0 < 
\lim_{ \delta \downarrow 0}
\inf_{
\substack{ 
|E_1 - E_c| < \delta, 
\\
|E_2 - E_c| < \delta, 
\\
E_1 \ne E_2
}
}
\frac {
| N(E_1) - N(E_2) |
}
{
|E_1 - E_2|^{1/\rho_2}
}, 
\quad
\lim_{ \delta \downarrow 0}
\sup_{
\substack{ 
|E_1 - E_c| < \delta, 
\\
|E_2 - E_c| < \delta, 
\\
E_1 \ne E_2
}
}
\frac {
| N(E_1) - N(E_2) |
}
{
|E_1 - E_2|^{\rho_1}
}
< \infty.
\eeq

\end{assumption}
%

%
\begin{theorem}
\label{thm:infinite_divisibility}
%
Assume Assumption 2.
Let 
$\xi$
be an accumulation point of the LES
$(\xi_L)_L$
centered at 
$E_0 \in \Sigma \setminus \mathcal{C}$.
Then 
for each bounded interval
$I (\subset {\bf R})$, 
$\xi (I)$
is infinitely divisible.
\end{theorem}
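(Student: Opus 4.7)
The plan is to exploit the strategy of decomposing $\Lambda_L$ into a fixed number $J$ of independent subsystems plus negligible boundary regions, and then to invoke the row-sum characterization of infinite divisibility. The argument is genuinely simpler than that of Theorem \ref{thm:les_loc1}: infinite divisibility does not care whether two eigenvalues share a small interval, so no Minami-type bound is required and the hypothesis $\rho_1\rho_2 > 2/3$ can be dropped.

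First I fix $J \in \N$ (the target number of iid summands) and partition $\Lambda_L$ into $J$ disjoint consecutive subintervals $\Lambda_L^{(1)}, \ldots, \Lambda_L^{(J)}$ of comparable length $\sim L/J$, with all division points taken at polymer nodes. This introduces at most an $O(L_{\max})$ perturbation per block and does not affect asymptotics. By independence-at-a-distance, the restrictions $H_{\Lambda_L^{(j)}}$ are mutually independent, and the unfolded point processes
\[
\xi_L^{(j)} := \sum_{E \in \sigma(H_{\Lambda_L^{(j)}})} \delta_{L(N(E) - N(E_0))}, \qquad j=1,\ldots,J,
\]
are iid in $j$.

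Second, I would show that the decomposition error
\[
R_L(I) := \xi_L(I) - \sum_{j=1}^J \xi_L^{(j)}(I)
\]
tends to $0$ in probability as $L \to \infty$. This uses exactly the localization input from Lemma \ref{lemma:lemma1.1} together with Corollary \ref{corollary:wegner_v2} applied to the unfolded window $\tilde I_L := N^{-1}(N(E_0) + I/L)$. By the local lower H\"older hypothesis in Assumption \ref{assumpt:IDS2}, $|\tilde I_L| = O(L^{-\rho_2})$; hence Wegner's estimate on any box of size $\ell = (c \log L)^{1/\xi}$ gives probability $O(L^{-\rho_1\rho_2})$ for the box to have spectrum in $\tilde I_L$. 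Since there are only $O(J (\log L)^{1/\xi})$ boundary sites within such a corridor around the $J-1$ division cuts, the expected number of eigenvalues of $H_{\Lambda_L}$ whose COLs fall there is $o(1)$. On the complementary high-probability event, each eigenvalue of $H_{\Lambda_L}$ with unfolded value in $I$ corresponds via its COL to a unique eigenvalue of exactly one $H_{\Lambda_L^{(j)}}$, with discrepancy in the unfolded coordinate of order $e^{-\ell^{\xi}}$; summing bijectively over matched eigenvalues gives $R_L(I) \to 0$ in probability.

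Third, along a subsequence $L_k$ with $\xi_{L_k} \stackrel{d}{\to} \xi$, the sequence $\{\xi_{L_k}^{(j)}(I)\}_{j=1}^J$ is iid for each $k$ with uniformly bounded mean $\E[\xi_L^{(j)}(I)] = |I|/J + o(1)$ (the upper H\"older bound in Assumption \ref{assumpt:IDS2} controls $\E[\xi_L^{(j)}(I)] \le C N(\tilde I_L^+) |\Lambda_L^{(j)}| / L$). Tightness on $\N \cup \{0\}$ is immediate, so after a further diagonal extraction we may assume $\xi_{L_k}^{(j)}(I) \stackrel{d}{\to} Y_j$ with $(Y_1, \ldots, Y_J)$ iid. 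Together with $R_{L_k}(I) \to 0$ in probability, this yields
\[
\xi(I) \stackrel{d}{=} Y_1 + Y_2 + \cdots + Y_J.
\]
Since $J$ was arbitrary and $Y_j$'s are iid, $\xi(I)$ is infinitely divisible by Definition \ref{defn:idrv1}.

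The main obstacle is the second step: controlling the decomposition error without the Minami estimate of Theorem \ref{thm:minami1}. The crucial observation is that eigenvalues counted twice (once in $H_{\Lambda_L}$ and once in some $H_{\Lambda_L^{(j)}}$) are harmless for infinite divisibility; we only need to rule out eigenvalues whose COLs straddle the division cuts, and this is a purely boundary effect controlled by Wegner's bound applied to $O(J)$ logarithmic-size boxes. Combining Theorem \ref{thm:infinite_divisibility} with Theorem \ref{thm:clock_notID1} — which shows that the clock LES at a critical energy produces bounded, non-degenerate $\xi(I)$ and so cannot be infinitely divisible by Theorem \ref{thm:bddRV} — gives the discontinuous transition described in Remark \ref{rmk:sharp1}.
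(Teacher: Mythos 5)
Your proposal is correct and follows essentially the same route as the paper: cut $\Lambda_L$ into a fixed number of independent blocks at polymer nodes, control the mismatch at the cuts using only the SULE bound and the Wegner estimate (no Minami bound, hence no $\rho_1\rho_2>\frac23$ condition), extract jointly convergent subsequences by tightness, and invoke the iid-sum characterization of infinite divisibility. The only cosmetic difference is that the paper routes the eigenvalue matching through the mesoscopic decomposition of Lemma \ref{lemma:eta} (boxes of size $L^{\beta}$, subsequently grouped into the $K$ macroscopic pieces), whereas you match eigenvalues of $H_{\Lambda_L}$ directly to those of the macroscopic blocks; both reduce to the same COL/Wegner argument near the finitely many cuts.
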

%

%
Therefore, 
Theorems 
\ref{thm:clock_notID1} and \ref{thm:infinite_divisibility} 
show the sharpness of the transition.
In what follows, 
we give proof of Theorem \ref{thm:infinite_divisibility} for completeness, although the idea of which is similar to that for Theorem \ref{thm:Poisson}.
We recall the notations.\\
$\Lambda_L$ : box of size 
$L$, \\
$I_{\Lambda_L}
:=
N^{-1}
\left(
N(E_0) + \frac {I}{| \Lambda_L |}
\right)$, 
where 
$I$ : bounded interval,
\\
$\ell := L^{\beta}$, 
$\ell' := L^{\beta'}$, 
$0 < \beta' < \beta < 1$,
\\
$\Lambda_{\ell} (\gamma)$ : 
a box of size
$\ell$
centered at 
$\gamma$.
Then we decompose
$
\Lambda_L
=
\bigcup_j
\Lambda_{\ell} (\gamma_j) 
\cup
\Upsilon
$
where we define  
$\Upsilon$
to be the set 
$\Lambda_L \setminus \bigcup_j \Lambda_{\ell}(\gamma_j)$
enlarged by a length 
$\ell'$.
Let 
\beq
\xi_L (I)
&=&
\#
\left\{
\text{ eigenvalues of 
$H_{\Lambda}$
in 
$I_{\Lambda_L}$ }
\right\}
\\
\eta_{j, L} (I)
&=&
\#
\left\{
\text{ eigenvalues of 
$H_{\Lambda_{\ell} (\gamma_j)}$
in 
$I_{\Lambda_L}$ }
\right\}.
\eeq
The
following lemma is an analogue to the decomposition theorem (Theorem \ref{thm:decomposition}), but under much less requirements.
%
\begin{lemma}
\label{lemma:Y}
%
Let 
\beq
\mathcal{ Y}_{\Lambda_L}
&:=&
\left\{
\omega \in \Omega
\, \middle| \,
\text{ COL's of 
$H_{\Lambda_L}$ 
in
$I_{\Lambda_L}$ 
are not in 
$\Upsilon$ }
\right\}.
\eeq
Then we can choose constants  
$\beta, \beta' > 0$
such that 
\beq
\Pp( \mathcal{ Y}_{\Lambda_L} )
\ge
1 - C L^{- \alpha}
\eeq
for some 
$\alpha > 0$.
\end{lemma}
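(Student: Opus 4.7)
The plan is to adapt the boundary-region argument used inside the proof of Theorem \ref{thm:decomposition} (the portion starting at \eqref{starstar}) to the present, much weaker setting. The key simplification is that here we need only rule out COLs landing inside the thin boundary region $\Upsilon$, and we do not have to simultaneously control two-eigenvalue events via Minami's estimate. Consequently the only hypothesis that enters is Assumption \ref{assumpt:IDS2}, which gives local H\"older continuity of both $N$ and $N^{-1}$ near $E_0$ with some positive exponents $\rho_1$ and $\rho_2$. In particular, $|I_{\Lambda_L}| \leq C L^{-\rho_2}$ uniformly in $L$.

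First I would restrict to the SULE event of Lemma \ref{lemma:lemma1.1}(1): fix $p > 0$ large (to be chosen later) and work on the set $\mathcal{U}_{\Lambda_L}$ with $\Pp(\mathcal{U}_{\Lambda_L}) \geq 1 - L^{-p}$, on which every normalized eigenvector $\varphi$ of $H_{\Lambda_L}$ obeys $|\varphi(y)| \leq L^{p+1} e^{-|y - x_\varphi|^\xi}$. Next, I would cover $\Upsilon$ by a disjoint family of sub-intervals $\Lambda_{\widetilde{\ell}}(k)$ of length $\widetilde{\ell} := (c\log L)^{1/\xi}$, snapping their endpoints to the nearest polymer nodes as in Definition \ref{defn:subsets_polymer} so that the restricted Hamiltonians $H_{\Lambda_{\widetilde{\ell}}(k)}$ are well defined and mutually independent. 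The number of such boxes is at most $|\Upsilon|/\widetilde{\ell} \leq C L^{1-\beta+\beta'} (\log L)^{-1/\xi}$.

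The core step is the now-standard extraction of an approximate eigenvalue on the sub-box. On $\mathcal{U}_{\Lambda_L}$, if $H_{\Lambda_L}$ has an eigenvalue $E \in I_{\Lambda_L}$ with COL in $\Lambda_{\widetilde{\ell}}(k)$, then by SULE the eigenvector's boundary values on $\partial \Lambda_{\widetilde{\ell}}(k)$ are bounded by $L^{p+1} e^{-(\widetilde{\ell}/2)^\xi} = L^{p+1 - c/2^\xi}$. Choosing $c$ large enough that $c/2^\xi > p+1+\rho_2$, this boundary error is dominated by $|I_{\Lambda_L}|$, so truncation produces an approximate eigenvector of $H_{\Lambda_{\widetilde{\ell}}(k)}$ with eigenvalue in a slightly enlarged interval $I'_{\Lambda_L}$, still of size $O(L^{-\rho_2})$. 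Applying the second form of Wegner's estimate (Corollary \ref{corollary:wegner_v2}) at scale $\widetilde{\ell}$, and taking $c$ large so that the $e^{-q \widetilde{\ell}^\beta}$ term is negligible, yields
\[
\Pp(H_{\Lambda_{\widetilde{\ell}}(k)}\text{ has an eigenvalue in } I'_{\Lambda_L}) \leq C_W L^{-\rho_1 \rho_2}.
\]

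A union bound over the $C L^{1-\beta+\beta'}(\log L)^{-1/\xi}$ sub-boxes, combined with the SULE error, then gives
\[
\Pp(\mathcal{Y}_{\Lambda_L}^c) \leq C L^{1-\beta+\beta' - \rho_1 \rho_2}(\log L)^{-1/\xi} + L^{-p}.
\]
Since $\rho_1 \rho_2 > 0$ is all that is required (as opposed to the $\rho_1\rho_2 > 2/3$ needed in Assumption \ref{assump:dos1} for Theorem \ref{thm:decomposition}), one can take $\beta$ close to $1$ and $\beta'$ close to $0$ with $1 - \beta + \beta' < \rho_1 \rho_2$, and then $p$ large, to produce a positive $\alpha$. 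The main, and fairly mild, technical obstacle is the parameter bookkeeping at polymer-node boundaries: one must verify that snapping the sub-box endpoints to polymer nodes changes sizes only by $L_{\max}$ and thus preserves both the independence of the $H_{\Lambda_{\widetilde{\ell}}(k)}$ and all the exponential estimates above. This is handled exactly as in the construction of $\Lambda_L(x)$ in Definition \ref{defn:subsets_polymer}.
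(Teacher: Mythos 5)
Your proposal is correct and follows essentially the same route as the paper: the paper's proof of Lemma \ref{lemma:Y} simply re-runs the $\Upsilon$-part of the argument from Theorem \ref{thm:decomposition} (covering $\Upsilon$ by boxes of size $(c\log L)^{1/\xi}$, using SULE to pass to approximate eigenvalues of the sub-box Hamiltonians, applying Corollary \ref{corollary:wegner_v2}, and taking a union bound), and likewise observes that since Minami's estimate is not needed, the only constraint is $\rho_1\rho_2 > 1-\beta+\beta'$, achievable for any $\rho_1\rho_2>0$ by taking $\beta$ close to $1$ and $\beta'$ small. Your parameter bookkeeping ($c/2^\xi > p+1+\rho_2$, the count of sub-boxes, and the final exponent) matches the paper's.
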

%
\begin{proof}
Proof 
of lemma \ref{lemma:Y} is similar to that of Theorem \ref{thm:decomposition}, but since we do not use Minami's estimate, we do not need the condition that 
$\rho_1 \cdot \rho_2 > \frac 23$.
In fact, 
by the argument in the proof of Theorem \ref{thm:decomposition}, we have
\beq
&&
\Pp
\left(
\mbox{ 
$H_{\Lambda_L}$
has eigenvalues in 
$I_{\Lambda_L}$
centered in 
$\Upsilon$
}
; \mathcal{ U}_{\Lambda} 
\right)
\le
C L^{- \alpha}
\quad
\text{for some}
\;
\alpha.
\eeq
To have 
$\alpha > 0$, 
we need
$\rho_1 \cdot \rho_2
>
1 - \beta$.
We can choose 
$\beta > 0$
sufficiently close to 
$1$
if necessary such that this condition is satisfied.
Hence
\beq
\Pp
(\mathcal{ Y}_{\Lambda_L}^c \cap \mathcal{ U}_{\Lambda_L})
\le
C L^{- \alpha}.
\eeq
Since
$\Pp( \mathcal{ U}_{\Lambda_L}^c ) \le L^{ - p' }$, 
and since 
$p'$
can be taken arbitrary large, we have
\beq
\Pp
( \mathcal{ Y}^c_{\Lambda_L})
&\le&
\Pp
( \mathcal{ Y}_{\Lambda_L}^c \cap \mathcal{ U}_{\Lambda_L} )
+
\Pp
( \mathcal{ U}^c_{\Lambda_L})
\le
L^{- \alpha} + L^{- p'}
\le
C L^{- \alpha}.
\eeq
\QED
\end{proof}
In the following lemma, 
we compare the large system 
$H_{\Lambda_L}$
with direct sum of subsystems
$\bigoplus_j H_{ \Lambda_{\ell} (\gamma_j) }$.
%
\begin{lemma}
\label{lemma:eta}
%
Let 
$I (\subset \R)$
be a interval.
Then 
there is a set of configurations 
$\Omega_{\Lambda_L}$
with 
$\Pp(\Omega_{\Lambda_L}^c)
\le
L^{- \alpha}$ 
such that 
\beq
\xi_L (I)
&=&
\sum_j 
\eta_{j, L} (I), 
\quad
\omega \in \Omega_{\Lambda_L}.
\eeq
\end{lemma}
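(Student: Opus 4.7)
The plan is to combine Lemma \ref{lemma:Y}, the SULE bounds of Lemma \ref{lemma:lemma1.1}, and the Wegner estimate of Corollary \ref{corollary:wegner_v2} to establish a bijection between the eigenvalues of $H_{\Lambda_L}$ in $I_{\Lambda_L}$ and those of the direct sum $\bigoplus_j H_{\Lambda_\ell(\gamma_j)}$ in $I_{\Lambda_L}$. Importantly, since no Minami-type estimate is invoked, we can avoid the restriction $\rho_1 \rho_2 > 2/3$ and allow eigenvalue multiplicities within a single subbox.

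First, restrict to the event $\mathcal{Y}_{\Lambda_L} \cap \mathcal{V}_{\Lambda_L}$, where $\mathcal{V}_{\Lambda_L}$ is the SULE event from part (2) of Lemma \ref{lemma:lemma1.1}. On this event, every eigenvalue $E_k(\Lambda_L) \in I_{\Lambda_L}$ has its COL $x_k$ located in some $\Lambda_{\ell - \ell'}(\gamma_{j(k)})$ and the corresponding eigenvector $\varphi_k$ decays as $e^{2L^\nu} e^{-|x-x_k|^\xi}$. A standard cutoff and resolvent argument, identical in spirit to the one used in the proof of Theorem \ref{thm:decomposition}, produces from $\varphi_k$ an approximate eigenvector of $H_{\Lambda_\ell(\gamma_{j(k)})}$ with error of order $e^{-(\ell')^\xi}$, yielding an eigenvalue $E'_k$ of $H_{\Lambda_\ell(\gamma_{j(k)})}$ satisfying $|E_k(\Lambda_L) - E'_k| \leq e^{-(\ell')^\xi}$. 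The reverse direction is analogous: an eigenvalue of $H_{\Lambda_\ell(\gamma_j)}$ in $I_{\Lambda_L}$ extends, via its localized eigenvector and the fact that $d(\Lambda_\ell(\gamma_j), \partial \Lambda_L) \geq \ell'$, to an approximate eigenvector of $H_{\Lambda_L}$ with the same exponentially small error.

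The main obstacle is that this correspondence only preserves \emph{closeness} of eigenvalues, not \emph{membership} in $I_{\Lambda_L}$: an eigenvalue of one operator lying within $e^{-(\ell')^\xi}$ of $\partial I_{\Lambda_L}$ could have its partner on the opposite side of the boundary. To rule this out, I would introduce a boundary-avoidance event
\[
\mathcal{B}_{\Lambda_L} := \bigcap_{j} \bigcap_{E \in \partial I_{\Lambda_L}} \left\{ d\bigl(E, \sigma(H_{\Lambda_\ell(\gamma_j)})\bigr) > e^{-(\ell')^\xi} \right\},
\]
and apply Corollary \ref{corollary:wegner_v2} (with $\beta'$ playing the role of the Wegner exponent and $\sigma$ chosen so that $\sigma \rho_1 > 1-\beta$) to each of the $\widetilde{N} = \mathcal{O}(L^{1-\beta})$ subboxes. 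A union bound then gives $\Pp(\mathcal{B}_{\Lambda_L}^c) \leq C \widetilde{N} \left( e^{-\sigma \rho_1 \ell^\beta} + e^{-q \ell^\beta} \right) = \mathcal{O}(L^{-\alpha'})$ for some $\alpha' > 0$.

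Setting $\Omega_{\Lambda_L} := \mathcal{Y}_{\Lambda_L} \cap \mathcal{V}_{\Lambda_L} \cap \mathcal{B}_{\Lambda_L}$, we obtain $\Pp(\Omega_{\Lambda_L}^c) \leq L^{-\alpha}$ for some $\alpha>0$ by Lemma \ref{lemma:Y} and the estimates above. On $\Omega_{\Lambda_L}$, the map $E_k(\Lambda_L) \mapsto E'_k$ is a bijection (it is injective because distinct eigenvalues of $H_{\Lambda_L}$ with the same image would force the pigeonhole $|E_k - E_{k'}| \leq 2 e^{-(\ell')^\xi}$, contradicting the Wegner separation near $\partial I_{\Lambda_L}$ combined with a similar bulk Wegner bound applied along the entire interval; alternatively this follows since the map can be inverted via the same approximate-eigenvector construction) and preserves membership in $I_{\Lambda_L}$. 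Counting both sides yields $\xi_L(I) = \sum_j \eta_{j,L}(I)$. The delicate point in the argument is tuning $\beta'$ to simultaneously guarantee $\Pp(\mathcal{Y}_{\Lambda_L}^c) = o(1)$ and keep the Wegner-boundary bound summable, but the exponential gap between $e^{-(\ell')^\xi}$ and the polynomial length scales makes the balance achievable.
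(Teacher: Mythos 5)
Your overall strategy is the one the paper uses: restrict to the SULE event and the event of Lemma \ref{lemma:Y}, pass approximate eigenvectors back and forth between $H_{\Lambda_L}$ and the subsystems $H_{\Lambda_\ell(\gamma_j)}$ with error $e^{-(\ell')^{\xi}}$, and use the Wegner estimate to ensure that no eigenvalue sits within that error of $\partial I_{\Lambda_L}$, so that ``closeness'' upgrades to ``membership.'' Your boundary-avoidance event $\mathcal{B}_{\Lambda_L}$ plays the role of the paper's events $\mathcal{W}_{\Lambda_L}$ and $\widetilde{\mathcal{W}}_{\Lambda_L}$, and the probability bookkeeping is sound. One small omission: for the reverse direction (subbox eigenvalues producing full-box eigenvalues) you need the SULE bound for each $H_{\Lambda_\ell(\gamma_j)}$, not only for $H_{\Lambda_L}$, so the event must also include $\bigcap_j \mathcal{U}_{\Lambda_\ell(\gamma_j)}$; this costs only an extra polynomially small probability.

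The genuine gap is your injectivity argument. You claim that two distinct eigenvalues of $H_{\Lambda_L}$ with the same subbox image would satisfy $|E_k - E_{k'}| \le 2e^{-(\ell')^{\xi}}$ and that this ``contradicts the Wegner separation.'' Wegner's estimate (Corollary \ref{corollary:wegner_v2}) bounds the probability that the spectrum approaches a \emph{fixed deterministic} energy; it says nothing about the spacing between two eigenvalues of the same realization of $H_{\Lambda_L}$. Ruling out near-degenerate pairs is precisely a Minami-type statement, which you have (correctly) set out to avoid; a ``bulk Wegner bound along the entire interval'' cannot substitute for it, since even a union bound over a net only controls the event that \emph{some} eigenvalue is near a net point, not that two are. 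The correct and deterministic fix, which is what the paper does, is the orthogonality argument of \cite[Lemma A.4]{nakano1}: if two orthonormal eigenvectors of $H_{\Lambda_L}$ with eigenvalues in $I_{\Lambda_L}$ are both centered in the same $\Lambda_{\ell}(\gamma_j)$, their restrictions are two nearly orthogonal approximate eigenvectors of $H_{\Lambda_\ell(\gamma_j)}$, which forces that operator to have at least two eigenvalues (counted with multiplicity) in $\widetilde{I}_{\Lambda_L}$. This gives the inequality of counts in each direction without any separation of eigenvalues, and replaces both your pigeonhole claim and the vague ``the map can be inverted'' fallback.
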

%

%
\begin{proof}
In the event 
$\mathcal{ U}_{\Lambda_L} \cap \mathcal{ Y}_{\Lambda_L}$, 
let 
$\varphi$
be a normalized eigenvector of 
$H_{\Lambda_L}$
in 
$I_{\Lambda_L}$
and let 
$x_{\varphi}$
be its COL.
Since
$\omega \in \mathcal{ Y}_{\Lambda}$, 
$x_{\varphi} \in \Lambda_{\ell - \ell'} (\gamma_j)$
for some 
$\gamma_j$.
Then
\beq
| \varphi (x) |
\le
C 
e^{ - c( \ell' )^{\xi} }
=
C
e^{- c L^{\beta' \xi} }, 
\quad
x \in \partial \Lambda_{\ell}(\gamma_j).
\eeq
Then
$\varphi |_{\Lambda_{\ell}(\gamma_j)}$
becomes an approximate eigenvector of 
$H_{\Lambda_{\ell}(\gamma_j)}$
so that 
$H_{\Lambda_{\ell}(\gamma_j)}$
has an eigenvalue in 
\beq
\widetilde{I}_{\Lambda_L}
:=
I_{\Lambda_L}
+
[ -e^{- c L^{\beta' \xi} }, e^{- c L^{\beta' \xi} } ].
\eeq
By Wegner's estimate(Corollary \ref{corollary:wegner_v2}), 
\beq
\Pp
\left(
\exists j 
\text{ s.t. }
H_{\Lambda_{\ell}(\gamma_j)}
\text{ has an eigenvalue in  }
\widetilde{I}_{\Lambda_L} \setminus I_{\Lambda_L}
\right)
& \le &
C_W
L^{1 - \beta}
\left(
e^{- c \rho_1 L^{\beta' \xi} }
+
e^{- c L^{\beta'}}
\right)
\\
&\le&
e^{- c' \rho_1 L^{\beta' \xi} }, 
\eeq
for some 
$0 < c' < c$. 
We have used the fact that, since
$0 < \xi < 1$, 
2nd term in RHS is negligible.
Therefore
\beq
\mathcal{ W}_{\Lambda_L}
&:=&
\left\{
\text{ 
$H_{\Lambda_{\ell}(\gamma_j)}$
does not have eigenvalue in 
$\widetilde{I}_{\Lambda_L} \setminus I_{\Lambda_L}$
for any 
$j$
}
\right\}
\eeq
satisfies
$
\Pp(
\mathcal{ W}_{\Lambda_L}^c \cap \mathcal{ U}_{\Lambda_L} \cap \mathcal{ Y}_{\Lambda_L}
)
\le  
e^{- c' \rho_1 L^{\beta' \xi} }.
$
On the other hand, we have shown that
$\Pp(\mathcal{ U}_{\Lambda_L}^c)
\le
C L^{- p'}$, 
and 
$
\Pp(\mathcal{ Y}_{\Lambda_L}^c)
\le
C L^{- \alpha}.
$
Therefore
\beq
\Pp
( \mathcal{ W}_{\Lambda_L}^c)
& \le &
\Pp
(
\mathcal{ W}_{\Lambda_L}^c \cap \mathcal{ U}_{\Lambda_L} \cap \mathcal{ Y}_{\Lambda_L}
)
+
\Pp
(\mathcal{ U}_{\Lambda_L}^c)
+
\Pp
(\mathcal{ Y}_{\Lambda_L}^c)
\le  C L^{- \alpha}.
\eeq
Now, on the event
$\mathcal{ W}_{\Lambda_L} \cap \mathcal{ U}_{\Lambda_L} \cap \mathcal{ Y}_{\Lambda_L}$, 
for each eigenvalue 
$E_n (\Lambda_L)$
of 
$H_{\Lambda_L}$ 
in 
$I_{\Lambda_L}$, 
we can find 
$j = j(n)$
and corresponding eigenvalue 
$E_{m(n)} (\Lambda_{\ell}(\gamma_j))$
of 
$H_{\Lambda_{\ell}(\gamma_j)}$
in 
$I_{\Lambda_L}$.
Since
eigenfunctions of 
$H_{\Lambda_L}$
are orthogonal each other, this correspondence is one to one
\cite[Lemma A.4]{nakano1}. 
Therefore,
\begin{equation}
\xi_L (I_k)
\le 
\sum_j
\eta_{j, L} (I), 
\quad
\omega \in 
\mathcal{ W}_{\Lambda_L} \cap \mathcal{ U}_{\Lambda_L} \cap \mathcal{ Y}_{\Lambda_L}
\label{eq:eta1}
\end{equation}
which gives the upper bound.
For the lower bound, we set 
\beq
\widetilde{\mathcal{ U}}_{\Lambda_L}
&:=&
\bigcap_{j}
\mathcal{ U}_{ \Lambda_{\ell} (\gamma_j) }.
\\
\widetilde{\mathcal{ W}}_{\Lambda_L}
& := &
\left\{
\text{ 
$H_{\Lambda_L}$
does not have an eigenvalue in 
$\widetilde{I}_{\Lambda_L} 
\setminus
I_{\Lambda_L}$ 
}
\right\}.
\eeq
Then 
an analogous argument as in upper bound yields
\begin{equation}
\sum_j
\eta_{j, L} (I)
\le 
\xi_L (I), 
\quad
\omega \in 
\widetilde{\mathcal{ W}}_{\Lambda_L} 
\cap 
\widetilde{\mathcal{ U}}_{\Lambda_L}.
\label{eq:eta2}
\end{equation}
By (\ref{eq:eta1}), (\ref{eq:eta2}), we have
\beq
\xi_L (I)
& = &
\sum_j
\eta_{j, L} (I), 
\quad
\omega \in 
\Omega_{\Lambda_L}
:=
\mathcal{ W}_{\Lambda_L} \cap \mathcal{ U}_{\Lambda_L} \cap \mathcal{ Y}_{\Lambda_L}
\cap 
\widetilde{\mathcal{ W}}_{\Lambda_L} 
\cap 
\widetilde{\mathcal{ U}}_{\Lambda_L} 
\eeq
with 
$
\Pp
\left(
\Omega_{ \Lambda_L }^c
\right)
\le  C L^{- \alpha}.
$

\end{proof}
\noindent
{\it Proof of Theorem \ref{thm:infinite_divisibility}}\\
Let 
$I (\subset \R)$
be an interval, and let 
$K \in \N$.
It suffices to show 
the following fact : 
There exist
$K$
i.i.d. random varibles
$(\zeta_k (I))_{k=1}^K$
such that 
\beq
\xi(I)
\stackrel{d}{=}
\sum_{k=1}^K
\zeta_k (I).
\eeq
We decompose
$\Lambda_L$
into 
$K$
disjoint boxes
$J_1, J_2, \cdots, J_K$
: 
$\Lambda_L = \bigcup_{k=1}^K J_k$.
Here 
we adjust the value of 
$\beta = \beta_L$
and the definition of 
$\Upsilon$
slightly if necessary such that each
$J_k$'s
contains equal number of 
$\Lambda_{\ell}(\gamma_j)$'s.
And 
we also decompose the sum of 
$\eta_{j, L}$'s
into sum of 
$K$ 
independent ones.
\beq
\sum_{j=1}^{N_L}
\eta_{j, L} (I)
&=&
\sum_{k=1}^K
\left(
\sum_{j \,:\, \Lambda_{\ell}(\gamma_j) \subset J_k}
\eta_{j, L} (I)
\right).
\eeq
Let 
$H_k 
:=
H |_{J_k}$, 
and let 
$\zeta_k^{(L)}$
be the corresponding point process.
\beq
\zeta_k^{(L)}
&:=&
\sum_{j}
\delta_{
L (N (E_j(J_k)) - N(E_0) )
}
\eeq
where 
$\{ E_j (J_k) \}_j$
are the eigenvalues of 
$H_k$.
By Lemma \ref{lemma:eta}, 
we have
\beq
\xi_L (I)
&=&
\sum_{j=1}
\eta_{j, L} (I), 
\quad
\omega \in \Omega_{ \Lambda_L }, 
\\
\zeta_k^{(L)} (I)
& = &
\sum_{j \,:\, \Lambda_{\ell}(\gamma_j) \subset J_k}
\eta_{j, L} (I), 
\quad
\omega \in 
\Omega_{J_k}
, 
\quad
k = 1, 2, \cdots, K, 
\eeq
with 
$
\Pp
\left(
\Omega_L^c
\cup
\bigcup_{k=1}^K
\Omega_{J_k}^c
\right)
\le  C L^{- \alpha}.
$
Therefore, 
we have
\begin{equation}
\xi_L (I)
-
\sum_{k=1}^K
\zeta_k^{(L)}(I)
\stackrel{P}{\to}
0, 
\quad
L \to \infty.
\label{eq:infinite_divisibility1}
\end{equation}
Since
$\xi$
is an accumulation point of 
$\xi_L$, 
there is a subsequence 
$\{ L(i) \}_i$
such that
$\xi_{L(i)} 
\stackrel{d}{\to}
\xi$.
Since 
$\{ \zeta_k^{(L(i))} (I)  \}_{i=1}^{\infty}$
is tight for any 
$k=1, \cdots, K$, 
they are jointly tight.
Thus
we can further find a subsequence 
$\{ L(i_{\ell}) \}_{\ell}$
of
$\{ L(i) \}_i$ 
and random variables
$\zeta_k (I)$, 
$k=1, 2, \cdots, K$, 
such that we jointly have
\begin{equation}
\xi_{L(i_{\ell})} (I)
\stackrel{d}{\to}
\xi(I), 
\quad
\sum_{k=1}^K
\zeta_k^{(L(i_{\ell}))}(I)
\stackrel{d}{\to}
\sum_{k=1}^K
\zeta_k(I), 
\quad
\ell \to \infty.
\label{eq:infinite_divisibility2}
\end{equation}
Here 
we use the following fact on the sequence of random variables : 
$X_n \stackrel{d}{\to} X$, 
and 
$Y_n \stackrel{P}{\to} 0$
yield
$X_n + Y_n
\stackrel{d}{\to}
X$.
Therefore, (\ref{eq:infinite_divisibility1}), (\ref{eq:infinite_divisibility2})
yield
\beq
\xi (I)
\stackrel{d}{=}
\sum_k \zeta_k (I).
\eeq
\QED\\
%


\end{appendices}


\end{document}